\documentclass[11pt]{article}

\usepackage[hyphens]{url}
\usepackage[pagebackref,colorlinks]{hyperref}
\usepackage[hyphenbreaks]{breakurl}
\usepackage{amsmath} 
\usepackage{amsthm} 
\usepackage{amssymb}	
\usepackage{graphicx} 
\usepackage{multicol} 
\usepackage{multirow}
\usepackage{color}
\usepackage[dvips,letterpaper,margin=1in,bottom=1in]{geometry}
\usepackage[capitalize,noabbrev]{cleveref}

\usepackage{bm}
\usepackage{bbm}

\usepackage{ytableau}
\usepackage{longtable}

\usepackage[utf8]{inputenc}
\usepackage[english]{babel}

\usepackage[T1]{fontenc}
\AtBeginDocument{%
  \DeclareFontShape{T1}{cmr}{m}{scit}{<->ssub*cmr/m/sc}{}%
}

\usepackage{diagbox}
\usepackage{mathtools}

\newtheorem{theorem}{Theorem}[section]

\newtheorem{lemma}[theorem]{Lemma}
\newtheorem{corollary}[theorem]{Corollary}

\newtheorem{definition}[theorem]{Definition}

\newcommand{\braket}[2]{\langle #1 | #2 \rangle}

\DeclarePairedDelimiter\rbra{\lparen}{\rparen}
\DeclarePairedDelimiter\sbra{\lbrack}{\rbrack}
\DeclarePairedDelimiter\cbra{\{}{\}}
\DeclarePairedDelimiter\abs{\lvert}{\rvert}
\DeclarePairedDelimiter\Abs{\lVert}{\rVert}
\DeclarePairedDelimiter\ceil{\lceil}{\rceil}
\DeclarePairedDelimiter\floor{\lfloor}{\rfloor}
\DeclarePairedDelimiter\ket{\lvert}{\rangle}
\DeclarePairedDelimiter\bra{\langle}{\rvert}

\DeclareMathOperator*{\E}{\mathbb{E}}

\newcommand{\set}[2] {\left\{\, #1 \colon #2 \,\right\}}

\newcommand{\tr} {\operatorname{tr}}
\newcommand{\poly} {\operatorname{poly}}

\newcommand{\rank} {\operatorname{rank}}
\newcommand{\supp} {\operatorname{supp}}
\newcommand{\spanspace} {\operatorname{span}}

\newcommand{\ketbra}[2]{\ensuremath{\ket{#1}\!\bra{#2}}}

\usepackage{latexsym}
\usepackage{CJK}

\usepackage{enumerate}

\usepackage{algorithm}
\usepackage{algpseudocode}

\usepackage{stmaryrd}
\usepackage{tabularx}
\usepackage{booktabs}
\usepackage{adjustbox}

\newcommand{\footremember}[2]{%
    \footnote{#2}
    \newcounter{#1}
    \setcounter{#1}{\value{footnote}}%
}

\usepackage{tikz}
\newcommand*\circled[1]{\tikz[baseline=(char.base)]{\node[shape=circle,draw,inner sep=2pt] (char) {#1};}}
\usetikzlibrary{quantikz2}

\begin{document}

\title{A Unified Complexity Framework for Quantum Property Testing}
\author{Qisheng Wang \footremember{1}{Qisheng Wang is with the School of Computer Science, Shanghai Jiao Tong University (e-mail: \url{QishengWang1994@gmail.com}).}}
\date{}

\maketitle
\pagenumbering{roman}
\thispagestyle{empty}

\begin{abstract}
We develop a unified framework for analyzing the complexity of quantum property testing through functionals of the form $\mathcal{L}_{\phi}(\rho) = \operatorname{tr}(\phi(d\rho))/d$, where $\rho$ is an unknown $d$-dimensional quantum state and $\phi$ is a given function. 
A master theorem is established that derives sample complexity lower bounds for estimating $\mathcal{L}_\phi(\rho)$ from properties of $\phi$, combining Haar-random moment encoding with moment matching and best polynomial approximation. 
Corresponding query complexity lower bounds follow from quantum sample-to-query lifting. 

The framework yields nearly tight bounds for a broad class of problems, including entropy estimation (von Neumann, R\'enyi, and Tsallis), closeness estimation (trace distance and Uhlmann fidelity), spectrum estimation, rank testing (operator rank, Schmidt rank, and matrix product states).
Combined with known upper bounds, these results resolve several open problems and establish the optimality of 31 quantum algorithms since 2015, up to polylogarithmic factors.
\end{abstract}

\newpage
\tableofcontents
\thispagestyle{empty}
\newpage
\pagenumbering{arabic}

\section{Introduction}

Quantum property testing \cite{MdW16} plays an important role in quantum computing, with potential applications in the verification of quantum devices, data analysis, and the discovery of quantum advantages. 
However, only a few quantum property testing tasks have previously been known to have optimal approaches, e.g., quantum state tomography \cite{HHJ+17,OW16,OW17,Yue23,vAAGN23,SSW25,PSW26,PSTW26a} and quantum state certification \cite{OW21,BOW19,OW26}. 

In this paper, we develop a unified complexity framework for quantum property testing, thus establishing nearly tight lower bounds for a large class of quantum property testing problems from various lines of research: \textbf{entropy estimation} (von Neumann/R\'enyi/Tsallis), \textbf{closeness estimation} (trace distance/Uhlmann fidelity), \textbf{spectrum estimation}, and \textbf{rank testing} (operator rank/Schmidt rank/matrix product states/tree tensor network states). 
As a result, we show that 31 quantum algorithms since 2015 are nearly optimal, which we list in \cref{tab:list} numbered from \circled{1} to \circled{31}.
In particular, combining the upper bounds known in the literature, we establish the following sample complexities that are optimal up to polylogarithmic factors:
\begin{itemize}
    \item Von Neumann entropy estimation: $\widetilde{\Theta}\rbra{\frac{d^2}{\varepsilon} + \frac{1}{\varepsilon^2}}$. 
    \item $\alpha$-R\'enyi entropy estimation: $\widetilde{\Theta}\rbra{\frac{d^{1+1/\alpha}}{\varepsilon^{1/\alpha}}+\frac{d^{1/\alpha-1}}{\varepsilon^2}}$ for $0 < \alpha < 1$ and $\widetilde{\Theta}\rbra{\frac{d^2}{\varepsilon^{1/\alpha}}+\frac{d^{1-1/\alpha}}{\varepsilon^2}}$ for non-integer $\alpha > 1$.
    \item $\alpha$-Tsallis entropy estimation: $\widetilde{\Theta}\rbra{\frac{d^{1+1/\alpha}}{\varepsilon^{1/\alpha}}+\frac{d^{2-2\alpha}}{\varepsilon^2}}$ for $0 < \alpha < 1$ and $\widetilde{\Theta}\rbra{{1}/{\varepsilon^{\frac{2}{\alpha-1}}}}$ for $1 < \alpha < 2$.
    \item Trace distance/Uhlmann fidelity/spectrum estimation: $\widetilde{\Theta}\rbra{\frac{d^2}{\varepsilon^2}}$. 
    \item Rank testing: $\widetilde{\Theta}\rbra{\frac{r^2}{\varepsilon}}$.
    \item Schmidt rank testing: $\widetilde{\Theta}\rbra{\frac{r^2}{\varepsilon^2}}$.
    \item Matrix product state/tree tensor network state testing: $\widetilde{\Theta}\rbra{\frac{nr^2}{\varepsilon^2}}$. 
\end{itemize}

\paragraph{Roadmap for readers.}
We present our master theorem in \cref{sec:master}. 
For the readers who are interested in the detail of the numerous applications, you can skip directly to \cref{sec:app}.
The techniques for both the master theorem and the applications are discussed in \cref{sec:tech}. 

\subsection{Master theorem} \label{sec:master}

Our framework builds on the key observation that a large class of quantum property testing problems essentially estimate functionals of quantum states of the form 
\[
\mathcal{L}_{\phi}\rbra{\rho} = \frac{1}{d} \tr\rbra*{\phi\rbra{d\rho}},
\]
where $\rho$ is an unknown $d$-dimensional quantum state and $\phi \colon [0, +\infty) \to \mathbb{R}$ is a known function. 
For example, 
\begin{itemize}
    \item The function $\phi\rbra{x} = -x\log\rbra{x}$ gives the von Neumann entropy $\mathrm{S}\rbra{\rho} = -\tr\rbra{\rho\log\rbra{\rho}} = \mathcal{L}_{-x\log\rbra{x}}\rbra{\rho} + \log\rbra{d}$.
    \item The function $\phi\rbra{x} = x^\alpha$ gives the R\'enyi entropy $\mathrm{S}_{\alpha}^{\mathrm{R}}\rbra{\rho} = \frac{1}{1-\alpha} \log\rbra{\tr\rbra{\rho^\alpha}} = \frac{1}{1-\alpha}\log\rbra{\mathcal{L}_{x^\alpha}\rbra{\rho}} + \log\rbra{d}$ and the Tsallis entropy $\mathrm{S}_{\alpha}^{\mathrm{T}}\rbra{\rho} = \frac{1}{1-\alpha} \rbra{\tr\rbra{\rho^\alpha} - 1} = \frac{1}{1-\alpha}\rbra{d^{1-\alpha}\mathcal{L}_{x^\alpha}\rbra{\rho} - 1}$.
    \item The function $\phi\rbra{x} = \max\cbra{1-x, 0}$ gives the trace distance $\mathrm{T}\rbra{\rho, \mathbb{I}/d} = \mathcal{L}_{\max\cbra{1-x, 0}}\rbra{\rho}$, where $\mathrm{T}\rbra{\rho, \sigma} = \frac{1}{2} \tr\rbra{\abs{\rho - \sigma}}$.
    \item The function $\phi\rbra{x} = \sqrt{x}$ gives the Uhlmann fidelity $\mathrm{F}\rbra{\rho, \mathbb{I}/d} = \mathcal{L}_{\sqrt{x}}\rbra{\rho}$, where $\mathrm{F}\rbra{\rho, \sigma} = \tr\rbra{\sqrt{\sqrt{\sigma}\rho\sqrt{\sigma}}}$.
    \item The function $\phi\rbra{x} = \mathbf{1}_{\cbra{x > 0}}$ gives the rank $\rank\rbra{\rho} = d \mathcal{L}_{\mathbf{1}_{\cbra{x > 0}}}\rbra{\rho}$.
\end{itemize}
Inspired by these, we aim to characterize the difficulty in estimating the functional $\mathcal{L}_{\phi}\rbra{\rho}$ using the properties of $\phi$. 

\begin{theorem}[Master theorem, informal and simplified \cref{thm:master-functional}] \label{thm:main-intro}
    Let $\rbra{\nu_0, \nu_1}$ be a $K$-moment matching on an interval $I = \sbra{a_0, b_0} \subseteq [0, +\infty)$ with mean $\leq 1$. 
    Consider the following parameters:
    \[
    A_\phi = \sup_{x,y\in I} \abs*{\phi\rbra{x} - \phi\rbra{y}}, \qquad L_\phi = \sup_{\substack{x, y \in \sbra{1/2, 4/3} \\ x \neq y}} \abs*{\frac{\phi\rbra{x}-\phi\rbra{y}}{x - y}}, \qquad g_\phi = \abs*{ \E_{X \sim \nu_0} \sbra*{\phi\rbra{X}} - \E_{X \sim \nu_1} \sbra*{\phi\rbra{X}} }.
    \]
    For sufficiently large $d$, if the following three conditions hold:
    \begin{enumerate}
        \item $K = \Omega\rbra{\log\rbra{d}}$, 
        \item $\rbra{1 + M}^2 = O\rbra{d/\log\rbra{d}}$, where $M = \max\cbra{\abs{a_0-1}, \abs{b_0-1}}$, 
        \item $\rbra{A_\phi + \rbra{b_0 - a_0} L_\phi} / g_\phi = O\rbra{\sqrt{d}}$,
    \end{enumerate}
    then there are two probability distributions $\Pi_0$ and $\Pi_1$ of $d$-dimensional quantum states and intervals $I_0, I_1$ with $\textup{dist}\rbra{I_0, I_1} = \Omega\rbra{g_\phi}$ such that 
    \begin{align*}
        \mathrm{T}\rbra*{\E_{\rho \sim \Pi_0}\sbra*{\rho^{\otimes n}}, \E_{\rho \sim \Pi_1}\sbra*{\rho^{\otimes n}}} \leq \frac{1}{\poly\rbra{d}}, \qquad & 1 \leq n \leq O\rbra*{\frac{d^2}{M^2}}, \\
        \Pr_{\rho \sim \Pi_b}\sbra*{\mathcal{L}_\phi\rbra{\rho} \in I_b} \geq 0.99, \qquad & b \in \cbra{0, 1}. 
    \end{align*}
    Therefore, estimating $\mathcal{L}_\phi\rbra{\rho}$ to within additive error $\Theta\rbra{g_\phi}$ requires sample complexity $\Omega\rbra{d^2/M^2}$. 
\end{theorem}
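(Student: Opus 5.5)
We construct $\Pi_b$ by Haar-random moment encoding. For $b\in\{0,1\}$, draw eigenvalues $X_1,\dots,X_d$ i.i.d.\ from $\nu_b$. Form $\sigma = U\,\mathrm{diag}(X_1,\dots,X_d)\,U^\dagger/d$ with $U$ Haar random, and normalize: $\rho=\sigma/\tr(\sigma)$. (Alternatively, condition on $\tr(\sigma)$ being close to $1$, or keep $\sigma$ unnormalized and use a Poissonized sample model; we will use whichever keeps the analysis clean.) Because the mean is at most $1$, we may first shift or mix the matching pair with a point mass so that both means equal $1$; matching the first moment forces the means to agree anyway. Then $\tr(\sigma)=\frac1d\sum_i X_i$ concentrates around $1$, with fluctuation $O(M/\sqrt d)$ by Hoeffding, since each $X_i\in[a_0,b_0]$.

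\textbf{Step 1 (indistinguishability).} Since $\E_U[(U\Lambda U^\dagger)^{\otimes n}]$ depends only on the power sums $p_k(\Lambda)=\sum_i\lambda_i^k$, the mixture $\E_{\Pi_b}[\rho^{\otimes n}]$ is determined by the joint law of $(p_1,\dots,p_n)$ of the normalized spectrum. I would use the Schur–Weyl decomposition: the mixed state is block diagonal over Young diagrams $\lambda\vdash n$, with weights equal to the Schur polynomials $s_\lambda(\text{spectrum})$ times a fixed state on each block. The trace distance is therefore
\[
\tfrac12\textstyle\sum_\lambda \dim(\lambda)\,\bigl|\E_{\nu_0}[s_\lambda(x/d)]-\E_{\nu_1}[s_\lambda(x/d)]\bigr|.
\]
Write $x_i=1+\delta_i$ with $|\delta_i|\le M$, and expand $s_\lambda$ (equivalently $\prod_i x_i^{m_i}$-type terms) in monomials in $\delta$. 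Every monomial in which each individual $\delta_i$ has degree $\le K$ has identical expectation under the two product measures, by independence and $K$-moment matching. Only terms in which some single variable has degree $>K$ survive. For such a term we bound the contribution by a quantity like $(n M^2/d^2)^{K/2}$ times combinatorial factors. Using $n\le c\,d^2/M^2$ and $K=\Omega(\log d)$ makes this $d^{-\Omega(1)}$, and a union over the $d$ coordinates costs only a factor $d$.

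Normalization is the main obstacle. Dividing by $\tr(\sigma)$ couples the coordinates and destroys the product structure. I would handle it by Poissonization: replace $n$ fixed samples by $\mathrm{Poi}(n)$ samples from the unnormalized $\sigma$, which factorizes over the eigenbasis. Then control $\tr(\sigma)\in[1-O(M/\sqrt d),\,1+O(M/\sqrt d)]$, which is where condition 2 enters, ensuring $(1+M)^2\log d/d$ is small. Finally, de-Poissonize at constant-factor cost in $n$. If the exact Schur expansion is too messy, I would fall back on bounding the $\chi^2$-divergence via Weingarten asymptotics, which is valid when $n^2\ll d^2\cdot$ poly.

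\textbf{Step 2 (separation).} Take $\mathcal L_\phi(\rho)=\frac1d\sum_i\phi(X_i/\tr\sigma)$. Its mean under $\Pi_b$ is $\E_{\nu_b}\phi(X)$ plus a correction. Concentration of $\frac1d\sum\phi(X_i)$ contributes deviation $A_\phi\sqrt{\log d/d}$. The normalization perturbs each eigenvalue by a factor $1+O(M\sqrt{\log d/d})$. I would bound this perturbation via the Lipschitz constant: the points $X_i/\tr\sigma$ stay near the $X_i$, and for the change we use $L_\phi$ times a displacement bounded by $(b_0-a_0)\sqrt{\log d/d}$. This requires arranging the construction so that the relevant mass sits where $L_\phi$ applies — for instance by putting most eigenvalues at a fixed block in $[1/2,4/3]$, or by rescaling. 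The total error is then $O\bigl((A_\phi+(b_0-a_0)L_\phi)/\sqrt d\bigr)$ up to logs, and condition 3 (with a small constant) makes it $\le g_\phi/4$. Choosing $I_b$ as $g_\phi/4$-neighborhoods of $\E_{\nu_b}\phi$ gives $\mathrm{dist}(I_0,I_1)\ge g_\phi/2$ with probability $0.99$.

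The sample lower bound then follows from Le Cam's two-point method.
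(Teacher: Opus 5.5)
Your plan has two genuine gaps. The paper closes both with the same ingredient you only gesture at in Step~2: a large flat block of eigenvalues that absorbs the normalization.

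\textbf{Separation.} Suppose all $d$ eigenvalues are i.i.d.\ from $\nu_b$ and you normalize by $\tr\sigma=\frac1d\sum_i X_i\approx\nu$. Then $d\rho$ has eigenvalues $X_i/\bar X\approx X_i/\nu$, so $\mathcal L_\phi(\rho)$ concentrates at $\E_{\nu_b}[\phi(X/\nu)]$, and that gap has nothing to do with $g_\phi$. The theorem explicitly allows $\nu\ll 1$; for example, the von Neumann application uses $I=[0,M]$ with mean $c_0M/K^2$, possibly with $M=1$.

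Your repairs do not work under the stated hypotheses. Shifting by $1-\nu$ changes which values of $\phi$ enter $g_\phi$ and $A_\phi$. Mixing in an atom $x^*$ to raise the mean to $1$ requires $x^*>1$. Such an $x^*$ may lie outside $I$, where nothing about $\phi$ is assumed. If $b_0\le 1$, it cannot be done inside $I$ unless the atom carries all the mass, which destroys $g_\phi$.

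Moreover, $L_\phi$ is a Lipschitz bound only on $[1/2,4/3]$, so it cannot control $\phi(X_i/\bar X)-\phi(X_i)$ for $X_i\in I$. The paper encodes only $q=\lfloor d/4\rfloor$ eigenvalues $X_i/d$, unrescaled, and sets the remaining $r=d-q$ eigenvalues to $(d-S)/(rd)$. On $\{S\le d/2\}$ the rescaled value $(d-S)/r$ lies in $[1/2,4/3]$. This is exactly why $L_\phi$ is defined there, and why condition~3 contains $(b_0-a_0)L_\phi$: the flat block moves by $\abs{S-q\nu}/r$.

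A smaller point: a $\sqrt{\log d}$ Hoeffding deviation would not be absorbed by condition~3. Constant-probability deviations suffice for the $0.99$ bound.

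\textbf{Indistinguishability.} Step~1 asserts the decisive estimate $(nM^2/d^2)^{K/2}$ without a mechanism. Expanding Schur polynomials into monomials in $\delta$ and bounding $\sum_{\lambda}\dim(\lambda)\abs{\E_0 s_\lambda-\E_1 s_\lambda}$ term by term gives no usable control of this $\ell_1$ sum.

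A ``union over $d$ coordinates'' would have to be a hybrid argument. In your construction it cannot be reduced to a one-eigenvalue problem: once $d-1$ vectors of a Haar orthonormal basis are fixed, the last is determined up to phase. No Haar randomness is left to hide the swapped eigenvalue.

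Poissonization does not rescue this. It factorizes only in a fixed eigenbasis. After Haar averaging, the weights $e^{-n\tr\sigma}n^{\abs{\lambda}}\dim(\lambda)s_\lambda(\cdot)/\abs{\lambda}!$ do not factor. Passing to the underlying classical word discards the Haar randomness and leaves a classical problem solvable with far fewer than $d^2/M^2$ samples.

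The paper's mechanism has two parts.
\begin{enumerate}
\item For $\varsigma_{z,u}=\mathbb I/D+\frac{z}{D-1}(\ketbra{u}{u}-\mathbb I/D)$ one has $\tr(\varsigma_{z,u}\varsigma_{z',v})=\frac1D+\frac{zz'}{D(D-1)^2}(D\abs{\braket{u}{v}}^2-1)$. So the squared Hilbert--Schmidt distance between $n$-copy mixtures is an explicit binomial sum. Moment matching kills all $k\le L$, Beta moments of the overlap bound the rest, and the $\sqrt{D^n}$ loss from $\Abs{\cdot}_1\le\sqrt{D^n}\Abs{\cdot}_2$ cancels exactly.
\item A hybrid over the $q$ encoded directions then works. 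Given the others, $u_i$ is Haar in an $(r+1)$-dimensional subspace, and $\rho$ is a fixed channel applied to $\varsigma_{Z,u_i}$ with $Z=(r+1)X_i/(d-S_i)-1$. Since $Z$ is affine in $X_i$, matching is preserved, and $\abs{Z}\le 8M/5$.
\end{enumerate}
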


To apply \cref{thm:main-intro}, we only have to find appropriate probability distributions $\rbra{\nu_0, \nu_1}$ with $K$ matching moments and interval $I$ for the function $\phi$ of interest. 
Then, an estimator for $\mathcal{L}_\phi\rbra{\rho}$ can distinguish $\rho \sim \Pi_0$ from $\rho \sim \Pi_1$, whereas this is impossible when samples of $\rho$ are insufficient. 

\subsection{Applications} \label{sec:app}

For each application, we provide both sample and query complexity lower bounds.
The sample complexity is measured by the number of samples of quantum states used.
The query complexity is measured by the number of queries to the state-preparation oracle (also known as the purified quantum query access oracle) for the quantum state; specifically, purified quantum query access to a quantum state $\rho$ means access to a unitary oracle $U$ that prepares the purification $\ket{\psi} = U\ket{0}$ of $\rho$, i.e., $\rho = \tr_{\textup{env}}\rbra{\ketbra{\psi}{\psi}}$ is obtained by tracing out the environment system. 
The sample lower bound is obtained through the master theorem (\cref{thm:main-intro}) and the query lower bound is further obtained by combining the quantum sample-to-query lifting \cite{WZ25a,TWZ25} (cf.\ \cite{CWZ25}) that if a property testing problem requires sample complexity $\Omega\rbra{S}$, then its query complexity is $\Omega\rbra{\sqrt{S}}$. 

In the following, we use $d$ and $r$ to denote the dimension and rank of quantum states, respectively. 
Note that an upper bound in $r$ implies the same upper bound in $d$ by replacing $r$ with $d$, and a lower bound in $d$ implies the same lower bound in $r$ by replacing $d$ with $r$. 

We list nearly optimal quantum algorithms in \cref{tab:list} at the end of this subsection.

\subsubsection{Entropy estimation}

We close all the gaps for quantum entropy estimation up to polylogarithmic factors, including the von Neumann entropy, R\'enyi entropy, and Tsallis entropy. 

\paragraph{Von Neumann entropy estimation.}
Given an unknown $d$-dimensional quantum state $\rho$, the goal is to estimate the von Neumann entropy $\mathrm{S}\rbra{\rho} = -\tr\rbra{\rho \log\rbra{\rho}}$ to within additive error $\varepsilon$.

The investigation of von Neumann entropy estimation dates back to \cite{AISW20} \circled{1}, where they provided a plug-in estimator, also known as the Empirical Young Diagram (EYD) estimator, with sample complexity $O\rbra{\frac{d^2}{\varepsilon^2}}$, using weak Schur sampling \cite{CHW07}.
The analysis of the plug-in estimator was later improved to a sample complexity of $O\rbra{\frac{d^2}{\varepsilon}+\frac{\log^2\rbra{d}}{\varepsilon^2}}$ by \cite{BMW16} \circled{2} (cf.\ \cite[Theorem 1.27]{OW17}).
A time-efficient estimator was proposed in \cite{WZ25b} \circled{3} with sample and time complexity $\widetilde{O}\rbra{\frac{r^2}{\varepsilon^5}}$ for rank-$r$ quantum states; a better sample complexity of $\widetilde{O}\rbra{\frac{r^2}{\varepsilon^4}}$ was given in \cite{CWZ25} \circled{4}. 
Very recently (after this work), an estimator was proposed in \cite{GW26} \circled{5} with sample complexity $O\rbra{\frac{d^2 \log^2\rbra{\log\rbra{d}}\log\rbra{1/\varepsilon}}{\varepsilon^2\log^2\rbra{d}} + \frac{\log^2\rbra{d/\varepsilon}}{\varepsilon^2}}$ subquadratic in $d$, thereby refuting a conjecture in \cite{CWZ25} that the lower bound is $\Omega\rbra{d^2}$.
However, the prior best sample lower bound remains $\Omega\rbra{\frac{d}{\varepsilon} + \frac{\log^2\rbra{d}}{\varepsilon^2}}$ due to \cite{WZ25b} (for the first term) and \cite{JVHW15,WY16} (for the second term).

The query upper bound for von Neumann entropy estimation is $\widetilde{O}\rbra{\frac{d}{\varepsilon^{1.5}}}$ due to \cite{GL20} \circled{6} and $\widetilde{O}\rbra{\frac{r}{\varepsilon^2}}$ for rank-$r$ quantum states due to \cite{WGL+24} \circled{7}. 
The prior best query lower bound is $\Omega\rbra{\frac{\sqrt{d}}{\sqrt{\varepsilon}}+\frac{\log\rbra{d}}{\varepsilon}}$ (cf.\ \cite{CWZ25}); before that, only a query lower bound of $\widetilde{\Omega}\rbra{\sqrt{d}}$ is known due to \cite{BKT20} from the Shannon entropy estimation. 

In this paper, we show nearly tight lower bounds for von Neumann entropy estimation. 

\begin{theorem}[\cref{cor:entropy-lower-bound} simplified] \label{thm:von-intro}
    For von Neumann entropy estimation, the sample complexity is ${\Omega}\rbra{\frac{d^2}{\varepsilon \log^4\rbra{d}} + \frac{\log^2\rbra{d}}{\varepsilon^2}}$ and the query complexity is ${\Omega}\rbra{\frac{d}{\sqrt{\varepsilon}\log^2\rbra{d}} + \frac{\log\rbra{d}}{\varepsilon}}$.
\end{theorem}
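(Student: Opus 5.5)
The main term $\frac{d^2}{\varepsilon\log^4(d)}$ will come from the master theorem (\cref{thm:main-intro}) applied with $\phi(x) = -x\log(x)$. The term $\frac{\log^2(d)}{\varepsilon^2}$ is inherited from the known classical lower bound \cite{JVHW15,WY16}, by embedding distributions as diagonal states. Since $\mathrm{S}(\rho) = \mathcal{L}_{-x\log x}(\rho) + \log(d)$, an $\varepsilon$-estimator of $\mathrm{S}$ is an $\varepsilon$-estimator of $\mathcal{L}_\phi$. The query bound then follows from sample-to-query lifting \cite{WZ25a,TWZ25}: the square root of $\Omega\bigl(\frac{d^2}{\varepsilon\log^4 d}+\frac{\log^2 d}{\varepsilon^2}\bigr)$ is $\Omega\bigl(\frac{d}{\sqrt{\varepsilon}\log^2 d}+\frac{\log d}{\varepsilon}\bigr)$.

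\textbf{Choosing the moment-matching pair.} I need the sample bound $d^2/M^2 \approx d^2/(\varepsilon\log^4 d)$, i.e.\ $M^2 \approx \varepsilon \log^4 d$, together with a gap $g_\phi \gtrsim \varepsilon$. Take $K = C\log d$ and an interval $I=[a_0,b_0]$ of width $\Theta(M)$ around $1$, so that $M = \max\{|a_0-1|,|b_0-1|\}$.

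Since $\phi$ is analytic near $1$, I would rather use a long interval $I = [0, b_0]$ containing $1$. On it, $-x\log x$ has a logarithmic singularity at $0$, which makes polynomial approximation hard. Rescaling to $[0,1]$ gives $-b_0 y\log y - b_0 y\log b_0$; the linear part is matched automatically. The best degree-$K$ uniform approximation error of $y\log y$ on $[0,1]$ is $\Theta(1/K^2)$. By the standard duality between moment matching and best polynomial approximation, there exist $\nu_0,\nu_1$ on $[0,b_0]$ matching $K$ moments with
\[
g_\phi = \Theta\bigl(b_0/K^2\bigr) = \Theta\bigl(b_0/\log^2 d\bigr).
\]
Shifting mass to meet the mean $\le 1$ constraint costs only constants.

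\textbf{Fixing parameters and checking conditions.} Setting $g_\phi = \varepsilon$ gives $b_0 = \Theta(\varepsilon\log^2 d)$. But then $M \approx 1$, not $\sqrt{\varepsilon}\log^2 d$, which is not the right scaling. So instead I would work in the regime $\varepsilon \gtrsim 1/\log^2 d$. There I take $b_0 = 1+M$ with $M = \Theta(\sqrt{\varepsilon}\log^2 d)$ and use the variant in the formal \cref{thm:master-functional}, where the parameter entering the sample bound is the effective spread $\sqrt{\E[(X-1)^2]}$ rather than the support. The gap is then $g_\phi \approx M^2/K^2 \cdot(\text{const})$, i.e.\ the approximation error $(b_0-a_0)^2 K^{-2}$ times the second derivative scale. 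This yields $\varepsilon \approx M^2/\log^4 d$, hence $M^2 \approx \varepsilon\log^4 d$ and a sample bound of $d^2/M^2 = d^2/(\varepsilon\log^4 d)$, exactly the target.

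It remains to verify the three conditions. Condition~1 holds by the choice $K = C\log d$. Condition~2 holds as long as $M^2 \lesssim d/\log d$, and otherwise the bound is dominated by the second term anyway. For condition~3, $A_\phi \lesssim b_0\log b_0$ and $L_\phi = O(1)$, so the ratio is $\operatorname{polylog}/\varepsilon$, which is $O(\sqrt d)$ for $\varepsilon \ge \operatorname{polylog}(d)/\sqrt d$. For smaller $\varepsilon$ the claimed bound is implied by monotonicity in $\varepsilon$ combined with the second term, or by a padding argument (embedding into a smaller dimension).

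The main obstacle is this parameter bookkeeping: the gap between moment-matched distributions must scale like $(\text{spread})^2/K^2$, so that the loss is exactly $\log^4 d$. I would verify it via the Jackson-type estimate $E_K(x\log x;[a,b]) \asymp (b-a)/K^2$ after affine rescaling, and check that the formal master theorem accepts this spread.
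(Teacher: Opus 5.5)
There is a genuine gap. It lies in the small-$\varepsilon$ regime, which is exactly where the $1/\varepsilon$ dependence has to come from.

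\textbf{Direct use of the master theorem caps out at $O(d^2)$.} Applied directly to $\phi(x)=-x\log x$, \cref{thm:master-functional} cannot give more than $O(d^2)$. Indeed, $g_\phi\le 2E_K(\phi,I)$, and $\phi$ is analytic near $1$. So on any interval with $M\le 1/2$, the error $E_K(\phi,I)$ is geometrically small in $K=\Theta(\log d)$ compared with $A_\phi+(b_0-a_0)L_\phi$, and condition 3 fails. Hence $M=\Omega(1)$ and $d^2/M^2=O(d^2)$. The target $d^2/(\varepsilon\log^4 d)$, however, exceeds $d^2$ once $\varepsilon<1/\log^4 d$.

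Your fallbacks do not fill this range. Monotonicity from $\varepsilon'=1/\log^2 d$ only yields $d^2/\log^2 d$. The term $\log^2(d)/\varepsilon^2$ dominates only for $\varepsilon\lesssim\log^6(d)/d^2$. Embedding into a smaller dimension, by itself, merely lowers $d$.

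\textbf{The missing idea is dilution.} This is part (2) of \cref{thm:master-functional}, used in the paper's proof of \cref{cor:entropy-lower-bound}. Draw $\rho\sim\Pi_b$ in dimension $d'=\floor{d/2}$ and set $\widetilde\rho=p\rho\oplus(1-p)\mathbb{I}_{d-d'}/(d-d')$. Then $\mathrm{S}(\widetilde\rho)=p(\mathcal{L}_\phi(\rho)+\log d')+h_2(p)+(1-p)\log(d-d')$, so the separation between the hypotheses becomes $pg_\phi/8$. Meanwhile, $n$ copies of $\widetilde\rho$ expose only $\operatorname{Bin}(n,p)$ copies of $\rho$, so the bound becomes $\Omega(d^2/(pM^2))$.

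Now take the matching on $[0,M]$ from \cref{lemma:distri-entropy}, which has gap $\Theta(M/K^2)$, with $M=\max\{1,\Theta(\varepsilon K^2)\}$ and $p=\Theta(\varepsilon K^2/M)$. This gives $\Omega(d^2/(\varepsilon\log^2 d\,\max\{1,\varepsilon\log^2 d\}))$, which is at least $d^2/(\varepsilon\log^4 d)$ for $\varepsilon\le 1$. In addition, the ratio in condition 3 becomes $O(K^2(1+\log M))=\mathrm{polylog}(d)$ uniformly in $\varepsilon$. So your restriction $\varepsilon\ge\mathrm{polylog}(d)/\sqrt d$ is unnecessary.

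\textbf{The large-$\varepsilon$ argument rests on claims that do not hold.} First, \cref{thm:master-functional} has no variant in which an effective spread $\sqrt{\E[(X-1)^2]}$ replaces $M=\sup_{x\in I}\abs{x-1}$. The support bound is used essentially in \cref{lemma:q-moment-matching}, where every moment of order $k>K$ is controlled by $B^k$.

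Second, the gap is not $M^2/K^2$. Rescaling gives $E_K(x\log x,[0,b_0])=b_0E_K(y\log y,[0,1])=\Theta(b_0/K^2)$, which is linear in $b_0$; you computed this correctly one paragraph earlier. Also, $M^2/K^2$ with $K=\Theta(\log d)$ would give $\varepsilon\approx M^2/\log^2 d$, not $M^2/\log^4 d$.

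This regime is salvageable with the theorem as stated. A support-based $M=\Theta(\varepsilon K^2)$ with gap $\Theta(M/K^2)$ gives $d^2/(\varepsilon^2\log^4 d)\ge d^2/(\varepsilon\log^4 d)$. But this needs a matching with mean $\le 1$, and that is not free. A dual pair from \cref{fact:mm} on $[0,b_0]$ typically has mean of order $b_0$. Mixing in an atom at $0$ to repair this shrinks the gap by a factor $b_0$, not a constant. The paper obtains mean $\eta M\le 1$ while keeping the full gap, by applying the moment-lifting map $\mathsf T_{\eta,\alpha}$ of \cref{lemma:moment-lifting} to a matching for $\log(1/x)$ on $[\eta,1]$.

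Your handling of the classical $\log^2(d)/\varepsilon^2$ term and of the query bound via sample-to-query lifting matches the paper.
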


\cref{thm:von-intro} shows that for von Neumann entropy estimation, the sample complexity is $\widetilde{\Theta}\rbra{\frac{d^2}{\varepsilon} + \frac{1}{\varepsilon^2}}$, combined with the upper bound in \cite{BMW16} \circled{2}; the query complexity is $\widetilde{\Theta}\rbra{d}$ for constant $\varepsilon$, combined with the upper bound in \cite{GL20} \circled{6}.
In addition, \cite{WZ25b} \circled{3} is nearly sample-optimal and time-optimal in $r$; \cite{AISW20} \circled{1} and \cite{GW26} \circled{5} are nearly sample-optimal in $d$; \cite{CWZ25} \circled{4} is nearly sample-optimal in $r$; \cite{WGL+24} \circled{7} is nearly query-optimal in $r$. 

\paragraph{R\'enyi entropy estimation.}
Given an unknown $d$-dimensional quantum state $\rho$, the goal is to estimate the R\'enyi entropy $\mathrm{S}_{\alpha}^{\mathrm{R}}\rbra{\rho} = \frac{1}{1-\alpha} \log\rbra{\tr\rbra{\rho^\alpha}}$ to within additive error $\varepsilon$.

Plug-in estimators for R\'enyi entropy were provided with sample complexity $O\rbra{\frac{d^{2/\alpha}}{\varepsilon^{2/\alpha}}}$ for $0 < \alpha < 1$ in \cite{AISW20} and $O\rbra{\frac{d^{2}}{\varepsilon^{2}}}$ for non-integer $\alpha > 1$ in \cite{AISW20} \circled{10}.
In particular, sample-optimal estimators are known for integer $\alpha \geq 2$ with sample complexity $\Theta\rbra{\frac{d^{2-2/\alpha}}{\varepsilon^{2/\alpha}}+\frac{d^{1-1/\alpha}}{\varepsilon^2}}$ in \cite{AISW20}. 
Time-efficient estimators were proposed in \cite{WZ25b} with sample and time complexity $\poly\rbra{r/\varepsilon}$; a better sample complexity of $\widetilde{O}\rbra{\frac{r^{\frac{2}{\alpha}}}{\varepsilon^{\frac{2}{\alpha}+2}}}$ for $0 < \alpha < 1$ and $\widetilde{O}\rbra{\frac{r^{2}}{\varepsilon^{2+\frac{2}{\alpha}}}}$ for $\alpha > 1$ was given in \cite{CWZ25} \circled{11}. 
Recently (after this work), estimators were proposed with sample complexity $O\rbra{\frac{d^{1+1/\alpha}}{\varepsilon^{1/\alpha}}+\frac{d^{1/\alpha-1}}{\varepsilon^2}}$ for $0 < \alpha < 1$ in \cite{CW26} \circled{8} and $O\rbra{\frac{d^{2}}{\varepsilon^{1/\alpha}}+\frac{d^{1-1/\alpha}}{\varepsilon^2}}$ for $\alpha > 1$ in \cite{CW26} \circled{12}.
However, the prior best sample lower bound is $\Omega\rbra{\frac{d}{\varepsilon}+\frac{d^{1/\alpha-1}}{\varepsilon^{1/\alpha}}}$ for all constant $\alpha > 0$ due to \cite{WZ25b}. 

Quantum query algorithms for R\'enyi entropy estimation were studied in \cite{SH21} and later a query upper bound of $\poly\rbra{r/\varepsilon}$ was given in \cite{WGL+24} for rank-$r$ quantum states. 
The current best query upper bound is $\widetilde{O}\rbra{\frac{d^{\frac{1}{2\alpha}+\frac{1}{2}}}{\varepsilon^{\frac{1}{2\alpha}+1}}}$ for $0 < \alpha < 1$ in \cite{WZL24} \circled{9} and $\widetilde{O}\rbra{\frac{r}{\varepsilon^{1+\frac{1}{\alpha}}}}$ for $\alpha > 1$ in \cite{WZL24} \circled{13}. 
However, the prior best query lower bound is $\Omega\rbra{d^{\frac{1}{2\alpha}-o\rbra{1}}+\frac{d^{\frac{1}{2\alpha}-\frac{1}{2}}}{\varepsilon^{\frac{1}{2\alpha}}}+\frac{\sqrt{d}}{\sqrt{\varepsilon}}}$ for $0 < \alpha < 1$ and $\Omega\rbra{\frac{d^{\frac{1}{2}-\frac{1}{2\alpha}}}{\varepsilon}+\frac{\sqrt{d}}{\sqrt{\varepsilon}}}$ for non-integer $\alpha > 1$ due to \cite{CWZ25}. 

In this paper, we show nearly tight lower bounds for R\'enyi entropy estimation. 

\begin{theorem}[\cref{cor:renyi-high-accuracy} simplified] \label{thm:ren-intro}
    For R\'enyi entropy estimation, the sample complexity is $\widetilde{\Omega}\rbra{\frac{d^{1+1/\alpha}}{\varepsilon^{1/\alpha}} + \frac{d^{1/\alpha-1}}{\varepsilon^{2}}}$ for $0 < \alpha < 1$ and $\widetilde{\Omega}\rbra{\frac{d^{2}}{\varepsilon^{1/\alpha}} + \frac{d^{1-1/\alpha}}{\varepsilon^{2}}}$ for non-integer $\alpha > 1$; the query complexity is $\widetilde{\Omega}\rbra{\frac{d^{\frac{1}{2}+\frac{1}{2\alpha}}}{\varepsilon^{\frac{1}{2\alpha}}} + \frac{d^{\frac{1}{2\alpha}-\frac{1}{2}}}{\varepsilon}}$ for $0 < \alpha < 1$ and $\widetilde{\Omega}\rbra{\frac{d}{\varepsilon^{\frac{1}{2\alpha}}} + \frac{d^{\frac{1}{2}-\frac{1}{2\alpha}}}{\varepsilon}}$ for non-integer $\alpha > 1$.
\end{theorem}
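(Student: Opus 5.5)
The plan is to apply the master theorem (\cref{thm:main-intro}) with $\phi(x) = x^\alpha$ for the first term of each bound. The second (parametric) term comes from a separate two-point argument. The query bounds then follow from the sample bounds by sample-to-query lifting, which turns $\Omega(S)$ samples into $\Omega(\sqrt{S})$ queries.

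\textbf{Step 1: the first term via moment matching.} Note that $\mathrm{S}^{\mathrm{R}}_\alpha(\rho) = \frac{1}{1-\alpha}\log \mathcal{L}_{x^\alpha}(\rho) + \log d$. Since $\mathcal{L}_{x^\alpha}(\rho) = \Theta(1)$ for the hard instances, which are concentrated around $\mathbb{I}/d$, an additive $\varepsilon$ error in the R\'enyi entropy corresponds to a $\Theta(\varepsilon)$ additive error in $\mathcal{L}_{x^\alpha}$. I therefore need a $K$-moment matching $(\nu_0,\nu_1)$ with $K = \Theta(\log d)$ on an interval $I = [a_0,b_0]$ with mean $\le 1$, and a gap $g_\phi \gtrsim \varepsilon$.

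The standard construction takes the best polynomial approximation error of $x^\alpha$ on an interval $[0,\Lambda]$. By duality, two measures matching $K$ moments differ on $x^\alpha$ by about twice the best uniform approximation error $E_K(x^\alpha;[0,\Lambda]) \asymp \Lambda^\alpha/K^{2\alpha}$. This is where non-integer $\alpha$ matters: for integer $\alpha$ the error vanishes once $K\ge\alpha$.

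To make the mean $\le 1$ and keep condition 3 intact, I would mix each $\nu_b$ with a point mass near $1$. I would also shift the support so that its interaction with $L_\phi$ on $[1/2,4/3]$ is controlled.

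\textbf{Step 2: choosing the parameters.} With $I \subseteq [0,\Lambda]$ and $M \approx \Lambda$, the master theorem gives the sample lower bound $d^2/\Lambda^2$, and the gap is $g \approx w\,\Lambda^\alpha/\log^{2\alpha} d$, where $w$ is the mixing weight on the nontrivial part. The mean constraint forces the effective mass times $\Lambda$ to be at most $1$.

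For $\alpha<1$, the intended scaling puts the "signal" on $\Lambda/d$-scale eigenvalues. Taking $w\approx 1/\Lambda$ gives $g\approx \Lambda^{\alpha-1}$. Setting this equal to $\varepsilon$ gives $\Lambda \approx \varepsilon^{-1/(1-\alpha)}$, which does not match the target, so the normalization must be redone. The target $d^{1+1/\alpha}/\varepsilon^{1/\alpha}$ suggests $M^2 \approx d^{1-1/\alpha}\varepsilon^{1/\alpha}$, i.e.\ small $M$. So I would instead use distributions supported on $[0,\lambda]$ with small $\lambda<1$, plus a large atom near $1$. The signal is then $\lambda^\alpha$ while $M \approx 1$, and $d$ enters through rescaling.

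Because of this, I would invoke the full \cref{thm:master-functional} rather than the simplified version, optimizing $(\lambda, w)$ so that conditions 1--3 hold with $g_\phi \asymp \varepsilon$. The choices must satisfy $(A_\phi + (b_0-a_0)L_\phi)/g_\phi = O(\sqrt d)$, which is the binding constraint producing the $d^{1/\alpha}$ exponent. For $\alpha>1$ the roles flip: spikes at large $\Lambda$ carry the signal, with $M\approx\Lambda$, and yield $d^2/\varepsilon^{1/\alpha}$.

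\textbf{Step 3: the second term and the query bounds.} The term $d^{1/\alpha-1}/\varepsilon^2$ (respectively $d^{1-1/\alpha}/\varepsilon^2$) is a variance bound. I would take two fixed spectra, e.g.\ $\rho_\pm$ with a perturbation of size $\delta$ on the eigenvalues, commuting and diagonal. Their R\'enyi entropies differ by $\varepsilon$ while their fidelity satisfies $1-\mathrm{F}^2 \approx \varepsilon^2/d^{\pm(1/\alpha-1)}$. The classical Hellinger bound then gives the term; a classical-reduction argument as in \cite{AISW20} should suffice.

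\textbf{Main obstacle.} I expect the main obstacle to be Step 2: simultaneously meeting the mean constraint and condition 3 while extracting the exact exponents $d^{1+1/\alpha}\varepsilon^{-1/\alpha}$ and $d^2\varepsilon^{-1/\alpha}$, with only polylogarithmic loss from $K=\Theta(\log d)$.
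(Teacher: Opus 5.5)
There is a genuine gap in Step~2, and it is structural rather than a matter of tuning $(\lambda,w)$. Nothing in your Step~2 introduces a weight $p$ or an auxiliary block, so you are effectively using item~(1) of \cref{thm:master-functional}. That item only ever yields $\Omega(d^2/M^2)$. Conditions \eqref{cond1}--\eqref{cond3} are hypotheses, so \eqref{cond3} cannot be ``the binding constraint producing the $d^{1/\alpha}$ exponent.''

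Both of your targets exceed $d^2$ by a factor polynomial in $d$ or $1/\varepsilon$: $d^{1+1/\alpha}\gg d^2$ for $\alpha<1$, and there is an extra $\varepsilon^{-1/\alpha}$ for $\alpha>1$. Reaching them through item~(1) would require $M=o(1)$, i.e.\ $I$ squeezed into a shrinking window around $1$. But then, with $K\ge C\log d$ matched moments, $g_\phi\le 2E_K(x^\alpha,I)$ is smaller than $A_\phi$ by a factor $O(M)^K$, because $x^\alpha$ is analytic on such a short interval. So \eqref{cond3} forces $M=\Omega(1)$. Mixing $\nu_b$ with an atom near $1$ only scales $g_\phi$ down and leaves $M$ unchanged. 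Your ``$M\approx 1$ with $d$ entering through rescaling'' for $\alpha<1$ and your ``$M\approx\Lambda$'' for $\alpha>1$ are therefore both capped at $\widetilde O(d^2)$.

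The missing idea is dilution at the level of the state, i.e.\ item~(2) of \cref{thm:master-functional}. Embed $\rho\sim\Pi_b$ (in dimension $d'\approx d$) as $\widetilde\rho=p\rho\oplus(1-p)\sigma$ with a fixed known $\sigma$. Then $n$ copies of $\widetilde\rho$ carry only $\operatorname{Bin}(n,p)$ copies of $\rho$, so the bound becomes $\Omega(d'^2/(pM^2))$ with $M=\mathrm{polylog}(d)$. All the super-quadratic behaviour comes from the factor $1/p$.

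For $0<\alpha<1$ the paper takes $\sigma=\ketbra{0}{0}$, so the hard instance is dominated by a pure state rather than concentrated near $\mathbb{I}/d$ as your Step~1 assumes. Then $\tr(\widetilde\rho^\alpha)=(1-p)^\alpha+p^\alpha d'^{1-\alpha}\mathcal{L}_{x^\alpha}(\rho)$ stays in $[2^{-\alpha},2]$, and the gap is of order $p^\alpha d'^{1-\alpha}g_\phi$. Use the lifted matching of \cref{lemma:pow-tr-moment-matching} on $[0,U]$ with $U=\max\{1,\Theta(K^2\varepsilon^{1/\alpha})\}$, which gives $g_\phi\gtrsim (U/K^2)^\alpha$. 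Choosing $p\asymp K^2\varepsilon^{1/\alpha}/(d'^{(1-\alpha)/\alpha}U)$ turns $d'^2/(pM^2)$ into $d^{1+1/\alpha}/(\varepsilon^{1/\alpha}\,\mathrm{polylog}(d))$. The $d^{1/\alpha}$ comes from $p$, not from \eqref{cond3}.

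For non-integer $\alpha>1$ the paper takes $\sigma=\mathbb{I}_{d-d'}/(d-d')$ with $d'=\lfloor d/2\rfloor$ and $p\asymp\varepsilon^{1/\alpha}$. To keep the entropy gap of order $p^\alpha$, the gap must also be a constant fraction of $\E[X^\alpha]$, because the logarithm is scale-sensitive and large spikes dominate $\mathcal{L}_{x^\alpha}$. The paper obtains this from a relative-error approximation bound (\cref{lemma:rel-EK-power}), converted into a matching via Hahn--Banach duality (\cref{lemma:relative-power-matching}); your sketch does not address this.

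Your Step~3 and the sample-to-query lifting are fine and agree with the paper: a classical two-point Hellinger argument for $\alpha<1$, and \cite{AOST17} for $\alpha>1$.
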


\cref{thm:ren-intro} shows that for R\'enyi entropy estimation, the sample complexity is $\widetilde{\Theta}\rbra{\frac{d^{1+1/\alpha}}{\varepsilon^{1/\alpha}} + \frac{d^{1/\alpha-1}}{\varepsilon^{2}}}$ for $0 < \alpha < 1$, combined with the upper bound in \cite{CW26} \circled{8}, and $\widetilde{\Theta}\rbra{\frac{d^{2}}{\varepsilon^{1/\alpha}} + \frac{d^{1-1/\alpha}}{\varepsilon^{2}}}$ for non-integer $\alpha > 1$, combined with the upper bound in \cite{CW26} \circled{12}; for constant $\varepsilon$, the query complexity is $\widetilde{\Theta}\rbra{d^{\frac{1}{2}+\frac{1}{2\alpha}}}$ for $0 < \alpha < 1$, combined with the upper bound in \cite{WZL24} \circled{9}, and $\widetilde{\Theta}\rbra{r}$ for non-integer $\alpha > 1$, combined with the upper bound in \cite{WZL24} \circled{13}.
In addition, for non-integer $\alpha > 1$, \cite{AISW20} \circled{10} is nearly sample-optimal in $d$ and \cite{CWZ25} \circled{11} is nearly sample-optimal in $r$. 

\paragraph{Tsallis entropy estimation.}

Given an unknown $d$-dimensional quantum state $\rho$, the goal is to estimate the Tsallis entropy $\mathrm{S}_{\alpha}^{\mathrm{T}}\rbra{\rho} = \frac{1}{1-\alpha} \rbra{\tr\rbra{\rho^\alpha} - 1}$ to within additive error $\varepsilon$.

For integer $\alpha \geq 2$, Tsallis entropy estimation can be done via the generalized SWAP test \cite{BCWdW01,EAO+02} with query complexity $\Theta\rbra{{1}/{\varepsilon}}$ and sample complexity $\Theta\rbra{{1}/{\varepsilon^2}}$, where the lower bound was shown in \cite{CWLY23,GHYZ26}; for asymptotically large integer $\alpha$, the sample complexity is $\Theta\rbra{\frac{1}{\alpha\varepsilon^2}}$ (cf.\ \cite{CWLZ26}) and the query complexity is $\widetilde{\Theta}\rbra{\frac{1}{\sqrt{\alpha}\varepsilon}}$ \cite{Wan25}.
For the non-integer case, quantum algorithms for Tsallis entropy estimation were provided in \cite{WGL+24} with query complexity $\poly\rbra{r/\varepsilon}$ for rank-$r$ quantum states for all non-integer $\alpha > 0$ and $\alpha \neq 1$. 
Later, the query complexity was improved to $O\rbra{1/\varepsilon^{1+\frac{1}{\alpha-1}}}$ for $\alpha > 1$ in \cite{LW26}, removing the $r$-dependence, where the sample complexity was shown to be $\widetilde{O}\rbra{1/\varepsilon^{3+\frac{2}{\alpha-1}}}$ for $\alpha > 1$. 
For non-integer $\alpha > 2$, the sample complexity $\widetilde{\Theta}\rbra{{1}/{\varepsilon^2}}$ was given in \cite{CW25}.
For $1 < \alpha < 2$, the sample upper bound is $\widetilde{O}\rbra{{1}/{\varepsilon^{\frac{2}{\alpha-1}}}}$ due to \cite{CW25} \circled{15}, whereas the prior best lower bound is only $\Omega\rbra{1/\varepsilon^{\frac{1}{\alpha-1}}}$ \cite{CW25}.
For $0 < \alpha < 1$, the sample complexity was shown to be $O\rbra{\rbra{d/\varepsilon}^{2/\alpha}}$ in \cite{CLW26} and was recently (after this work) improved to $O\rbra{\frac{d^{1+1/\alpha}}{\varepsilon^{1/\alpha}}+\frac{d^{2-2\alpha}}{\varepsilon^2}}$ in \cite{CW26} \circled{14}, whereas the prior best lower bound is $\Omega\rbra{\rbra{d/\varepsilon}^{1/\alpha}}$ \cite{CLW26}. 

In this paper, we show nearly tight lower bounds for Tsallis entropy estimation. 

\begin{theorem}[\cref{cor:tsallis-regimes} simplified] \label{thm:tsa-intro}
    For Tsallis entropy estimation, the sample complexity is $\widetilde{\Omega}\rbra{\frac{d^{1+1/\alpha}}{\varepsilon^{1/\alpha}}+\frac{d^{2-2\alpha}}{\varepsilon^2}}$ for $0 < \alpha < 1$ and $\widetilde{\Omega}\rbra{{1}/{\varepsilon^{\frac{2}{\alpha-1}}}}$ for $1 < \alpha < 2$.
\end{theorem}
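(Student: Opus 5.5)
The plan is to reduce \cref{thm:tsa-intro} to lower bounds for the functional $\mathcal{L}_{x^\alpha}$, via the identity $\mathrm{S}_\alpha^{\mathrm{T}}(\rho) = \frac{1}{1-\alpha}\rbra{d^{1-\alpha}\mathcal{L}_{x^\alpha}(\rho) - 1}$. An $\varepsilon$-additive estimate of the Tsallis entropy is an $\varepsilon' := \abs{1-\alpha}\varepsilon d^{\alpha-1}$-additive estimate of $\mathcal{L}_{x^\alpha}(\rho)$. So it suffices to prove lower bounds for estimating $\mathcal{L}_{x^\alpha}$ to error $\varepsilon'$ in dimension $d$, and then translate back. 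For $1<\alpha<2$ we additionally have the freedom to choose $d$ as a function of $\varepsilon$, since the target bound is dimension-free.

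\textbf{Case $0<\alpha<1$.} There are two terms.

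For the \emph{first term}, I will reuse the construction behind the $0<\alpha<1$ case of \cref{thm:ren-intro}. This is an application of \cref{thm:main-intro} with $\phi(x)=x^\alpha$, $K=\Theta(\log d)$, and a $K$-moment matching $(\nu_0,\nu_1)$ on $I=[a_0,b_0]$ whose gap $g_\phi$ is governed by the best degree-$K$ polynomial approximation error of $x^\alpha$ near $0$ (of order $(b_0/K^2)^\alpha$, appropriately weighted by the small mass that the mean-$\le 1$ constraint allows away from $1$). I will redo the parameter balancing with the required gap $g_\phi = \Theta(\varepsilon')$ instead of the Rényi one. The sample bound $\Omega(d^2/M^2)$ from \cref{thm:main-intro} should then come out as $\widetilde\Omega(d^{1+1/\alpha}/\varepsilon^{1/\alpha})$.

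A consistency check I expect to hold: in the Rényi hard instances, $\tr(\rho^\alpha)$ must be $\Theta(1)$ rather than $\Theta(d^{1-\alpha})$. Otherwise, converting the Rényi bound through $\delta = \varepsilon d^{\alpha-1}$ would give a bound exceeding the known upper bound \circled{14}. So the first term is inherited from the Rényi first term at $\delta=\varepsilon$, and I will verify this directly from the parameters.

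For the \emph{second term}, $d^{2-2\alpha}/\varepsilon^2 = \Theta(1/\varepsilon'^2)$. I therefore only need the generic $\Omega(1/\varepsilon'^2)$ bound for estimating $\mathcal{L}_{x^\alpha}$. I plan to obtain it from a two-point construction: diagonal states $\rho_\pm$ that are $O(\varepsilon')$-perturbations of a fixed state (for instance, mixing weight shifted between two blocks of eigenvalues of different magnitude). These should satisfy $\abs{\mathcal{L}_{x^\alpha}(\rho_+)-\mathcal{L}_{x^\alpha}(\rho_-)} \ge 2\varepsilon'$ while having fidelity $1-O(\varepsilon'^2)$. The usual Helstrom bound on $\rho_\pm^{\otimes n}$ then forces $n=\Omega(1/\varepsilon'^2)$.

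\textbf{Case $1<\alpha<2$.} Here I embed a dimension-$d$ hard instance into the Tsallis problem at an optimized $d$. Near the maximally mixed state, $\tr\rho^\alpha\approx d^{1-\alpha}$, so Tsallis error $\varepsilon$ corresponds to Rényi (relative) error $\delta\approx \varepsilon d^{\alpha-1}$. The non-integer-$\alpha$ Rényi bound $\widetilde\Omega(d^2/\delta^{1/\alpha})$ then gives $\widetilde\Omega(d^{1+1/\alpha}/\varepsilon^{1/\alpha})$, valid as long as $\delta\le c$, i.e.\ $d\le c\,\varepsilon^{-1/(\alpha-1)}$. Choosing $d=\Theta(\varepsilon^{-1/(\alpha-1)})$ gives exponent $\frac{\alpha+1}{\alpha(\alpha-1)}+\frac{1}{\alpha}=\frac{2}{\alpha-1}$, i.e.\ $\widetilde\Omega(1/\varepsilon^{2/(\alpha-1)})$ as claimed. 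I must also check that the Rényi hard instances for $\alpha>1$ do have $\tr\rho^\alpha=\Theta(d^{1-\alpha})$, which is consistent with $\mathcal{L}_{x^\alpha}$ being $\Theta(1)$ when the mean is $\le 1$ and the support is bounded. If not, I will apply \cref{thm:main-intro} directly with $g_\phi=\Theta(\varepsilon')$ and $I$ bounded.

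The main obstacle is the first term for $0<\alpha<1$. The precise choice of $(\nu_0,\nu_1)$, $I$ and $M$ must simultaneously respect the mean-$\le1$ constraint and conditions 2--3 of \cref{thm:main-intro}, and it must give the gap $\Theta(\varepsilon d^{\alpha-1})$ with $d^2/M^2$ matching $d^{1+1/\alpha}/\varepsilon^{1/\alpha}$. A naive scaling of the support to $[0,M]$ gives the weaker $(d/\varepsilon)^{2/\alpha}$-type tradeoff. My first attempt will put the mass of $\nu_b$ on a scaled atom set $\{0\}\cup[\lambda,M]$ with total weight $\approx 1/M$ away from $1$, and tune $\lambda$ against the polynomial approximation rate of $x^\alpha$.
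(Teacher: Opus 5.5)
There is one step that would fail, but your proposal already contains its fix.

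\textbf{The direct route to the first term for $0<\alpha<1$ fails.} The plan you call the main obstacle is to rebalance \cref{thm:main-intro} with $g_\phi=\Theta(\varepsilon d^{\alpha-1})$ and read off $\Omega(d^2/M^2)$. This cannot work. Since $d^{1+1/\alpha}/\varepsilon^{1/\alpha}\ge d^{1+1/\alpha}$, matching it up to polylogarithmic factors forces $M\le d^{-(1/\alpha-1)/2}\,\mathrm{polylog}(d)$. That already excludes your $\{0\}\cup[\lambda,M]$ attempt, which has $M\ge1$. It also forces $I\subseteq[1-M,1+M]$. The Taylor coefficients $\binom{\alpha}{j}$ of $x^\alpha$ at $1$ have absolute value at most $1$, and $K\ge C\log d$, so $g_\phi\le 2E_K(x^\alpha,I)\le 2M^{K+1}/(1-M)=d^{-\Omega(\log d)}$. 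This is far below the required $\Theta(\varepsilon d^{\alpha-1})$.

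What your description of the R\'enyi construction omits is the dilution $\widetilde\rho=p\rho\oplus(1-p)\ketbra{0}{0}$ with $p\approx K^2\varepsilon^{1/\alpha}d^{-(1-\alpha)/\alpha}$. This is combined with item (2) of \cref{thm:master-functional}, which gains the factor $1/p$ by binomial thinning of the samples.

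\textbf{Your inheritance remark is exactly what the paper does.} It also needs no inspection of the hard instances. Every state has $\tr\rho^\alpha\ge1$ when $\alpha<1$. A Tsallis estimate therefore gives $\hat z$ with $\abs{\hat z-\tr\rho^\alpha}\le(1-\alpha)\varepsilon$. After clamping $\hat z$ to $[1,\infty)$, the value $\frac{1}{1-\alpha}\log\hat z$ is an $\varepsilon$-accurate R\'enyi estimate. So \cref{cor:renyi-high-accuracy} transfers verbatim.

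\textbf{Second term for $0<\alpha<1$.} The rest is correct but partly differs from the paper. The paper cites the classical bound \cite[Theorem~3]{FS17}. Your two-point argument is a valid self-contained replacement. For instance, move mass $\eta=\Theta(\varepsilon d^{\alpha-1})$ from an eigenvalue near $1/2$ to a block of $d-1$ equal eigenvalues. This changes $\tr\rho^\alpha$ by $\Theta(\eta d^{1-\alpha})$ at $\chi^2$-cost $O(\eta^2)$. The base must be non-uniform, since at $\mathbb{I}/d$ the first-order change vanishes.

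\textbf{Case $1<\alpha<2$.} Your reduction is likewise black-box, because $\tr\rho^\alpha\ge d^{1-\alpha}$ turns a Tsallis $\varepsilon$-estimate into a R\'enyi $\varepsilon d^{\alpha-1}$-estimate. Your exponent arithmetic at $d=\Theta(\varepsilon^{-1/(\alpha-1)})$ is right.

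The paper instead applies the dedicated \cref{cor:tsallis-unified} at $d'=\lfloor a_\alpha\varepsilon^{-1/(\alpha-1)}\rfloor$. There $h=\Theta(1)$ and the moment matching of \cref{lemma:power-moment-matching} lives on $[0,tK^2]$, so $M=\Theta(\log^2 d)$ and the loss is only $\log^4(1/\varepsilon)$. Going through \cref{cor:renyi-high-accuracy}, you inherit the support $[c_\alpha K^{-4/(2-\alpha)},C_\alpha K^{4/(2-\alpha)}]$ of \cref{lemma:relative-power-matching}. The resulting loss is $\log^{\kappa_\alpha}(1/\varepsilon)$ with $\kappa_\alpha=8/(2-\alpha)$. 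That is still $\widetilde\Omega$ for fixed $\alpha$, but it degrades as $\alpha\to2$. Your stated fallback, a direct application with $\varepsilon'=\Theta(1)$ and bounded support, is essentially the paper's argument.
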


\cref{thm:tsa-intro} shows that for Tsallis entropy estimation, the sample complexity is $\widetilde{\Theta}\rbra{\frac{d^{1+1/\alpha}}{\varepsilon^{1/\alpha}}+\frac{d^{2-2\alpha}}{\varepsilon^2}}$ for $0 < \alpha < 1$, combined with the upper bound in \cite{CW26} \circled{14}, and $\widetilde{\Theta}\rbra{{1}/{\varepsilon^{\frac{2}{\alpha-1}}}}$ for $1 < \alpha < 2$, combined with the upper bound in \cite{CW25} \circled{15}.

\subsubsection{Closeness and spectrum estimation}

We close all gaps for quantum closeness estimation (including trace distance and Uhlmann fidelity) and spectrum estimation up to polylogarithmic factors. 

\paragraph{Trace distance estimation.}
Given two unknown $d$-dimensional quantum states $\rho$ and $\sigma$, the goal is to estimate the trace distance $\mathrm{T}\rbra{\rho, \sigma} = \frac{1}{2}\tr\rbra{\abs{\rho - \sigma}}$ to within additive error $\varepsilon$. 

The query complexity was first shown to be $\poly\rbra{r/\varepsilon}$ for rank-$r$ quantum states in \cite{WGL+24}, which was later improved to $O\rbra{\frac{r}{\varepsilon^2}\log\rbra{\frac{1}{\varepsilon}}}$ in \cite{WZ24} \circled{16}. 
However, the prior best query lower bound is $\widetilde{\Omega}\rbra{\sqrt{d} + \frac{1}{\varepsilon}}$ due to \cite{BKT20,Wan24} (cf.\ \cite{CWZ25}).

A time-efficient estimator was proposed in \cite{WZ24} \circled{17} with sample and time complexity $\widetilde{O}\rbra{\frac{r^2}{\varepsilon^5}}$. 
The sample complexity was later improved to $O\rbra{\frac{r^2}{\varepsilon^4}\log^2\rbra{\frac{1}{\varepsilon}}}$ in \cite{CWZ25} \circled{18} and then to $O\rbra{\frac{r^2}{\varepsilon^2}}$ recently in \cite{LT26} \circled{19}. 
However, the prior best sample lower bound is $\Omega\rbra{\frac{d}{\varepsilon^2}}$ due to \cite{OW21} (cf.\ \cite{BOW19}). 

In this paper, we show nearly tight lower bounds for trace distance estimation. 

\begin{theorem}[\cref{corollary:td} simplified] \label{thm:td-intro}
    For trace distance estimation, the sample complexity is $\Omega\rbra{\frac{d^2}{\varepsilon^2 \log^4\rbra{d}}}$ and the query complexity is $\Omega\rbra{\frac{d}{\varepsilon \log^2\rbra{d}}}$. 
\end{theorem}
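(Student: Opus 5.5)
We obtain \cref{thm:td-intro} by applying the master theorem (\cref{thm:main-intro}) to $\phi(x)=\max\{1-x,0\}$, which satisfies $\mathrm{T}(\rho,\mathbb{I}/d)=\mathcal{L}_\phi(\rho)$. An estimator for $\mathrm{T}(\rho,\sigma)$ with $\sigma=\mathbb{I}/d$ fixed also estimates $\mathcal{L}_\phi(\rho)$. It therefore suffices to produce a $K$-moment matching $(\nu_0,\nu_1)$ with $K=\Theta(\log d)$ on an interval $I=[1-M,1+M]$ around $1$, with mean $\le 1$, and with gap $g_\phi=\Omega(\varepsilon)$ for $M=\Theta(\varepsilon\log^2 d)$. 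The master theorem then gives sample complexity $\Omega(d^2/M^2)=\Omega(d^2/(\varepsilon^2\log^4 d))$. Sample-to-query lifting \cite{WZ25a,TWZ25} then gives the query bound $\Omega(d/(\varepsilon\log^2 d))$.

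\textbf{Constructing the moment matching.} We use the standard duality between moment matching and best polynomial approximation. Rescale $I$ to $[-1,1]$ via $x=1+Mt$. Then $\phi(x)=\max\{-Mt,0\}=M\max\{-t,0\}=\tfrac{M}{2}(|t|-t)$. The linear part $-Mt/2$ is a degree-$1$ polynomial, so its expectation agrees under any two distributions matching the first moment. The gap is therefore $g_\phi=\tfrac{M}{2}\,|\E_{\nu_0}|t|-\E_{\nu_1}|t||$. By duality (Hahn--Banach / LP duality), there exist distributions on $[-1,1]$ matching the first $K$ moments with
\[
\E_{\nu_0}|t|-\E_{\nu_1}|t| = 2E_K(|t|;[-1,1]),
\]
where $E_K$ is the best uniform approximation error by degree-$K$ polynomials. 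Bernstein's classical result gives $E_K(|t|)=\Theta(1/K)$. Hence $g_\phi=\Theta(M/K)=\Theta(M/\log d)$.

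To satisfy the mean condition, we can symmetrize: $|t|$ is even, so $\nu_0,\nu_1$ may be taken symmetric, giving mean exactly $1$ in the $x$-variable. Setting $M=\Theta(\varepsilon\log d)$ gives $g_\phi=\Theta(\varepsilon)$. That would in fact yield $d^2/(\varepsilon^2\log^2 d)$. The stated $\log^4$ presumably reflects slack in the formal \cref{thm:master-functional}, or a requirement such as $K=\Theta(\log d)$ with additional constants or support restrictions; any $M=O(\varepsilon\log^2 d)$ suffices for the claim.

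\textbf{Checking the conditions.} Condition 1 holds by choice of $K$. For condition 2, $(1+M)^2=O(1)$ when $\varepsilon\log^2 d\le 1$. When $\varepsilon\log^2 d>1$, the claimed bound is $O(d^2)$ up to logs, and it follows from the constant-$\varepsilon$ case, or trivially when the bound is below $1$. We also need $M\le 1$ so that $I\subseteq[0,\infty)$, which holds in the same regime.

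For condition 3, $A_\phi\le M$ and $L_\phi\le 1$, so
\[
\frac{A_\phi+(b_0-a_0)L_\phi}{g_\phi}=O\!\left(\frac{3M}{M/\log d}\right)=O(\log d)=O(\sqrt d).
\]
Finally, the intervals $I_0,I_1$ are separated by $\Omega(g_\phi)=\Omega(\varepsilon)$. After rescaling $\varepsilon$ by a constant, any estimator with additive error $\varepsilon$ distinguishes $\Pi_0$ from $\Pi_1$ with probability $0.99$. This forces $n=\Omega(d^2/M^2)$.

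\textbf{Main obstacle.} The crux is the moment-matching construction with the right gap-to-width ratio $g_\phi/M=\Omega(1/\log d)$. This rests on the dual characterization together with Bernstein's $\Theta(1/K)$ lower bound for approximating $|t|$. We also need to make sure the matched distributions fit the formal master theorem's requirements (mean $\le 1$, support in $I$). The symmetric choice handles the mean; the rest is bookkeeping of logarithmic factors.
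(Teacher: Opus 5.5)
Your argument is sound only for $\varepsilon = O(1/\log d)$, and the rest of the range is not covered. Nonnegativity of the eigenvalues forces your symmetric interval $[1-M,1+M]$ to have $M\le 1$. On such an interval, \emph{any} $K$-moment matching satisfies $g_\phi\le 2E_K(\phi,I)=M\,E_K(\abs{t},[-1,1])=O(M/K)=O(1/\log d)$, by the same Bernstein/Jackson estimate you invoke. So no choice of $\nu_0,\nu_1$ on such an interval gives $g_\phi=\Omega(\varepsilon)$ once $\varepsilon\gg 1/\log d$, and in particular your construction yields nothing for constant $\varepsilon$.

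That is exactly the regime behind the $\log^4$. The formal \cref{corollary:td} holds for all $\varepsilon\le\varepsilon_0$ with $\varepsilon_0$ a universal constant, and gives $d^2/(\varepsilon^2\log^2 d\,\max\{1,\varepsilon^2\log^2 d\})$, which at constant $\varepsilon$ is $d^2/\log^4 d$. The $\log^4$ is therefore not slack in \cref{thm:master-functional}. Your treatment of the case $\varepsilon\log^2 d>1$ (``it follows from the constant-$\varepsilon$ case'') is circular. Monotonicity cannot fill the gap either, since a lower bound proved at a smaller error does not transfer to a larger error.

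The missing idea is an \emph{asymmetric} support interval: keep the part left of $1$ short, so that $a_0\ge 0$, but let the part to the right be long. The paper takes $I=[1-\frac{3t}{4},\,1+\frac{3t(1-a)}{4a}]$. This places the kink of $\max\{1-x,0\}$ at relative position $a$ near the left endpoint, where the best-approximation error is large relative to the (bounded) left width. The construction proceeds in three steps.
\begin{enumerate}
\item \cref{lemma:jhw18} (from \cite{JHW18}) gives $E_K(f_a,[a/d_0,1])\ge c_0/(K\sqrt a)$ for $f_a(x)=(\abs{x-a}-a)/x$.
\item \cref{fact:mm} together with the moment lifting $\mathsf T_{a/d_0,a/d_0}$ turns this into the $(K+1)$-moment matching of \cref{lemma:distri-td}, supported on $[\frac14,\frac14+\frac{3}{4a}]$, with gap $\Delta_{\mathrm T}=\Theta(1/(K\sqrt a))$.
\item The affine map $Y\mapsto 1-t+tY$ with $t=24\varepsilon/\Delta_{\mathrm T}$, and $a=\min\{\frac14,\Theta(1/(\varepsilon K)^2)\}$, gives $g_\phi\ge 24\varepsilon$, mean $\le 1$, and $M=\Theta(\varepsilon K\max\{1,\varepsilon K\})$.
\end{enumerate}
For $\varepsilon K\lesssim 1$ this recovers your $M=\Theta(\varepsilon\log d)$. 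For constant $\varepsilon$ it gives $M=\Theta(\log^2 d)$, hence $\Omega(d^2/\log^4 d)$. With this matching in place, the remaining steps are routine: checking \cref{cond1,cond2,cond3} (the relevant ratios are now polylogarithmic in $d$) and lifting to the query bound.
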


\cref{thm:td-intro} shows that for trace distance estimation, the sample complexity is $\widetilde{\Theta}\rbra{\frac{r^2}{\varepsilon^2}}$, combined with the upper bound in \cite{LT26} \circled{19}; the query complexity is $\widetilde{\Theta}\rbra{r}$ for constant $\varepsilon$, combined with the upper bound in \cite{WZ24} \circled{16}. 
Moreover, for constant $\varepsilon$, \cite{WZ24} \circled{17} is both nearly sample-optimal and time-optimal in $r$, and \cite{CWZ25} \circled{18} is nearly sample-optimal in $r$. 

\paragraph{Spectrum estimation.}
Given an unknown $d$-dimensional quantum state $\rho$, the goal is to estimate the nonincreasingly ordered spectrum of $\rho$ to precision $\varepsilon$ in the total variation distance. 

The spectrum estimation task was early studied in \cite{ARS88,KW01,HM02,CM06}.
The sample complexity was analyzed as $O\rbra{\frac{d^2}{\varepsilon^2}\log\rbra{\frac{d}{\varepsilon}}}$ in \cite{OW21} \circled{20}, which was improved to $O\rbra{\frac{d^2}{\varepsilon^2}}$ in \cite{OW17} \circled{21} and recently to $O\rbra{\frac{d^2 \log^2\rbra{\log\rbra{d}}}{\varepsilon^4 \log^2\rbra{d}}}$ in \cite{PSTW26b} \circled{22}.
However, the prior best lower bound is $\Omega\rbra{\frac{d}{\varepsilon^2}}$ from the mixedness testing lower bound in \cite{OW21}. 

In this paper, we show a nearly tight lower bound for spectrum estimation through a reduction from trace distance estimation. 

\begin{theorem}[\cref{cor:spectrum-lower-bound} simplified] \label{thm:spectrum-intro}
    For spectrum estimation, the sample complexity is 
    \[
    \Omega\rbra*{
    \frac{d^2}{\varepsilon^2\log^2\rbra{d}
    \max\cbra{1,\varepsilon^2\log^2\rbra{d}}}}. 
    \]
\end{theorem}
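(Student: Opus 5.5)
The plan is to reduce from trace distance estimation to the maximally mixed state, as the sentence before the statement suggests. Concretely, I will use the hard instance behind \cref{thm:td-intro}: the master theorem (\cref{thm:main-intro}) applied with $\phi(x)=\max\{1-x,0\}$ gives two ensembles $\Pi_0,\Pi_1$ of $d$-dimensional states. They are indistinguishable with $n \le O(d^2/M^2)$ samples, while $\mathrm{T}(\rho,\mathbb{I}/d)=\mathcal{L}_\phi(\rho)$ lies in intervals $I_0,I_1$ separated by $\Omega(g_\phi)$.

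The key observation is that $\mathrm{T}(\rho,\mathbb{I}/d)=\frac12\sum_i|\lambda_i-1/d|$ depends only on the spectrum of $\rho$. Moreover, it is $1$-Lipschitz with respect to the total variation distance between sorted spectra. Write $\lambda^\downarrow$ for the sorted spectrum of $\rho$ and $u$ for the uniform vector. If an estimate $\hat\lambda$ satisfies $\frac12\|\hat\lambda-\lambda^\downarrow\|_1\le\varepsilon$, then the triangle inequality gives
\[
\bigl|\tfrac12\|\hat\lambda-u\|_1-\mathrm{T}(\rho,\mathbb{I}/d)\bigr|\le\varepsilon .
\]
Hence any spectrum estimator with precision $\varepsilon$ and $n$ samples yields a trace-distance-to-mixed estimator with additive error $\varepsilon$ and the same $n$. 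That estimator would distinguish $\Pi_0$ from $\Pi_1$ whenever $\varepsilon$ is a small constant times the separation $g_\phi$. The lower bound therefore transfers directly.

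What remains is to choose parameters and track the $\log(d)$ factors, which is where the particular form $d^2/(\varepsilon^2\log^2 d\,\max\{1,\varepsilon^2\log^2 d\})$ comes from. I would choose the moment-matching pair $(\nu_0,\nu_1)$ on an interval around $1$ of half-width $M$, with $K=\Theta(\log d)$ matching moments. By best polynomial approximation of $|1-x|$ (Bernstein-type, error of order $M/K$), the gap $g_\phi$ can be made $\Theta(M/K)=\Theta(M/\log d)$. Setting $\varepsilon=\Theta(M/\log d)$, i.e.\ $M=\Theta(\varepsilon\log d)$, gives the bound $\Omega(d^2/M^2)=\Omega(d^2/(\varepsilon^2\log^2 d))$. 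This choice requires $M\lesssim 1$, which holds when $\varepsilon\log d\lesssim1$.

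When $\varepsilon\log d$ is larger, the interval must extend beyond $M\approx1$. Condition 3 of \cref{thm:main-intro} also has to be checked: $A_\phi\le M$, $L_\phi\le 1$, and $(b_0-a_0)\le 2M$, so the ratio is $O(M/g_\phi)=O(\log d)=O(\sqrt d)$. With $M\gtrsim 1$ the gap behaves differently and one pays an extra factor, giving $d^2/(\varepsilon^4\log^4 d)$. This matches the $\max\{1,\varepsilon^2\log^2 d\}$ term. Alternatively, one could simply monotonically reuse the constant-$\varepsilon$ bound in this regime.

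The main obstacle is this parameter bookkeeping, specifically securing $g_\phi=\Omega(M/\log d)$ with exactly $K=\Theta(\log d)$ moments in both regimes while meeting conditions 1–3. The reduction itself is routine. I would first try the extremal moment-matching construction dual to the best uniform polynomial approximation of $|x-1|$ on $[1-M,1+M]$, with the interval intersected with $[0,\infty)$ when $M\ge1$.
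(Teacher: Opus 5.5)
Your reduction is the same as the paper's. The proof of \cref{cor:spectrum-lower-bound} is the identity $d_{\mathrm{TV}}\rbra{\lambda^\downarrow\rbra\rho,u_d}=\mathrm T\rbra{\rho,\mathbb I/d}$ plus the triangle inequality, followed by a citation of \cref{corollary:td}. That theorem's bound already has exactly the form $d^2/\rbra{\varepsilon^2\log^2\rbra d\max\cbra{1,\varepsilon^2\log^2\rbra d}}$. Citing it, rather than the simplified \cref{thm:td-intro} (which only gives $d^2/\rbra{\varepsilon^2\log^4\rbra d}$), would finish the proof. Instead you re-derive the trace-distance bound from \cref{thm:main-intro}, and that re-derivation has a genuine gap when $\varepsilon\log d\gtrsim1$.

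For $\varepsilon\log d\lesssim1$ your construction on $\sbra{1-M,1+M}$ works. You do need to enforce the master theorem's hypothesis that the common mean is at most $1$, which you never mention; averaging the pair from \cref{fact:mm} with its reflection under $x\mapsto2-x$ fixes this.

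For $M\gg1$, two things fail.
\begin{itemize}
\item Your target $g_\phi=\Omega\rbra{M/\log d}$ ``in both regimes'' is unattainable. Any $K$-moment matching on $I\subseteq\sbra{0,1+M}$ has $g_\phi\le E_K\rbra{\abs{x-1},\sbra{0,1+M}}=\Theta\rbra{\sqrt M/K}$, because the kink sits at relative position $1/\rbra{1+M}$ near an endpoint. The $\varepsilon^{-4}$ term comes precisely from this endpoint rate with $M\approx\varepsilon^2K^2$, which you only gesture at.
\item More seriously, the extremal pair for $\abs{x-1}$ on $\sbra{0,1+M}$ has support spread across the whole interval, so nothing keeps its mean $\le1$; generically it is of order $M$. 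Then $\sum_iX_i=\Theta\rbra{qM}\gg d$, the instance of \cref{def:pi-01} collapses to $\mathbb I/d$, and the master theorem does not apply.
\end{itemize}
The missing idea is the paper's combination of two ingredients. One is the weighted approximation bound $E_K\rbra{f_a,\sbra{a/d_0,1}}\gtrsim1/\rbra{K\sqrt a}$ for $f_a\rbra x=\rbra{\abs{x-a}-a}/x$ (\cref{lemma:jhw18}). The other is the moment-lifting map $\mathsf T_{a/d_0,a/d_0}$ (\cref{lemma:moment-lifting}), which adds one matched moment and an atom at the left endpoint that pins the mean. After the affine maps in \cref{lemma:distri-td,corollary:td}, this gives support $\sbra{1-3t/4,\,1+3t\rbra{1-a}/\rbra{4a}}$, mean $\le1$, $g_\phi\ge24\varepsilon$, and $M=\Theta\rbra{\varepsilon K\max\cbra{1,\varepsilon K}}$.

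Your fallback of reusing a constant-$\varepsilon$ bound monotonically does not rescue this, for three reasons. Your construction never produces a constant-$\varepsilon$ bound. Lower bounds transfer only from coarser to finer precision. And even $\Omega\rbra{d^2/\log^4 d}$ at constant $\varepsilon$ would fall short of $d^2/\rbra{\varepsilon^4\log^4 d}$ by a factor $\varepsilon^{-4}$ on the range $1/\log d\lesssim\varepsilon\lesssim1$.
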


\cref{thm:spectrum-intro} shows that the sample complexity of spectrum estimation is $\widetilde{\Theta}\rbra{\frac{d^2}{\varepsilon^2}}$, combined with the upper bounds in \cite{OW21} \circled{20} and \cite{OW17} \circled{21}. 
In particular, there is only room for an $O\rbra{\log^2\rbra{d}\log^2\rbra{\log\rbra{d}}}$ improvement over the sample complexity subquadratic in $d$ given in \cite{PSTW26b} \circled{22}. 
Moreover, towards the conjecture in \cite{PSTW26b} that the sample complexity is $\Theta\rbra{\frac{d^2}{\varepsilon^2 \log^2\rbra{d}}}$, \cref{thm:spectrum-intro} confirms the lower bound in the case of small $\varepsilon = O\rbra{\frac{1}{\log\rbra{d}}}$. 

\paragraph{Uhlmann fidelity estimation.}

Given two unknown $d$-dimensional quantum states $\rho$ and $\sigma$, the goal is to estimate the Uhlmann fidelity $\mathrm{F}\rbra{\rho, \sigma} = \tr\rbra{\sqrt{\sqrt{\sigma}\rho\sqrt{\sigma}}}$ to within additive error $\varepsilon$. 

For pure states, fidelity estimation can be done via the SWAP test \cite{BCWdW01}. 
The query complexity was shown to be $\Theta\rbra{1/\varepsilon}$ in \cite{Wan24,FW25,LW26b}.
The sample complexity was shown to be $\Theta\rbra{1/\varepsilon^2}$ in \cite{WZ26,FW26}, where the lower bound is due to \cite{ALL22}. 

For the general case, the query complexity was first shown to be $\widetilde{O}\rbra{\frac{r^{12.5}}{\varepsilon^{13.5}}}$ in \cite{WZC+23}, which was later improved to $\widetilde{O}\rbra{\frac{r^{6.5}}{\varepsilon^{7.5}}}$ in \cite{WGL+24}, to $\widetilde{O}\rbra{\frac{r^{2.5}}{\varepsilon^{5}}}$ in \cite{GP22}, and to $\widetilde{O}\rbra{\frac{r}{\varepsilon^{2}}}$ in \cite{UNWT25} \circled{23}, where $r$ is the lower rank of $\rho$ and $\sigma$.
However, the query lower bound was shown to be $\widetilde{\Omega}\rbra{d^{1/3} + \frac{1}{\varepsilon}}$ in \cite{UNWT25} by combining the results in \cite{CFMdW10} and \cite{BBC+01,NW99}, which was later improved to $\widetilde{\Omega}\rbra{\sqrt{d} + \frac{1}{\varepsilon}}$ in \cite{CWZ25}, and to $\widetilde{\Omega}\rbra{\frac{\sqrt{r}}{\varepsilon}}$ in \cite{Wan26}. 

In \cite{UNWT25}, they showed that the sample complexity is $O\rbra{\frac{r^2}{\varepsilon^4}\log^2\rbra{\frac{1}{\varepsilon}}}$ \circled{24} and also provided a time-efficient estimator with sample and time complexity $\widetilde{O}\rbra{\frac{r^2}{\varepsilon^5}}$ \circled{25}. 
Recently, \cite{LT26} \circled{26} improved the sample complexity to $O\rbra{\frac{r^2}{\varepsilon^2}}$ when both $\rho$ and $\sigma$ are of rank $r$; \cite{Wan26} \circled{27} improved the sample complexity to $O\rbra{\frac{r^2}{\varepsilon^2}}$ when $\sigma$ of rank $r$ is known in advance. 
However, the sample lower bound is $\Omega\rbra{\frac{r}{\varepsilon}+\frac{1}{\varepsilon^2}}$ due to \cite{OW21,ALL22}, which was recently improved to $\Omega\rbra{\frac{r}{\varepsilon^2}}$ in \cite{Wan26}. 

In this paper, we show nearly tight lower bounds for Uhlmann fidelity estimation. 

\begin{theorem}[\cref{cor:fidelity-lower-bound} simplified] \label{thm:fi-intro}
    For Uhlmann fidelity estimation, the sample complexity is $\Omega\rbra{\frac{d^2}{\varepsilon^2 \log^4\rbra{d}}}$ and the query complexity is $\Omega\rbra{\frac{d}{\varepsilon \log^2\rbra{d}}}$. 
\end{theorem}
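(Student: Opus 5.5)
The plan is to apply the master theorem (\cref{thm:main-intro}) with $\phi(x)=\sqrt{x}$, using the identity $\mathrm{F}(\rho,\mathbb{I}/d)=\mathcal{L}_{\sqrt{x}}(\rho)$ noted above. Any fidelity estimator can be used as an estimator for $\mathcal{L}_{\sqrt{x}}(\rho)$: we set $\sigma=\mathbb{I}/d$, and samples of $\sigma$ (or a purification oracle for it) are free to prepare. So it suffices to show that estimating $\mathcal{L}_{\sqrt{x}}(\rho)$ to additive error $\varepsilon$ needs $\Omega(d^2/(\varepsilon^2\log^4 d))$ samples. The query bound then follows from the sample-to-query lifting of \cite{WZ25a,TWZ25}, which turns a sample lower bound of $\Omega(S)$ into a query lower bound of $\Omega(\sqrt{S})$.

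\textbf{Choice of parameters.} The target is $M=\Theta(\varepsilon\log^2 d)$ and $g_\phi=\Theta(\varepsilon)$, so that $d^2/M^2$ has the desired form. Take $K=\Theta(\log d)$ and the interval $I=[1-M,1+M]$ with $M\le 1/3$. This keeps $I$ inside $[1/2,4/3]$, where $\sqrt{x}$ is analytic and $L_\phi=O(1)$. It also keeps $A_\phi=O(M)$, so the numerator in condition 3 is $O(M)$.

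The moment-matching pair $(\nu_0,\nu_1)$ comes from the standard duality between moment matching and best polynomial approximation. The largest achievable gap $g_\phi$ for $K$ matched moments on $I$ equals $2E_K(\phi;I)$, the error of the best degree-$K$ uniform approximation of $\phi$ on $I$. Rescale $I$ to $[-1,1]$ via $x=1+Mt$. Then $\sqrt{1+Mt}$ has Taylor coefficients of size about $M^j$, so $E_K=\Theta(M^{K+1})$ up to $C^{-K}$ factors. That is far too small: it gives $g_\phi\approx M^{K}$ rather than $M/\log^2 d$.

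\textbf{The main obstacle.} The real task is getting $g_\phi$ as large as $M/K^2$, which is what yields the $\log^4 d$ loss. A function analytic near $I$ is approximated exponentially well, so the plain interval $I$ cannot work. The interval must instead reach down to $a_0=0$, where $\sqrt{x}$ is singular. The known fact here is $E_K(\sqrt{x};[0,b])=\Theta(\sqrt{b}/K)$. Following the Wu--Yang-type construction, I would take $I=[0,b_0]$ with $b_0=\Theta(1)$, shift the distributions so both have mean exactly $1$, and obtain $g=\Theta(1/K)$. The dependence on $\varepsilon$ would then come from mixing: put mass $\varepsilon'$ on the moment-matched part and the remaining mass at $1$.

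Alternatively, keep the mean $\le 1$ and use condition 2 with $M=\Theta(1)$ while scaling the gap. Under this route the sample bound $d^2/M^2$ only gives $d^2$, not $d^2/\varepsilon^2$. So the $\varepsilon^{-2}$ factor must come from the full version (\cref{thm:master-functional}), presumably through a parameter for which the spread $M$ scales like $\varepsilon K^2$. Concretely, I would use the interval $[(1-c)\cdot 1,\ 1+M]$ after a Chebyshev-node rescaling $x\mapsto 1+M(t-1)$, so that the singularity of $\sqrt{x}$ sits at distance $\Theta(M/K^2)$ from an endpoint. The best approximation error is then $\Theta(\sqrt{M/K^2}\cdot\ldots)$; I would tune this to $g_\phi=\Theta(M/K^2)=\Theta(\varepsilon)$.

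\textbf{Verification.} With these choices I would check the three conditions. Condition 1 holds since $K=\Theta(\log d)$. Condition 2 holds since $M=O(1)$. For condition 3, $(A_\phi+(b_0-a_0)L_\phi)/g_\phi=O(K^2)=O(\log^2 d)\le O(\sqrt d)$. The theorem then gives the sample lower bound $\Omega(d^2/(\varepsilon^2\log^4 d))$ for $\varepsilon$ below a small constant, and lifting gives the query lower bound $\Omega(d/(\varepsilon\log^2 d))$.
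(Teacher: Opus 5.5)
Your reduction to $\mathcal{L}_{\sqrt{x}}(\rho)$ by fixing $\sigma=\mathbb{I}/d$ is fine. So are your observation that the interval must reach the singularity of $\sqrt{x}$ at $0$ and the final sample-to-query lifting. However, the step that produces the $\varepsilon$-dependence is missing, and the route you sketch for it cannot work. Your target is $M=\Theta(\varepsilon\log^2 d)$ together with $g_\phi=\Theta(\varepsilon)=\Theta(M/K^2)$, fed into item (1) of \cref{thm:master-functional}. This is unattainable in the regime $\varepsilon\log^2 d\ll 1$, which is where the $\varepsilon^{-2}$ factor matters. If $M\le 1/3$, then $I\subseteq[1-M,1+M]$, and truncating the Taylor series of $\sqrt{1+y}$ (all coefficients at most $1$ in absolute value) gives $g_\phi\le 2E_K(\sqrt{x},I)\le 2M^{K+1}/(1-M)=O(3^{-K}M)$, far below $M/K^2$. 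The ``Chebyshev-node rescaling'' cannot help, because $E_K$ is invariant under affine changes of variable. For $I=[1-c,1+M]$ the singularity stays at distance $1-c$ from $a_0$, and if you instead force $a_0=\Theta(M/K^2)$, then $M\ge 1-a_0\approx 1$. Mixing at the level of the distributions, $\varepsilon'\nu_b+(1-\varepsilon')\delta_1$, keeps the support inside $I$. It therefore leaves $M$ unchanged while shrinking the gap linearly, and as you noticed, it gives nothing beyond $\Omega(d^2)$.

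The missing idea is dilution at the level of the quantum state, which is what item (2) of \cref{thm:master-functional} is for. You rightly suspected the full version is needed, but not because of a different spread $M$. The paper keeps $I=[0,M]$ with $M\ge 1$. It starts from $E_K(x^{-1/2},[\eta,1])\ge 1/(16\sqrt{\eta})$ with $\eta=1/(32K^2)$ (proved via Markov brothers' inequality) and applies the moment lifting $\mathsf{T}_{\eta,\eta M}$. This gives a $(K+1)$-moment matching on $[0,M]$ with mean $\eta M\le 1/8$ and $g_\phi\ge\sqrt{M}/(48K)$, so the left-hand side of \eqref{cond3} is $O(K^2)$. It then applies the master theorem in dimension $d'=\lfloor d/2\rfloor$ and considers $\widetilde\rho=p\rho\oplus(1-p)\mathbb{I}_{d-d'}/(d-d')$, for which
\[
\mathrm{F}(\widetilde\rho,\mathbb{I}_d/d)=\sqrt{pd'/d}\;\mathcal{L}_{\sqrt{x}}(\rho)+\sqrt{(1-p)(d-d')/d}.
\]
The separation shrinks only by a factor of order $\sqrt{p}$. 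Meanwhile the sample bound becomes $\Omega(d'^2/(pM^2))$, since only about $pn$ of $n$ copies of $\widetilde\rho$ land in the $\rho$ block. Taking $M=\max\{1,2^{22}\varepsilon^2K^2\}$ and $p=2^{21}\varepsilon^2K^2/M$ makes the separation exceed $2\varepsilon$. It yields $\Omega\bigl(d^2/(\varepsilon^2K^2\max\{1,\varepsilon^2K^2\})\bigr)$, which implies $\Omega(d^2/(\varepsilon^2\log^4 d))$. It is exactly the $\sqrt{p}$ scaling of fidelity that gives $\varepsilon^{-2}$; for von Neumann entropy the same dilution scales the gap by $p$ and gives only $\varepsilon^{-1}$. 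Your Wu--Yang ingredient (\cref{fact:mm} applied to $\sqrt{x}$ on $[0,1]$) would suffice when $\varepsilon\log d\lesssim 1$. For larger $\varepsilon$ you need $b_0\asymp\varepsilon^2K^2>1$ while keeping the mean at most $1$, which plain rescaling does not guarantee but the lifting $\mathsf{T}_{\eta,\eta M}$ does.
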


\cref{thm:fi-intro} shows that for Uhlmann fidelity estimation, the sample complexity is $\widetilde{\Theta}\rbra{\frac{r^2}{\varepsilon^2}}$, combined with the upper bound in \cite{LT26} \circled{26} and \cite{Wan26} \circled{27}; the query complexity is $\widetilde{\Theta}\rbra{r}$ for constant $\varepsilon$, combined with the upper bound in \cite{UNWT25} \circled{23}. 
Moreover, for constant $\varepsilon$, \cite{UNWT25} \circled{25} is nearly sample-optimal and time-optimal; \cite{UNWT25} \circled{24} is nearly sample-optimal.

\subsubsection{Rank testing}

We close all gaps for rank testing, Schmidt rank testing, matrix product state testing, and tree tensor network state testing, up to polylogarithmic factors.

\paragraph{Rank testing.}
Given an unknown quantum state $\rho$, the goal is to determine whether $\rho$ is of rank at most $r$ or $\varepsilon$-far in trace distance from any quantum state of rank at most $r$. 

The rank testing task was raised in \cite[Question 8]{MdW16} with a sample lower bound of $\Omega\rbra{r}$ due to \cite{CHW07}.
A tester with sample complexity $O\rbra{\frac{r^2}{\varepsilon}}$ with one-sided error was later provided in \cite{OW21} \circled{28}. 
However, the prior best lower bound is $\Omega\rbra{\frac{r}{\varepsilon}}$ given in \cite{OW21}, where they only showed a matching lower bound of $\Omega\rbra{\frac{r^2}{\varepsilon}}$ for the special case of one-sided error. 

In this paper, we show a nearly tight lower bound for rank testing. 

\begin{theorem}[\cref{cor:rank-lower-bound} simplified] \label{thm:rk-intro}
    For rank testing, the sample complexity is $\Omega\rbra{\frac{r^2}{\varepsilon \log^4\rbra{r}}}$.
\end{theorem}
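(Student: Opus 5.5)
Work in dimension $d = r+1$, or more precisely embed a $d$-dimensional hard instance with $d = \Theta(r)$. Apply \cref{thm:main-intro} with $\phi(x) = \mathbf{1}_{\{x>0\}}$. This choice lets the rank functional $d\,\mathcal{L}_\phi(\rho)$ separate the two hard distributions.

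The idea is to pick a $K$-moment matching $(\nu_0,\nu_1)$ with $K = \Theta(\log d)$ on an interval $I=[0,b_0]$ where $\nu_0$ has an atom at $0$ of mass about $\varepsilon$-scale and $\nu_1$ has essentially no mass at $0$.

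\textbf{Choosing the moment matching.} The natural construction is the Chebyshev/best-polynomial-approximation dual. Consider the function $\mathbf{1}_{\{x>0\}}$ on $\{0\}\cup[\eta, b_0]$. Its degree-$K$ best uniform approximation error is bounded below by a constant when $\eta \lesssim b_0/K^2$. By duality there are $\nu_0,\nu_1$ supported there with $K$ matching moments and $g_\phi = \Omega(1)$ times the mass scale.

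To obtain the $\varepsilon$-dependence, mix with a point mass at $1$. Take $\nu_b' = (1-\varepsilon')\delta_1 + \varepsilon'\nu_b$ with $\varepsilon' = \Theta(\varepsilon)$. The mean stays $\leq 1$ after rescaling, moments still match, and $g_\phi = \Theta(\varepsilon)$. The interval is $I = [0, b_0]$ with $b_0 = \Theta(1)$ so that $M = O(1)$, and then $b_0 - a_0 = O(1)$. We have $A_\phi = 1$, and $L_\phi = 0$ on $[1/2,4/3]$.

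Condition 3 then reads $1/\varepsilon = O(\sqrt d)$. This is a regime restriction that I would handle separately; for $\varepsilon < 1/\sqrt d$ the bound follows by monotonicity or a padding argument.

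\textbf{Getting the $r^2/\varepsilon$ rate.} The $M = O(1)$ choice above gives only $\Omega(d^2)$ samples. To get $d^2/\varepsilon$, the supports need to be spread by a factor $\sim 1/\sqrt{\varepsilon}$ around $1$, so that $M^2 \sim \varepsilon$ in the bound $d^2/M^2$. Concretely:
\begin{itemize}
\item Place $\nu_0,\nu_1$ supported on $[1-\sqrt{\varepsilon}\,c,\,1+\sqrt{\varepsilon}\,c]\cup\{0\}$. This is wrong as stated, since $0$ forces $M \geq 1$.
\item So instead scale: the $\varepsilon$-fraction of eigenvalues sits at $0$ or at a small value, while the remaining eigenvalues are perturbed by $O(1)$.
\end{itemize}
I expect the actual paper's full version (\cref{thm:master-functional}) to have a refined sample bound in which only the $\varepsilon$-mass far from $1$ contributes, giving $n \lesssim d^2/(\varepsilon M^2)$ or similar. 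I would invoke that refined form with a mixture weight $\varepsilon$ and $M = O(\log d)$-scale support coming from the $K = \Theta(\log d)$ Chebyshev construction. This should produce the $\log^4$ loss, since $M^2 \sim \log^2 d$ and an extra $\log^2$ comes from the normalization and mean constraint.

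\textbf{Transferring to testing.} Under $\Pi_0$, about $\varepsilon d$ eigenvalues are zero; set $r$ to the resulting rank. Under $\Pi_1$, all $d$ eigenvalues are bounded below by $\Omega(1/d)$ except for a few. The trace-distance gap to rank-$\le r$ states equals the sum of the smallest $d-r$ eigenvalues, which is $\Omega(\varepsilon)$ w.h.p. by concentration.

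The main obstacle is this last step. Rank-testing distance is not itself an $\mathcal{L}_\phi$ functional, so I would define $\phi$ as the truncated tail-mass functional and argue concentration of the count of near-zero eigenvalues for independently drawn spectra. Haar conjugation does not affect the spectrum. Finally, rename $d=\Theta(r)$.
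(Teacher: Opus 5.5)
There is a genuine gap. The missing piece is the $1/\varepsilon$ factor, which is the entire content of the bound beyond $\widetilde\Omega(r^2)$. You defer it to a hoped-for refinement that the paper's machinery does not deliver for your construction.

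Mixing each eigenvalue law as $\nu_b'=(1-\varepsilon')\delta_1+\varepsilon'\nu_b$ multiplies every moment difference by $\varepsilon'$. In \cref{lemma:q-moment-matching} this only multiplies the trace-distance bound by $\varepsilon'$. The threshold $R=2nB^2/(D-1)^2<1$, i.e.\ $n\lesssim d^2/B^2$, is unchanged, and $B\gtrsim 1$ because $0$ lies in the support. Nothing in the one-direction or hybrid argument lets any copy be discarded.

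Item (2) of \cref{thm:master-functional} does give a $1/p$ gain, but only for dilution by a \emph{known} direct-sum block, $\widetilde\rho=p\rho\oplus(1-p)\ketbra{0}{0}$, not for mixing inside the eigenvalue law. There the $n$-copy average splits into orthogonal blocks, so the trace distance equals $\E_{N\sim\operatorname{Bin}(n,p)}$ of the distance for $N$ copies of $\rho$. Markov's inequality on $\Pr[N>n_0]$ then raises the threshold to $\Omega(n_0/p)$. For rank testing, this dilution raises the rank by exactly one and gives $\Delta_{r_0}(\widetilde\rho)=p\,\Delta_{r_0-1}(\rho)$, since $1-p\ge 1/2$ is the top eigenvalue.

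The paper rescales the pair of \cref{lemma:rank-support-matching} by $t$: the support becomes $\cbra{0}\cup[t,100tK^2]$ with mean $t$, so the distance is $\gtrsim pt$ while $M=\Theta(tK^2)$. Applying \cref{cor:master-rank-transfer} then gives $\Omega(d^2/(pM^2))=\Omega(r^2/(\varepsilon tK^4))$, with $pt=\varepsilon/c$ and $t=\max\cbra{1/(100K^2),2\varepsilon/c}$. This is where both the $1/\varepsilon$ and the $\log^4$ come from, and it holds for every $\varepsilon\le\varepsilon_0$. The padding argument you mention in passing for $\varepsilon<d^{-1/2}$ is precisely this missing idea, and it is needed for all $\varepsilon$. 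Monotonicity in $\varepsilon$ could never supply it, because the target bound grows as $\varepsilon\to 0$.

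Two further points. First, your distance estimate in the last step is off by the crucial factor. Any matching pair on $\cbra{0}\cup[\eta,b_0]$ with a constant gap in $\Pr[X>0]$ needs $\eta\lesssim b_0/K^2$. Otherwise a shifted Chebyshev polynomial approximates $\mathbf{1}_{\cbra{x>0}}$ there to accuracy $e^{-\Omega(K\sqrt{\eta/b_0})}$, which kills the gap. So the extra nonzero eigenvalues under $\Pi_1$ may all sit near $\eta/d$ rather than at $\Omega(1/d)$, and $\Delta_r$ is only guaranteed to be $\Omega(\varepsilon'\eta)$. This ratio between the smallest and largest nonzero atoms is exactly where the polylogarithmic losses enter.

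Second, you do not need to write $\Delta_r$ as an $\mathcal{L}_\phi$ functional, nor to impose \eqref{cond3}, which is what produced your artificial restriction $\varepsilon\gtrsim d^{-1/2}$. The paper reuses only the indistinguishability estimate \eqref{eq:second-neq}. It then checks directly that on the good event $\rank(\rho)=r+N_b$, with Hoeffding's inequality controlling $N_b$ and Ky Fan's principle giving the distance.
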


\cref{thm:rk-intro} shows that the sample complexity of rank testing is $\widetilde{\Theta}\rbra{\frac{r^2}{\varepsilon}}$, combined with the upper bound in \cite{OW21} \circled{28}, thereby almost settling the question raised in \cite{MdW16}.

\paragraph{Schmidt rank testing.}
Given an unknown bipartite pure state $\ket{\psi}$, the goal is to determine whether $\ket{\psi}$ has Schmidt rank at most $r$ or $\varepsilon$-far in trace distance from any bipartite pure state of Schmidt rank at most $r$. 

The Schmidt rank testing was raised in \cite[Question 8]{MdW16}. 
The rank tester in \cite{OW21} \circled{29} implies a Schmidt rank tester with sample complexity $O\rbra{\frac{r^2}{\varepsilon^2}}$ with one-sided error (cf.\ \cite{SW22}). 
The hardness of Schmidt rank testing was studied in \cite[Lemma 13]{ABF+24}, assuming the existence of quantum secure one way functions. 
However, the prior best lower bound is $\Omega\rbra{\frac{r}{\varepsilon^2}}$ also due to \cite{CWZ26}, where they only showed a matching lower bound of $\Omega\rbra{\frac{r^2}{\varepsilon^2}}$ for the special case of one-sided error. 

In this paper, we show a nearly tight lower bound for Schmidt rank testing. 

\begin{theorem}[\cref{cor:mps-lower-bound} simplified] \label{thm:schmidt-rk-intro}
    For Schmidt rank testing, the sample complexity is $\Omega\rbra{\frac{r^2}{\varepsilon^2 \log^4\rbra{r}}}$.
\end{theorem}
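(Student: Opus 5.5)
We prove the Schmidt rank lower bound by reducing from the rank-testing instance behind \cref{thm:rk-intro}, in the pure-state regime where the gap is quadratically stronger. The key observation is that for a bipartite pure state $\ket{\psi}_{AB}$ with reduced state $\rho = \tr_B\rbra{\ketbra{\psi}{\psi}}$, the Schmidt rank of $\ket{\psi}$ equals $\rank\rbra{\rho}$. Moreover, the trace distance from $\ket{\psi}$ to the set of Schmidt-rank-$\leq r$ pure states is $\sqrt{1-\sum_{i\leq r}\lambda_i}$, where $\lambda_1 \geq \lambda_2 \geq \cdots$ is the spectrum of $\rho$. This follows from Eckart--Young: the maximum overlap with Schmidt-rank-$\leq r$ states is $\sum_{i \leq r}\lambda_i$.

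\textbf{Choosing the hard distributions.} Let the tail mass be $\tau = 1 - \sum_{i \leq r}\lambda_i$. We need two families of spectra on $d = \Theta\rbra{r}$ levels: under $\Pi_0$ the rank is at most $r$ (so $\tau = 0$), and under $\Pi_1$ we have $\tau \geq \varepsilon^2$. We obtain them from the master theorem (\cref{thm:main-intro}) with $\phi(x) = \mathbf{1}_{\cbra{x>0}}$, or more robustly a tail-mass functional. We take $d \approx r$ and a $K$-moment matching $\rbra{\nu_0,\nu_1}$ with $K = \Theta\rbra{\log d}$, supported on $\cbra{0} \cup [c, 1+M]$ with $M$ of order $1/\varepsilon^2$ up to logarithmic factors. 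This scaling is what converts the rank-testing parameter $\varepsilon$ into $\varepsilon^2$ here, giving $n = O\rbra{d^2/M^2}$, i.e.\ $\widetilde{\Omega}\rbra{r^2/\varepsilon^2}$ samples. Concretely, I would reuse the rank-testing corollary with its distance parameter set to $\varepsilon^2$. Since rank testing needs $\Omega\rbra{r^2/\left(\varepsilon' \log^4 r\right)}$ with $\varepsilon' = \varepsilon^2$, the two families of mixed states on $r$-ish dimensions become indistinguishable from $n = o\rbra{r^2/\left(\varepsilon^2\log^4 r\right)}$ copies.

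\textbf{Lifting to pure states.} Given $\rho \sim \Pi_b$ with eigendecomposition $\rho = U \operatorname{diag}\rbra{\lambda} U^\dagger$, define the purification $\ket{\psi_\rho} = \sum_i \sqrt{\lambda_i}\, U\ket{i}_A \ket{i}_B$, and then apply an independent Haar-random unitary $V$ on $B$. The construction of $\Pi_b$ already conjugates by a Haar-random unitary on $A$. After also twirling $B$, the copies $\ket{\psi}^{\otimes n}$ averaged over $V$ are equivalent to $n$ copies of $\rho$ together with a fixed channel. The justification is Schur--Weyl duality: averaging $V^{\otimes n}$ on $B^{\otimes n}$ projects onto isotypic components, and the resulting state is the image of $\E_\rho\sbra{\rho^{\otimes n}}$ under a $\rho$-independent channel; this is the standard fact that weak Schur sampling data of purifications carries no more information. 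By data processing, the trace distance between the averaged pure-state ensembles is bounded by that of the mixed ensembles, which is at most $1/\poly\rbra{d}$.

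\textbf{Verifying the promise.} It remains to check that $\Pi_0$ yields Schmidt rank $\leq r$ and that $\Pi_1$ yields distance $\geq \varepsilon$ with probability $0.99$. The second condition holds because the rank-testing far case gives trace distance of $\rho$ from rank-$r$ states at least $\varepsilon^2$, which equals $\tau$ for the spectral construction used, hence distance $\sqrt{\tau} \geq \varepsilon$. \textbf{Main obstacle:} the Schur--Weyl simulation step, i.e.\ showing that the twirled purification copies are a fixed channel applied to $\rho^{\otimes n}$. I would try it first via the explicit isotypic decomposition $\sum_\lambda$ of $\rbra{\mathbb{C}^d}^{\otimes n}$, noting that the twirled state on $AB$ is $\bigoplus_\lambda \rbra{\Pi_\lambda \rho^{\otimes n}\Pi_\lambda \otimes \text{fixed}}$ suitably reshaped. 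A secondary concern is ensuring that the rank-testing hard instance has its far case certified by tail mass, since trace distance to rank-$r$ states for mixed states equals the tail mass; this holds.
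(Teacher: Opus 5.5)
Your proposal is correct and follows the same route as the paper. The paper cites \cite[Corollary IV.1]{CWZ26} for the fact that Schmidt rank testing with parameters $(r,\varepsilon)$ is at least as hard as rank testing with parameters $(r,\varepsilon^2)$, and then applies \cref{cor:rank-lower-bound} with $\varepsilon^2$ in place of $\varepsilon$. Your Haar twirl on $B$, the $\rho$-independent Schur--Weyl simulation channel, and the identity that the distance to Schmidt-rank-$\leq r$ states equals $\sqrt{\Delta_r(\rho)}$ together give a self-contained proof of that cited reduction; it holds for every $\rho$, so the Haar randomness on $A$ is not needed.

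One correction: drop your heuristic that $M$ should be of order $1/\varepsilon^2$. That would give $d^2/M^2\sim d^2\varepsilon^4$, and $M\geq 1$ anyway once $0$ is in the support. In the rank-testing instance the $1/\varepsilon$ factor comes from the dilution weight $p=\varepsilon/(ct)$ in $\Omega(D^2/(pM^2))$ with $M=\Theta(tK^2)$. Since your concrete argument uses the rank-testing bound as a black box, this does not affect correctness.
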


\cref{thm:schmidt-rk-intro} shows that the sample complexity of Schmidt rank testing is $\widetilde{\Theta}\rbra{\frac{r^2}{\varepsilon^2}}$, combined with the upper bound due to \cite{OW21} \circled{29}, thereby almost settling the question raised in \cite{MdW16}.

\paragraph{Matrix product state testing.} 
Given an unknown $n$-partite pure state $\ket{\psi}$, the goal is to determine whether $\ket{\psi}$ is a matrix product state (MPS) of bond dimension $r$ or $\varepsilon$-far in trace distance from any MPS of bond dimension $r$. 

A tester for MPS was proposed in \cite{SW22} \circled{30} with sample complexity $O\rbra{\frac{nr^2}{\varepsilon^2}}$ with one-sided error, where they also showed a lower bound of $\Omega\rbra{\frac{\sqrt{n}}{\varepsilon^2}}$. 
Later, a lower bound of $\Omega\rbra{\sqrt{r}}$ was provided in \cite{ABF+24}. 
The prior best lower bound $\Omega\rbra{\frac{\sqrt{nr}+r}{\varepsilon^2}}$ is due to \cite{CWZ26}.
There are also lower bounds for the special case of one-sided error: a lower bound of $\Omega\rbra{\frac{\sqrt{nr}+r^2}{\varepsilon^2}}$ was given in \cite{CWZ26} and later an almost matching lower bound of $\Omega\rbra{\frac{nr^2}{\varepsilon^2 \log\rbra{n}}}$ was provided in \cite{LL26}.

In this paper, we show a nearly tight lower bound for MPS testing. 

\begin{theorem}[\cref{cor:mps-n-partite-lower-bound} simplified] \label{thm:mps-intro}
    For matrix product state testing, the sample complexity is $\Omega\rbra{\frac{nr^2}{\varepsilon^2 \log^2\rbra{nr}}}$ when $\Omega\rbra{\log\rbra{n}} \leq r \leq \exp\rbra{O\rbra{\sqrt{n}/\varepsilon}}$.
\end{theorem}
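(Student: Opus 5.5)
The plan is to reduce MPS testing to many independent copies of Schmidt rank testing, and to combine this with the hard instance behind \cref{thm:schmidt-rk-intro}. That instance itself comes from the rank-testing construction of \cref{thm:main-intro} with $\phi(x)=\mathbf{1}_{\{x>0\}}$. Concretely, by the master theorem, there are two ensembles $\Pi_0,\Pi_1$ of $r'$-dimensional mixed states with $r' = \Theta(r)$. Under $\Pi_0$ the rank is at most $r$, while under $\Pi_1$ the state is $\Omega(\varepsilon')$-far from rank $r$. These ensembles are indistinguishable (up to $1/\poly(r)$ in trace distance) with $O(r^2/(\varepsilon'\log^{2}r))$ samples. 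Purifying a mixed state $\sigma$ into $\ket{\psi_\sigma}$ on $\mathbb{C}^{r'}\otimes\mathbb{C}^{r'}$ turns rank into Schmidt rank. Distance from rank $r$ in trace distance becomes a fidelity-type distance, which gives the squared dependence $\varepsilon'\leftrightarrow\varepsilon^2$. This step should be in place from \cref{cor:mps-lower-bound}, and I take it as given.

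Next I will build an $n$-partite instance by placing $m=\Theta(n)$ independent such bipartite hard states on disjoint consecutive pairs of sites, padding the remaining sites with product states. A tensor product of states with Schmidt ranks $\le r$ across disjoint pairs is an MPS with bond dimension $\le r$. The cuts between blocks have rank $1$, and the cut inside a block has rank equal to the Schmidt rank. Thus the YES ensemble (every block drawn from $\Pi_0$) consists of MPS of bond dimension $r$. For the NO ensemble, every block is drawn from $\Pi_1$ with per-block distance parameter $\delta=\Theta(\varepsilon/\sqrt{n})$, say. Since infidelities multiply, $1-\prod_i F_i \gtrsim \min(1,\sum_i(1-F_i))$, so the product is $\Omega(\varepsilon)$-far from every bond-$r$ MPS. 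Here I must argue that the closest MPS can be taken to respect the block structure. For this I would use the fact that the Schmidt coefficients across each internal cut lower-bound the distance via truncation (Eckart--Young on each cut), and then combine the cuts using a sum-of-squared-tails bound on MPS approximation error.

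Indistinguishability then follows from hybrid/tensorization. With $k$ copies of the global state, the view is $k$ copies of each block independently. Thus the trace distance between the two ensembles is at most $m$ times the per-block trace distance. I will instead work with a divergence that tensorizes, for instance $\chi^2$ or Hellinger distance, to avoid losing a factor of $m$. The per-block Hellinger distance must be $o(1/\sqrt{m})$ for $k=O(r^2/(\delta^2\log^2 r))=O(nr^2/(\varepsilon^2\log^2 r))$.

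The main obstacle is exactly this tensorization. The master theorem gives only a trace-distance bound of $1/\poly(r)$, and since the blocks are independent a union bound needs $m/\poly(r)=o(1)$. This is where the hypothesis $r\ge\Omega(\log n)$ enters: by choosing $K=\Theta(\log(nr))$ matching moments, I can make the per-block distance $1/\poly(nr)$, at the cost of $\log^2(nr)$ in place of $\log^2 r$. This explains the $\log^2(nr)$ in the statement. The upper restriction $r\le\exp(O(\sqrt{n}/\varepsilon))$ should come from keeping $\delta=\varepsilon/\sqrt n$ within the valid range of the master theorem's parameters ($\log(nr)$ versus $\sqrt n/\varepsilon$ conditions). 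I would verify these parameter constraints carefully last.
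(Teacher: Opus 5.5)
Your architecture matches the paper's: $\ell=\floor{n/2}$ independent Schmidt-rank hard instances on disjoint pairs of sites, per-block infidelity $\delta=\Theta(\varepsilon^2/n)$, a per-block sample bound $\Theta(r^2/(\delta K^2))$, and a union bound over blocks with $K=\Theta(\log(nr))$. There is a genuine gap in the step you flag yourself: showing that the NO state $\ket{\Psi}=\bigotimes_j\ket{\psi_j}$ is $\varepsilon$-far from $\mathrm{MPS}_{2\ell}(r)$. The tool you propose does not give this. Eckart--Young on the cut inside block $j$ only yields $1-\abs{\braket{\Phi}{\Psi}}^2\geq\Delta_r(\widetilde\rho_j)$. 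Ranging over all cuts gives the maximum, $\Theta(\varepsilon^2/n)$, hence only distance $\Theta(\varepsilon/\sqrt n)$. The sum-of-squared-truncation-errors bound (Vidal/TT-SVD) reads $\min_{\Phi}\Abs{\Psi-\Phi}^2\leq\sum_k\epsilon_k^2$. That is an \emph{upper} bound on the approximation error, so it cannot certify farness.

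What you need is that the maximal overlap is multiplicative for block-product states: $\max_{\Phi\in\mathrm{MPS}_{2\ell}(r)}\abs{\braket{\Phi}{\Psi}}^2=\prod_j\max_{\operatorname{SR}(\phi)\leq r}\abs{\braket{\phi}{\psi_j}}^2$. The paper takes this from the proof of \cite[Proposition 5.1]{SW22}, and it has a short peeling proof. First, $\ket{\chi}=(\bra{\psi_1\otimes\cdots\otimes\psi_{\ell-1}}\otimes I)\ket{\Phi}$ has Schmidt rank $\leq r$ across the cut inside the last block, so $\abs{\braket{\psi_\ell}{\chi}}\leq\Abs{\chi}\max_{\operatorname{SR}(\phi)\leq r}\abs{\braket{\phi}{\psi_\ell}}$. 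Second, $\Abs{\chi}=\max_{\omega}\abs{\braket{\psi_1\otimes\cdots\otimes\psi_{\ell-1}\otimes\omega}{\Phi}}$, and $(I\otimes\bra{\omega})\ket{\Phi}$ is a subnormalized state with Schmidt rank $\leq r$ across every cut of the remaining $2\ell-2$ sites. Induct.

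Your indistinguishability accounting is also incomplete, because increasing $K$ only removes the $4^{-K}$ term. Three other error sources must each be $o(1/n)$ per block.
\begin{enumerate}
\item The rank-testing ensemble has rank $\leq r$ only with probability $0.99$, yet all $\Theta(n)$ YES blocks must be in the class simultaneously. You must condition each block on its good event. This costs $e^{-\gamma^2q/32}$ per block, on top of the $qe^{-D/384}$ Hoeffding term. Making $\ell$ times these $o(1)$ is what actually forces $r\geq\Omega(\log n)$.
\item The $\varepsilon$-dependence comes from diluting to $p\rho\oplus(1-p)\ketbra{0}{0}$. \Cref{cor:master-rank-transfer} bounds the resulting binomial tail by Markov, which leaves a constant error per block. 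The paper instead restricts to $m\leq m_0/(8p)$ and uses $\Pr[J>m_0]\leq(e/8)^{m_0}$.
\item The black-box bound of \cref{cor:mps-lower-bound} does not give a per-block $1/\poly(nr)$ trace-distance bound for pure states. You need a purification that preserves $m$-copy trace distance exactly. The paper gets this by applying a Haar-random unitary on the purifying register and invoking \cite[Corollary III.8]{CWZ26}.
\end{enumerate}
With these fixes the Hellinger/$\chi^2$ detour is unnecessary. The constraint $r\leq\exp(O(\sqrt n/\varepsilon))$ is then just the requirement $p=32\delta/(\gamma t)=\Theta(\delta K^2)\leq 1/2$ with $t=1/(100K^2)$.
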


\cref{thm:mps-intro} shows that the sample complexity of MPS testing is $\widetilde{\Theta}\rbra{\frac{nr^2}{\varepsilon^2}}$, combined with the upper bound in \cite{SW22} \circled{30}. 
Moreover, this also shows that the sample complexity of tree tensor network state (TTNS) testing is $\widetilde{\Theta}\rbra{\frac{nr^2}{\varepsilon^2}}$, combined with the upper bound $O\rbra{\frac{nr^2}{\varepsilon^2}}$ in \cite{LL26} \circled{31}; this is because the MPS class is a special TTNS class where the tree is a line.

\begin{longtable}{cccccc}
    \caption{List of nearly optimal quantum algorithms.}
    \label{tab:list}\\
    \toprule
    Problem & \# & Reference & Type & Complexity & Optimality \\
    \midrule
    \multirow{10}{*}{\begin{tabular}{c}
         Von Neumann
    \end{tabular}} & \circled{1} & \cite{AISW20} & Sample & $O\rbra{\frac{d^2}{\varepsilon^2}}$ & In $d$ \\
    \cmidrule{2-6}
    & \circled{2} & \cite{BMW16} & Sample & $O\rbra{\frac{d^2}{\varepsilon}+\frac{\log^2\rbra{d}}{\varepsilon^2}}$ & Near-Optimal \\
    \cmidrule{2-6}
    & \circled{3} & \cite{WZ25b} & Sample \& Time & $\widetilde{O}\rbra{\frac{r^2}{\varepsilon^5}}$ & In $r$ \\
    \cmidrule{2-6}
    & \circled{4} & \cite{CWZ25} & Sample & $\widetilde{O}\rbra{\frac{r^2}{\varepsilon^4}}$ & In $r$ \\
    \cmidrule{2-6}
    & \circled{5} & \cite{GW26} & Sample & $\widetilde{O}\rbra{\frac{d^{2-o\rbra{1}}}{\varepsilon^2}}$ & In $d$ \\
    \cmidrule{2-6}
    & \circled{6} & \cite{GL20} & Query & $\widetilde{O}\rbra{\frac{d}{\varepsilon^{1.5}}}$ & In $d$ \\
    \cmidrule{2-6}
    & \circled{7} & \cite{WGL+24} & Query & $\widetilde{O}\rbra{\frac{r}{\varepsilon^{2}}}$ & In $r$ \\
    \midrule 
    \multirow{3}{*}{\begin{tabular}{c}
         R\'enyi ($0 < \alpha < 1$)
    \end{tabular}} & \circled{8} & \cite{CW26} & Sample & $O\rbra{\frac{d^{1+1/\alpha}}{\varepsilon^{1/\alpha}}+\frac{d^{1/\alpha-1}}{\varepsilon^2}}$ & Near-Optimal \\
    \cmidrule{2-6}
    & \circled{9} & \cite{WZL24} & Query & $\widetilde{O}\rbra{\frac{d^{\frac{1}{2\alpha}+\frac{1}{2}}}{\varepsilon^{\frac{1}{2\alpha}+1}}}$ & In $d$ \\
    \midrule
    \multirow{6}{*}{\begin{tabular}{c}
         R\'enyi (non-integer \\ $\alpha > 1$)
    \end{tabular}} & \circled{10} & \cite{AISW20} & Sample & $\widetilde{O}\rbra{\frac{d^2}{\varepsilon^2}}$ & In $d$ \\
    \cmidrule{2-6}
    & \circled{11} & \cite{CWZ25} & Sample & $\widetilde{O}\rbra{\frac{r^2}{\varepsilon^{\frac{2}{\alpha}+2}}}$ & In $r$ \\
    \cmidrule{2-6}
    & \circled{12} & \cite{CW26} & Sample & $O\rbra{\frac{d^{2}}{\varepsilon^{1/\alpha}}+\frac{d^{1-1/\alpha}}{\varepsilon^2}}$ & Near-Optimal \\
    \cmidrule{2-6}
    & \circled{13} & \cite{WZL24} & Query & $\widetilde{O}\rbra{\frac{r}{\varepsilon^{1+\frac{1}{\alpha}}}}$ & In $r$ \\
    \midrule 
    \multirow{1}{*}{\begin{tabular}{c}
         Tsallis ($0 < \alpha < 1$)
    \end{tabular}} & \circled{14} & \cite{CW26} & Sample & $O\rbra{\frac{d^{1+1/\alpha}}{\varepsilon^{1/\alpha}}+\frac{d^{2-2\alpha}}{\varepsilon^2}}$ & Near-Optimal \\
    \midrule 
    \multirow{1}{*}{\begin{tabular}{c}
         Tsallis ($1 < \alpha < 2$)
    \end{tabular}} & \circled{15} & \cite{CW25} & Sample & $\widetilde{O}\rbra{1/\varepsilon^{\frac{2}{\alpha-1}}}$ & Near-Optimal \\
    \midrule 
    \multirow{6}{*}{\begin{tabular}{c}
         Trace Distance
    \end{tabular}} & \circled{16} & \cite{WZ24} & Query & $\widetilde{O}\rbra{\frac{r}{\varepsilon^2}}$ & In $r$ \\
    \cmidrule{2-6}
    & \circled{17} & \cite{WZ24} & Sample \& Time & $\widetilde{O}\rbra{\frac{r^2}{\varepsilon^{5}}}$ & In $r$ \\
    \cmidrule{2-6}
    & \circled{18} & \cite{CWZ25} & Sample & $\widetilde{O}\rbra{\frac{r^2}{\varepsilon^{4}}}$ & In $r$ \\
    \cmidrule{2-6}
    & \circled{19} & \cite{LT26} & Sample & $O\rbra{\frac{r^2}{\varepsilon^{2}}}$ & Near-Optimal \\
    \midrule 
    \multirow{4}{*}{\begin{tabular}{c}
         Spectrum
    \end{tabular}} & \circled{20} & \cite{OW21} & Sample & $O\rbra{\frac{d^2}{\varepsilon^2}\log\rbra{\frac{d}{\varepsilon}}}$ & Near-Optimal \\
    \cmidrule{2-6}
    & \circled{21} & \cite{OW17} & Sample & ${O}\rbra{\frac{d^2}{\varepsilon^{2}}}$ & Near-Optimal \\
    \cmidrule{2-6}
    & \circled{22} & \cite{PSTW26b} & Sample & $O\rbra{\frac{d^2 \log^2\rbra{\log\rbra{d}}}{\varepsilon^4 \log^2\rbra{d}}}$ & In $d$ \\
    \midrule 
    \multirow{8}{*}{\begin{tabular}{c}
         Uhlmann Fidelity
    \end{tabular}} & \circled{23} & \cite{UNWT25} & Query & $\widetilde{O}\rbra{\frac{r}{\varepsilon^2}}$ & In $r$ \\
    \cmidrule{2-6}
    & \circled{24} & \cite{UNWT25} & Sample & $\widetilde{O}\rbra{\frac{r^2}{\varepsilon^{4}}}$ & In $r$ \\
    \cmidrule{2-6}
    & \circled{25} & \cite{UNWT25} & Sample \& Time & $\widetilde{O}\rbra{\frac{r^2}{\varepsilon^{5}}}$ & In $r$ \\
    \cmidrule{2-6}
    & \circled{26} & \cite{LT26} & Sample & $O\rbra{\frac{r^2}{\varepsilon^{2}}}$ & Near-Optimal \\
    \cmidrule{2-6}
    & \circled{27} & \cite{Wan26} & Sample & $O\rbra{\frac{r^2}{\varepsilon^{2}}}$ & Near-Optimal \\
    \midrule 
    \multirow{1}{*}{\begin{tabular}{c}
         Rank Testing
    \end{tabular}} & \circled{28} & \cite{OW21} & Sample & $O\rbra{\frac{r^2}{\varepsilon}}$ & Near-Optimal \\
    \midrule 
    \multirow{1}{*}{\begin{tabular}{c}
         Schmidt Rank
    \end{tabular}} & \circled{29} & \cite{OW21} & Sample & $O\rbra{\frac{r^2}{\varepsilon^2}}$ & Near-Optimal \\
    \midrule 
    \multirow{1}{*}{\begin{tabular}{c}
         MPS
    \end{tabular}} & \circled{30} & \cite{SW22} & Sample & $O\rbra{\frac{nr^2}{\varepsilon^2}}$ & Near-Optimal \\
    \midrule 
    \multirow{1}{*}{\begin{tabular}{c}
         TTNS
    \end{tabular}} & \circled{31} & \cite{LL26} & Sample & $O\rbra{\frac{nr^2}{\varepsilon^2}}$ & Near-Optimal \\
    \bottomrule
\end{longtable}

\subsection{Techniques} \label{sec:tech}

\paragraph{Step 1: Haar-random one-direction moment encoding.} 
In the classical literature of property testing (cf.\ \cite{JVHW15,WY16,AOST17}), the hard instance is usually chosen to be the discrete distribution of the form
\[
p_Z = \rbra*{ \frac{1+Z}{D}, \underbrace{\frac{1 - \dfrac{Z}{D-1}}{D}, \dots, \frac{1 - \dfrac{Z}{D-1}}{D}}_{D-1 \textup{ equal entries}} },
\]
where $Z \in \sbra{-1, D-1}$ is a parameter. 
However, directly using this type of hard instance does not give strong enough lower bounds (cf.\ \cite{WZL24,WZ25b}). 

To overcome this issue, we strengthen the hard instance with a Haar-random direction. 
Specifically, let $\ket{u}$ be a Haar-random state, and consider the quantum state 
\begin{equation*} 
    \varsigma_{Z, u} = \frac{1+Z}{D} \ketbra{u}{u} + \frac{1 - \dfrac{Z}{D-1}}{D} \rbra*{I - \ketbra{u}{u}}.
\end{equation*}
Note that the spectrum of $\varsigma_{Z, u}$ is exactly $p_Z$. 
Inspired by the moment matching arguments in the classical literature (cf.\ \cite{JVHW17,WY16}), we aim to encode the moment matching into a random quantum state $\varsigma_{Z, u}$, with $\ket{u}$ a Haar-random state and $Z$ distributed from distributions with matching moments. 
Specifically, suppose $\rbra{\nu_0, \nu_1}$ is an $L$-moment matching on $\sbra{-B, B}$, and consider the $n$-copy average state $\sigma_b^{\rbra{n}} = \E_{Z \sim \nu_b, \ket{u} \sim \textup{Haar}} \sbra{\varsigma_{Z, u}^{\otimes n}}$ for each $b \in \cbra{0, 1}$. 
One can show that $\sigma_0^{\rbra{n}}$ and $\sigma_1^{\rbra{n}}$ are indistinguishable when $n = O\rbra{D^2/B^2}$ (see \cref{lemma:q-moment-matching}). 

\paragraph{Step 2: Haar-random multi-direction moment encoding.}
Inspired by the indistinguishability revealed in Step 1, we further generalize it to multiple Haar-random directions. 
Specifically, let $\ket{u_i}$ be orthogonal Haar-random states, and for each $b \in \cbra{0, 1}$, consider the quantum state
\begin{equation} \label{eq:multi-direct}
\rho_b = \frac{1}{d} \sum_{i=1}^q X_i \ketbra{u_i}{u_i} + \frac{d-S}{rd} \rbra*{ \mathbb{I} - \sum_{i=1}^q \ketbra{u_i}{u_i} },
\end{equation}
where $X_i \sim \nu_b$, $d/6 \leq q \leq d/3$, $S = \sum_{i=1}^q X_i$, and $r = d - q$. 
In particular, we define $\rho_b = \mathbb{I}/d$ if $S > d/2$. 
Let $\Pi_b$ be the distribution of $\rho_b$ and consider the $n$-copy average state $\Omega_{b}^{\rbra{n}} = \E_{\rho \sim \Pi_b}\sbra{\rho^{\otimes n}}$. 
One can show that $\Omega_{0}^{\rbra{n}}$ and $\Omega_{1}^{\rbra{n}}$ are indistinguishable when $n = O\rbra{d^2/M^2}$, where $M = \sup_{x \in \supp\rbra{\nu_0} \cup \supp\rbra{\nu_1}} \abs{x-1}$ (see \cref{lemma:q-moment-matching-multi-direction}). 
The proof reduces to the one-direction moment encoding via a hybrid argument. 

To our knowledge, the key construction is the hard instance defined in \cref{eq:multi-direct}. 

\paragraph{Step 3: A master theorem.}
To establish our master theorem, note that on the event $S \leq d/2$ with high probability, the eigenvalues of $\rho_b$ are 
\[
    \frac{X_1}{d}, \dots, \frac{X_q}{d}, \underbrace{\frac{d-S}{rd}, \dots, \frac{d-S}{rd}}_{r\textup{ equal entries}}.
\]
For any function $\phi$, the corresponding functional is
\[
\mathcal{L}_\phi\rbra{\rho_b} = \frac{1}{d} \sum_{i=1}^q \phi\rbra{X_i} + \frac{r}{d} \phi\rbra*{\frac{d-S}{r}},
\]
which can be shown to concentrate at 
\[
\theta_b = \frac{q}{d} \E_{X \sim \nu_b} \sbra*{\phi\rbra{X}} + \frac{r}{d} \phi\rbra*{\frac{d-q \E_{X \sim \nu_b}\sbra{X}}{r}}.
\]
Then, one can take $I_b = \sbra{\theta_b - \frac{qg_\phi}{8d}, \theta_b + \frac{qg_\phi}{8d}}$ and verify that under the assumptions required in \cref{thm:main-intro}, $\Omega_0^{\rbra{n}}$ and $\Omega_1^{\rbra{n}}$ are indistinguishable for $n = O\rbra{d^2/M^2}$ while $\mathcal{L}_\phi\rbra{\rho} \in I_b$ are distinguishable. 
See \cref{thm:master-functional} for more detail and further generalizations. 

\paragraph{An illustrative application: trace distance estimation and spectrum estimation.}

For convenience of the readers, in \cref{tab1}, we collect the choice of $\phi$ for the problems considered in this paper. 

\begin{table}[h]
    \centering
    \begin{tabular}{ccc}
        \toprule
        Problem & Functional & $\phi$ \\
        \midrule
        Von Neumann Entropy Estimation & $\mathrm{S}\rbra{\rho}$ & $-x\log\rbra{x}$ \\
        $\alpha$-R\'enyi/Tsallis Entropy Estimation & $\mathrm{S}_\alpha^{\mathrm{R}}\rbra{\rho}$ \& $\mathrm{S}_\alpha^{\mathrm{T}}\rbra{\rho}$ & $x^\alpha$ \\
        Trace Distance Estimation & $\mathrm{T}\rbra{\rho, \mathbb{I}/d}$ & $\max\cbra{1-x,0}$ \\
        Uhlmann Fidelity Estimation & $\mathrm{F}\rbra{\rho, \mathbb{I}/d}$ & $\sqrt{x}$ \\
        Rank Testing & $\rank\rbra{\rho}$ & $\mathbf{1}_{\{x > 0\}}$ \\
        \bottomrule
    \end{tabular}
    \caption{Overview} 
    \label{tab1}
\end{table}

As an illustrative example, we show how to apply our master theorem to derive lower bounds for the problem of trace distance estimation. 
Here, we take $\phi\rbra{x} = \max\cbra{1-x,0}$ and note that $\mathcal{L}_{\max\cbra{1-x, 0}}\rbra{\rho} = \mathrm{T}\rbra{\rho, \mathbb{I}/d}$.
Based on the best polynomial approximation lower bound in \cite{JHW18}, we can construct a $\rbra{K+1}$-moment matching $\rbra{\nu_0, \nu_1}$ on $\sbra{1-\frac{3t}{4}, 1 + \frac{3t\rbra{1-a}}{4a}}$ with mean $\leq 1$ for $K \geq 2$, $1/K^2 \leq a \leq 1/2$, and $t = \Theta\rbra{K \sqrt{a} \varepsilon}$. 
Under the choice of $\rbra{\nu_0, \nu_1}$, one can verify that $A_{\phi} \leq \frac{3t}{4}$, $L_{\phi} \leq 1$, $g_\phi \geq 24\varepsilon$, and $M \leq \frac{3t}{4a} = b_0 - a_0$. 
With $K = \Theta\rbra{\log\rbra{d}}$ and choosing the parameter $a$ appropriately, we can then apply \cref{thm:main-intro} and obtain a sample complexity lower bound of $\Omega\rbra{d^2/M^2} = \widetilde{\Omega}\rbra{d^2/\varepsilon^2}$ for trace distance estimation, even if one quantum state is the maximally mixed state known in advance. 

As a direct implication, since spectrum estimation can be straightforwardly used to estimate $\mathrm{T}\rbra{\rho, \mathbb{I}/d}$, the lower bound $\widetilde{\Omega}\rbra{d^2/\varepsilon^2}$ for trace distance estimation is also a lower bound for spectrum estimation. 

\subsection{Independent and Simultaneous Work}

The independent and simultaneous works \cite{FOW26,LJ26} also considered some of the problems investigated in this work.
In comparison, our lower bounds are better than their stated lower bounds. 
Specifically,
\begin{itemize}
    \item Von Neumann entropy estimation (compared to \cref{thm:von-intro}): In \cite{FOW26}, they showed a sample lower bound of $\Omega\rbra{d^{2-\gamma}}$ for any constant $\gamma > 0$. 
    In comparison, our lower bound $\Omega\rbra{d^2/\log^4\rbra{d}}$ is better than their lower bound in the same regime of constant $\varepsilon$. 
    Moreover, our lower bound also considers the $\varepsilon$-dependence. 

    \item Spectrum estimation (compared to \cref{thm:spectrum-intro}): In \cite{FOW26}, they showed a sample lower bound of $\Omega\rbra{d^{2-\gamma}}$ for any constant $\gamma > 0$; in \cite{LJ26}, they showed a sample lower bound of $\Omega\rbra{{d^2}/{\log^C\rbra{d}}}$ for some constant $C > 0$, where a detailed calculation will give $C > 30$. 
    In comparison, our lower bound $\Omega\rbra{d^2/\log^4\rbra{d}}$ is better than theirs in the same regime of constant $\varepsilon$. 
    Moreover, our lower bound also considers the $\varepsilon$-dependence. 

    \item Uhlmann fidelity estimation (compared to \cref{thm:fi-intro}): In \cite{LJ26}, they showed a sample lower bound of $\Omega\rbra{d^2/\rbra{\varepsilon^2 \log^C\rbra{d}}}$. 
    In comparison our lower bound $\Omega\rbra{d^2/\rbra{\varepsilon^2 \log^4\rbra{d}}}$ is better than theirs in all regimes. 

    \item Rank testing (compared to \cref{thm:rk-intro}): In \cite{FOW26}, they showed a sample lower bound of $\Omega\rbra{r^{2-\gamma}}$ for any constant $\gamma > 0$; in \cite{LJ26}, they showed a sample lower bound of $\Omega\rbra{r^2/\log^C\rbra{r}}$. 
    In comparison, our lower bound $\Omega\rbra{r^2/\log^4\rbra{r}}$ is better than theirs in the same regime of constant $\varepsilon$. 
    Moreover, our lower bound also considers the $\varepsilon$-dependence. 
\end{itemize}

\subsection{AI Use Disclosure}

The author used Large Language Models as AI-assisted research and writing tools throughout the
preparation of this paper. The tools were used to help brainstorm ideas and explore proof
strategies. Portions of the paper text were redrafted or modified with AI assistance across all
sections. All final mathematical claims, algorithms, proofs, citations, and wording were reviewed,
edited, and validated by the author. The author assumes responsibility for all content of the paper.

\section{Preliminaries}

\subsection{Moment Matching}

\begin{definition}[Moment matching]
    For two probability distributions $\nu_0$ and $\nu_1$ on $\mathbb{R}$, $\rbra{\nu_0, \nu_1}$ is said to be a $K$-moment matching, if
    \[
    \E_{X \sim \nu_0}\sbra*{X^j} = \E_{X \sim \nu_1}\sbra*{X^j}, \qquad j \in \cbra{0, 1, 2, \dots, K}.
    \]
\end{definition}

\begin{lemma}[Cf.\ {\cite[Lemma 10]{JVHW15}}, {\cite[Appendix E]{WY16}}, and {\cite[Lemma 7]{CC19}}]\label{fact:mm}
For every continuous \(f\) on a compact interval \(I\), there is a $K$-moment matching \(\rbra{\nu_0,\nu_1}\) on \(I\) such that
\begin{align*}
    \int_I f\rbra x\,\mathrm d\nu_1\rbra x
    -\int_I f\rbra x\,\mathrm d\nu_0\rbra x
    =2E_K\rbra{f,I},
\end{align*}
where
\[
    E_K\rbra{f,I}
    =\inf_{\deg\rbra{P}\le K}\sup_{x\in I}\abs{f\rbra x-P\rbra x}.
\]
\end{lemma}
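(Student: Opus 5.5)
This is the standard duality between moment matching and best uniform polynomial approximation. The plan is to derive it from Hahn--Banach (Riesz representation) together with a Jordan decomposition of the extremal signed measure.

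\textbf{Step 1 (the extremal functional).} Let $V=\mathcal P_K$ be the polynomials of degree at most $K$, viewed as a finite-dimensional, hence closed, subspace of $C(I)$ with the sup norm. If $E_K(f,I)=0$, take $\nu_0=\nu_1$ equal to any probability measure on $I$ and we are done. Otherwise $E_K(f,I)=\mathrm{dist}(f,V)>0$, and Hahn--Banach gives a bounded linear functional $\Lambda$ on $C(I)$ such that
\begin{itemize}
\item $\Lambda|_V=0$,
\item $\|\Lambda\|=1$,
\item $\Lambda(f)=E_K(f,I)$.
\end{itemize}
Concretely, define $\Lambda$ on $V+\mathbb R f$ by $\Lambda(P+cf)=c\,E_K(f,I)$. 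This has norm $1$ because $|c|E_K\le\|P+cf\|_\infty$, and it extends with the same norm. By the Riesz representation theorem, $\Lambda(g)=\int_I g\,\mathrm d\mu$ for a signed Borel measure $\mu$ on $I$ with total variation $|\mu|(I)=1$.

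\textbf{Step 2 (Jordan decomposition).} Write $\mu=\mu^+-\mu^-$. Since the constant $1$ lies in $V$, we get $\mu(I)=0$, so $\mu^+(I)=\mu^-(I)$. Because their sum is $|\mu|(I)=1$, each equals $1/2$. Set $\nu_1=2\mu^+$ and $\nu_0=2\mu^-$; both are probability measures on $I$.

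\textbf{Step 3 (verification).} For $j\le K$, we have $x^j\in V$, so
\[
\int x^j\,\mathrm d\nu_1-\int x^j\,\mathrm d\nu_0=2\int x^j\,\mathrm d\mu=0,
\]
which is the $K$-moment matching condition. Moreover,
\[
\int f\,\mathrm d\nu_1-\int f\,\mathrm d\nu_0=2\Lambda(f)=2E_K(f,I),
\]
as required.

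\textbf{Main obstacle.} The only delicate point is the norm bookkeeping: showing the total variation is exactly $1$ and that this splits evenly between $\mu^+$ and $\mu^-$. The equal split is forced by annihilation of constants. The fact that the total variation is exactly $1$ (rather than at most $1$) comes from the norm-preserving Hahn--Banach extension together with the isometry in Riesz representation.

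Alternatively, one could avoid functional analysis by using the Chebyshev alternation theorem. The best approximation error $f-P^*$ equioscillates at $K+2$ points $x_0<\dots<x_{K+1}$. A nonzero vector $w$ annihilating the polynomials of degree at most $K$ at these points has alternating signs, namely the divided-difference weights. Normalizing so that $\sum|w_i|=2$ and splitting $w$ by sign gives discrete $\nu_0,\nu_1$ supported on those nodes, with the same conclusion.
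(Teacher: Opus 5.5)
Your proof is correct and follows essentially the same route the paper relies on. The paper cites this lemma rather than proving it, but the cited references use this duality argument: a norm-one Hahn--Banach functional annihilating $\mathcal{P}_K$, Riesz representation, and a Jordan decomposition whose two parts have equal mass because constants are annihilated. The paper also runs exactly this template, with a weighted norm on $C(I)/\mathcal{P}_K$, in its proof of \cref{lemma:relative-power-matching}. One minor point is left implicit: $\|\Lambda\|=1$ exactly (not just $\le 1$) on $V+\mathbb{R}f$, which follows by evaluating at $f-P^*$ for a best approximant $P^*$, which exists because $V$ is finite-dimensional.
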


\begin{lemma}[Moment lifting]\label{lemma:T-eta-alpha}\label{lemma:moment-lifting}
Fix \(0<\eta\le1\) and \(a>0\).  Let \(X\sim\nu\) on \([\eta,1]\), and define \(Y\) conditionally on \(X\) by
\[
    Y=
    \begin{cases}
        aX/\eta,&\text{with probability }\eta/X,\\
        0,&\text{with probability }1-\eta/X.
    \end{cases}
\]
Write \(\mathsf T_{\eta,a}\rbra\nu\) for the probability distribution of \(Y\).  Then, for every integer \(j\ge1\),
\begin{equation}
    \E\sbra{Y^j}
    =a^j\eta^{1-j}\E\sbra{X^{j-1}}.
    \label{eq:change-measure-moments}
\end{equation}
More generally, for every function \(h\) for which the expectations exist,
\begin{equation}
    \E\sbra{h\rbra Y}
    =h\rbra0+
    \eta\E\sbra*{
    \frac{h\rbra{aX/\eta}-h\rbra0}{X}}.
    \label{eq:change-measure-functional}
\end{equation}
Consequently, if \(\rbra{\nu_0,\nu_1}\) is a $K$-moment matching, then \(\rbra{\mathsf T_{\eta,a}\rbra{\nu_0}, \mathsf T_{\eta,a}\rbra{\nu_1}}\) is a $\rbra{K+1}$-moment matching with mean \(a\).
\end{lemma}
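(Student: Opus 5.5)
The plan is a direct computation with the conditional law of $Y$ given $X$, followed by linearity. First I will check that the construction is well defined. Since $X \in [\eta, 1]$, we have $\eta/X \in [\eta, 1] \subseteq (0, 1]$, so the two branch probabilities are legitimate. Moreover $Y$ takes values in $\{0\} \cup [a, a/\eta]$, so every expectation below is finite whenever $h$ is bounded on that set; in particular this holds for all polynomials.

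Next I will prove the general identity \eqref{eq:change-measure-functional}. Conditioning on $X$ gives
\[
\E\sbra{h\rbra{Y} \mid X} = \frac{\eta}{X} h\rbra{aX/\eta} + \rbra*{1 - \frac{\eta}{X}} h\rbra{0} = h\rbra{0} + \eta\,\frac{h\rbra{aX/\eta} - h\rbra{0}}{X}.
\]
Taking the expectation over $X \sim \nu$ by the tower property yields \eqref{eq:change-measure-functional}. Specializing to $h\rbra{y} = y^j$ with $j \ge 1$, so that $h\rbra{0} = 0$, gives
\[
\E\sbra{Y^j} = \eta\,\E\sbra*{\frac{a^j X^j \eta^{-j}}{X}} = a^j \eta^{1-j}\,\E\sbra{X^{j-1}},
\]
which is \eqref{eq:change-measure-moments}.

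For the consequence, suppose $\rbra{\nu_0, \nu_1}$ is a $K$-moment matching and set $Y_b \sim \mathsf{T}_{\eta,a}\rbra{\nu_b}$. The $j = 0$ moment of $Y_b$ is $1$ trivially. For $1 \le j \le K+1$, \eqref{eq:change-measure-moments} expresses $\E\sbra{Y_b^j}$ as the constant $a^j\eta^{1-j}$ times $\E_{\nu_b}\sbra{X^{j-1}}$ with $j - 1 \le K$. These moments agree for $b = 0, 1$ by hypothesis, so $\rbra{\mathsf{T}_{\eta,a}\rbra{\nu_0}, \mathsf{T}_{\eta,a}\rbra{\nu_1}}$ is a $\rbra{K+1}$-moment matching. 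Taking $j = 1$ gives $\E\sbra{Y_b} = a\,\E\sbra{X^0} = a$, so both distributions have mean $a$.

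No step here is a genuine obstacle. The only point needing care is the well-definedness check, namely that $X \ge \eta > 0$ keeps $\eta/X$ a valid probability and that $1/X$ is bounded, which is what allows the tower property to be applied to the unbounded-looking expression.
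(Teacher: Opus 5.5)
Your proposal is correct and follows essentially the same route as the paper: condition on $X$ to get the functional identity, then specialize to $h(y)=y^j$ and read off the $(K+1)$-moment matching and mean $a$ from $j-1\le K$ and $j=1$. The only difference is cosmetic. The paper computes the moments first and the general identity second, while you derive the general identity and specialize; your explicit well-definedness check ($\eta/X\in[\eta,1]$, $Y\in\{0\}\cup[a,a/\eta]$) is a welcome addition.
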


\begin{proof}
Conditioning on \(X\) gives
\[
\begin{aligned}
    \E\sbra{Y^j}
    &=\E\sbra*{\frac\eta X\left(\frac{aX}{\eta}\right)^j}
      =a^j\eta^{1-j}\E\sbra{X^{j-1}},
\end{aligned}
\]
which is \cref{eq:change-measure-moments}.  The same conditioning gives
\[
\begin{aligned}
    \E\sbra{h\rbra Y}
    &=\E\sbra*{\frac\eta Xh\rbra{aX/\eta}
       +\left(1-\frac\eta X\right)h\rbra0}\\
    &=h\rbra0+
      \eta\E\sbra*{\frac{h\rbra{aX/\eta}-h\rbra0}{X}},
\end{aligned}
\]
proving \cref{eq:change-measure-functional}.  The final assertion follows from \cref{eq:change-measure-moments}.
\end{proof}

\subsection{Haar-Random Bounds}

\begin{lemma}
\label{lemma:haar-overlap-moments}
Let \(\ket{u}, \ket{v}\) be independent Haar-random unit vectors in \(\mathbb C^D\).
Then, $\E\sbra{\abs{D \abs{\braket{u}{v}}^2 - 1}^k} \leq 2^kk!$ for every integer \(k\ge0\).
\end{lemma}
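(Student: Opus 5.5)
By unitary invariance of the Haar measure we may fix $\ket{v} = \ket{0}$. Then $W := \abs{\braket{u}{v}}^2 = \abs{u_0}^2$ follows the $\mathrm{Beta}(1, D-1)$ distribution. The plan is to bound the moments of $Y := DW - 1$ directly through this explicit law.

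\textbf{Step 1 (moments of $DW$).} The Beta law gives
\[
\E\sbra{W^j} = \frac{j!\,(D-1)!}{(D+j-1)!} = \frac{j!}{D(D+1)\cdots(D+j-1)}.
\]
Since $D(D+1)\cdots(D+j-1) \geq D^j$, we get $\E\sbra{(DW)^j} \leq j!$ for every integer $j \ge 0$.

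\textbf{Step 2 (absolute moments of $Y$).} Apply the elementary inequality $\abs{x-1}^k \leq \max\cbra{x,1}^k \leq x^k + 1$, valid for $x \geq 0$, at $x = DW$. Taking expectations and using Step 1,
\[
\E\sbra{\abs{DW-1}^k} \leq \E\sbra{(DW)^k} + 1 \leq k! + 1 \leq 2\cdot k! \leq 2^k k!,
\]
where the last inequality holds for $k \geq 1$. For $k = 0$ both sides equal $1$, so the bound holds for every $k \ge 0$.

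\textbf{Main obstacle.} There is little difficulty here. The only step needing care is recording the Beta moment formula (equivalently, $\E\abs{u_0}^{2j} = \binom{D+j-1}{j}^{-1}$, which comes from the symmetric-subspace dimension) and the reduction to fixed $\ket{v}$ by invariance and independence. The slack factor $2^k$ in the statement means no sharper centered estimate is required. As an alternative route, one could write $DW \approx \abs{g}^2$ with $g$ a standard complex Gaussian, which is exponentially distributed with $j$-th moment $j!$; the exact Beta computation above avoids any approximation and is the route I would take.
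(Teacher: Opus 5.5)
Your proof is correct and follows essentially the paper's route: the $\operatorname{Beta}(1,D-1)$ law gives $\E\sbra{(D\abs{\braket{u}{v}}^2)^k}\le k!$, and an elementary pointwise bound on $\abs{x-1}^k$ finishes the argument. The only difference is that you use $\abs{x-1}^k\le x^k+1$ for $x\ge 0$ in place of the paper's $\abs{x-y}^k\le 2^{k-1}\rbra{\abs{x}^k+\abs{y}^k}$. This yields the slightly sharper intermediate bound $k!+1$ but changes nothing essential.
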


\begin{proof}
Note that \(S \coloneqq \abs{\braket{u}{v}}^2\) obeys the beta distribution \(\operatorname{Beta}(1,D-1)\) (cf.\ {\cite[Equations (7)--(9)]{ZS00}}), where the probability density function of $\operatorname{Beta}\rbra{\alpha, \beta}$ is defined by
\[
f\rbra{x; \alpha, \beta} = \frac{1}{\operatorname{B}\rbra{\alpha, \beta}} x^{\alpha - 1} \rbra{1 - x}^{\beta - 1}, \quad 0 \leq x \leq 1, \quad \operatorname{B}\rbra{\alpha, \beta} = \int_0^1 t^{\alpha-1}\rbra{1-t}^{\beta-1}\mathrm{d}t.
\]
Therefore, cf.\ \cite[Equation (16.8)]{BN03},
\[
\E\sbra{S^k} = \frac{k!\rbra{D-1}!}{\rbra{D+k-1}!} \leq \frac{k!}{D^k}.
\]
Finally, using $\abs{x-y}^k \leq 2^{k-1} \rbra{\abs{x}^k + \abs{y}^k}$, we have 
\[
\E\sbra*{\abs[\big]{D \abs{\braket{u}{v}}^2 - 1}^k} \leq 2^{k-1} \rbra*{ \E\sbra{\rbra{DS}^k} + 1 } \leq 2^{k-1}\rbra*{ k! + 1 } \leq 2^k k!.
\]
\end{proof}

\section{Haar-Random Moment Encoding}

In this section, we present an approach to encode a moment matching into a quantum state using Haar-randomness. 
First, we encode with one Haar-random direction in \cref{sec:one-direct}. 
Then, we generalize this idea to multiple Haar-random directions in \cref{sec:mm-qfe}.

\subsection{Haar-Random One-Direction Moment Encoding} \label{sec:one-direct}

We consider the $D$-dimensional quantum state $\varsigma_{z, u}$ defined as follows. 

\begin{definition}[Haar-random one-direction encoding] \label{def:varsigma}
    For an integer \(D\ge2\), a unit vector \(\ket u\in\mathbb C^D\), and a real number \(-1\le z\le D-1\), define
\begin{equation*}
    \varsigma_{z,u}=\frac{1+z}{D}\ketbra u u +\frac{1-\dfrac{z}{D-1}}{D} \rbra*{\mathbb{I}-\ketbra u u}.
\end{equation*}
\end{definition}

Note that $\varsigma_{z, u}$ has non-negative eigenvalues \begin{equation*}
    \rbra*{
    \frac{1+z}{D},
    \underbrace{
    \frac{1-\dfrac{z}{D-1}}{D},\ldots,
    \frac{1-\dfrac{z}{D-1}}{D}}_{D-1\textup{ equal eigenvalues}}}
\end{equation*}
and $\tr\rbra{\varsigma_{z, u}} = 1$. 
Therefore, $\varsigma_{z, u}$ is a valid quantum state (i.e., density operator). 

The following lemma shows that if $z$ is randomly distributed from either distribution of a moment matching, then $\varsigma_{z, u}$ can be indistinguishable under some conditions depending on the dimension $D$, the moment matching, and the number $n$ of samples of $\varsigma_{z, u}$.

\begin{lemma}[Haar-random one-direction moment encoding]
\label{lemma:q-moment-matching}
Let \(\rbra{\nu_0,\nu_1}\) be an $L$-moment matching on $[-1,D-1]\cap[-B,B]$.
Define
\[
    \sigma_b^{\rbra n}
    =\E_{\substack{Z\sim\nu_b\\\ket{u}\sim\operatorname{Haar}\rbra{\mathbb{C}^D}}}
      \sbra*{\varsigma_{Z,u}^{\otimes n}},
    \qquad b\in\{0,1\}.
\]
If $R \coloneqq 2nB^2/\rbra{D-1}^2 < 1$, then 
\begin{equation*}
    \mathrm T\rbra*{\sigma_0^{\rbra n},\sigma_1^{\rbra n}}
    \le\sqrt{\frac{R^{L+1}}{1-R}}.
\end{equation*}
In particular, if \(R\le1/16\), then $\mathrm T\rbra{\sigma_0^{\rbra n},\sigma_1^{\rbra n}} \le4^{-L}$.
\end{lemma}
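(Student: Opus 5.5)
The plan is to bound the trace distance by a Hilbert--Schmidt norm measured against the maximally mixed state $\mathbb{I}/D^n$. We then compute that norm exactly with a second Haar vector, and the moment matching will kill all low-order terms.

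\textbf{Step 1: rewrite the encoding.} Expanding \cref{def:varsigma}, the coefficient of $\ketbra{u}{u}$ is $z/(D-1)$ and the identity coefficient is $1/D - z/(D(D-1))$. Hence
\[
\varsigma_{z,u}=\frac{1}{D}\rbra*{\mathbb{I}+\frac{z}{D-1}A_u},\qquad A_u \coloneqq D\ketbra{u}{u}-\mathbb{I}.
\]
Let $\Delta=\sigma_0^{\rbra n}-\sigma_1^{\rbra n}$. By Cauchy--Schwarz on the $D^n$-dimensional space,
\[
\mathrm T\rbra{\sigma_0^{\rbra n},\sigma_1^{\rbra n}}=\tfrac12\Abs{\Delta}_1\le\tfrac12\sqrt{D^n\tr\rbra{\Delta^2}}.
\]
It therefore suffices to show $D^n\tr\rbra{\Delta^2}\le 4R^{L+1}/(1-R)$.

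\textbf{Step 2: compute the pairwise inner products.} For independent $\ket u,\ket v$ and scalars $z,z'$, we have $\tr\rbra{A_u}=0$. Moreover $\tr\rbra{A_uA_v}=D\rbra{D\abs{\braket uv}^2-1}$. Writing $s=\abs{\braket uv}^2$, this gives
\[
D\tr\rbra{\varsigma_{z,u}\varsigma_{z',v}}=1+\frac{zz'}{(D-1)^2}\rbra{Ds-1}.
\]
Tensorizing gives $D^n\tr\rbra{\varsigma_{z,u}^{\otimes n}\varsigma_{z',v}^{\otimes n}}=\rbra{1+zz'(Ds-1)/(D-1)^2}^n$. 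Expanding by the binomial theorem, we write $\Delta$ as an integral against the signed measure $\mu=\nu_0-\nu_1$ in $Z$, and similarly $\mu$ in $Z'$. Since $\ket u$ and $\ket v$ are independent Haar vectors, this yields
\[
D^n\tr\rbra{\Delta^2}=\sum_{k=0}^n\binom nk\frac{m_k^2}{(D-1)^{2k}}\,\E\sbra{(Ds-1)^k},\qquad m_k\coloneqq\E_{\nu_0}\sbra{Z^k}-\E_{\nu_1}\sbra{Z^k}.
\]

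\textbf{Step 3: use the moment matching and the Haar moment bound.} The $L$-moment matching gives $m_k=0$ for $k\le L$. Since both measures are supported in $[-B,B]$, we have $\abs{m_k}\le 2B^k$. By \cref{lemma:haar-overlap-moments}, $\abs{\E\sbra{(Ds-1)^k}}\le 2^kk!$. Using $\binom nk\le n^k/k!$, each remaining term is at most
\[
\frac{n^k}{k!}\cdot\frac{4B^{2k}}{(D-1)^{2k}}\cdot 2^kk!=4R^k.
\]
Summing the geometric tail over $k\ge L+1$ gives $D^n\tr\rbra{\Delta^2}\le 4R^{L+1}/(1-R)$. Hence $\mathrm T\le\sqrt{R^{L+1}/(1-R)}$.

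\textbf{Step 4: the special case.} If $R\le 1/16$, the bound becomes $\sqrt{16^{-(L+1)}\cdot 16/15}=4^{-L}\cdot\tfrac{1}{\sqrt{15}}\le4^{-L}$.

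\textbf{Main obstacle.} The delicate point is Step 2. We must correctly express $\tr\rbra{\Delta^2}$ as a double expectation over independent copies $(Z,u)$ and $(Z',v)$ against the signed measure $\mu\otimes\mu$. We must also check that the cross terms factor as $m_k^2$ times a Haar moment. This factorization holds because $Z$, $Z'$, $u$ and $v$ are all independent, so the $Z$-moments separate from the overlap $s$. The remaining steps are routine bookkeeping.
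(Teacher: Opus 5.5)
Your proposal is correct and follows essentially the same route as the paper. Both arguments use the $\sqrt{D^n}$ Hilbert--Schmidt bound, the pairwise formula $D\tr\rbra{\varsigma_{z,u}\varsigma_{z',v}} = 1 + \frac{zz'}{(D-1)^2}\rbra{D\abs{\braket{u}{v}}^2-1}$, binomial expansion with the moment matching killing all $k\le L$, and the bounds $\abs{m_k}\le 2B^k$, $\E\abs{D\abs{\braket{u}{v}}^2-1}^k\le 2^kk!$ and $\binom nk k!\le n^k$; your explicit check that $R\le 1/16$ gives $4^{-L}/\sqrt{15}$ simply fills in a step the paper leaves implicit.
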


\begin{proof}
Using the fact that $\Abs{A}_1 \leq \sqrt{d} \Abs{A}_2$ for any $d$-dimensional operator $A$, we have
\begin{equation}
    \mathrm T\rbra*{\sigma_0^{\rbra n},\sigma_1^{\rbra n}}
    \leq\frac{\sqrt{D^n}}2
    \sqrt{\tr\rbra*{\rbra*{\sigma_0^{\rbra n}-\sigma_1^{\rbra n}}^2}}.
    \label{eq:one-direction-trace-to-hs}
\end{equation}
The definition of $\varsigma_{z,u}$ can be rewritten as
\[
    \varsigma_{z,u}
    =\frac {\mathbb{I}}D+\frac z{D-1}\rbra*{\ketbra u u-\frac {\mathbb{I}}D}.
\]
Thus, for unit vectors $\ket u$ and $\ket v$, direct calculation shows
\begin{align*}
    \tr\rbra*{\varsigma_{z,u}\varsigma_{z',v}}
    =\frac1D
      +\frac{zz'}{D(D-1)^2}
       X, \qquad X = D\abs*{\braket{u}{v}}^2 - 1.
\end{align*}
Therefore, we have
\begin{align}
\tr\rbra*{\rbra*{\sigma_0^{\rbra n}-\sigma_1^{\rbra n}}^2}
&=\frac1{D^n}\sum_{k=0}^n\binom nk\frac1{(D-1)^{2k}}
\rbra*{
    \E_{Z\sim\nu_0}\sbra*{Z^k}
    -\E_{Z\sim\nu_1}\sbra*{Z^k}
}^2
\E\sbra*{X^k} \nonumber \\
& = \frac1{D^n}\sum_{k=L+1}^n\binom nk\frac1{(D-1)^{2k}}
\rbra*{
    \E_{Z\sim\nu_0}\sbra*{Z^k}
    -\E_{Z\sim\nu_1}\sbra*{Z^k}
}^2
\E\sbra*{X^k}, \label{eq:tr-sigma-diff-1}
\end{align}
where the second equality uses that $\rbra{\nu_0, \nu_1}$ is an $L$-moment matching. 

Since both distributions are defined on $\sbra{-1, D-1} \cap \sbra{-B,B}$, the triangle inequality gives
\begin{equation}
    \abs*{
      \E_{Z\sim\nu_0}\sbra*{Z^k}
      -\E_{Z\sim\nu_1}\sbra*{Z^k}}
    \leq
    \E_{Z\sim\nu_0}\sbra*{\abs{Z}^k}
    +\E_{Z\sim\nu_1}\sbra*{\abs{Z}^k}
    \leq2B^k,
\end{equation}
which, together with \cref{lemma:haar-overlap-moments}, implies 
\begin{align}
\eqref{eq:tr-sigma-diff-1}
& \leq \frac1{D^n}\sum_{k=L+1}^n\binom nk\frac1{(D-1)^{2k}}
\rbra*{2B^k}^2
\rbra*{2^k k!} \nonumber \\
& = \frac{4}{D^n} \sum_{k=L+1}^n\binom nk k! \rbra*{\frac{2B^2}{(D-1)^{2}}}^k \nonumber \\
& \leq \frac{4}{D^n} \sum_{k=L+1}^n R^k \label{eq:tr-sigma-diff-2} \\
& \leq \frac{4 R^{L+1}}{D^n \rbra{1-R}}, \label{eq:tr-sigma-diff-3}
\end{align}
where \cref{eq:tr-sigma-diff-2} uses the fact that $\binom{n}{k} k! \leq n^k$ and \cref{eq:tr-sigma-diff-3} uses the condition that $R < 1$. 
Finally, combining \cref{eq:one-direction-trace-to-hs,eq:tr-sigma-diff-3} yields the proof.
\end{proof}

\subsection{Haar-Random Multi-Direction Moment Encoding}\label{sec:mm-qfe}

We now extend the one-direction encoding in \cref{def:varsigma} to multiple Haar-random directions through the following process. 

\begin{definition} \label{def:pi-01}
    Fix an integer $d \geq 2$. 
    For an $L$-moment matching $\rbra{\nu_0, \nu_1}$ on $I = \sbra{a_0, b_0} \subseteq \sbra{0, d/12}$ with mean $\nu \leq 1$, define a pair of distributions of $d$-dimensional quantum states, $\rbra{\Pi_0, \Pi_1}$, as follows. 
    Let $d/6 \leq q \leq d/3$ be an integer and write $r = d - q$. 
    For each $b \in \cbra{0, 1}$, draw independent random variables $X_1, X_2, \dots, X_q \sim \nu_b$ and a Haar-random orthonormal basis $\ket{u_1}, \ket{u_2}, \dots, \ket{u_d} \in \mathbb{C}^d$.\footnote{Precisely, let $U$ be a Haar-random unitary operator on a $d$-dimensional Hilbert space. Then, setting $\ket{u_i} = U\ket{i}$ for $1 \leq i \leq d$ gives a Haar-random orthonormal basis.} 
    Define $\rho_b \coloneqq \rho\rbra{X_1, \dots, X_q, \ket{u_1}, \dots, \ket{u_q}}$ as follows. 
    \begin{enumerate}
        \item If $S \coloneqq\sum_{i=1}^q X_i \leq d/2$, set 
        \begin{equation} \label{eq:def-rho-b}
        \rho_b = \frac{1}{d} \sum_{i=1}^q X_i \ketbra{u_i}{u_i} + \frac{d-S}{rd} \rbra*{\mathbb{I} - \sum_{i=1}^q \ketbra{u_i}{u_i}}.
        \end{equation}
        \item Otherwise, set $\rho_b = \mathbb{I}/d$. 
    \end{enumerate}
    We write $\Pi_b$ to denote the probability distribution of $\rho_b$ generated from the above process. 
\end{definition}

The following lemma shows the indistinguishability between multiple identical and independent copies of a quantum state from the two probability distributions $\Pi_0$ and $\Pi_1$. 

\begin{lemma}[Haar-random multi-direction moment encoding] \label{lemma:q-moment-matching-multi-direction}
    Let $d \geq 2$, $d/6 \leq q \leq d/3$, and $r = d-q$. 
    For an $L$-moment matching $\rbra{\nu_0, \nu_1}$ on $I = \sbra{a_0, b_0} \subseteq \sbra{0, d/12}$ with mean $\nu \leq 1$, let $\rbra{\Pi_0, \Pi_1}$ be the pair of probability distributions of $d$-dimensional quantum states defined in \cref{def:pi-01}.
    Let $n \geq 1$ be an integer. 
    Denote
    \[
    \Omega_b^{\rbra{n}} = \E_{\rho \sim \Pi_b} \sbra*{\rho^{\otimes n}}, \qquad b \in \cbra{0, 1}.
    \]
    Set
    \[
    R = \frac{128nM^2}{25r^2}, \qquad M = \sup_{x \in I} \abs*{x - 1}. 
    \]
    If $R < 1$, then 
    \[
    \mathrm{T}\rbra*{\Omega_0^{\rbra{n}}, \Omega_1^{\rbra{n}}} \leq q \rbra*{\sqrt{\frac{R^{L+1}}{1-R}} + \exp\rbra*{-\frac{d}{96\rbra{b_0-a_0}^2}}}.
    \]
\end{lemma}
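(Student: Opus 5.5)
We will prove the bound by a hybrid argument over the $q$ directions, swapping one $X_i$ at a time from $\nu_0$ to $\nu_1$. For $0\le j\le q$, let $H_j$ be the distribution of $\rho$ in which $X_1,\dots,X_j\sim\nu_1$ and $X_{j+1},\dots,X_q\sim\nu_0$. We keep the same truncation rule: $\rho=\mathbb{I}/d$ whenever $S>d/2$. Then $H_0=\Pi_0$ and $H_q=\Pi_1$. By the triangle inequality it suffices to show that each consecutive pair $\E_{H_{j-1}}[\rho^{\otimes n}]$, $\E_{H_j}[\rho^{\otimes n}]$ is within $\sqrt{R^{L+1}/(1-R)}+\exp(-d/(96(b_0-a_0)^2))$ in trace distance. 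By symmetry of the Haar basis, we may relabel so that the swapped coordinate is $X_1$.

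\textbf{Step 1: remove the truncation.} Let $T=\sum_{i\ge2}X_i$. This sum has mean at most $(q-1)\nu\le d/3$. Each $X_i$ ranges over an interval of length $b_0-a_0$, and $X_1\le d/12$. Hoeffding's inequality then gives $\Pr[T>d/2-d/12]\le\exp(-2(d/12)^2/(q(b_0-a_0)^2))$, which is at most $\exp(-d/(96(b_0-a_0)^2))$ after using $q\le d/3$ and adjusting constants. On the complementary event $T\le 5d/12$, the constraint $S\le d/2$ holds for every value of $X_1\in I$, so the untruncated formula \cref{eq:def-rho-b} is in force. Hence, up to this additive error (the truncation contributes to both hybrids with the same conditional law of $T$), we may condition on $X_2,\dots,X_q,\ket{u_2},\dots,\ket{u_q}$ with $T\le 5d/12$.

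\textbf{Step 2: reduce to the one-direction encoding.} With everything except $X_1,\ket{u_1}$ fixed, let $P$ be the projector onto the $D\coloneqq r+1$-dimensional complement of $\ket{u_2},\dots,\ket{u_q}$. Conditioned on the others, $\ket{u_1}$ is Haar-random in the range of $P$. The state splits as $\rho=\rho_{\mathrm{fix}}+\frac{D}{d}c\,\varsigma_{Z,u_1}$, where $\rho_{\mathrm{fix}}$ lives on $P^\perp$ and does not depend on $X_1$. Here the block on $P$ is $\frac{X_1}{d}\ketbra{u_1}{u_1}+\frac{d-T-X_1}{rd}(P-\ketbra{u_1}{u_1})$. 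Its trace is $(d-T)/d$, which is fixed, so with $c=(d-T)/(Dd)$ the block equals $cD\,\varsigma_{Z,u_1}$ (normalised to trace $1$ on $P$). Solving for $Z$ gives
\[
Z=\frac{X_1 D}{d\cdot Dc}-1=\frac{X_1 (r+1)}{d-T}-1,
\]
an affine function of $X_1$. Affine maps preserve moment matching, so $Z$ follows an $L$-moment matching pushed forward from $(\nu_0,\nu_1)$. Since $d-T\ge 7d/12$ and $r\ge 2d/3$, we get $|Z|\le \frac{r+1}{d-T}|X_1-1|+\frac{|r+1-(d-T)|}{d-T}$. Controlling the second term will use the fact that $T$ concentrates around $(q-1)\nu$ and that $\nu\le1$ together with $M\ge|\nu-1|$. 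The target is $|Z|\le B$ with $2nB^2/(D-1)^2\le R=128nM^2/(25r^2)$, i.e. $B\le \frac{8}{5}M$.

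\textbf{Step 3: handle the tensor structure.} $\rho^{\otimes n}$ is a sum over positions of tensor products of $\rho_{\mathrm{fix}}$ and the $P$-block. Distinguishing the two mixtures is no easier than distinguishing $\E[(Dc\,\varsigma_{Z,u})^{\otimes k}]$ for each $k\le n$. Concretely, apply the quantum channel that measures $\{P,P^\perp\}$ on each copy: it acts trivially, because $\rho$ is block diagonal. The $P$-parts are then handled by \cref{lemma:q-moment-matching}, whose bound is monotone in $n$. This yields $\sqrt{R^{L+1}/(1-R)}$ per hybrid.

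The main obstacle is Step 2's bound on $|Z|$, that is, obtaining the clean constant $128/25$. The difficulty is that the shift term $(r+1-(d-T))/(d-T)$ depends on $T$. I would first try to absorb it by using the conditional concentration of $T$, possibly with a slightly adjusted $D$ or a rescaled definition of $Z$, before falling back on cruder constants.
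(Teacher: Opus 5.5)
You have the paper's architecture: the hybrid over coordinates, the Hoeffding-based removal of the truncation (the law of $T$ is the same in both hybrids), the reparametrization $Z=(r+1)X_1/(d-T)-1$ with $\ket{u_1}$ Haar-random in an $(r+1)$-dimensional space, and the appeal to \cref{lemma:q-moment-matching} with $D=r+1$, $B=8M/5$. Your Step~3 is a valid alternative to the paper's. You decompose $\rho^{\otimes n}$ into blocks, obtain a binomial mixture over the number $k\le n$ of copies landing in $P$, and use monotonicity of the one-direction bound in $n$. The paper instead uses the channel $\Phi_i(A)=(1-S_i/d)J_iAJ_i^\dagger+\frac{\tr(A)}{d}\sum_{j\neq i}X_j\ketbra{u_j}{u_j}$, which maps $\varsigma_{Z,u_i}$ to $\rho$, and applies data processing to $\Phi_i^{\otimes n}$.

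The genuine gap is the step you flag yourself: you never prove $\abs{Z}\le 8M/5$, and the route you propose cannot deliver it. Concentration of $T$ does not make the shift term small. $T-(q-1)$ concentrates around $(q-1)(\nu-1)$, which is of order $qM$ whenever the mean is far from $1$ (e.g.\ $I=[0,1]$ with tiny mean). What is missing is a deterministic triangle inequality. Since $(r+1)+(q-1)=d$,
\[
(r+1)X_1-(d-T)=(r+1)(X_1-1)+\sum_{j\neq 1}(X_j-1),
\]
and every $X_j\in I$ satisfies $\abs{X_j-1}\le M$, so
\[
\abs{Z}=\frac{\abs{(r+1)X_1-(d-T)}}{d-T}\le\frac{(r+1)M+(q-1)M}{d-T}=\frac{dM}{d-T}.
\]
With your threshold $T\le 5d/12$, this gives only $\abs{Z}\le 12M/7$. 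The effective parameter is then $R'=288nM^2/(49r^2)>R$, so neither the stated bound nor the hypothesis $R<1$ suffices.

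You must cut at $T\le 3d/8$ instead:
\begin{itemize}
\item $\abs{Z}\le 8M/5$ exactly.
\item $S\le 3d/8+d/12<d/2$, so the untruncated formula holds.
\item $X_1\le d/12\le d-T$ gives $-1\le Z\le r$, so $\varsigma_{Z,u_1}$ is a valid state.
\item Hoeffding with deviation $3d/8-d/3=d/24$ over $q-1\le d/3$ terms gives exactly $\exp(-d/(96(b_0-a_0)^2))$.
\end{itemize}
Your threshold $5d/12$ buys a Hoeffding exponent $d/(24(b_0-a_0)^2)$, stronger than needed, at the expense of the $\abs{Z}$ bound. The threshold $3d/8$ is the choice that makes both constants in the statement come out.
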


\begin{proof}
    Let $\Gamma_i$ be the probability distribution of $\rho\rbra{X_1, \dots, X_q, \ket{u_1}, \dots, \ket{u_q}}$ with $X_1, \dots, X_i \sim \nu_1$ and $X_{i+1}, \dots, X_q \sim \nu_0$. 
    Define
    \[
    \Psi_{i}^{\rbra{n}} = \E_{\rho \sim \Gamma_i} \sbra*{\rho^{\otimes n}}, \qquad 0 \leq i \leq q. 
    \]
    Note that $\Psi_0^{\rbra{n}} = \Omega_0^{\rbra{n}}$ and $\Psi_q^{\rbra{n}} = \Omega_1^{\rbra{n}}$. 
    By the triangle inequality, we have
    \begin{equation} \label{eq:td-sum-td}
        \mathrm{T}\rbra*{\Omega_0^{\rbra{n}}, \Omega_1^{\rbra{n}}} \leq \sum_{i=1}^q \mathrm{T}\rbra*{\Psi_{i-1}^{\rbra{n}}, \Psi_i^{\rbra{n}}}.
    \end{equation}

    Fix $1 \leq i \leq q$. 
    Consider the event $E_i = \cbra{S_i \leq 3d/8}$, where $S_i \coloneqq \sum_{j \neq i} X_j$ with $\E\sbra{S_i} = \rbra{q-1}\nu < d/3$. 
    Then, by Hoeffding's inequality \cite[Theorem 2]{Hoe63}, 
    \begin{equation} \label{eq:hoeffding}
    \Pr\sbra*{E_i^{\mathsf{c}}} \leq \exp\rbra*{-\frac{2\rbra{3d/8 - d/3}^2}{\rbra{q-1}\rbra{b_0-a_0}^2}} \leq \exp\rbra*{-\frac{d}{96\rbra{b_0-a_0}^2}}.
    \end{equation}
    
    Now consider the case when the event $E_i$ happens. 
    Conditioned on $Y = \cbra{\rbra{X_j, \ket{u_j}}: 1 \leq j \leq q, ~j \neq i}$, let $\mathcal{V}_i \coloneqq \rbra{\spanspace\cbra{\ket{u_j}: 1 \leq j \leq q, ~j \neq i}}^{\perp}$, then $\ket{u_i}$ is Haar-random in the $\rbra{r+1}$-dimensional space $\mathcal{V}_i$. 
    Define the random variable
    \[
    Z = \frac{\rbra{r+1} X_i}{d-S_i} - 1.
    \]
    It can be verified that (i) $-1 \leq Z \leq r$, (ii) $\abs{Z} \leq 8M/5$, and (iii) $\rbra{\zeta_0, \zeta_1}$ is an $L$-moment matching, where $\zeta_b$ is the probability distribution of $Z$ when $X_i \sim \nu_b$ for $b \in \cbra{0, 1}$. 
    Consider the quantum state $\varsigma_{Z, u_i}$ (defined in \cref{def:varsigma}) on the space $\mathcal{V}_i$ and the quantum channel
    \[
    \Phi_{i}\rbra{A} = \rbra*{1-\frac{S_i}{d}} J_i A J_i^\dag + \frac{\tr\rbra{A}}{d} \sum_{\substack{1 \leq j \leq q \\ j \neq i}} X_j \ketbra{u_j}{u_j},
    \]
    where $J_i \colon \mathcal{V}_i \hookrightarrow \mathbb{C}^d$ is an isometry defined by $J_i \ket{v} = \ket{v}$ for all $\ket{v} \in \mathcal{V}_i$.
    Direct calculation shows that on event $E_i$ (note that $S \leq d/2$ in this case), 
    \[
    \Phi_i\rbra*{\varsigma_{Z, u_i}} = \rho_b \coloneqq \rho\rbra{X_1, \dots, X_q, \ket{u_1}, \dots, \ket{u_q}}.
    \]

    Define 
    \[
    \tau_{b, Y}^{\rbra{n}} \coloneqq \E_{\substack{X_i \sim \nu_b \\ \ket{u_i} \sim \operatorname{Haar}\rbra{\mathcal{V}_i}}} \sbra*{ \rho_b^{\otimes n} \middle| Y }, \qquad b \in \cbra{0, 1}.
    \]
    Note that $\Psi_{i-1}^{\rbra{n}} = \E_{Y} \sbra{\tau_{0, Y}^{\rbra{n}}}$ and $\Psi_i^{\rbra{n}} = \E_{Y} \sbra{\tau_{1, Y}^{\rbra{n}}}$.
    Joint convexity of trace distance gives
    \begin{equation} \label{eq:td-EY}
    \mathrm{T}\rbra*{\Psi_{i-1}^{\rbra{n}}, \Psi_{i}^{\rbra{n}}} \leq \E_{Y} \sbra*{\mathrm{T}\rbra*{\tau_{0, Y}^{\rbra{n}}, \tau_{1, Y}^{\rbra{n}}}}.
    \end{equation}
    The trace distance in the expectation can be split into two cases: on $E_i$ and on $E_i^{\mathsf{c}}$, which gives
    \begin{equation} \label{eq:EY-td}
    \E_Y\sbra*{\mathrm{T}\rbra*{\tau_{0, Y}^{\rbra{n}}, \tau_{1, Y}^{\rbra{n}}}}
    \leq \E_Y \sbra*{\mathrm{T}\rbra*{\tau_{0, Y}^{\rbra{n}}, \tau_{1, Y}^{\rbra{n}}} \middle| E_i} + \Pr_Y \sbra*{E_i^{\mathsf{c}}}.
    \end{equation}
    Note that conditioned on $E_i$, 
    \[
    \tau_{b, Y}^{\rbra{n}} = \E_{\substack{X_i \sim \nu_b \\ \ket{u_i} \sim \operatorname{Haar}\rbra{\mathcal{V}_i}}} \sbra*{\Phi_i^{\otimes n}\rbra*{\varsigma_{Z, u_i}^{\otimes n}}},
    \]
    which, by contractivity of trace distance, gives
    \begin{align*}
    \mathrm{T}\rbra*{\tau_{0, Y}^{\rbra{n}}, \tau_{1, Y}^{\rbra{n}}}
    \leq \mathrm{T}\rbra*{\E_{\substack{X_i \sim \nu_0 \\ \ket{u_i} \sim \operatorname{Haar}\rbra{\mathcal{V}_i}}}\sbra*{\varsigma_{Z, u_i}^{\otimes n}}, \E_{\substack{X_i \sim \nu_1 \\ \ket{u_i} \sim \operatorname{Haar}\rbra{\mathcal{V}_i}}} \sbra*{\varsigma_{Z, u_i}^{\otimes n}}} 
    \leq \sqrt{\frac{R^{L+1}}{1-R}},
    \end{align*}
    where the last inequality is by \cref{lemma:q-moment-matching} with $B \coloneqq 8M/5$.
    Therefore, \cref{eq:td-EY}, together with \cref{eq:EY-td,eq:hoeffding}, gives
    \[
    \mathrm{T}\rbra*{\Psi_{i-1}^{\rbra{n}}, \Psi_{i}^{\rbra{n}}} \leq \sqrt{\frac{R^{L+1}}{1-R}} + \exp\rbra*{-\frac{d}{96\rbra{b_0-a_0}^2}},
    \]
    which completes the proof by \cref{eq:td-sum-td}.
\end{proof}

\section{Master Theorem for Quantum Functional Estimation}

To describe our master theorem, we need to focus on the following quantities. 

\begin{definition}[Important quantities]\label{def:master-functional-parameters}
Let $\nu_0$ and $\nu_1$ be probability distributions on an interval $I$. 
For a function $\phi \colon [0, +\infty) \to \mathbb{R}$, define the following parameters:
\begin{align*}
    M & = \sup_{x \in I} \abs*{x-1}, \\
    A_\phi & = \sup_{x, y \in I} \abs*{\phi\rbra{x} - \phi\rbra{y}}, \\
    L_\phi & = \sup_{\substack{x, y \in \sbra{1/2, 4/3} \\ x \neq y}} \abs*{\frac{\phi\rbra{x} - \phi\rbra{y}}{x - y}}, \\
    g_\phi & = \abs*{\E_{X \sim \nu_0} \sbra*{\phi\rbra{X}} - \E_{X \sim \nu_1} \sbra*{\phi\rbra{X}}}.
\end{align*}
\end{definition}

Our master theorem is provided in the following. 

\begin{theorem}[Master theorem]
\label{thm:master-functional}
    There exist universal constants $C, c > 0$ such that the following holds. 
    Suppose $\rbra{\nu_0, \nu_1}$ is a $K$-moment matching on an interval $I = \sbra{a_0, b_0} \subseteq [0, +\infty)$ with mean $\nu \leq 1$, and $\phi \colon [0, +\infty) \to \mathbb{R}$ is a function with $A_\phi, L_\phi, g_\phi \in \rbra{0, \infty}$.
    For sufficiently large $d \geq 1$, if 
    \begin{align}
        K & \geq C \log\rbra{d}, \label{cond1} \\
        \rbra{1+M}^2 & \leq cd/\log\rbra{d}, \label{cond2} \\
        \rbra{A_\phi + \rbra{b_0 - a_0} L_\phi}/g_\phi & \leq c\sqrt{d}, \label{cond3}
    \end{align}
    then there are two probability distributions $\Pi_0, \Pi_1$ of $d$-dimensional quantum states and intervals $I_0, I_1 \subseteq \mathbb{R}$ with $\operatorname{dist}\rbra{I_0, I_1} \geq g_\phi/8$ such that
    \begin{align}
        \mathrm{T}\rbra*{\E_{\rho \sim \Pi_0}\sbra*{\rho^{\otimes n}}, \E_{\rho \sim \Pi_1}\sbra*{\rho^{\otimes n}}} \leq \frac{1}{\poly\rbra{d}}, \qquad & 1 \leq n \leq n_0 \coloneqq \floor*{\frac{cd^2}{M^2}}, \label{outcond1} \\
        \Pr_{\rho \sim \Pi_b}\sbra*{\mathcal{L}_\phi\rbra{\rho} \in I_b} \geq 0.99, \qquad & b \in \cbra{0, 1}. \label{outcond2}
    \end{align}
    Furthermore, we have the following lower bounds:
    \begin{enumerate}[(1)]
    \item Estimating $\mathcal{L}_{\phi}\rbra{\rho}$ to within additive error $g_\phi/24$ with success probability at least $2/3$ for an unknown $d$-dimensional quantum state ${\rho}$ requires $\Omega\rbra{d^2/M^2}$ samples of $\rho$. \label{main-item1}
    \item Let $0 < p \leq 1$ and $\sigma$ be a fixed $d'$-dimensional quantum state such that
    \begin{equation} \label{outcond3}
     \Pr_{\rho\sim\Pi_b}
     \sbra{\Phi\rbra{p\rho \oplus \rbra{1-p}\sigma}\in J_b}
     \geq0.99,
     \qquad b\in\cbra{0,1},
    \end{equation}
    where $\Phi$ is a functional, and $J_0, J_1$ are two intervals with $\operatorname{dist}\rbra{J_0, J_1} > 2\varepsilon$.
    Then, estimating $\Phi\rbra{\widetilde{\rho}}$ to within additive error $\varepsilon$ with success probability at least $2/3$ for an unknown $\rbra{d+d'}$-dimensional quantum state $\widetilde{\rho}$ requires $\Omega\rbra{d^2/\rbra{pM^2}}$ samples of $\widetilde{\rho}$. \label{main-item2}
    \end{enumerate}
\end{theorem}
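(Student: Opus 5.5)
We take $\Pi_0,\Pi_1$ from \cref{def:pi-01}, with $q=\lceil d/6\rceil$ (so $d/6\le q\le d/3$) and $r=d-q\ge 2d/3$.

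\textbf{Support condition.} \cref{def:pi-01} requires $I\subseteq[0,d/12]$. We have $b_0\le 1+M$, and condition \eqref{cond2} gives $(1+M)^2\le cd/\log d$, so $b_0\le d/12$ for large $d$.

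\textbf{Indistinguishability \eqref{outcond1}.} We apply \cref{lemma:q-moment-matching-multi-direction} with $L=K$. For $n\le n_0=\lfloor cd^2/M^2\rfloor$ we get
\[
R=\frac{128nM^2}{25r^2}\le \frac{128\cdot 9c}{100},
\]
which is at most $1/16$ for $c$ small. Then $\sqrt{R^{K+1}/(1-R)}\le 4^{-K}\le d^{-C\log 4}$ by \eqref{cond1}. The Hoeffding term is controlled because $b_0-a_0\le 2M\le 2(1+M)$, so $d/(96(b_0-a_0)^2)\ge \log d/(384c)$. Hence this term is at most $d^{-1/(384c)}$. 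Multiplying by $q\le d$ still leaves $1/\poly(d)$ once $C$ is large and $c$ is small.

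\textbf{Concentration \eqref{outcond2}.} On $S\le d/2$ the spectrum gives
\[
\mathcal{L}_\phi(\rho_b)=\frac1d\sum_{i\le q}\phi(X_i)+\frac rd\,\phi\!\left(\frac{d-S}{r}\right).
\]
Define
\[
\theta_b=\frac qd\,\E_{\nu_b}[\phi]+\frac rd\,\phi\!\left(\frac{d-q\nu}{r}\right).
\]
The mean $\nu$ is common to $\nu_0,\nu_1$ since $K\ge1$. Because the second terms of $\theta_0$ and $\theta_1$ coincide, $|\theta_0-\theta_1|=(q/d)\,g_\phi\ge g_\phi/6$.

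We take $I_b=[\theta_b-g_\phi/24,\ \theta_b+g_\phi/24]$. Then $\operatorname{dist}(I_0,I_1)\ge g_\phi/6-g_\phi/12=g_\phi/12$. This is slightly off the stated $g_\phi/8$; to meet it we would take $q$ closer to $d/3$, which makes the gap $\approx g_\phi/3$ minus the two half-widths.

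It remains to show both fluctuations are $\ll g_\phi$ with probability $0.99$.
\begin{itemize}
\item First term: $\frac1d\sum_i(\phi(X_i)-\E\phi)$ has standard deviation at most $\sqrt q\,A_\phi/d\le A_\phi/\sqrt d$. By Chebyshev/Hoeffding this is $\le c'g_\phi$ using \eqref{cond3}.
\item Second term: $(d-S)/r$ and $(d-q\nu)/r$ both lie in $[1/2,4/3]$, because $S,\,q\nu\le d/2$ on the relevant event and $q\nu\ge0$. Hence the difference is at most
\[
\frac rd L_\phi\frac{|S-q\nu|}{r}\le \frac{L_\phi(b_0-a_0)\sqrt q\cdot O(1)}{d}\le O\!\left(\frac{(b_0-a_0)L_\phi}{\sqrt d}\right)\le O(c)\,g_\phi
\]
with high probability.
\end{itemize}
The bad event $S>d/2$ has probability $\exp(-\Omega(\log d/c))$ by Hoeffding, as above.

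\textbf{Lower bounds (1) and (2).} For (1), an estimator with error $g_\phi/24$ succeeding with probability $2/3$, combined with the $0.99$ concentration, distinguishes $\Omega_0^{(n)}$ from $\Omega_1^{(n)}$ with advantage $\Omega(1)$. This contradicts \eqref{outcond1} for $n\le n_0$.

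For (2), samples of $\widetilde\rho=p\rho\oplus(1-p)\sigma$ can be simulated from samples of $\rho$: each copy is, with probability $p$, a copy of $\rho$ (in block form), and otherwise a known copy of $\sigma$. So $m$ samples of $\widetilde\rho$ need only $\mathrm{Binomial}(m,p)$ samples of $\rho$, which is at most $2pm+O(1)$ with high probability. An $\varepsilon$-estimator for $\Phi$ with $m=o(d^2/(pM^2))$ samples would therefore distinguish $\Pi_0$ from $\Pi_1$ using fewer than $n_0$ samples of $\rho$, a contradiction.

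\textbf{Main obstacle.} The delicate step is the concentration bookkeeping: fixing the constants so that the $I_b$ are separated by exactly $g_\phi/8$ under \eqref{cond3}, and verifying that $(d-S)/r$ stays inside $[1/2,4/3]$ so that $L_\phi$ applies.
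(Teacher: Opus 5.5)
Your plan is the paper's proof in all essentials. It uses the same construction from \cref{def:pi-01}, the same use of \cref{lemma:q-moment-matching-multi-direction} with $L=K$ for \eqref{outcond1}, the same centering $\theta_b$ with the $A_\phi$/$L_\phi$ split of the fluctuation, and the same Helstrom--Holevo and binomial-thinning reductions for items (1) and (2).

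The one step that fails as written is the interval bookkeeping you flagged yourself. With $q=\lceil d/6\rceil$ and half-width $g_\phi/24$ you only get $\operatorname{dist}(I_0,I_1)\ge g_\phi/12$, with equality when $6\mid d$. This misses the claimed $g_\phi/8$. It is also not strictly larger than $2\cdot g_\phi/24$, which the reduction in item (1) needs.

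Your proposed remedy, moving $q$ toward $d/3$, breaks the Lipschitz step. The window $[1/2,4/3]$ in the definition of $L_\phi$ is exactly what $q\le d/4$ buys you. The upper bound for both $(d-S)/r$ and $(d-q\nu)/r$ is $d/r$, and $d/r\le 4/3$ holds iff $q\le d/4$. For $q\approx d/3$ you get $d/r\approx 3/2$. When $\nu$ is small, which the hypotheses allow, $(d-q\nu)/r$ and (with high probability) $(d-S)/r$ both sit near $3/2$, where $L_\phi$ gives no control over $\phi$. Your justification of the upper endpoint (``because $q\nu\ge0$'') only gives $\le d/r$ and silently relies on $q\le d/4$.

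The fix is to shrink the intervals, not to move $q$. Your fluctuation bounds are $O(c)\,g_\phi$ with high probability. So you can keep $q=\lceil d/6\rceil$ and take half-width $g_\phi/48$, which gives separation at least $g_\phi/6-g_\phi/24=g_\phi/8>g_\phi/12$.

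The paper does the analogous thing:
\begin{itemize}
\item It takes $q=\lfloor d/4\rfloor$, so $r\ge 3d/4$ and $d/r\le 4/3$.
\item It uses half-width $qg_\phi/(8d)$, which gives separation $\frac{3q}{4d}g_\phi\ge g_\phi/8$.
\item Each fluctuation term is bounded by $qg_\phi/(16d)$ via Hoeffding. The failure probabilities are $2\exp(-qg_\phi^2/(128A_\phi^2))$ and $2\exp(-qg_\phi^2/(128(b_0-a_0)^2L_\phi^2))$, both small by \eqref{cond3}.
\end{itemize}
With this adjustment the rest of your argument goes through unchanged.
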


\begin{proof}
    Let $q = \floor{d/4}$ and $r = d - q \geq 3d/4$. 
    Note that $\nu \leq 1$, $a_0 \leq 1$, $b_0 \leq 1 + M$, and $b_0 - a_0 \leq 2M$. 
    For sufficiently small $c$ and sufficiently large $d$, \cref{cond2} implies that $b_0 \leq \sqrt{cd/\log\rbra{d}} \leq d/12$. 
    Therefore, the process in \cref{def:pi-01} applies here and we take the probability distributions $\Pi_0$ and $\Pi_1$ obtained there. 

    For $n \leq n_0 \leq cd^2/M^2$, applying \cref{lemma:q-moment-matching-multi-direction} with $L \coloneqq K$ gives
    \begin{align}
    \mathrm{T}\rbra*{\E_{\rho \sim \Pi_0}\sbra*{\rho^{\otimes n}}, \E_{\rho \sim \Pi_1}\sbra*{\rho^{\otimes n}}} 
    & \leq q \rbra*{\sqrt{\frac{R^{K+1}}{1-R}} + \exp\rbra*{-\frac{d}{96\rbra{b_0-a_0}^2}}} \nonumber \\
    & \leq q \rbra*{ 4^{-K} + \exp\rbra*{-\frac{d}{384M^2}} } \label{eq:second-neq} \\
    & \leq d^{1-C} + d^{1-1/384c}. \nonumber
    \end{align}
    where in \cref{eq:second-neq}, the first term uses \cref{cond1} and the fact that $R \leq 2048c/225 \leq 1/16$ for sufficiently small $c$, and the second term uses \cref{cond2}.
    Therefore, \cref{outcond1} holds for sufficiently large $C$, sufficiently small $c$, and sufficiently large $d$.

    For $b \in \cbra{0, 1}$, let $\rho_b \sim \Pi_b$. 
    Consider the event $E = \cbra{S \leq d/2}$, and using \cref{eq:hoeffding} gives
    \begin{equation} \label{eq:hoeffding1}
    \Pr\sbra*{E^{\mathsf{c}}} \leq \Pr\sbra*{E_i^{\mathsf{c}}} \leq \exp\rbra*{-\frac{d}{96\rbra{b_0-a_0}^2}}.
    \end{equation}
    On event $E$, the eigenvalues of $\rho_b$ are 
    \[
    \frac{X_1}{d}, \dots, \frac{X_q}{d}, \underbrace{\frac{d-S}{rd}, \dots, \frac{d-S}{rd}}_{r\textup{ times}},
    \]
    which gives
    \[
    \mathcal{L}_\phi\rbra{\rho_b} = \frac{1}{d} \sum_{i=1}^q \phi\rbra{X_i} + \frac{r}{d} \phi\rbra*{\frac{d-S}{r}}.
    \]
    Let
    \[
    \theta_b \coloneqq \frac{q}{d} \E_{X \sim \nu_b} \sbra*{\phi\rbra{X}} + \frac{r}{d} \phi\rbra*{\frac{d-q\nu}{r}}, \qquad I_b = \sbra*{\theta_b - \frac{qg_\phi}{8d}, \theta_b + \frac{qg_\phi}{8d}}.
    \]
    Direct calculation gives
    \begin{align*}
        \abs*{ \mathcal{L}_\phi\rbra{\rho_b} - \theta_b }
        & \leq \frac{1}{d} \abs*{ \sum_{i=1}^q \rbra*{ \phi\rbra{X_i} - \E_{X \sim \nu_b} \sbra*{\phi\rbra{X}}}} + \frac{r}{d} \abs*{ \phi\rbra*{\frac{d-S}{r}} - \phi\rbra*{\frac{d-q\nu}{r}} }.
    \end{align*}
    For the second term, using the Lipschitz constant $L_\phi$, we have 
    \begin{align*}
        \abs*{ \phi\rbra*{\frac{d-S}{r}} - \phi\rbra*{\frac{d-q\nu}{r}} }
        & \leq L_\phi \abs*{\frac{S-q\nu}{r}} = \frac{L_\phi}{r} \abs*{ \sum_{i=1}^q \rbra*{ X_i - \nu } }.
    \end{align*}
    Therefore, 
    \[
    \abs*{ \mathcal{L}_\phi\rbra{\rho_b} - \theta_b } \leq \frac{1}{d} \abs*{ \sum_{i=1}^q \rbra*{ \phi\rbra{X_i} - \E_{X \sim \nu_b} \sbra*{\phi\rbra{X}}}} + \frac{L_\phi}{d} \abs*{ \sum_{i=1}^q \rbra*{ X_i - \nu } }.
    \]
    By Hoeffding's inequality \cite[Theorem 2]{Hoe63}, we have
    \begin{equation} \label{eq:hoeffding2}
    \Pr\sbra*{\abs*{ \sum_{i=1}^q \rbra*{ \phi\rbra{X_i} - \E_{X \sim \nu_b} \sbra*{\phi\rbra{X}}} } > \frac{qg_\phi}{16}} \leq 2\exp\rbra*{-\frac{2\rbra{qg_\phi/16}^2}{qA_\phi^2}} = 2\exp\rbra*{-\frac{qg_\phi^2}{128A_\phi^2}},
    \end{equation}
    \begin{equation} \label{eq:hoeffding3}
    \Pr\sbra*{L_\phi \abs*{ \sum_{i=1}^q \rbra*{ X_i - \nu } } > \frac{qg_\phi}{16}} \leq 2\exp\rbra*{-\frac{2\rbra{qg_\phi/16L_\phi}^2}{q\rbra{b_0-a_0}^2}} = 2\exp\rbra*{-\frac{qg_\phi^2}{128\rbra{b_0-a_0}^2L_\phi^2}}.
    \end{equation}
    Combining \cref{eq:hoeffding1,eq:hoeffding2,eq:hoeffding3}, we have
    \begin{align*}
    \Pr\sbra*{\mathcal{L}_\phi\rbra{\rho_b} \in I_b} 
    & \geq 1 - \exp\rbra*{-\frac{d}{96\rbra{b_0-a_0}^2}} - 2\exp\rbra*{-\frac{qg_\phi^2}{128A_\phi^2}} - 2\exp\rbra*{-\frac{qg_\phi^2}{128\rbra{b_0-a_0}^2L_\phi^2}} \\
    & \geq 1 - d^{1-1/384c} - 4e^{-1/768c^2},
    \end{align*}
    where the second inequality is by \cref{cond3}. 
    This yields \cref{outcond2} for sufficiently small $c$ and sufficiently large $d$. 

    For \cref{main-item1}, consider the hypothesis testing $b \in \cbra{0, 1}$ that $b = 0$ and $b = 1$ with equal probability. 
    Suppose there is an estimator that estimates $\mathcal{L}_{\phi}\rbra{\rho}$ to within additive error $\varepsilon \coloneqq g_\phi/24$ with success probability at least $2/3$ using $n \leq n_0$ samples of $\rho$. 
    Then, \cref{outcond2} implies that this estimator can be used to distinguish $\rho \sim \Pi_0$ from $\rho \sim \Pi_1$ with success probability $p_{\textup{succ}} \geq 0.99 \cdot \frac{2}{3} = 0.66$, since $\operatorname{dist}\rbra{I_0, I_1} \geq g_\phi/8 > 2 \varepsilon$.
    On the other hand, the Helstrom--Holevo bound \cite{Hel67,Hol73} gives 
    \[
    p_{\textup{succ}} \leq \frac{1}{2} + \frac{1}{2} \mathrm{T}\rbra*{\E_{\rho \sim \Pi_0}\sbra*{\rho^{\otimes n}}, \E_{\rho \sim \Pi_1}\sbra*{\rho^{\otimes n}}} \leq 0.51,
    \]
    which gives a contradiction. 
    Therefore, any estimator that estimates $\mathcal{L}_{\phi}\rbra{\rho}$ to within additive error $\varepsilon$ requires $n_0 = \Omega\rbra{d^2/M^2}$ samples of $\rho$. 

    For \cref{main-item2}, we also consider the hypothesis testing $b \in \cbra{0, 1}$ that $b = 0$ and $b = 1$ with equal probability. 
    Suppose there is an estimator that estimates $\Phi\rbra{\widetilde\rho}$ to within additive error $\varepsilon$ using $n \leq cn_0/p$ samples of $\widetilde\rho$. 
    Then, \cref{outcond3} implies that this estimator can be used to distinguish $\rho \sim \Pi_0$ from $\rho \sim \Pi_1$ through the state $p\rho \oplus \rbra{1-p}\sigma$ with success probability $p_{\textup{succ}} \geq 0.99 \cdot \frac{2}{3} = 0.66$, since $\operatorname{dist}\rbra{J_0, J_1} \geq g_\phi/8 > 2 \varepsilon$.
    On the other hand, the Helstrom--Holevo bound \cite{Hel67,Hol73} gives 
    \begin{align}
    p_{\textup{succ}} 
    & \leq \frac{1}{2} + \frac{1}{2} \mathrm{T}\rbra*{\E_{\rho \sim \Pi_0}\sbra*{\rbra{p\rho \oplus \rbra{1-p}\sigma}^{\otimes n}}, \E_{\rho \sim \Pi_1}\sbra*{\rbra{p\rho \oplus \rbra{1-p}\sigma}^{\otimes n}}} \nonumber \\
    & = \frac{1}{2} + \frac{1}{2} \E_{N \sim \operatorname{Bin}\rbra{n, p}} \sbra*{ \mathrm{T}\rbra*{\E_{\rho \sim \Pi_0}\sbra*{\rho^{\otimes N}}, \E_{\rho \sim \Pi_1}\sbra*{\rho^{\otimes N}}} }, \label{eq:psucc}
    \end{align}
    where $\operatorname{Bin}\rbra{n, p}$ means the binomial distribution of $n$ items, each with probability $p$. 
    We consider the two cases: $N \leq n_0$ and $N > n_0$. 
    For $N \leq n_0$, we use \cref{outcond1}. 
    For $N > n_0$, the probability is at most $\Pr\sbra{N > n_0} \leq \E\sbra{N}/n_0 = np/n_0$. 
    Therefore, 
    \begin{align*}
        \eqref{eq:psucc} \leq \frac{1}{2} + \frac{1}{2} \rbra*{ 0.01 + \frac{np}{n_0} } \leq 0.51 + \frac{c}{2},
    \end{align*}
    which contradicts $p_{\textup{succ}} \geq 0.66$ for sufficiently small $c > 0$. 
    Therefore, any estimator that estimates $\Phi\rbra{\widetilde\rho}$ to within additive error $\varepsilon$ requires $cn_0/p = \Omega\rbra{d^2/\rbra{pM^2}}$ samples of $\widetilde\rho$. 
\end{proof}

\section{Applications}
\label{sec:applications}

\subsection{Von Neumann Entropy Estimation}

\begin{lemma}[{\cite[Lemma 5]{WY16}}]
\label{lemma:log-approximation}
There are universal constants \(c_{\log},c_0>0\) such that, for every integer \(K\geq2\) with $\eta={c_0}/{K^2}$, it holds that $E_K\rbra{\log\rbra{1/x},\sbra{\eta,1}} \geq c_{\log}$.
\end{lemma}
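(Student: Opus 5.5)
This lemma is quoted from \cite[Lemma 5]{WY16}, so the shortest route is to cite it. To make the paper self-contained, I would prove it by contradiction using a second-difference test at the left endpoint.

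\textbf{Setup.} Suppose $P$ has degree at most $K$ and satisfies $\abs{P(x)-\log(1/x)}\le\delta$ on $[\eta,1]$, where $\eta=c_0/K^2$. Test $P$ against the second difference of $f(x)=\log(1/x)$ with step $h=\eta$ at the endpoint:
\[
f(\eta)-2f(2\eta)+f(3\eta)=2\log 2-\log 3=\log(4/3),
\]
which does not depend on $\eta$. The same difference for $P$ is therefore at least $\log(4/3)-4\delta$. On the other hand, $\abs{\Delta_\eta^2 P(\eta)}\le \eta^2\sup_{[\eta,3\eta]}\abs{P''}$.

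\textbf{Bounding $P''$ near the endpoint.} In the variable $x=\eta+(1-\eta)(1+t)/2$, the point $x\in[\eta,3\eta]$ lies within $O(\eta)=O(1/K^2)$ of the endpoint $t=-1$. There the Markov--Bernstein bound for $P''$ is its worst case, of order $K^4\Abs{P-\ell}_{[\eta,1]}$ for any affine $\ell$, since subtracting an affine function does not change $P''$. This gives
\[
\eta^2\sup_{[\eta,3\eta]}\abs{P''}\lesssim c_0^2\,\Abs{P-\ell}_{[\eta,1]}.
\]

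\textbf{Main obstacle.} The crude choice $\ell\equiv\log(1/\eta)$ only gives $\Abs{P-\ell}\lesssim\log K$, so this bound loses a $\log K$ factor against the constant $\log(4/3)$. To remove the loss, I would replace the single global Markov inequality with the weighted Ditzian--Totik machinery for $\varphi(x)=\sqrt{(x-\eta)(1-x)}$, which measures smoothness on the natural local scale $\varphi(x)/K$. The steps would be:
\begin{enumerate}
  \item Apply the Bernstein-type inequality $\Abs{\varphi^2 Q''}\lesssim K^2\Abs{Q}$ to the error-free part of $P$ after subtracting a dyadic piecewise interpolant. Alternatively, apply it to $Q=P-P_m$, where $P_m$ is a near-best approximant of degree $m\approx K/2$.
  \item Verify that this difference has sup norm $O(\delta)+O(E_m)$.
  \item Iterate over dyadic degrees $m=K/2^j$, which gives a telescoping sum $\sum_j 4^{-j}E_{K/2^j}$ and reproduces the standard converse inequality $\omega_\varphi^2(f,1/K)\lesssim K^{-2}\sum_{j\le K}(j+1)E_j(f)$.
  \item Compute $\omega_\varphi^2(f,1/K)\ge\log(4/3)$, attained at $x\asymp\eta$ where the step $\varphi(x)/K\asymp x$.
  \item Bound $E_j(f)\le\log(1/\eta)$ for small $j$, and choose $c_0$ small enough that the contribution of degrees $j\le \sqrt{c_0}K$ is swamped.
  \item Conclude that $E_K\ge c_{\log}$ once $c_0$ is fixed small.
\end{enumerate}

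\textbf{Fallback.} If this bookkeeping gets messy, the alternative is to invoke the known sharp asymptotics: the best approximation of $\log$ on $[\eta,1]$ has order $\min\{1,(K\sqrt\eta)^{-2}\}$ up to constants. This is exactly the statement of \cite[Lemma 5]{WY16}.
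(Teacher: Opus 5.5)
Citing \cite{WY16} is exactly what the paper does; it gives no proof of its own, so that route is fine. Your self-contained sketch, however, has a genuine gap. Step~5 reintroduces precisely the $\log K$ loss you identified as the main obstacle.

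In the converse inequality $\omega_\varphi^2(f,1/K)\le MK^{-2}\sum_{j\le K}(j+1)E_j(f)$, using $E_j(f)\le\log(1/\eta)$ for $j\le\sqrt{c_0}K$ makes your estimate of the low-degree part about $\tfrac{Mc_0}{2}\log(K^2/c_0)$. This is unbounded in $K$ for every fixed $c_0$. The left-hand side, by contrast, is at most $\log(2+4/c_0)$, and you only use $\log(4/3)$. So once $\log K\gtrsim c_0^{-1}\log(1/c_0)$, the inequality says nothing about the high-degree terms. Making it work would force a $K$-dependent $c_0\sim1/\log K$, the same kind of loss as in your Markov-inequality attempt. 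The dyadic form $\sum_j4^{-j}E_{K/2^j}$ of step~3 fails the same way: with the crude bound, the tail past $j_0$ has size $4^{-j_0}\log(K^2/c_0)$.

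The missing ingredient is a sharper upper bound on $E_j(f)$ at low degrees.
\begin{itemize}
\item By the Ditzian--Totik direct theorem, $E_j(f)\le M\omega_\varphi^2(f,1/j)$ for $j\ge3$.
\item For $f(x)=\log(1/x)$ one has $\Delta_u^2f(x)=\log\frac{x^2}{x^2-u^2}\le\log\frac{x}{x-u}$, with $u\le t\sqrt{x-\eta}$ and $x-u\ge\eta$. If $x-\eta\ge4t^2$ then $x/(x-u)\le2$; if $x-\eta<4t^2$ then $x/(x-u)\le1+4t^2/\eta$. Hence $\omega_\varphi^2(f,t)\le\log(2+4t^2/\eta)$, and so $E_j(f)\le M\log\rbra{2+4K^2/(c_0j^2)}$.
\item With this bound, $K^{-2}\sum_{3\le j\le\sqrt{c_0}K}(j+1)E_j(f)=O\rbra{c_0\log(e/c_0)}$ uniformly in $K$.
\item The terms $j\le2$ contribute $O\rbra{K^{-2}\log(K^2/c_0)}$.
\item The degrees above $\sqrt{c_0}K$ contribute $O\rbra{E_{K'}(f)}$, where $K'=\floor{\sqrt{c_0}K}+1$.
\end{itemize}
Comparing against $\omega_\varphi^2(f,1/K)\ge\log(4/3)$ gives $E_{K'}\rbra{\log(1/x),[\eta,1]}\gtrsim1$ with $\eta\approx c_0^2/K'^2$. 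This is the lemma with $c_0$ replaced by roughly $c_0^2$; bounded $K$ is trivial since $\log$ is not a polynomial.

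Separately, the two-sided asymptotic in your fallback is wrong when $K^2\eta\ll1$. For fixed $K$, $E_K\rbra{\log(1/x),[\eta,1]}\to\infty$ as $\eta\to0$ (already $E_0=\tfrac12\log(1/\eta)$), and the correct order there is $1+\log\rbra{1/(K^2\eta)}$. Only the lower-bound direction is right, and that is what \cite{WY16} states and what you need.
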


\begin{lemma}[Moment matching for von Neumann entropy]
\label{lemma:distri-entropy}
Let $c_0, c_{\log} > 0$ be the constants in \cref{lemma:log-approximation}.
Let \(K\geq2\), \(\eta=c_0/K^2\), $1\leq M\leq 4K^2$, and $\alpha=\eta M$.
Then, there is a $\rbra{K+1}$-moment matching $\rbra{\nu_0, \nu_1}$ on $\sbra{0, M}$ with mean $\alpha$ satisfying
\begin{align*}
    \E_{Y\sim\nu_1}\sbra{-Y\log \rbra{Y}}
    -\E_{Y\sim\nu_0}\sbra{-Y\log \rbra{Y}}
    \geq 2c_{\log}c_0\frac{M}{K^2}.
\end{align*}
\end{lemma}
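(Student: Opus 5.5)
Apply \cref{fact:mm} to $f(x)=\log(1/x)$ on $[\eta,1]$, then push the resulting pair through the moment-lifting map $\mathsf T_{\eta,a}$ of \cref{lemma:moment-lifting} with $a=\alpha=\eta M$.

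\textbf{Step 1: the initial matching.} By \cref{fact:mm} and \cref{lemma:log-approximation}, there is a $K$-moment matching $(\mu_0,\mu_1)$ on $[\eta,1]$ with
\[
\int\log(1/x)\,\mathrm d\mu_1-\int\log(1/x)\,\mathrm d\mu_0=2E_K\rbra{\log(1/x),[\eta,1]}\ge 2c_{\log}.
\]
We label the two measures so that this difference has the sign shown.

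\textbf{Step 2: lifting.} Set $\nu_b=\mathsf T_{\eta,a}(\mu_b)$ with $a=\alpha=\eta M$. By \cref{lemma:moment-lifting}, $(\nu_0,\nu_1)$ is a $(K+1)$-moment matching with mean $a=\alpha$. Its support lies in $\{0\}\cup\{aX/\eta : X\in[\eta,1]\}=\{0\}\cup[a,M]\subseteq[0,M]$, because $aX/\eta=MX$.

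\textbf{Step 3: the functional gap.} Apply \cref{eq:change-measure-functional} with $h(y)=-y\log y$ and $h(0)=0$:
\[
\E[h(Y)]=\eta\,\E\!\left[\frac{-MX\log(MX)}{X}\right]=-\eta M\log M-\eta M\,\E[\log X].
\]
The term $-\eta M\log M$ is the same for $b=0,1$, so it cancels. The difference is therefore
\[
\E_{\nu_1}[h]-\E_{\nu_0}[h]=\eta M\rbra*{\E_{\mu_1}[\log(1/X)]-\E_{\mu_0}[\log(1/X)]}\ge 2c_{\log}\,\eta M=2c_{\log}c_0\frac{M}{K^2}.
\]

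\textbf{Where care is needed.} All three steps are short. The only points to check are the orientation of $\nu_0,\nu_1$, and whether $\log$ should be $\ln$, since a different base rescales $c_{\log}$ harmlessly. The hypothesis $1\le M\le 4K^2$ is not used in this argument beyond keeping the support in $[0,M]$ sensible; it presumably serves later applications.
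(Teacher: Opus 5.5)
Your proposal is correct and follows the paper's proof essentially step for step. Both arguments take the $K$-moment matching for $\log(1/x)$ on $[\eta,1]$ from \cref{fact:mm} and \cref{lemma:log-approximation}, lift it via $\mathsf T_{\eta,\alpha}$, and compute $\E[-Y\log Y]=\alpha\log(\eta/\alpha)+\alpha\,\E[\log(1/X)]$; the first term is $b$-independent, which yields the gap $2c_{\log}\alpha$.
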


\begin{proof}
By \cref{fact:mm,lemma:log-approximation}, there is a $K$-moment matching $\rbra{\mu_0, \mu_1}$ on \([\eta,1]\) such that
\begin{equation}
    \E_{X\sim\mu_1}\sbra{\log\rbra{1/X}}
    -\E_{X\sim\mu_0}\sbra{\log\rbra{1/X}}
    \geq 2c_{\log}.
    \label{eq:entropy-log-gap}
\end{equation}
Define $\nu_b=\mathsf T_{\eta,\alpha}\rbra{\mu_b}$ for $b\in\cbra{0,1}$.
Since \(\alpha/\eta=M\), $\nu_b$ is on \([0,M]\).  
By \cref{lemma:moment-lifting}, $\rbra{\nu_0, \nu_1}$ is a $\rbra{K+1}$-moment matching with mean $\alpha$.
Using the definition of $\mathsf T_{\eta,\alpha}$, 
\begin{align*}
    \E_{Y\sim\nu_b}\sbra{-Y\log \rbra{Y}}
    &=\E_{X\sim\mu_b}\sbra*{
      \frac{\eta}{X}
      \left[-\frac{\alpha X}{\eta}
      \log\rbra*{\frac{\alpha X}{\eta}}\right]}\\
    &=\alpha\log\rbra*{\frac{\eta}{\alpha}}
      +\alpha\E_{X\sim\mu_b}\sbra{\log\rbra{1/X}}.
\end{align*}
The first term is independent of \(b\).  Subtracting the two expectations and using \cref{eq:entropy-log-gap} gives
\[
    \E_{Y \sim \nu_1}\sbra{-Y\log \rbra{Y}}
    -\E_{Y \sim \nu_0}\sbra{-Y\log \rbra{Y}}
    \geq 2c_{\log}\alpha
    =2c_{\log}c_0\frac{M}{K^2}.
\]
\end{proof}

\begin{theorem}[Sample lower bound for von Neumann entropy estimation]
\label{cor:entropy-lower-bound}
There is a universal constant \(\varepsilon_0>0\) such that, for all sufficiently large \(d\) and every \(0<\varepsilon\leq\varepsilon_0\), estimating \(\mathrm S\rbra{\rho} = -\tr\rbra{\rho \log\rbra{\rho}}\) to additive error \(\varepsilon\), with success probability at least \(2/3\), requires
\begin{equation*}
    \Omega\rbra*{
    \frac{d^2}{\varepsilon\log^2\rbra d\,
    \max\cbra{1,\varepsilon\log^2\rbra d}}
    +\frac{\log^2\rbra{d}}{\varepsilon^2}}
\end{equation*}
samples of $\rho$.
\end{theorem}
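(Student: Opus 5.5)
The plan is to apply \cref{thm:master-functional} with $\phi(x)=-x\log(x)$, using the moment matching of \cref{lemma:distri-entropy}. The reduction is $\mathrm S(\rho)=\mathcal L_{-x\log x}(\rho)+\log d$, so an $\varepsilon$-estimator of $\mathrm S$ is an $\varepsilon$-estimator of $\mathcal L_\phi$. The $\log^2(d)/\varepsilon^2$ term is the known bound of \cite{JVHW15,WY16}, since classical distributions embed as diagonal states. It therefore suffices to prove the first term.

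Fix $K=\lceil C\log d\rceil$, so \cref{cond1} holds, and put $\eta=c_0/K^2$. Take $M'\in[1,4K^2]$ to be a free parameter and let $(\nu_0,\nu_1)$ be the $(K+1)$-moment matching on $[0,M']$ with mean $\alpha=\eta M'=c_0M'/K^2\le 4c_0$; shrinking $c_0$ if needed gives mean $\le1$. Then:
\begin{itemize}
\item $g_\phi\ge 2c_{\log}c_0M'/K^2$.
\item On $I=[0,M']$ we have $A_\phi=O(M'\log M')$.
\item $L_\phi=O(1)$ on $[1/2,4/3]$.
\item $b_0-a_0=M'$ and the master-theorem parameter $M=\sup_{x\in I}|x-1|\le M'$.
\end{itemize}
Condition \cref{cond3} becomes $(M'\log M'+M')K^2/M'=O(K^2\log K)=O(\log^2 d\,\log\log d)\le c\sqrt d$, which holds for large $d$. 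Condition \cref{cond2} needs $(1+M')^2\le cd/\log d$, and this is automatic because $M'\le 4K^2=O(\log^2 d)$.

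To hit error $\varepsilon$, solve $g_\phi/24\ge\varepsilon$, i.e.\ $M'=\Theta(\varepsilon K^2)=\Theta(\varepsilon\log^2 d)$. The constraint $M'\ge1$ is where the two regimes split.
\begin{itemize}
\item If $\varepsilon\log^2 d\ge1$, take $M'=\Theta(\varepsilon\log^2d)$. This needs $M'\le4K^2$, which is fine for $\varepsilon\le\varepsilon_0$. \cref{main-item1} then gives $\Omega(d^2/M'^2)=\Omega(d^2/(\varepsilon^2\log^4 d))$, which matches $d^2/(\varepsilon\log^2 d\cdot\varepsilon\log^2 d)$.
\item If $\varepsilon\log^2 d<1$, the natural choice $M'<1$ is not allowed, so instead I would use \cref{main-item2} with mixing. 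Take $M'=1$, which gives gap $g_\phi=\Theta(1/\log^2 d)$ and hardness $d^2$. Embed as $\widetilde\rho=p\rho\oplus(1-p)\sigma$ with $\sigma$ a fixed pure state (say $d'=1$) and $p=\Theta(\varepsilon\log^2 d)$. Then $\mathrm S(\widetilde\rho)=p\,\mathrm S(\rho)+h(p)$, so entropy gaps scale by $p$. The intervals $J_b=pI_b+h(p)+p\log d$ have separation $p\,g_\phi/8\ge 2\varepsilon\cdot\Theta(1)$ once constants are tuned. \cref{main-item2} yields $\Omega(d^2/p)=\Omega(d^2/(\varepsilon\log^2 d))$.
\end{itemize}
The dimension shift from $d$ to $d+1$ is harmless.

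The main obstacle is bookkeeping at the boundary. I need to check that the constants line up so that $\operatorname{dist}(J_0,J_1)>2\varepsilon$ with $p\le1$, which is why $\varepsilon\le\varepsilon_0$ is required. I also need that $A_\phi$ stays bounded near $0$, which holds because $-x\log x$ is bounded on $[0,M']$. The remaining conditions are routine.
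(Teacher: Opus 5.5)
Your proposal is correct and takes essentially the same route as the paper. The paper applies the mixing item of the master theorem with $M=\max\{1,\Theta(\varepsilon K^2)\}$ and $p=\Theta(\varepsilon K^2)/M$, which is your two-regime split written as one formula ($p=\Theta(1)$ where you invoke item (1) directly, and $M=1$ with $p=\Theta(\varepsilon\log^2 d)$ otherwise); the paper pads with $\mathbb{I}_{d-d'}/(d-d')$ for $d'=\lfloor d/2\rfloor$ instead of a single pure state, and that difference is cosmetic.
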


\begin{proof}
    Let $C,c>0$ be the universal constants in \cref{thm:master-functional}.  The second term follows from the classical Shannon-entropy lower bound in \cite{JVHW15,WY16}.  We prove the first term by applying \cref{thm:master-functional}\eqref{main-item2}.
    Set
    \[
        d'=\floor*{\frac d2},
        \qquad
        K=\ceil*{C\log\rbra{d+2}},
        \qquad
        L=K+1,
        \qquad
        \eta=\frac{c_0}{K^2},
    \]
    \[
        M=\max\cbra*{1,\frac{32\varepsilon K^2}{c_{\log}c_0}},
        \qquad
        \alpha=\eta M,
        \qquad
        p=\frac{16\varepsilon K^2}{c_{\log}c_0M}, \qquad \varepsilon_0\leq
        \min\cbra*{\frac{c_{\log}c_0}{8},\frac{c_{\log}}{32}}.
    \]
    Then $1\leq M\leq4K^2$, $0<p\leq1/2$, and $\alpha\leq1$.
    Let $(\nu_0,\nu_1)$ be the $L$-moment matching from \cref{lemma:distri-entropy}, and apply \cref{thm:master-functional} in dimension $d'$ with $I=[0,M]$ and $\phi\rbra{x}=\phi_{\mathrm{S}}\rbra{x}=-x\log\rbra{x}$.
    It can be verified that the conditions in \cref{thm:master-functional}, \cref{cond1,cond2,cond3}, hold. 

    Let $\Pi_0, \Pi_1$ and $I_0, I_1$ be the probability distributions and intervals, respectively, specified in \cref{thm:master-functional}. 
    For $\rho\sim\Pi_b$, define the $d$-dimensional state
    \[
        \widetilde\rho
        =p\rho\oplus
        \rbra{1-p}\frac{\mathbb{I}_{d-d'}}{d-d'}.
    \]
    Since $\mathcal L_{\phi_{\mathrm{S}}}\rbra\rho
        =\mathrm S\rbra\rho-\log d'$,
    we have
    \[
        \mathrm S\rbra{\widetilde\rho}
        =p\rbra*{\mathcal L_{\phi_{\mathrm{S}}}\rbra\rho + \log \rbra{d'}} + h_2\rbra p
        +\rbra{1-p}\log\rbra{d-d'},
    \]
    where $h_2(p)=-(1-p)\log(1-p)-p\log p$.  Hence, with
    \[
        J_b=h_2\rbra p
        +p\rbra{\log d'+I_b}
        +\rbra{1-p}\log\rbra{d-d'},
    \]
    it can be seen that $\Pr_{\rho\sim\Pi_b}\sbra{\mathrm S\rbra{\widetilde\rho}\in J_b}
    \geq0.99$ and $\operatorname{dist}\rbra{J_0,J_1}=p\operatorname{dist}\rbra{I_0,I_1} \geq {pg_\phi}/{8}>2\varepsilon$.
    By \cref{thm:master-functional}\eqref{main-item2}, estimating $\mathrm{S}\rbra{\widetilde{\rho}}$ to within additive error $\varepsilon$ requires
    \begin{align*}
        \Omega\rbra*{\frac{d'^2}{pM^2}}
        =\Omega\rbra*{
        \frac{d^2}{\varepsilon\log^2d\,
        \max\cbra{1,\varepsilon\log^2d}}}
    \end{align*}
    samples of $\widetilde\rho$. 
\end{proof}

\subsection{R\'enyi and Tsallis Entropy Estimation}

\subsubsection{Moment matching for R\'enyi and Tsallis entropies}
\label{sec:power-certificates}

Let $\Delta_{h}\sbra{f}$ be the forward difference of a function $f$ with spacing $h$, defined by 
\[
\Delta_{h}\sbra{f}\rbra{x} = f\rbra{x+h} - f\rbra{x}.
\]
Denote $\Delta_h^{n}\sbra{f}\rbra{x} = \Delta_h \sbra{\Delta_h^{n-1}\sbra{f}}\rbra{x}$.
Note that
\begin{align*}
    \Delta_h^{n}\sbra{f}\rbra{x} = \sum_{k=0}^n \rbra{-1}^{n-k} \binom{n}{k} f\rbra{x + kh}.
\end{align*}

\begin{lemma}
\label{lemma:power-approximation-input}
For every non-integer $\alpha > 0$, there exist constants \(c_\alpha>0\) and
\(\gamma_\alpha\in(0,1/8]\) such that, for every integer \(K\ge2\) with $\eta={\gamma_\alpha}/{K^2}$, $E_K\rbra{x^{\alpha-1},\sbra{\eta,1}}
    \ge c_\alpha\eta^{\alpha-1}$.
\end{lemma}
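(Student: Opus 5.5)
The plan is to first rescale, then argue separately for $0<\alpha<1$ and for non-integer $\alpha>1$. Since $x\mapsto x^{\alpha-1}$ is homogeneous, the substitution $x=\eta t$ maps $\sbra{\eta,1}$ onto $\sbra{1,1/\eta}=\sbra{1,K^2/\gamma_\alpha}$. It sends polynomials of degree $\le K$ in $x$ bijectively to polynomials of degree $\le K$ in $t$. Hence
\[
E_K\rbra{x^{\alpha-1},\sbra{\eta,1}}=\eta^{\alpha-1}E_K\rbra{t^{\alpha-1},\sbra{1,K^2/\gamma_\alpha}},
\]
and it suffices to show that the right-hand approximation error is bounded below by a constant $c_\alpha$, uniformly in $K\ge2$, once $\gamma_\alpha$ is a small enough constant. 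Equivalently, I will work directly on $\sbra{\eta,1}$: fix a polynomial $P$ of degree $\le K$ with $\sup_{[\eta,1]}\abs{P-x^{\alpha-1}}\le\varepsilon$, and show $\varepsilon\ge c_\alpha\eta^{\alpha-1}$.

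\emph{Case $0<\alpha<1$.} Here $f(x)=x^{\alpha-1}$ is decreasing with $\sup_{[\eta,1]}f=\eta^{\alpha-1}$, so $\Abs{P}_{[\eta,1]}\le\eta^{\alpha-1}+\varepsilon$. Markov's inequality on $\sbra{\eta,1}$ gives $\Abs{P'}_{[\eta,1]}\le\frac{2K^2}{1-\eta}\rbra{\eta^{\alpha-1}+\varepsilon}$. Therefore
\[
\abs{P(2\eta)-P(\eta)}\le\eta\cdot\frac{2K^2}{1-\eta}\rbra{\eta^{\alpha-1}+\varepsilon}\le 4\gamma_\alpha\rbra{\eta^{\alpha-1}+\varepsilon}.
\]
On the other hand, $f(\eta)-f(2\eta)=\rbra{1-2^{\alpha-1}}\eta^{\alpha-1}$. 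The triangle inequality then yields
\[
2\varepsilon\ge\rbra{1-2^{\alpha-1}-4\gamma_\alpha}\eta^{\alpha-1}-4\gamma_\alpha\varepsilon .
\]
Choosing $\gamma_\alpha\le\min\cbra{1/8,(1-2^{\alpha-1})/8}$ gives $\varepsilon\ge c_\alpha\eta^{\alpha-1}$ with $c_\alpha=(1-2^{\alpha-1})/5$. The constraint $2\eta\le1$ holds trivially.

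\emph{Case non-integer $\alpha>1$.} Write $\beta=\alpha-1>0$, non-integer; the target is $\varepsilon\gtrsim\eta^{\beta}\asymp K^{-2\beta}$. The global Markov trick above fails: $\Abs{P}\approx1$ while the target scale $\eta^\beta$ is tiny, so higher differences controlled by $\rbra{2K^2}^m\Abs{P}$ give only $O(\gamma^m)$, which is far larger than $\eta^\beta$. Instead I would prove the bound by duality. I will exhibit a signed measure $\mu$ on $\sbra{\eta,1}$ that annihilates all polynomials of degree $\le K$, has total variation $\Abs{\mu}\le C_\alpha$, and satisfies $\abs{\int x^\beta\,\mathrm d\mu}\ge c\,\eta^\beta$. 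Then $\varepsilon\ge\abs{\int(x^\beta-P)\,\mathrm d\mu}/\Abs{\mu}\ge c\eta^\beta/C_\alpha$.

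The natural candidate comes from Bernstein's classical theorem that $E_K\rbra{x^\beta,\sbra{0,1}}\asymp K^{-2\beta}$ for non-integer $\beta$. Its extremal (alternation) functional is supported on the Chebyshev-type points $x_j\approx\sin^2\rbra{\pi j/(2K)}$, whose spacing near $0$ is $\asymp K^{-2}$. Because $\gamma_\alpha$ is taken small, all but $O(1)$ of these points already lie in $\sbra{\eta,1}$. I would shift and slightly rescale the nodes into $\sbra{\eta,1}$ via $x\mapsto\eta+(1-\eta)x$. The resulting quadrature-difference functional $\sum_j w_j\delta_{x_j}$ kills degree-$\le K$ polynomials, and its weights are bounded. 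Its value on $x^\beta$ is computed through the substitution $x=\eta+(1-\eta)\sin^2(\theta/2)$. In this $\theta$-variable, $x^\beta$ has a non-smooth feature of angular width $\asymp1/K$ at the left end, and the functional resolves exactly this feature, contributing $\asymp\eta^\beta$.

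The main obstacle is this last step. One must verify that the functional's pairing with $x^\beta$ is genuinely of order $\eta^\beta$ rather than cancelling. This requires a careful asymptotic (Euler--Maclaurin or Mellin-type) evaluation near the left endpoint, using that $\beta\notin\mathbb Z$ so that the coefficient $\Gamma(\beta+1)\sin(\pi\beta)/\pi$ is nonzero. If the direct construction proves too delicate, the fallback is to invoke Bernstein's/Timan's lower bound $E_K\rbra{x^\beta,\sbra{0,1}}\ge c'_\beta K^{-2\beta}$ as a black box. One then transfers it to $\sbra{\eta,1}$ by a Remez-type comparison, which is available because $\eta=\gamma_\alpha/K^2$ with $\gamma_\alpha$ small.
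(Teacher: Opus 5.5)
Your argument for $0<\alpha<1$ is correct and coincides with the paper's proof in the case $s=\floor{\alpha}=0$. The gap is the non-integer $\alpha>1$ case, which your proposal does not establish.

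The duality route stops exactly at the step you flag. Showing that the alternating Chebyshev-node functional pairs with $x^\beta$ at size $\asymp\eta^\beta$, rather than cancelling, is the whole content of that approach, and it is not carried out.

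The fallback does not close the gap as described. A Remez inequality for $\sbra{\eta,1}\subset\sbra{0,1}$ controls $P$ on $\sbra{0,\eta}$ only at the scale $\Abs{P}_{\infty,\sbra{\eta,1}}\approx1$. To contradict $E_K\rbra{x^\beta,\sbra{0,1}}\ge c'_\beta K^{-2\beta}$ you would need $\abs{P-x^\beta}\lesssim\varepsilon+\eta^\beta$ on $\sbra{0,\eta}$. Applying Remez to $P-P^*$, with $P^*$ near-best on $\sbra{0,1}$, only bounds the error of $P$ by $\rbra{1+O(\gamma_\alpha)}\varepsilon+\rbra{2+O(\gamma_\alpha)}E_K\rbra{x^\beta,\sbra{0,1}}$, which is circular.

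A black-box transfer can be made to work, but through a mechanism you do not state. Substitute $x=y+\eta$ and write $\rbra{y+\eta}^\beta=\sum_{j<s}\binom{\beta}{j}\eta^jy^{\beta-j}+R(y)$ with $\Abs{R}_\infty=O_\beta(\eta^\beta)$. Then use subadditivity of $E_K$ together with the Jackson bounds $E_K\rbra{y^{\beta-j},\sbra{0,1}}=O(K^{-2(\beta-j)})$. The correction terms then cost only $O\rbra{(\gamma_\alpha^{j}+\gamma_\alpha^{\beta})K^{-2\beta}}$.

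The idea you are missing is that your own $0<\alpha<1$ argument works verbatim after taking $s=\floor{\alpha}$ forward differences with step $\eta$. This is the paper's proof, and it handles all non-integer $\alpha$ at once.

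Your remark that the Markov trick fails is true for $P$, but not for $Q=\Delta_\eta^s\sbra{P}$. Since $s>\beta=\alpha-1$, the derivative $\rbra{x^\beta}^{(s)}=\beta(\beta-1)\cdots(\beta-s+1)x^{\beta-s}$ is positive and decreasing. Hence $G=\Delta_\eta^s\sbra{x^\beta}$ satisfies $G(x)=\eta^\beta G_1(x/\eta)$, where $G_1=\Delta_1^s\sbra{x^\beta}$ is positive and strictly decreasing.

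On $\sbra{\eta,1-s\eta}$ this gives $0<G\le G_1(1)\eta^\beta$ and $\abs{Q-G}\le2^s\varepsilon$. So $\Abs{Q}_\infty\le G_1(1)\eta^\beta+2^s\varepsilon$ is already at the target scale. Markov on this interval (of length $\ge1/2$) gives $\abs{Q(\eta)-Q(2\eta)}\le4\gamma_\alpha\rbra{G_1(1)\eta^\beta+2^s\varepsilon}$. Meanwhile $G(\eta)-G(2\eta)=\rbra{G_1(1)-G_1(2)}\eta^\beta$.

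Taking $\gamma_\alpha\le\rbra{G_1(1)-G_1(2)}/\rbra{16G_1(1)}$ forces $\varepsilon\ge\rbra{G_1(1)-G_1(2)}\eta^\beta/2^{s+2}$. This needs neither Bernstein's theorem nor any duality computation.
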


It is noted that the special case of \cref{lemma:power-approximation-input} where $1 < \alpha < 3/2$ was given in {\cite[Lemma 13]{JVHW15}}.
Here, we generalize the result in \cite{JVHW15} to all non-integers $\alpha > 0$.

\begin{proof}[Proof of \cref{lemma:power-approximation-input}]
Let $\beta = \alpha - 1$, $s = \floor{\alpha}$, and $f\rbra{x} = x^\beta$. 
Then, $-1 < \beta - s < 0$. 
Let $G\rbra{x} = \Delta_1^{s}\sbra{f}\rbra{x}$, which is positive and strictly decreasing on $\rbra{0, +\infty}$ and thus $G\rbra{1} > 0$ and $G\rbra{1} - G\rbra{2} > 0$. 
Take
\[
\gamma_\alpha = \min\cbra*{\frac{1}{8}, \frac{1}{2\rbra{s+2}}, \frac{G\rbra{1} - G\rbra{2}}{16G\rbra{1}}}, \qquad c_\alpha = \frac{G\rbra{1} - G\rbra{2}}{2^{s+2}}.
\]
For $K \geq 2$ and $\eta = \gamma_\alpha/K^2$, we have $E_K\rbra{x^{\alpha-1},\sbra{\eta,1}} = \eta^\beta E_K\rbra{f, \sbra{1, \eta^{-1}}}$.
Therefore, it suffices to show that $E_K\rbra{f, \sbra{1, \eta^{-1}}} \geq c_\alpha$. 
For any polynomial $P$ of degree at most $K$, let $e = \Abs{f - P}_{\infty, \sbra{1, \eta^{-1}}}$, $Q\rbra{x} = \Delta_1^{s}\sbra{P}\rbra{x}$, and $I = \sbra{1, \eta^{-1}-s}$. 
The choice of $\gamma_\alpha$ ensures that $\abs{I} \geq \eta^{-1}/2$ and $\sbra{1, 2} \subseteq I$. 
Since $\Delta_1^{s}$ gives binomial coefficients whose absolute values sum to $2^s$, $\Abs{G - Q}_{\infty, I} \leq 2^s e$, which gives $\Abs{Q}_{\infty, I} \leq 2^s e + \Abs{G}_{\infty, I} = 2^s e + G\rbra{1}$.
Note that $Q$ is of degree at most $K$.
Then, Markov brothers' inequality \cite{Mar90,Mar92} gives 
\[
\Abs{Q'}_{\infty, I} \leq \frac{2K^2}{\abs{I}} \Abs{Q}_{\infty, I} \leq 4\gamma_\alpha \rbra*{G\rbra{1} + 2^s e}.
\]
On the other hand,
\begin{align*}
    G\rbra{1} - G\rbra{2}
    & \leq 2^{s+1} e + \abs{Q\rbra{1} - Q\rbra{2}} \\
    & \leq 2^{s+1} e + \Abs{Q'}_{\infty, I} \\
    & \leq \rbra{2 + 4\gamma_\alpha} 2^s e + 4 \gamma_\alpha G\rbra{1} \\
    & \leq \frac{G\rbra{1} - G\rbra{2}}{4} + 3 \cdot 2^s e,
\end{align*}
which gives $e \geq \rbra{G\rbra{1} - G\rbra{2}}/2^{s+2} = c_\alpha$ and thus yields the proof. 
\end{proof}

\begin{lemma}[Moment matching for power traces]
\label{lemma:pow-tr-moment-matching}
Fix a non-integer $\alpha > 0$, and let
\(c_\alpha,\gamma_\alpha\) be the constants in
\cref{lemma:power-approximation-input}.  
For every integer \(K\ge2\) and $1 \leq M \leq K^2$,
there is a $\rbra{K+1}$-moment matching $\rbra{\nu_0, \nu_1}$ on \([0,M]\) with mean
\(\gamma_\alpha M/K^2\), satisfying
\begin{equation*}
    \E_{Y \sim \nu_1}\sbra*{Y^\alpha}-\E_{Y \sim \nu_0}\sbra*{Y^\alpha}
    \ge\Delta_\alpha \rbra*{\frac{M}{K^2}}^\alpha, \qquad \Delta_\alpha = 2c_\alpha\gamma_\alpha^\alpha>0.
\end{equation*}
\end{lemma}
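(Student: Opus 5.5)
This mirrors the proof of \cref{lemma:distri-entropy}, with \cref{lemma:power-approximation-input} replacing \cref{lemma:log-approximation}.

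Set $\eta=\gamma_\alpha/K^2$ and $a=\eta M=\gamma_\alpha M/K^2$. The constraint $M\le K^2$ together with $\gamma_\alpha\le 1/8$ gives $a\le 1/8$. The lifted variable lives on $\{0\}\cup[a,\,a/\eta]=\{0\}\cup[a,M]\subseteq[0,M]$.

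First, combine \cref{fact:mm} with $f(x)=x^{\alpha-1}$ on $[\eta,1]$ and \cref{lemma:power-approximation-input}. This yields a $K$-moment matching $(\mu_0,\mu_1)$ on $[\eta,1]$ with
\[
\E_{X\sim\mu_1}\sbra{X^{\alpha-1}}-\E_{X\sim\mu_0}\sbra{X^{\alpha-1}}=2E_K\rbra{x^{\alpha-1},[\eta,1]}\ge 2c_\alpha\eta^{\alpha-1}.
\]
The sign can be arranged by swapping the labels of $\mu_0$ and $\mu_1$ if necessary.

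Next, define $\nu_b=\mathsf T_{\eta,a}(\mu_b)$. By \cref{lemma:moment-lifting}, $(\nu_0,\nu_1)$ is a $(K+1)$-moment matching with mean $a=\gamma_\alpha M/K^2$, supported in $[0,M]$.

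Then apply \cref{eq:change-measure-functional} with $h(y)=y^\alpha$, using $h(0)=0$ since $\alpha>0$:
\[
\E_{Y\sim\nu_b}\sbra{Y^\alpha}=\eta\,\E_{X\sim\mu_b}\sbra*{\frac{(aX/\eta)^\alpha}{X}}=a^\alpha\eta^{1-\alpha}\,\E_{X\sim\mu_b}\sbra{X^{\alpha-1}}.
\]
Subtracting the $b=0$ expression from the $b=1$ expression gives a gap of at least
\[
a^\alpha\eta^{1-\alpha}\cdot 2c_\alpha\eta^{\alpha-1}=2c_\alpha a^\alpha=2c_\alpha\gamma_\alpha^\alpha\rbra*{M/K^2}^\alpha=\Delta_\alpha\rbra*{M/K^2}^\alpha,
\]
as required.

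No step here is a real obstacle. The only points to check are bookkeeping: the sign convention in \cref{fact:mm} (handled by relabeling), the support containment (which uses $M\ge1$ only to keep $a\le M$, and is automatic since $\eta\le1$), and the fact that $h(0)=0$ for $\alpha>0$.
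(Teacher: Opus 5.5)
Your proposal is correct and matches the paper's proof step for step: the same choice $\eta=\gamma_\alpha/K^2$ and $a=\eta M$, the same $K$-moment matching for $x^{\alpha-1}$ on $[\eta,1]$ lifted via $\mathsf T_{\eta,a}$, and the same identity $\E[Y^\alpha]=a^\alpha\eta^{1-\alpha}\E[X^{\alpha-1}]$, which gives the gap $2c_\alpha a^\alpha$. Your relabeling remark is harmless but unnecessary, since \cref{fact:mm} already orients the difference as $\nu_1$ minus $\nu_0$ with value $2E_K\ge 0$.
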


\begin{proof}
    Set $\eta=\gamma_\alpha/K^2$ and $a = \eta M$.  
    By \cref{fact:mm,lemma:power-approximation-input}, there is a $K$-moment matching $(\mu_0,\mu_1)$ on $[\eta,1]$ such that
    \[
        \E_{X\sim\mu_1}\sbra{X^{\alpha-1}}
        -\E_{X\sim\mu_0}\sbra{X^{\alpha-1}}
        \geq2c_\alpha\eta^{\alpha-1}.
    \]
    For $b\in\{0,1\}$, define $\nu_b=\mathsf T_{\eta,a}\rbra{\mu_b}$ on $\sbra{0, a/\eta} = \sbra{0, M}$.
    By \cref{lemma:T-eta-alpha}, $(\nu_0,\nu_1)$ is a $(K+1)$-moment matching with mean $a = \gamma_\alpha M/K^2$.  
    Moreover,
    \[
        \E_{Y\sim\nu_b}\sbra{Y^\alpha}
        =a^\alpha\eta^{1-\alpha}
        \E_{X\sim\mu_b}\sbra{X^{\alpha-1}}.
    \]
    Taking the difference yields the proof.
\end{proof}

\begin{corollary}[Moment matching for Tsallis entropy]
\label{lemma:power-moment-matching}
Fix a non-integer $\alpha > 0$.  
For every integer \(K\ge2\),
there is a $\rbra{K+1}$-moment matching $\rbra{\nu_0, \nu_1}$ on \([0,K^2]\) with mean
\(\gamma_\alpha\), satisfying
\begin{equation*}
    \E_{Y \sim \nu_1}\sbra*{Y^\alpha}-\E_{Y \sim \nu_0}\sbra*{Y^\alpha}
    \ge\Delta_\alpha.
\end{equation*}
\end{corollary}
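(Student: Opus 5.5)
This is the special case $M = K^2$ of \cref{lemma:pow-tr-moment-matching}, and the plan is simply to apply that lemma with this choice. Fix the non-integer $\alpha>0$ and an integer $K\geq 2$, and let $c_\alpha,\gamma_\alpha$ be the constants from \cref{lemma:power-approximation-input}.

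First, the admissibility condition $1\le M\le K^2$ of \cref{lemma:pow-tr-moment-matching} holds for $M=K^2$, since $K\geq 2$ gives $1\leq K^2$. The lemma then supplies a $\rbra{K+1}$-moment matching $\rbra{\nu_0,\nu_1}$ on $\sbra{0,M}=\sbra{0,K^2}$. Its mean is $\gamma_\alpha M/K^2=\gamma_\alpha$.

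Second, the moment-gap guarantee of that lemma reads
\[
\E_{Y\sim\nu_1}\sbra*{Y^\alpha}-\E_{Y\sim\nu_0}\sbra*{Y^\alpha}\geq \Delta_\alpha\rbra*{\frac{M}{K^2}}^\alpha .
\]
With $M=K^2$ the factor $\rbra{M/K^2}^\alpha$ equals $1$, so the right-hand side is exactly $\Delta_\alpha$.

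There is no genuine obstacle, since all the analytic content lives in \cref{lemma:power-approximation-input} and \cref{lemma:moment-lifting}. The only point to check is that $M=K^2$ sits at the upper endpoint of the allowed range and is still admissible. It is: the construction in the proof of \cref{lemma:pow-tr-moment-matching} sets $a=\eta M=\gamma_\alpha\leq 1/8$, and the lifted support $\sbra{0,a/\eta}=\sbra{0,K^2}$ is well defined.
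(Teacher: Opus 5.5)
Your proposal is correct and matches the paper's proof exactly: the paper obtains this corollary as the special case $M = K^2$ of \cref{lemma:pow-tr-moment-matching}, where the mean $\gamma_\alpha M/K^2$ becomes $\gamma_\alpha$ and the gap $\Delta_\alpha (M/K^2)^\alpha$ becomes $\Delta_\alpha$. Your extra admissibility check at the endpoint $M=K^2$ is harmless and consistent with the lemma's hypotheses.
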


\begin{proof}
    This is the special case of \cref{lemma:pow-tr-moment-matching} with $M = K^2$.
\end{proof}

\begin{lemma}[Relative polynomial approximation for power functions] \label{lemma:rel-EK-power}
    Fix a non-integer $\alpha > 1$. 
    Let $s = \ceil{\alpha}$ and $\beta = s - \alpha$. 
    There are constants $c_{\alpha, 0}, \delta_{\alpha, 0} > 0$ such that for every integer $K \geq \alpha$, setting $\eta = c_{\alpha, 0} K^{-2s/\beta} \leq 1/\rbra{s+1}$ gives
    \[
    E_{K}^{\mathrm{rel}}\rbra{x^\alpha, \sbra{\eta, 1}} \coloneqq \inf_{\deg \rbra{P}\leq K}
    \sup_{x\in[\eta,1]}
    \frac{\abs{P\rbra{x}-x^\alpha}}{x^\alpha} \in \sbra{\delta_{\alpha, 0}, 1}.
    \]
\end{lemma}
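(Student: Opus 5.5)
The upper bound $E_K^{\mathrm{rel}}\le 1$ is immediate by taking $P\equiv 0$. For the lower bound $\delta_{\alpha,0}$, I plan to mimic the proof of \cref{lemma:power-approximation-input}: an $s$-th order finite difference removes the polynomial structure in a controlled way, and the Markov brothers' inequality then finishes. The idea is that relative error $\delta$ on $[\eta,1]$ gives absolute error $\delta x^\alpha$, and we localize near the left endpoint, where $x^\alpha$ is tiny and the singular $s$-th difference is largest relative to the function.

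Suppose $P$ has degree at most $K$ and $\abs{P(x)-x^\alpha}\le \delta x^\alpha$ on $[\eta,1]$. I will work at scale $h=\eta$, setting $f(x)=x^\alpha$ and $G(x)=\Delta_h^{s}[f](x)$. Rescaling $x=\eta y$ gives $G(\eta y)=\eta^\alpha \Delta_1^s[y^\alpha](y)$. Since $\alpha-s=-\beta\in(-1,0)$, the function $g(y)=\Delta_1^s[y^\alpha](y)$ is positive with $g(1)-g(2)>0$ and decays like $y^{-\beta}$. Let $Q=\Delta_h^s[P]$. On $[\eta,1-s\eta]$ we have $\abs{G-Q}(x)\le 2^s\delta (x+s\eta)^\alpha$, so $\abs{G(\eta)-Q(\eta)}$ and $\abs{G(2\eta)-Q(2\eta)}$ are at most $2^s\delta(s+2)^\alpha\eta^\alpha$. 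Therefore
\[
\eta^\alpha\bigl(g(1)-g(2)\bigr)\le 2^{s+1}(s+2)^\alpha\delta\eta^\alpha+\eta\,\Abs{Q'}_{\infty,[\eta,2\eta]}.
\]

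The main obstacle is bounding $Q'$. The bound $\abs{Q}\le \abs{G}+2^s\delta (x+s\eta)^\alpha$ is not uniform: it is of order $1$ near $x=1$, and the global Markov inequality would only give $\eta\Abs{Q'}\lesssim K^2\eta$, which is far bigger than $\eta^\alpha$. Instead I will use Bernstein's inequality on $[0,1]$: $\abs{Q'(x)}\le K\Abs{Q}/\sqrt{x(1-x)}$ near the endpoint. A better route, and the reason the exponent $2s/\beta$ appears, is to use the sharper information that $\abs{Q(x)}$ is small in absolute terms. Away from $0$, $G(x)\approx \alpha^{\underline{s}} h^s x^{-\beta}$, so $\abs{Q(x)}\lesssim \eta^s x^{-\beta}+\delta x^\alpha$.

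I would then apply the Markov inequality on a short interval $[\eta,\tau]$, with $\tau$ chosen so that $\Abs{Q}_{[\eta,\tau]}\lesssim \eta^{s-\beta}+\delta\tau^\alpha$. This gives
\[
\eta\Abs{Q'}_{[\eta,2\eta]}\lesssim \frac{K^2\eta}{\tau}\bigl(\eta^{\alpha}+\delta\tau^\alpha\bigr).
\]
Taking $\tau$ a constant multiple of $K^2\eta$ makes the first part a small constant times $\eta^\alpha$. The second part is $\delta K^{2\alpha}\eta^{\alpha}$ up to constants, which is not small relative to $\eta^\alpha$ unless $\delta\lesssim K^{-2\alpha}$.

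This is exactly where I expect trouble. The fix I would try first is to not take $h=\eta$ but to choose a larger $h$ together with the tiny $\eta=c_{\alpha,0}K^{-2s/\beta}$. This makes $G$, which is of order $h^s x^{-\beta}$ near $x\sim h$, dominate $\delta x^\alpha$ over the Markov window of width about $K^2 h$. The requirement is $h^{s-\beta}\gg (K^2h)^\alpha\delta$, equivalently $h^{s}\gtrsim K^{2\alpha}h^{\alpha}\cdot h^{\beta}$, which pins the scales against powers $K^{-2s/\beta}$. I will tune $h$ and the window width so that the terms balance, and conclude that $\delta\ge\delta_{\alpha,0}$ for a constant $\delta_{\alpha,0}$ depending only on $\alpha$. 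The condition $\eta\le 1/(s+1)$ only ensures that the difference stencil fits inside $[\eta,1]$.
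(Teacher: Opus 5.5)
Your proposal has a genuine gap. It never reaches a contradiction, and the proposed fix of changing the scale $h$ cannot work.

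The localized comparison you set up is scale-invariant. At scale $h$, the $s$-th difference $G$ near $x\sim h$ has size $\asymp h^{s}h^{-\beta}=h^{\alpha}$. The relative-error term, pushed through Markov's inequality on a window $[h,\tau]$, contributes $\frac{K^2h}{\tau}\cdot\delta\tau^{\alpha}$. Both sides depend on $h$ only through the ratio $\tau/h$. Making the $G$-part of the Markov bound small forces $\tau/h\gtrsim K^2$, and the requirement then reduces to $\delta\ll K^{-2}(\tau/h)^{1-\alpha}\lesssim K^{-2\alpha}$, whatever $h$ is. Your displayed ``equivalent'' condition $h^{s}\gtrsim K^{2\alpha}h^{\alpha}h^{\beta}$ is just $h^s\gtrsim K^{2\alpha}h^s$, which is false. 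Nothing in this route uses the specific size $\eta=c_{\alpha,0}K^{-2s/\beta}$, so the exponent $2s/\beta$ cannot come out of it. At best you get $E_K^{\mathrm{rel}}\gtrsim K^{-2\alpha}$, not a constant.

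The missing idea is to use the absolute, non-scale-invariant bound $\Abs{P}_{\infty,[\eta,1]}\le 1+\delta$ together with the $s$-th order Markov inequality, and to drop the first-derivative step entirely. The argument goes as follows.

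\begin{enumerate}
\item With $x_j=(1+j)\eta$ you already have $\abs{\Delta_\eta^s[P](\eta)}\ge (D_{\alpha,s}-\delta A_{\alpha,s})\eta^\alpha\ge \frac{D_{\alpha,s}}{2}\eta^\alpha$ for a small constant $\delta$. Here $D_{\alpha,s}=\Delta_1^s[x^\alpha](1)>0$ and $A_{\alpha,s}=\sum_{j=0}^s\binom{s}{j}(1+j)^\alpha$.
\item By the mean-value theorem for divided differences, $\Delta_\eta^s[P](\eta)=\eta^sP^{(s)}(\xi)$ for some $\xi\in[\eta,(s+1)\eta]$. Hence $\Abs{P^{(s)}}_{\infty,[\eta,1]}\ge \frac{D_{\alpha,s}}{2}\eta^{-\beta}$.
\item Markov's inequality for the $s$-th derivative on $[\eta,1]$ gives $\Abs{P^{(s)}}_{\infty,[\eta,1]}\le C_\alpha K^{2s}\Abs{P}_{\infty,[\eta,1]}\le 2C_\alpha K^{2s}$.
\item Combining, $\eta^{-\beta}\lesssim_\alpha K^{2s}$. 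This is contradicted by $\eta=c_{\alpha,0}K^{-2s/\beta}$ once $c_{\alpha,0}$ is small, and this is exactly where the exponent $2s/\beta$ comes from.
\end{enumerate}

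In your notation, the repair of your first attempt is to bound $\abs{Q}\le\eta^s\Abs{P^{(s)}}_{\infty,[\eta,1]}\lesssim\eta^sK^{2s}$ instead of by $O(1)$. Comparing $Q(\eta)$ with $G(\eta)=D_{\alpha,s}\eta^\alpha$ alone then suffices, with no need for $Q(2\eta)$ or $Q'$.
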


\begin{proof}
    Taking $P\rbra{x} = 0$ gives $E_{K}^{\mathrm{rel}}\rbra{x^\alpha, \sbra{\eta, 1}} \leq 1$. 
    It remains to prove the lower bound.
    We take $\delta_{\alpha, 0} \coloneqq D_{\alpha, s} / \rbra{4 A_{\alpha, s}}$, where
    \begin{align*}
        D_{\alpha,s} & = \Delta_{1}^s\sbra{x^\alpha}\rbra{1} = \sum_{j=0}^s \rbra{-1}^{s-j} \binom{s}{j} \rbra{1+j}^{\alpha} > 0, \\
        A_{\alpha,s} & = \sum_{j=0}^s \binom{s}{j} \rbra{1+j}^{\alpha}.
    \end{align*}
Suppose, toward a contradiction, that a polynomial \(P\) of
degree at most \(K\) satisfies $\abs{P\rbra{x}-x^\alpha} < \delta_{\alpha,0}x^\alpha$ for every $\eta\leq x\leq1$.
Taking $x_j = \rbra{1+j}\eta$, we have
\[
\begin{aligned}
    \abs*{\Delta_\eta^s\sbra{P}\rbra{\eta}}
    &\geq
    \abs*{\Delta_\eta^s\sbra*{x^\alpha}\rbra{\eta}} - \abs*{\sum_{j=0}^s \rbra{-1}^{s-j} \binom{s}{j} \rbra*{P\rbra{x_j} - x_j^\alpha}} \\
    & > \abs*{\Delta_\eta^s\sbra*{x^\alpha}\rbra{\eta}}
    -
    \delta_{\alpha,0}
    \sum_{j=0}^s
    \binom{s}{j}x_j^\alpha\\
    &=
    D_{\alpha,s} \eta^\alpha
    -
    \delta_{\alpha,0}A_{\alpha,s} 
    \eta^\alpha\\
    &\geq
    \frac{D_{\alpha,s}}{2}\eta^\alpha.
\end{aligned}
\]
By the Schwarz mean-value theorem for divided differences \cite{Sch82} (cf.\ \cite[Equation (44)]{dB05}),
there exists
\(\xi\in[\eta,(s+1)\eta]\) such that $\Delta_\eta^s\sbra{P}\rbra{\eta} = \eta^sP^{(s)}\rbra{\xi}$.
Consequently,
\begin{equation}
    \Abs*{P^{(s)}}_{\infty,[\eta,1]} \coloneqq \sup_{x \in \sbra{\eta, 1}} \abs*{ P^{\rbra{s}}\rbra{x} }
    \geq
    \frac{D_{\alpha,s}}{2}
    \eta^{\alpha-s}
    =
    \frac{D_{\alpha,s}}{2}
    \eta^{-\beta}.
    \label{eq:relative-derivative-lower}
\end{equation}
On the other hand, the $s$-th Markov brothers' inequality \cite{Mar90,Mar92} on
\([\eta,1]\) gives
\[
    \Abs*{P^{(s)}}_{\infty,[\eta,1]}
    \leq
    \frac{2^s}{\rbra{2s-1}!!}\cdot
    \frac{K^{2s}}{\rbra{1-\eta}^s}
    \Abs P_{\infty,[\eta,1]}.
\]
Note that $\Abs P_{\infty,[\eta,1]} \leq 1 + \delta_{\alpha, 0}$. 
Hence, for
\(\eta\leq1/\rbra{s+1}\), we have $\Abs{P^{(s)}}_{\infty,[\eta,1]} \leq C_\alpha K^{2s}$ for some constant $C_\alpha > 0$, which contradicts \cref{eq:relative-derivative-lower} for sufficiently small $c_{\alpha, 0} > 0$.
\end{proof}

\begin{lemma}[Moment matching for R\'enyi entropy of order $\alpha > 1$] \label{lemma:relative-power-matching}
    Fix a non-integer $\alpha > 1$. 
    Let $\beta = \ceil{\alpha} - \alpha \in \rbra{0, 1}$. 
    There are constants $C_\alpha, c_\alpha, \delta_\alpha > 0$ such that for every integer $K \geq \alpha$, there is a $K$-moment matching $\rbra{\nu_0, \nu_1}$ on $\sbra{c_\alpha K^{-2\ceil{\alpha}/\beta}, C_\alpha K^{2\ceil{\alpha}/\beta}}$ with mean $1/2$, satisfying
    \[
    \delta_\alpha \rbra*{ \E_{Y \sim \nu_0}\sbra*{Y^\alpha} + \E_{Y \sim \nu_1}\sbra*{Y^\alpha} } \leq \abs*{ \E_{Y \sim \nu_1}\sbra*{Y^\alpha} - \E_{Y \sim \nu_0}\sbra*{Y^\alpha} } \leq \frac{1}{2} \rbra*{ \E_{Y \sim \nu_0}\sbra*{Y^\alpha} + \E_{Y \sim \nu_1}\sbra*{Y^\alpha} }.
    \]
\end{lemma}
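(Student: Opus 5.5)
The plan is to dualize the relative approximation bound of \cref{lemma:rel-EK-power}, in the same way \cref{fact:mm} dualizes ordinary best approximation. Then I symmetrize the resulting pair of distributions to enforce the upper bound $1/2$, and finally rescale to fix the mean. Let $s=\ceil{\alpha}$ and $\eta = c_{\alpha,0}K^{-2s/\beta}$, as in \cref{lemma:rel-EK-power}, so that $E^{\mathrm{rel}}_K\rbra{x^\alpha,\sbra{\eta,1}} = E \in \sbra{\delta_{\alpha,0},1}$.

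\emph{Step 1 (weighted duality).} Dividing by $x^\alpha$, the quantity $E$ is the uniform distance on $\sbra{\eta,1}$ from the constant $1$ to the $(K+1)$-dimensional space $V=\spanspace\cbra{x^{j-\alpha}: 0\le j\le K}$ of continuous functions. By Hahn--Banach and Riesz representation (the same duality behind \cref{fact:mm}), there is a signed Borel measure $\mu'$ on $\sbra{\eta,1}$ with the following properties: its total variation is $1$, it annihilates $V$, and $\int 1\,\mathrm d\mu' = E$. Set $\mathrm d\mu = x^{-\alpha}\mathrm d\mu'$ and write its Jordan decomposition $\mu=\mu_+-\mu_-$. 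Annihilation of $V$ means $\int x^j\,\mathrm d\mu_+=\int x^j\,\mathrm d\mu_-$ for all $0\le j\le K$. In particular ($j=0$) both parts have a common mass $m>0$. The remaining properties translate to $\int x^\alpha\,\mathrm d\mu_+ + \int x^\alpha\,\mathrm d\mu_- = 1$ and $\int x^\alpha\,\mathrm d\mu_+ - \int x^\alpha\,\mathrm d\mu_- = E$. Let $\nu'_1=\mu_+/m$ and $\nu'_0=\mu_-/m$. These form a $K$-moment matching on $\sbra{\eta,1}$, and the ratio of $\abs{\E_{\nu'_1}Y^\alpha-\E_{\nu'_0}Y^\alpha}$ to $\E_{\nu'_0}Y^\alpha+\E_{\nu'_1}Y^\alpha$ is exactly $E\in\sbra{\delta_{\alpha,0},1}$.

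\emph{Step 2 (halving the ratio).} Define $\nu''_0=\frac34\nu'_0+\frac14\nu'_1$ and $\nu''_1=\frac14\nu'_0+\frac34\nu'_1$. All moments up to $K$ still match, the sum of the $\alpha$-th moments is unchanged, and their difference is halved. The ratio therefore lies in $\sbra{\delta_{\alpha,0}/2,1/2}$, which gives the statement with $\delta_\alpha=\delta_{\alpha,0}/2$.

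\emph{Step 3 (normalizing the mean).} The common mean $\bar\mu$ of $\nu''_b$ lies in $\sbra{\eta,1}$. Push both measures forward under $Y\mapsto Y/(2\bar\mu)$. Moment matching is preserved, since every moment scales by the same factor. The ratio is invariant, because both $\alpha$-th moments scale by $(2\bar\mu)^{-\alpha}$. The new mean is $1/2$, and the support is contained in $\sbra{\eta/2,\,1/(2\eta)} = \sbra{\tfrac{c_{\alpha,0}}{2}K^{-2s/\beta},\,\tfrac{1}{2c_{\alpha,0}}K^{2s/\beta}}$. This gives the claim with $c_\alpha=c_{\alpha,0}/2$ and $C_\alpha=1/(2c_{\alpha,0})$.

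The main obstacle is Step 1. One must justify the weighted duality rigorously rather than quote \cref{fact:mm} verbatim. I would do this by the standard Hahn--Banach argument on $C\rbra{\sbra{\eta,1}}$: take the functional that equals $\ell(1)=E$ and vanishes on $V$, with norm $1$ on $\spanspace(V\cup\{1\})$, and extend it. Alternatively, I would use the finite-dimensional (Carathéodory) version, which even gives finitely supported $\nu_b$. I would also check that $\mu_\pm$ are nonzero, which holds since $E>0$.
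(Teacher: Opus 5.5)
Your proposal is correct and follows the paper's proof essentially step for step. The shared steps are: Hahn--Banach plus Riesz representation for the relative approximation problem; Jordan decomposition into a $K$-moment matching whose $\alpha$-th-moment ratio equals $E_K^{\mathrm{rel}}\rbra{x^\alpha,\sbra{\eta,1}}$; the $\tfrac34/\tfrac14$ mixing to halve that ratio; and rescaling by $1/(2\bar\mu)$ to set the mean to $1/2$. The only difference is cosmetic. You phrase the duality as the sup-norm distance from the constant $1$ to $\spanspace\cbra{x^{j-\alpha} : 0\le j\le K}$, whereas the paper works in the quotient space $C\rbra{I}/\mathcal{P}_K$ with the weighted norm $\sup_{x\in I}\abs{f\rbra{x}}/x^\alpha$; these are equivalent formulations.
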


\begin{proof}
    For convenience, let $s=\ceil{\alpha} \leq K$, $\eta=c_{\alpha,0}K^{-2s/\beta} < 1$, and $I=\sbra{\eta,1}$, such that $E_K^{\mathrm{rel}}\rbra{x^\alpha,I} \in \sbra{\delta_{\alpha,0}, 1}$,
    where \(c_{\alpha,0}, \delta_{\alpha,0}>0\) are the constants in
    \cref{lemma:rel-EK-power}. 

    Consider \(C\rbra{I}\) equipped with the weighted norm
    \[
        \Abs{f}_{x^\alpha,I}
        \coloneqq
        \sup_{x\in I}
        \frac{\abs{f\rbra{x}}}{x^\alpha},
    \]
    which is a Banach space. 
    Let \(\mathcal P_K\) denote the set of all real polynomials of
    degree at most \(K\), and consider the quotient Banach space $\mathcal X = C\rbra{I}/\mathcal P_K$.
    For each coset $\sbra{f} = f+\mathcal P_K = \set{f+P}{P\in\mathcal P_K}$,
    define the norm
    \[
        \Abs{\sbra{f}}_{\mathcal X}
        \coloneqq
        \inf_{P\in\mathcal P_K}
        \Abs{f-P}_{x^\alpha,I}.
    \]
    In particular, $\Abs{\sbra{x^\alpha}}_{\mathcal X} = E_K^{\mathrm{rel}}\rbra{x^\alpha,I}$.

    Define a linear functional $\ell \colon \spanspace\cbra{\sbra{x^\alpha}} \to \mathbb R$
    by $\ell\rbra*{a\sbra{x^\alpha}} = aE_K^{\mathrm{rel}}\rbra{x^\alpha,I}$ for $a\in\mathbb R$.
    Its operator norm is
    \[
    \begin{aligned}
        \Abs{\ell}
        =\sup_{a\neq0}\frac{\abs*{\ell\rbra{a\sbra{x^\alpha}}}}{\Abs{a\sbra{x^\alpha}}_{\mathcal X}}
        =\sup_{a\neq0}\frac{\abs a E_K^{\mathrm{rel}}\rbra{x^\alpha,I}} { \abs a E_K^{\mathrm{rel}}\rbra{x^\alpha,I}}
        = 1.
    \end{aligned}
    \]
    By the Hahn--Banach theorem, \(\ell\) can be extended to a
    linear functional \(L\colon\mathcal X\to\mathbb R\) with
    \(\Abs L=1\).  
    Define  $\widetilde L\rbra f = L\rbra{\sbra f}$.
    Then, $\widetilde L\rbra P=0$ for every $P\in\mathcal P_K$, and $\widetilde L\rbra{x^\alpha} = E_K^{\mathrm{rel}}\rbra{x^\alpha,I}$.
    Let $\Gamma\rbra g = \widetilde L\rbra{x^\alpha g}$ for $g \in C\rbra{I}$ with operator norm $\Abs{\Gamma} = 1$, since $\Abs{x^\alpha g}_{x^\alpha,I} = \Abs g_{\infty,I}$. 
    By the Riesz--Markov--Kakutani representation theorem, there is a finite signed
    Borel measure \(\Lambda\) on \(I\) such that
    \[
        \Gamma\rbra g
        =
        \int_I g\rbra{x}\,
        \mathrm d\Lambda\rbra{x},
        \qquad
        \abs{\Lambda}\rbra I \coloneqq \int_{I} \mathrm{d} \abs{\Lambda} \rbra{x} =1.
    \]
    Define another finite signed measure \(\omega\) by $\mathrm d\omega\rbra{x} = x^{-\alpha}\, \mathrm d\Lambda\rbra{x}$.
    Then, for every \(f\in C\rbra I\),
    \[
    \begin{aligned}
        \widetilde L\rbra f
        =\Gamma\rbra*{\frac{f}{x^\alpha}}
        =\int_I\frac{f\rbra{x}}{x^\alpha}\,\mathrm d\Lambda\rbra{x}
        =\int_I f\rbra{x}\,\mathrm d\omega\rbra{x}.
    \end{aligned}
    \]
    Therefore, since \(x^j\in\mathcal P_K\) for \(0\leq j\leq K\),
    \[
        \int_Ix^j\,\mathrm d\omega\rbra{x}
        =
        \widetilde L\rbra{x^j}
        =
        0,
        \qquad
        0\leq j\leq K,
    \]
    \[
        \int_Ix^\alpha\,\mathrm d\omega\rbra{x}
        =
        \widetilde L\rbra{x^\alpha}
        =
        E_K^{\mathrm{rel}}\rbra{x^\alpha,I}.
    \]

    Write the Jordan decomposition as $\omega=\omega_+-\omega_-$.
    Since \(1\in\mathcal P_K\), $\omega\rbra{I} = \widetilde{L}\rbra{1} = 0$.
    Hence, $\omega_+\rbra I = \omega_-\rbra I = u$ for some \(u>0\).  
    Define $\mu_1={\omega_+}/{u}$ and $\mu_0={\omega_-}/{u}$.
    Then \(\mu_0\) and \(\mu_1\) are probability distributions on
    \(I\).  Moreover, for every \(0\leq j\leq K\),
    \[
    \begin{aligned}
        \E_{X\sim\mu_1}\sbra{X^j}
        -
        \E_{X\sim\mu_0}\sbra{X^j}
        =
        \frac1u
        \int_Ix^j\,\mathrm d\omega\rbra{x}=
        0.
    \end{aligned}
    \]
    Thus \(\rbra{\mu_0,\mu_1}\) is a \(K\)-moment matching.
    For \(b\in\cbra{0,1}\), let $B_b = \E_{X\sim\mu_b}\sbra{X^\alpha}$.
    Since
    \(\abs{\omega}=\omega_++\omega_-\), we have
    \[
    \begin{aligned}
        u\rbra{B_1+B_0}
        &=
        \int_Ix^\alpha\,
        \mathrm d\abs{\omega}\rbra{x}
        =
        1,\\
        u\rbra{B_1-B_0}
        &=
        \int_Ix^\alpha\,
        \mathrm d\omega\rbra{x}
        =
        E_K^{\mathrm{rel}}\rbra{x^\alpha,I}.
    \end{aligned}
    \]
    Consequently,
    \begin{equation} \label{eq:diff-over-sum}
        \frac{\abs{B_1-B_0}}{B_1+B_0}
        =
        E_K^{\mathrm{rel}}\rbra{x^\alpha,I}
        \in \sbra{\delta_{\alpha,0}, 1}.
    \end{equation}
    Next, define
    \[
        \overline\mu
        =
        \frac{\mu_0+\mu_1}{2},
        \qquad
        \widetilde\mu_b
        =
        \frac{\mu_b+\overline\mu}{2},
        \qquad
        b\in\cbra{0,1}.
    \]
    The pair
    \(\rbra{\widetilde\mu_0,\widetilde\mu_1}\) is still a
    \(K\)-moment matching on \(I\) with mean $m \in \sbra{\eta, 1}$. 
    For $b \in \cbra{0, 1}$, let $\nu_b$ be the probability distribution of 
    \[
        Y=\frac{X}{2m},
        \qquad
        X\sim\widetilde\mu_b.
    \]
    It can be verified that $\rbra{\nu_0, \nu_1}$ is a $K$-moment matching on $\sbra{\frac{\eta}{2m}, \frac{1}{2m}} \subseteq \sbra{\frac{\eta}{2}, \frac{1}{2\eta}} = \sbra{ c_\alpha K^{-2s/\beta}, C_\alpha K^{2s/\beta} }$ with mean $\frac{1}{2}$, where $c_\alpha=\frac{c_{\alpha,0}}2$ and $C_\alpha=\frac1{2c_{\alpha,0}}$.
    Finally, \cref{eq:diff-over-sum} gives that 
    \[
    \begin{aligned}
        \frac{\delta_{\alpha,0}}2
        \leq
        \frac{
        \abs*{
        \E_{Y\sim\nu_1}\sbra{Y^\alpha}
        -
        \E_{Y\sim\nu_0}\sbra{Y^\alpha}}}
        {
        \E_{Y\sim\nu_1}\sbra{Y^\alpha}
        +
        \E_{Y\sim\nu_0}\sbra{Y^\alpha}}\leq
        \frac12,
    \end{aligned}
    \]
    where taking $\delta_\alpha \coloneqq \delta_{\alpha, 0}/2$ yields the proof. 
\end{proof}

\subsubsection{Quantum R\'enyi Entropy Estimation}
\label{sec:renyi-estimation}

\begin{theorem}[Sample lower bounds for R\'enyi entropy estimation]
\label{cor:renyi-high-accuracy}
Let $\rho$ be an unknown $d$-dimensional quantum state, where $d$
is sufficiently large.
\begin{itemize}
    \item For \(0<\alpha<1\) and any sufficiently small $\varepsilon > 0$, estimating \(\mathrm S^{\mathrm{R}}_\alpha(\rho)\) to within additive error $\varepsilon$ with success probability at least $2/3$ requires sample complexity
\begin{equation*}
    \Omega\rbra*{
    \frac{d^{1+1/\alpha}}
         {\varepsilon^{1/\alpha}\log^2\rbra{d}\max\cbra{1, \varepsilon^{1/\alpha}\log^2\rbra{d}}}+
    \frac{d^{1/\alpha-1}}
         {\varepsilon^2}}.
\end{equation*}
    \item For non-integer \(\alpha>1\) and any sufficiently small $\varepsilon > 0$, estimating
\(\mathrm S^{\mathrm{R}}_\alpha(\rho)\) to within additive error $\varepsilon$ with success probability at least $2/3$ requires sample complexity
\[
    \Omega\rbra*{
    \frac{d^2}{\varepsilon^{1/\alpha}\log^{\kappa_\alpha}\rbra{d}}+
    \frac{d^{1-1/\alpha}}{\varepsilon^2}}, \qquad \textup{where }
\kappa_\alpha = \frac{4\ceil{\alpha}}{\ceil{\alpha}-\alpha}.
\]
\end{itemize}
\end{theorem}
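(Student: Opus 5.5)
We follow the template of \cref{cor:entropy-lower-bound}. Each of the two terms in each regime gets its own argument. The first (dimension-dependent) term comes from \cref{thm:master-functional}\eqref{main-item2} with $\phi(x)=x^\alpha$. The second ($\varepsilon^{-2}$) term comes from a simple two-point argument on a fixed pair of states.

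\textbf{Case $0<\alpha<1$, first term.} Set $K=\ceil{C\log(d+2)}$ and $d'=\floor{d/2}$. Take $M=\max\{1,\Theta(\varepsilon^{1/\alpha}K^2)\}$ and let $(\nu_0,\nu_1)$ be the $(K+1)$-moment matching on $[0,M]$ from \cref{lemma:pow-tr-moment-matching}. Its mean is $\gamma_\alpha M/K^2\le1$ and its gap is $g_\phi\ge\Delta_\alpha(M/K^2)^\alpha$.

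We check the conditions of \cref{thm:master-functional}:
\begin{itemize}
\item \cref{cond1} holds by the choice of $K$.
\item \cref{cond2} holds since $M\le K^2$.
\item For \cref{cond3}, note $A_\phi\le M^\alpha$ and $L_\phi=O(1)$. Hence $(A_\phi+ML_\phi)/g_\phi=O(K^{2\alpha}M^{1-\alpha})=\mathrm{polylog}(d)\le c\sqrt d$.
\end{itemize}

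Embed via $\widetilde\rho=p\rho\oplus(1-p)\mathbb I/(d-d')$, and write $\theta=\tr(\widetilde\rho^\alpha)$. Then
\[
\theta=p^\alpha d'^{1-\alpha}\mathcal L_\phi(\rho)+(1-p)^\alpha(d-d')^{1-\alpha}.
\]
Both summands are $\Theta(d^{1-\alpha})$. Since $\log$ has derivative $\Theta(d^{\alpha-1})$ at this scale, the gap in $\mathrm S^{\mathrm R}_\alpha(\widetilde\rho)=\frac1{1-\alpha}\log\theta$ between the two hypotheses is $\Theta(p^\alpha g_\phi)$.

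Choose $p$ so that $p^\alpha(M/K^2)^\alpha\asymp\varepsilon$, i.e.\ $p\asymp\varepsilon^{1/\alpha}K^2/M$. The lower bound is then
\[
\Omega\!\left(\frac{d'^2}{pM^2}\right)=\Omega\!\left(\frac{d^2}{\varepsilon^{1/\alpha}K^2M}\right).
\]
This yields $d^2/(\varepsilon^{1/\alpha}\log^2 d\max\{1,\varepsilon^{1/\alpha}\log^2d\})$, not $d^{1+1/\alpha}$, so the scaling must change.

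The fix is to let the block be a small fraction of the spectrum. Use a $k$-dimensional block of weight $p$ with $k\ll d$, and keep the complement maximally mixed on $d-k$ dimensions. The block then contributes $p^\alpha k^{1-\alpha}\mathcal L$ against a background of $d^{1-\alpha}$. The entropy gap becomes $\asymp p^\alpha(k/d)^{1-\alpha}g_\phi$, while the sample bound is $k^2/(pM^2)$. Optimizing over $k$ and $p$ is where $d^{1+1/\alpha}/\varepsilon^{1/\alpha}$ should appear.

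I expect this parameter bookkeeping to be the main obstacle. The paper's statement of \eqref{main-item2} allows only $d'\ge$ constant fraction, but $d'$ is arbitrary there, so a general $d'$ can be used. The constraints to respect are $p\le1$, $M\le K^2$, and the required $\max$ factor.

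\textbf{Case $0<\alpha<1$, second term.} Take two fixed states: $\mathbb I/d$ and a perturbation that moves weight $\delta$ from half the eigenvalues to the other half, with $\delta\asymp\varepsilon$ chosen so the Rényi entropies differ by $2\varepsilon$. Their trace distance per copy, via a fidelity or $\chi^2$ bound on $n$ copies, gives $\Omega(d^{1/\alpha-1}/\varepsilon^2)$. Alternatively this term follows from a classical lower bound on commuting states, as in \cite{WZ25b} and \cite{AOST17}.

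\textbf{Case non-integer $\alpha>1$.} Use \cref{lemma:relative-power-matching}. It gives a $K$-moment matching with mean $1/2$ on $[c_\alpha K^{-2s/\beta},C_\alpha K^{2s/\beta}]$, with a relative gap $\delta_\alpha$ for $\mathcal L_{x^\alpha}$, so $M=O(K^{2s/\beta})$.

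Apply the master theorem directly with $p=1$:
\begin{itemize}
\item The relative gap of $\mathcal L_\phi$ is $\Theta(q/d)\,g_\phi$ over a baseline $O(A_\phi)$. Since $\log\mathcal L$ is the quantity of interest, this gives a constant-size gap in $\mathrm S^{\mathrm R}_\alpha$.
\item The sample bound is $\Omega(d^2/M^2)=\Omega(d^2/\log^{\kappa_\alpha}d)$, since $M^2=K^{4s/\beta}$ and $\kappa_\alpha=4s/\beta$.
\end{itemize}
\cref{cond3} holds because $A_\phi,L_\phi,g_\phi$ are all polylogarithmic in $d$ and $g_\phi$ is comparable to $\E[Y^\alpha]$.

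For small $\varepsilon$, mix $\rho$ with a pure-like heavy component so that $\tr(\widetilde\rho^\alpha)$ is dominated by a background term. Let the block of weight $p$ carry a relative contribution $\asymp\varepsilon$. This gives $p^\alpha\asymp\varepsilon$, so $p=\varepsilon^{1/\alpha}$, and the lower bound $d^2/(\varepsilon^{1/\alpha}\log^{\kappa_\alpha}d)$.

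The second term $d^{1-1/\alpha}/\varepsilon^2$ follows from the same two-point perturbation argument as in the case $0<\alpha<1$.
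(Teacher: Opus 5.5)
\textbf{Gap in the first term for $0<\alpha<1$.} This term is not established, and the fix you sketch cannot work. With a $k$-dimensional block of weight $p$ and a maximally mixed complement, your own scaling forces $p\asymp\varepsilon^{1/\alpha}(d/k)^{(1-\alpha)/\alpha}K^2/M$. That gives
\[
\frac{k^2}{pM^2}\asymp\frac{k^{1+1/\alpha}}{d^{1/\alpha-1}\,\varepsilon^{1/\alpha}K^2M},
\]
which is increasing in $k$ and never exceeds $d^2/(\varepsilon^{1/\alpha}K^2M)$. Shrinking the block only loses.

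The obstruction is the maximally mixed complement itself. It contributes $\Theta(d^{1-\alpha})$ to $\tr(\widetilde\rho^\alpha)$, the largest possible amount. So the block's relative weight is only $p^\alpha(k/d)^{1-\alpha}$, and $p$ can never be made polynomially small in $d$.

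The missing idea is a \emph{pure} complement. Take $\rho$ on $d-1$ dimensions and $\widetilde\rho=p\rho\oplus(1-p)\ketbra{0}{0}$. Then
\[
\tr(\widetilde\rho^\alpha)=(1-p)^\alpha+p^\alpha(d-1)^{1-\alpha}\mathcal L_{x^\alpha}(\rho),
\]
so the background is $\Theta(1)$. Take the matching on $[0,U]$ with $U=\max\{1,\Theta(K^2\varepsilon^{1/\alpha})\}$, and $p\asymp\varepsilon^{1/\alpha}K^2/(d^{(1-\alpha)/\alpha}U)$. This keeps $p^\alpha d^{1-\alpha}\lesssim 1$. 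Since $\mathcal L_{x^\alpha}\le 1$, the trace then stays in a constant range where $\log$ is Lipschitz. At the same time the gap $p^\alpha d^{1-\alpha}g_\phi\gtrsim\varepsilon$. Now $d^2/(pM^2)=d^{1+1/\alpha}/(\varepsilon^{1/\alpha}K^2U)$. The extra factor $d^{1/\alpha-1}$ comes entirely from $1/p$ being polynomially large.

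\textbf{Non-integer $\alpha>1$: complement reversed.} Your $p=1$ argument is fine at constant $\varepsilon$; it works because of the relative gap $g_\phi\ge\delta_\alpha(A_0+A_1)$. The small-$\varepsilon$ step fails as written. With a pure-like heavy component and $p\le 1/2$, the block contributes at most $p^\alpha d^{1-\alpha}\,\mathrm{polylog}(d)$ against a background $\approx 1$. So the entropy gap vanishes as $d\to\infty$.

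Your conclusion $p\asymp\varepsilon^{1/\alpha}$ instead requires the complement $(1-p)\mathbb I_{d-d'}/(d-d')$ with $d-d'=\Theta(d)$. Its background is of order $d^{1-\alpha}$, the same scale as the block. One also needs both $g_\phi\ge\delta_\alpha 2^{1-\alpha}$ (Jensen with mean $1/2$) and $g_\phi\ge\delta_\alpha(A_0+A_1)$ to bound the normalized trace by $O(g_\phi)$, which yields an entropy gap of $\Omega(p^\alpha)$.

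\textbf{The $\varepsilon^{-2}$ terms.} Your explicit two-point argument fails. $\mathbb I/d$ is a stationary point of $\mathrm S^{\mathrm R}_\alpha$, so moving $\delta$ between halves changes the entropy only by $\Theta(\delta^2)$. This gives $\Omega(1/\varepsilon)$ with no $d$-dependence.

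You need a first-order perturbation. For $0<\alpha<1$, the paper's appendix uses a heavy atom of mass $1-a-t$ plus a uniform tail of mass $a+t$ on $N=d-1$ points, with $a=N^{-(1-\alpha)/\alpha}$. The entropy moves by $\Theta(t/a)$ while $d_{\mathrm H}^2=\Theta(t^2/a)$, and $t\asymp a\varepsilon$ gives $d^{1/\alpha-1}/\varepsilon^2$. For $\alpha>1$ the paper cites \cite[Theorem 16]{AOST17}. The bound in \cite{WZ25b} scales as $\varepsilon^{-1/\alpha}$, so it does not supply this term.
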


\begin{proof}
Let $C,c>0$ be the universal constants in
\cref{thm:master-functional}.  Fix a sufficiently large constant
$C_\alpha>0$ and a sufficiently small constant
$\varepsilon_\alpha>0$, depending only on $\alpha$.
Let $0 < \varepsilon \leq \varepsilon_\alpha$. 

For the case of $0<\alpha<1$, set
\[
    d'=d-1,
    \qquad
    K=\ceil*{C\log\rbra{d+2}},
    \qquad
    L=K+1,
\]
\[
    U
    =
    \max\cbra*{
    1,
    C_\alpha K^2\varepsilon^{1/\alpha}},
    \qquad
    p
    =
    \frac{
    C_\alpha K^2\varepsilon^{1/\alpha}}
    {\rbra{d'}^{\rbra{1-\alpha}/\alpha}U},
\]
and take $I = \sbra{a_0, b_0} = \sbra{0, U}$ and $M = \max\cbra{1, U - 1}$. 
For sufficiently small $\varepsilon_\alpha > 0$, one has $1\leq U\leq K^2$ and thus $M=\Theta\rbra U$. 
Moreover, $0<p\leq1/2$
for all sufficiently large $d$, and
\begin{equation}
    p^\alpha\rbra{d'}^{1-\alpha}
    =
    \frac{
    C_\alpha^\alpha K^{2\alpha}\varepsilon}
    {U^\alpha}
    \leq1.
    \label{eq:renyi-below-power-coefficient}
\end{equation}

Let $(\nu_0,\nu_1)$ be the $L$-moment matching on $\sbra{0, U}$ specified by
\cref{lemma:pow-tr-moment-matching}, with mean $\nu = \gamma_\alpha U/K^2 \leq 1$. 
Apply \cref{thm:master-functional} in dimension $d'$ with $K\coloneqq L$, $I=\sbra{a_0,b_0}=\sbra{0,U}$, and $\phi\rbra{x}=\phi_\alpha\rbra{x}=x^\alpha$.
The parameters in \cref{def:master-functional-parameters} satisfy
\[
    g_\phi
    \geq
    \Delta_\alpha
    \left(\frac{U}{K^2}\right)^\alpha,
    \qquad
    A_\phi\leq U^\alpha,
    \qquad
    L_\phi
    \leq
    \alpha
    \max\cbra*{
    2^{1-\alpha},
    \rbra{4/3}^{\alpha-1}}.
\]
The conditions of \cref{thm:master-functional} can be verified: \cref{cond1,cond2,cond3} hold for all sufficiently large $d$.  
First, $L\geq C\log \rbra{d'}$.
Second, $\rbra{1+M}^2\log \rbra{d'}
    =
    O\rbra{\log^5 \rbra{d}}
    = o\rbra{d'}$
for all sufficiently large $d$.  Finally,
\begin{align*}
    \frac{
    A_\phi+\rbra{b_0-a_0}L_\phi}
    {g_\phi}
    \leq
    O_\alpha\rbra*{
    K^{2\alpha}
    \rbra{1+U^{1-\alpha}}}
    \leq
    O_\alpha\rbra{K^2}
    =
    o_\alpha\rbra{\sqrt{d'}},
\end{align*}
where the second inequality uses $1\leq U\leq K^2$.
Thus \cref{cond1,cond2,cond3} hold.

Let $\Pi_0,\Pi_1$ and $I_0,I_1$ be the probability distributions
and intervals specified in \cref{thm:master-functional}.  
When $0<\alpha<1$, it holds that $\mathcal L_{x^\alpha}\rbra\rho \in \sbra{0, 1}$ and without loss of generality, we assume that $I_b \subseteq \sbra{0, 1}$. 
For $\rho\sim\Pi_b$, define the $d$-dimensional state
\[
    \widetilde\rho
    =
    p\rho
    \oplus
    \rbra{1-p}\ketbra00.
\]
Since $\tr\rbra{\rho^\alpha}
    =
    \rbra{d'}^{1-\alpha}
    \mathcal L_{x^\alpha}\rbra\rho$,
we have $\tr\rbra{\widetilde\rho^\alpha}
    =
    \rbra{1-p}^\alpha
    +
    p^\alpha\rbra{d'}^{1-\alpha}
    \mathcal L_{x^\alpha}\rbra\rho$.
For $b\in\cbra{0,1}$, define
\[
    J_b
    =
    \rbra{1-p}^\alpha
    +
    p^\alpha\rbra{d'}^{1-\alpha}I_b, \qquad
    \widetilde J_b
    =
    \cbra*{
    \frac1{1-\alpha}\log x
    :
    x\in J_b}.
\]
It can be seen that $\Pr_{\rho\sim\Pi_b}\sbra{\mathrm S_\alpha^{\mathrm R}\rbra{\widetilde\rho}\in\widetilde J_b}\geq0.99$, $J_0\cup J_1
\subseteq
\sbra{2^{-\alpha},2}$ (under the assumption that $I_0\cup I_1\subseteq[0,1]$), and $\operatorname{dist}\rbra{J_0,J_1} = p^\alpha\rbra{d'}^{1-\alpha}
\operatorname{dist}\rbra{I_0,I_1} \geq C_\alpha^\alpha \Delta_\alpha \varepsilon/8 \geq 8\rbra{1-\alpha}\varepsilon$ (where the last inequality holds by choosing $C_\alpha$ sufficiently large). 
Therefore, $\operatorname{dist}
\rbra{\widetilde J_0,\widetilde J_1} \geq 4\varepsilon$. 
By \cref{thm:master-functional}\eqref{main-item2}, estimating
$\mathrm S_\alpha^{\mathrm R}\rbra{\widetilde\rho}$ to within
additive error $\varepsilon$ requires
\begin{align*}
    \Omega\rbra*{
    \frac{\rbra{d'}^2}{pM^2}}
    =
    \Omega\rbra*{
    \frac{d^{1+1/\alpha}}
    {\varepsilon^{1/\alpha}\log^2\rbra d
    \max\cbra{
    1,\varepsilon^{1/\alpha}\log^2\rbra d}}}
\end{align*}
samples of $\widetilde\rho$. 

Now suppose that $\alpha>1$ is non-integer.  Set
\[
    s=\ceil*{\alpha},
    \qquad
    \beta=s-\alpha,
    \qquad
    d'=\floor*{\frac d2},
    \qquad
    K=\ceil*{C\log\rbra{d+2}}.
\]
Let $c_{\alpha,0},C_{\alpha,0},\delta_{\alpha,0}>0$ be the
constants in \cref{lemma:relative-power-matching}, and set
\[
    a_0=c_{\alpha,0}K^{-2s/\beta},
    \qquad
    b_0=C_{\alpha,0}K^{2s/\beta}, \qquad I=\sbra{a_0,b_0},
    \qquad
    M=\max\cbra{1-a_0,b_0-1}
    =
    \Theta_\alpha\rbra*{K^{2s/\beta}}.
\]
Let $(\nu_0,\nu_1)$ be the $K$-moment matching on $I$ with mean $1/2$ specified in
\cref{lemma:relative-power-matching}.  
For $b\in\cbra{0,1}$, set
$A_b=\E_{Y\sim\nu_b}\sbra{Y^\alpha} \geq 2^{-\alpha}$; without loss of generality, assume that $A_1>A_0$.  
Then,
\begin{equation}
    \delta_{\alpha,0}\rbra{A_0+A_1}
    \leq
    A_1-A_0
    \leq
    \frac12\rbra{A_0+A_1}.
    \label{eq:renyi-relative-power-gap}
\end{equation}
Apply \cref{thm:master-functional} in dimension $d'$ with
$\phi\rbra x=x^\alpha$.  The parameters in
\cref{def:master-functional-parameters} satisfy
\[
    g_\phi=A_1-A_0
    \geq
    \delta_{\alpha,0}2^{1-\alpha},
    \qquad
    A_\phi\leq b_0^\alpha, \qquad L_\phi
    \leq
    \alpha\rbra*{\frac43}^{\alpha-1}.
\]
Hence, $K\geq C\log\rbra{d'}$, $\rbra{1+M}^2\log\rbra{d'} = O_\alpha\rbra{K^{4s/\beta}\log d} = o\rbra{d'}$, and
\[
    \frac{
    A_\phi+\rbra{b_0-a_0}L_\phi}
    {g_\phi}
    =
    O_\alpha\rbra*{K^{2s\alpha/\beta}}
    =
    o\rbra{\sqrt{d'}}.
\]
Thus \cref{cond1,cond2,cond3} hold for all sufficiently large $d$.

Let $\Pi_0,\Pi_1$ and $I_0,I_1$ be the probability distributions
and intervals specified in \cref{thm:master-functional}, where
\[
I_b = \sbra*{\theta_b - \frac{q g_\phi}{8 d'}, \theta_b + \frac{q g_\phi}{8 d'}}, \qquad \theta_b = \frac{q}{d'} A_b + \frac{r}{d'}\rbra*{\frac{d'-q/2}{r}}^{\alpha}.
\]
Note that 
\[
\operatorname{dist}\rbra{I_0, I_1} \geq \frac{g_\phi}{8}, \qquad \sup I_1 \leq \frac{g_\phi}{4\delta_{\alpha, 0}} + \rbra*{\frac{4}{3}}^\alpha + \frac{g_\phi}{32}.
\]
Set
\begin{equation*}
    p = \rbra*{ 32\rbra{\alpha-1}H_\alpha \varepsilon }^{1/\alpha}, \qquad H_\alpha = \frac{1}{4\delta_{\alpha, 0}} + \frac{1+\rbra{4/3}^{\alpha}}{2^{1-\alpha}\delta_{\alpha, 0}} + \frac{1}{32}.
\end{equation*}
For sufficiently small $\varepsilon_\alpha>0$, one has
$0<p\leq1/2$ for every
$0<\varepsilon\leq\varepsilon_\alpha$.
For $\rho\sim\Pi_b$, define the $d$-dimensional state
\[
    \widetilde\rho
    =
    p\rho
    \oplus
    \rbra{1-p}
    \frac{\mathbb{I}_{d-d'}}{d-d'}.
\]
Since
$\tr\rbra{\rho^\alpha}
=(d')^{1-\alpha}\mathcal L_{x^\alpha}\rbra\rho$, one has
\[
    \mathrm S_\alpha^{\mathrm R}\rbra{\widetilde\rho} = \log\rbra{d'} + \frac1{1-\alpha} \log \rbra*{ p^\alpha\mathcal L_{x^\alpha}\rbra\rho + \rbra{1-p}^\alpha \rbra*{ \frac{d-d'}{d'} }^{1-\alpha} }.
\]
For $b\in\cbra{0,1}$, define
\[
    J_b
    =
    \rbra{1-p}^\alpha
    \left(
    \frac{d-d'}{d'}
    \right)^{1-\alpha}
    +
    p^\alpha I_b, \qquad
    \widetilde J_b
    =
    \cbra*{
    \log\rbra{d'}
    +
    \frac1{1-\alpha}\log z
    :
    z\in J_b}.
\]
Then, it can be verified that $\Pr_{\rho\sim\Pi_b}\sbra{\mathrm S_\alpha^{\mathrm R}\rbra{\widetilde\rho}\in\widetilde J_b}\geq0.99$ for $b\in\cbra{0,1}$ and $\operatorname{dist}
    \rbra{\widetilde J_0,\widetilde J_1}
    \geq4\varepsilon$.
To see this, note that $\operatorname{dist}\rbra{J_0, J_1} \geq p^\alpha g_\phi/8$ and $\sup J_1 \leq H_\alpha g_\phi$, which gives $\operatorname{dist}\rbra{\widetilde J_0, \widetilde J_1} \geq \frac{p^\alpha}{8H_\alpha\rbra{\alpha-1}} \geq 4\varepsilon$. 
By
\cref{thm:master-functional}\eqref{main-item2}, estimating $\mathrm S_\alpha^{\mathrm R}\rbra{\widetilde\rho}$ to within additive error $\varepsilon$ requires 
\begin{align*}
    \Omega\rbra*{
    \frac{\rbra{d'}^2}{pM^2}}
    =
    \Omega\rbra*{
    \frac{d^2}
    {\varepsilon^{1/\alpha}
    \log^{\kappa_\alpha}\rbra d}}
\end{align*}
samples of $\widetilde{\rho}$, where
\[
    \kappa_\alpha
    =
    \frac{4s}{\beta}
    =
    \frac{
    4\ceil*{\alpha}}
    {\ceil*{\alpha}-\alpha}.
\]
On the other hand, the sample lower bound for classical R\'enyi entropy estimation in \cite[Theorem 16]{AOST17} gives $\Omega\rbra{d^{1-1/\alpha}/\varepsilon^2}$.
Combining the two lower bounds yields the proof.
\end{proof}

\subsubsection{Quantum Tsallis Entropy Estimation}
\label{sec:tsallis-estimation}

\begin{lemma}
\label{cor:tsallis-unified}
Fix \(0<\alpha<2\) with \(\alpha\ne1\), and let \(d\) be sufficiently
large.
If 
\begin{equation}
    h=
    \left(
    \frac{32\abs{1-\alpha}}{\Delta_\alpha}
    \varepsilon d^{\alpha-1}
    \right)^{1/\alpha} \leq \frac14,
    \label{eq:tsallis-h-parameter}
\end{equation}
then estimating
\(\mathrm S^{\mathrm T}_\alpha(\rho)\) to within additive error
\(\varepsilon\) for a
\((d+1)\)-dimensional quantum state \(\rho\) with success probability at least $2/3$ requires
\begin{equation*}
    \Omega\rbra*{
    \frac{d^2}
    {h\log^2\rbra{d}\max\cbra{1,h\log^2\rbra{d}}}}
\end{equation*}
samples of $\rho$.
\end{lemma}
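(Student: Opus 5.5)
We follow the pattern of the von Neumann and R\'enyi ($0<\alpha<1$) lower bounds: a hard instance in dimension $d$ built from \cref{lemma:pow-tr-moment-matching} is diluted by mixing with a pure state, and we invoke \cref{thm:master-functional}\eqref{main-item2}. Let $C,c$ be the constants of \cref{thm:master-functional}. We take $K=\ceil{C\log(d+2)}$, $L=K+1$, and
\[
U=\max\cbra{1,K^2h}, \qquad p=\frac{K^2h}{U}\le 1 .
\]
If $p=1$ we may instead cap it at $1/2$; this changes only constants, because $h\le 1/4$. We set $I=[0,U]$ and $M=\max\cbra{1,U-1}=\Theta(U)$. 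Since $h\le 1/4$, we have $1\le U\le K^2$. Let $(\nu_0,\nu_1)$ be the $L$-moment matching on $[0,U]$ from \cref{lemma:pow-tr-moment-matching}. Its mean is $\gamma_\alpha U/K^2\le 1$, and its gap is $g_\phi\ge\Delta_\alpha(U/K^2)^\alpha$.

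We apply \cref{thm:master-functional} in dimension $d$ with $\phi(x)=x^\alpha$. We have $A_\phi\le U^\alpha$, and $L_\phi=O_\alpha(1)$ on $[1/2,4/3]$. This gives $(A_\phi+UL_\phi)/g_\phi=O_\alpha(K^{2\alpha}(1+U^{1-\alpha}))$. For $\alpha<1$ this is $O(K^2)$. For $1<\alpha<2$ it is $O(K^{2\alpha})\le O(K^4)$. In both cases it is $o(\sqrt d)$, so \cref{cond1,cond2,cond3} hold for large $d$.

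We then let $\widetilde\rho=p\rho\oplus(1-p)\ketbra00$, a $(d+1)$-dimensional state. Using $\tr(\rho^\alpha)=d^{1-\alpha}\mathcal L_{x^\alpha}(\rho)$, we get
\[
\mathrm S^{\mathrm T}_\alpha(\widetilde\rho)=\frac{1}{1-\alpha}\rbra*{(1-p)^\alpha+p^\alpha d^{1-\alpha}\mathcal L_{x^\alpha}(\rho)-1}.
\]
This is an affine function of $\mathcal L_{x^\alpha}(\rho)$. We therefore take $J_b$ to be the affine image of $I_b$, so that $\Pr[\mathrm S^{\mathrm T}_\alpha(\widetilde\rho)\in J_b]\ge0.99$ and
\[
\operatorname{dist}(J_0,J_1)\ge\frac{p^\alpha d^{1-\alpha}g_\phi}{8\abs{1-\alpha}}\ge\frac{\Delta_\alpha d^{1-\alpha}(K^2h)^\alpha/U^\alpha\cdot(U/K^2)^\alpha}{8\abs{1-\alpha}}=\frac{\Delta_\alpha d^{1-\alpha}h^\alpha}{8\abs{1-\alpha}}.
\]
By the definition \eqref{eq:tsallis-h-parameter}, $h^\alpha=32\abs{1-\alpha}\varepsilon d^{\alpha-1}/\Delta_\alpha$, so the right-hand side equals exactly $4\varepsilon>2\varepsilon$. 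The key point is that $p^\alpha g_\phi\gtrsim h^\alpha$ holds independently of $U$, because $pU=K^2h$.

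\cref{thm:master-functional}\eqref{main-item2} then gives the lower bound
\[
\Omega\rbra*{\frac{d^2}{pM^2}}=\Omega\rbra*{\frac{d^2}{K^2hU}}.
\]
Since $U=\max\cbra{1,K^2h}$ and $K=\Theta(\log d)$, this equals $\Omega\rbra*{d^2/(h\log^2(d)\max\cbra{1,h\log^2(d)})}$.

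The main obstacle is the bookkeeping of the boundary cases. When $M=\max\cbra{1,U-1}$ and $U$ is near $1$, we must check that $M=\Theta(U)$ still holds. The constraint $p\le 1$ is secured by the hypothesis $h\le1/4$. For $1<\alpha<2$ we need the exponent bound $K^{2\alpha}=o(\sqrt d)$ in \cref{cond3}, which holds since $\alpha<2$ keeps it polylogarithmic.
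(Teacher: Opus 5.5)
Your proof is correct and essentially the paper's: your $U/K^2=\max\{1/K^2,h\}$ plays the role of the paper's $t=\max\{1/K^2,2h\}$, and applying \cref{lemma:pow-tr-moment-matching} at $M=U$ gives exactly the paper's construction of rescaling \cref{lemma:power-moment-matching} by $t$. One small correction: do not cap $p$ at $1/2$, since that would shrink the separation to $4\cdot 2^{-\alpha}\varepsilon$, which is below $2\varepsilon$ for $\alpha>1$; the cap is unnecessary because \cref{thm:master-functional}\eqref{main-item2} allows $p\le 1$, and the paper gets $p\le 1/2$ instead through the factor $2h$ in $t$.
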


\begin{proof}
Let $C,c>0$ be the universal constants in
\cref{thm:master-functional}.  Set
\[
    K=\ceil*{C\log\rbra{d+2}},
    \qquad
    L=K+1,
    \qquad
    t=\max\cbra*{\frac1{K^2},2h},
    \qquad
    p=\frac ht,
\]
\[
    a_0=0,
    \qquad
    b_0=tK^2,
    \qquad
    M=\max\cbra{1,tK^2-1}.
\]
Since $h\leq1/4$ and $K\geq2$, it holds that $0<t\leq\frac12$, $0<p\leq\frac12$, and $pt=h$.
Moreover, $tK^2\geq1$, and hence $M=\Theta\rbra{tK^2}$.
Let $(\mu_0,\mu_1)$ be the $L$-moment matching supplied by
\cref{lemma:power-moment-matching} with parameter $K$.  For
$b\in\cbra{0,1}$, draw $Y\sim\mu_b$, set $X=tY$, and let $\nu_b$
be the distribution of $X$.  Then $(\nu_0,\nu_1)$ is an
$L$-moment matching on $I=\sbra{a_0,b_0}=\sbra{0,tK^2}$ with mean $\nu=t\gamma_\alpha\leq1$.  
We apply
\cref{thm:master-functional} in dimension $d$ with $K\coloneqq L$ and $\phi\rbra{x}=\phi_\alpha\rbra{x}=x^\alpha$.
The parameters in \cref{def:master-functional-parameters} satisfy
\[
    g_\phi\geq\Delta_\alpha t^\alpha,
    \qquad
    A_\phi\leq\rbra{tK^2}^\alpha,
    \qquad
    L_\phi
    \leq
    \alpha\max\cbra*{
    2^{1-\alpha},
    \rbra{4/3}^{\alpha-1}}.
\]
The conditions of \cref{thm:master-functional} can be verified: \cref{cond1,cond2,cond3} hold for all sufficiently large $d$.
First, $L\geq C\log \rbra{d}$. 
Second, since $M\leq tK^2$ and $t\leq1/2$, $\rbra{1+M}^2\log d =O\rbra{\log^5 d} \leq o\rbra{d}$.
Finally,
\begin{align*}
    \frac{A_\phi+\rbra{b_0-a_0}L_\phi}{g_\phi}
    \leq
    O_\alpha\rbra*{
    K^{2\alpha}+K^2t^{1-\alpha}}
    \leq
    O_\alpha\rbra*{
    K^{2\max\{1,\alpha\}}}
    =o_\alpha\rbra{\sqrt d},
\end{align*}
where the second inequality uses $t\leq1$ when $0<\alpha<1$ and
$t\geq K^{-2}$ when $1<\alpha<2$.

Let $\Pi_0,\Pi_1$ and $I_0,I_1$ be the probability distributions
and intervals, respectively, specified in
\cref{thm:master-functional}.  
For $\rho\sim\Pi_b$, define the
$(d+1)$-dimensional state
\[
    \widetilde\rho
    =p\rho\oplus\rbra{1-p}\ketbra00.
\]
Since $\tr\rbra{\rho^\alpha}
    =d^{1-\alpha}\mathcal L_{\phi_\alpha}\rbra\rho$,
we have
\[
    \mathrm S_\alpha^{\mathrm T}\rbra{\widetilde\rho}
    =
    \frac{\rbra{1-p}^\alpha-1}{1-\alpha}
    +
    \frac{p^\alpha d^{1-\alpha}}{1-\alpha}
    \mathcal L_{\phi_\alpha}\rbra\rho.
\]
For $b\in\cbra{0,1}$, set
\[
    J_b=\frac{\rbra{1-p}^\alpha-1}{1-\alpha}+\frac{p^\alpha d^{1-\alpha}}{1-\alpha} I_b.
\]
It can be seen that $\Pr_{\rho \sim \Pi_b} \sbra{\mathrm S_\alpha^{\mathrm T}\rbra{\widetilde\rho} \in J_b} \geq 0.99$ and $\operatorname{dist}\rbra{J_0, J_1} > 2\varepsilon$. 
    By \cref{thm:master-functional}\eqref{main-item2}, estimating $\mathrm S_\alpha^{\mathrm T}\rbra{\widetilde\rho}$ to within additive error $\varepsilon$ requires
    \[
    \Omega\rbra*{\frac{d^2}{pM^2}} = \Omega\rbra*{
    \frac{d^2}
    {hK^2\max\cbra{1,hK^2}}}= \Omega\rbra*{
    \frac{d^2}
    {h\log^2\rbra d\max\cbra{1,h\log^2\rbra d}}}
    \]
    samples of $\widetilde\rho$. 
\end{proof}

\begin{theorem}[Sample lower bounds for Tsallis entropy estimation]
\label{cor:tsallis-regimes}
Let $\rho$ be an unknown $d$-dimensional quantum state, where $d$
is sufficiently large.

\begin{itemize}
    \item For $0<\alpha<1$ and any sufficiently small $\varepsilon > 0$, estimating $\mathrm S_\alpha^{\mathrm T}\rbra\rho$ to within
    additive error $\varepsilon$ with success probability at least $2/3$ requires
    \begin{equation}
        \Omega\rbra*{
        \frac{d^{1+1/\alpha}}
         {\varepsilon^{1/\alpha}\log^2\rbra{d}\max\cbra{1, \varepsilon^{1/\alpha}\log^2\rbra{d}}}
        +
        \frac{d^{2-2\alpha}}{\varepsilon^2}}
        \label{eq:tsallis-below-one-general}
    \end{equation}
    samples of $\rho$.

    \item For $1<\alpha<2$ and any sufficiently small $\varepsilon > 0$, there exists a constant
    $a_\alpha>0$ such that, for $d\geq1+\floor{a_\alpha \varepsilon^{-1/\rbra{\alpha-1}}}$,
    estimating
    $\mathrm S_\alpha^{\mathrm T}\rbra\rho$ to within additive
    error $\varepsilon$ with success probability at least $2/3$ requires
    \begin{equation}
        \Omega\rbra*{
        \frac{\varepsilon^{-2/\rbra{\alpha-1}}}
        {\log^4\rbra{1/\varepsilon}}}.
        \label{eq:tsallis-between-one-two-critical}
    \end{equation}
\end{itemize}
\end{theorem}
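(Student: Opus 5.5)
The plan is to obtain both items mostly by instantiating \cref{cor:tsallis-unified} with a suitable effective dimension and padding. The one exception is the term $d^{2-2\alpha}/\varepsilon^2$, which needs a separate two-hypothesis argument.

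\emph{Case $0<\alpha<1$, first term.} Apply \cref{cor:tsallis-unified} in dimension $d-1$, so the hard states live in dimension $d$. Then
\[
h=\Theta_\alpha\bigl(\varepsilon^{1/\alpha}d^{(\alpha-1)/\alpha}\bigr),
\]
which tends to $0$ because $\alpha-1<0$. Hence $h\le 1/4$ for all sufficiently large $d$, and the hypothesis of \cref{cor:tsallis-unified} holds. Substituting into
\[
\frac{d^2}{h\log^2(d)\max\{1,h\log^2(d)\}}
\]
gives $d^2/h=\Theta_\alpha\bigl(d^{1+1/\alpha}/\varepsilon^{1/\alpha}\bigr)$. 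Since $h\log^2(d)\le O_\alpha\bigl(\varepsilon^{1/\alpha}\log^2(d)\bigr)$, the max factor is at most the one appearing in \cref{eq:tsallis-below-one-general}. This yields the first term.

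\emph{Case $0<\alpha<1$, second term $d^{2-2\alpha}/\varepsilon^2$.} I would use a commuting two-point construction rather than the master theorem. Fix a basis and take
\[
\rho_x=x\ketbra00+(1-x)\frac{\mathbb I-\ketbra00}{d-1},
\]
with $x=1/2$ versus $x=1/2+\delta$. Then
\[
\tr(\rho_x^\alpha)=x^\alpha+(1-x)^\alpha(d-1)^{1-\alpha},
\]
and its derivative in $x$ has magnitude $\Theta_\alpha(d^{1-\alpha})$ near $x=1/2$. Choosing $\delta=\Theta_\alpha(\varepsilon d^{\alpha-1})$ therefore makes the two Tsallis entropies differ by more than $2\varepsilon$.

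Because the $n$-copy states commute, their fidelity equals the classical Bhattacharyya coefficient of $\mathrm{Ber}(1/2)^{\otimes n}$ versus $\mathrm{Ber}(1/2+\delta)^{\otimes n}$, which is $(1-\Theta(\delta^2))^n$. Distinguishing the two hypotheses thus needs $n=\Omega(1/\delta^2)=\Omega\bigl(d^{2-2\alpha}/\varepsilon^2\bigr)$. Adding the two lower bounds gives \cref{eq:tsallis-below-one-general}.

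\emph{Case $1<\alpha<2$.} Here $h$ grows with the dimension, so I would pick the effective dimension to keep $h$ of constant size. Set $d'=\lfloor a_\alpha\varepsilon^{-1/(\alpha-1)}\rfloor$ and choose $a_\alpha$ small enough that
\[
\frac{32|1-\alpha|}{\Delta_\alpha}\,\varepsilon\,(d')^{\alpha-1}\le\Bigl(\tfrac14\Bigr)^{\alpha}.
\]
Then $h\le1/4$, and $h=\Theta_\alpha(1)$ once $\varepsilon$ is small, since $d'$ is then large enough for \cref{cor:tsallis-unified} to apply. That lemma gives the bound
\[
\Omega\Bigl(\frac{(d')^2}{\log^4(d')}\Bigr)=\Omega\Bigl(\frac{\varepsilon^{-2/(\alpha-1)}}{\log^4(1/\varepsilon)}\Bigr)
\]
for $(d'+1)$-dimensional states. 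For any $d\ge d'+1$, embed these states into $\mathbb C^d$ by padding with zeros. Zero eigenvalues do not change $\tr(\rho^\alpha)$, so an estimator in dimension $d$ also works in dimension $d'+1$, and the same lower bound transfers.

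\emph{Main obstacle.} The step needing the most care is the second term for $\alpha<1$. The points to check are the sign and size of the derivative of $\tr(\rho_x^\alpha)$, that $\delta\le 1/4$ in the relevant regime ($\varepsilon$ small and $d$ large), and the fidelity-to-sample-complexity conversion. The remaining steps are parameter bookkeeping in \cref{cor:tsallis-unified}.
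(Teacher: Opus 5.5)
Your proof is correct. For $1<\alpha<2$ it matches the paper's argument: \cref{cor:tsallis-unified} at $d'=\lfloor a_\alpha\varepsilon^{-1/(\alpha-1)}\rfloor$ with $h=\Theta_\alpha(1)$. Your zero-padding remark also makes explicit the passage from dimension $d'+1$ to any $d\ge d'+1$, which the paper leaves implicit.

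For $0<\alpha<1$ you take a different route on both terms. The paper obtains the first term by reusing the R\'enyi construction of \cref{cor:renyi-high-accuracy}. This works because there $\tr(\widetilde\rho^\alpha)\in[2^{-\alpha},2]$, so the R\'enyi separation is also a Tsallis separation. The paper gets the $d^{2-2\alpha}/\varepsilon^2$ term by citing the classical Tsallis bound of \cite{FS17}. You instead apply \cref{cor:tsallis-unified} directly in dimension $d-1$, so a single lemma covers both ranges of $\alpha$. You also replace the citation by an explicit commuting two-point argument, which is self-contained; it is the classical Bernoulli argument embedded diagonally.

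Your route gives more than you claim. Since $h=\Theta_\alpha(\varepsilon^{1/\alpha}d^{-(1-\alpha)/\alpha})$, one has $h\log^2(d)=o(1)$, so the max factor equals $1$. You therefore get $\Omega(d^{1+1/\alpha}/(\varepsilon^{1/\alpha}\log^2(d)))$. This improves the stated first term by the factor $\max\{1,\varepsilon^{1/\alpha}\log^2(d)\}$, i.e.\ by $\log^2(d)$ for constant $\varepsilon$.

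The reason is that Tsallis entropy is linear in $\tr(\rho^\alpha)$. The lemma's hard instance can therefore use support $[0,1]$ (so $M=1$) with $p^\alpha d^{1-\alpha}=\Theta_\alpha(\varepsilon\log^{2\alpha}(d))$, possibly much larger than $1$. The R\'enyi construction, by contrast, must keep $p^\alpha(d')^{1-\alpha}\le1$ to control the logarithm. That forces support up to $\Theta(\max\{1,\varepsilon^{1/\alpha}\log^2(d)\})$ and costs the extra factor.
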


\begin{proof}
For $0<\alpha<1$, the lower bound for R\'enyi entropy estimation in \cref{cor:renyi-high-accuracy} is also a lower bound for Tsallis entropy estimation. 
On the other hand, the sample lower bound for classical Tsallis entropy estimation in \cite[Theorem~3]{FS17} gives $\Omega\rbra{{d^{2-2\alpha}}/{\varepsilon^2}}$.
Combining the two lower bounds yields the proof. 

Now suppose that $1<\alpha<2$.  Choose
$a_\alpha, \varepsilon_\alpha>0$ sufficiently small and set $d'=\floor{a_\alpha\varepsilon^{-1/\rbra{\alpha-1}}} = \Theta\rbra{\varepsilon^{-1/\rbra{\alpha-1}}}$ for every $0<\varepsilon\leq\varepsilon_\alpha$.
Applying \cref{cor:tsallis-unified} with $d \coloneqq d'$ and noting that its
parameter $h$ now satisfies
\[
    h^\alpha
    =
    \frac{32\rbra{\alpha-1}}{\Delta_\alpha}
    \varepsilon\rbra{d'}^{\alpha-1} \leq \frac{32\rbra{\alpha-1}}{\Delta_\alpha} a_\alpha^{\alpha-1} \leq \frac{1}{4^\alpha}
\]
for sufficiently small $a_\alpha > 0$, 
estimating $\mathrm S_\alpha^{\mathrm T}\rbra\rho$ to within additive error $\varepsilon$ requires 
\begin{align*}
    \Omega\rbra*{\frac{\rbra{d'}^2}{h\log^2\rbra{d'}\,\max\cbra{1,h\log^2\rbra{d'}}}}=\Omega\rbra*{\frac{\varepsilon^{-2/\rbra{\alpha-1}}}{\log^4\rbra{1/\varepsilon}}}
\end{align*}
samples of $\rho$. 
\end{proof}

\subsection{Trace Distance Estimation}

\begin{lemma}[{\cite[Lemma~30]{JHW18}}]
\label{lemma:jhw18}
Fix \(0<a\le1/2\) and define
\[
    f_a\rbra x=\frac{\abs{x-a}-a}{x},
    \qquad x>0.
\]
There are universal constants \(d_0>1\) and \(c_0,c_1\in(0,1)\) such that, for every integer
\(K\ge2\),
\[
    E_K\rbra*{f_a,[a/d_0,1]}
    \ge
    \begin{cases}
        \dfrac{c_0}{K\sqrt a},&K^{-2}\le a\le1/2,\\[1ex]
        c_1,&0<a<K^{-2}.
    \end{cases}
\]
\end{lemma}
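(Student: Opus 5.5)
Since \cref{lemma:jhw18} is quoted from \cite{JHW18}, we could simply invoke it. Still, here is the self-contained argument I would attempt. The first step is to make $f_a$ explicit. For $x\ge a$ we have $f_a\rbra x=1-2a/x$, and for $x<a$ we have $f_a\rbra x=-1$. So $f_a$ is continuous on $[a/d_0,1]$, bounded by $1$, and constant to the left of $a$. To the right of $a$ it has slope $2a/x^2$, which equals $2/a$ at $x=a$. Thus the only obstruction to polynomial approximation is a corner at $x=a$ with derivative jump $2/a$, sitting at distance $\asymp a$ from the left endpoint $a/d_0$. Heuristically, degree-$K$ polynomials on $[a/d_0,1]$ resolve scale $\asymp\sqrt a/K$ near $x=a$ (the Chebyshev spacing $\sqrt{x}/K$). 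A corner of slope jump $2/a$ therefore costs error $\asymp (2/a)\cdot\sqrt a/K=1/(K\sqrt a)$. This is meaningful only while $\sqrt a/K\lesssim a$, i.e.\ $a\gtrsim K^{-2}$. Otherwise the corner sits within one resolution cell of the endpoint and the error saturates at a constant. The proof should make these two regimes rigorous.

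For the regime $K^{-2}\le a\le 1/2$, I would argue by contradiction, in the spirit of the proof of \cref{lemma:power-approximation-input}. Take a polynomial $P$ of degree at most $K$ with $\Abs{f_a-P}_{\infty,[a/d_0,1]}=e$. The key step is to test a second difference across the corner at spacing $h=\tau\sqrt a/K\le a/2$, with $\tau$ a small constant. For $f_a$ one computes $f_a\rbra{a+h}-2f_a\rbra a+f_a\rbra{a-h}\ge h/a$, say, for $h\le a/2$. For $P$ the same second difference equals $h^2P''\rbra\xi$ for some $\xi\in[a-h,a+h]$. To bound $P''$ I would not use $\Abs P_\infty$ directly, since that loses the factor $1/(K\sqrt a)$. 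Instead I would apply the Bernstein inequality to the error polynomial $P-\ell$, where $\ell$ is an affine function fitted to $f_a$ at scale $\asymp a$ around the corner. On a window $W=[a/d_0,Ca]$, $f_a-\ell$ has size $O(1)$, and its variation at scale $h$ comes only from the corner. Rescaling $W$ to $[-1,1]$, the Bernstein/Markov bounds at interior points give $|P''|\lesssim (K^2/a^2)\cdot\Abs{P-\ell}_{\infty,W}$. The required smallness must come from comparing $P$ with $f_a$ on $W$. Rather than differentiating, I would run the comparison at two scales: $h$ and $2h$. The function $f_a$ gives differing values of the normalized second difference at the two scales (it is linear away from the corner). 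Meanwhile $P$ gives $h^2P''$ versus $4h^2P''$ up to $O(h^3\sup|P'''|)$. Eliminating $P''$ leaves $e\gtrsim h/a=\tau/(K\sqrt a)$, provided the third-derivative remainder is controlled by Bernstein at the Chebyshev scale.

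For $0<a<K^{-2}$, I would rescale $x=a y/d_0$ is not useful here. Instead, observe that $f_a$ jumps from $-1$ (at $x\le a$) to $1-2a/x\approx 1$ once $x\ge Ca$, so it moves by $\asymp 1$ over an interval of length $O(a)=O(K^{-2})$ adjacent to the left endpoint. By the Markov inequality, $|P'|\le 2K^2\Abs P_\infty/(1-a/d_0)\lesssim K^2(1+e)$. So on an interval of length $\le cK^{-2}$, $P$ moves by at most $\tfrac12(1+e)$ when $c$ is small. Meanwhile $f_a$ moves by at least $1$ across $[a,Ca]$ once $C$ is chosen, and $Ca\le cK^{-2}$ because $a<K^{-2}$ (adjusting constants and $d_0$). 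Comparing these gives $2e\ge 1-\tfrac12(1+e)$, i.e.\ $e\ge c_1$.

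I expect the main obstacle to be the first regime: extracting the sharp factor $1/(K\sqrt a)$ rather than a lossy $1/(K^2a)$ or $O(1)$ bound. Crude sup-norm Markov/Bernstein bounds on $P''$ lose exactly this factor. The two-scale second-difference cancellation, with Bernstein applied on the window $W$ at distance $\asymp a$ from the endpoint, is what I would try first. If it fails, the fallback is a duality argument. By \cref{fact:mm}, it suffices to exhibit a signed measure orthogonal to $\mathcal P_K$, supported near $a$ at Chebyshev-type nodes with spacing $\sqrt a/K$, whose pairing with $f_a$ is $\gtrsim 1/(K\sqrt a)$ times its total variation. This amounts to transplanting the classical lower bound $E_K\rbra{\abs{x-a},[0,1]}\gtrsim\sqrt{a}/K$ through the factor $1/x\asymp 1/a$ near the corner.
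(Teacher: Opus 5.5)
The paper gives no proof of this lemma; it imports it from \cite{JHW18}. Your explicit form of $f_a$ and your scaling heuristic are correct. However, your self-contained argument has a genuine gap exactly where the content lies, the regime $K^{-2}\le a\le 1/2$: the second-difference test cannot produce the factor $1/(K\sqrt a)$, with or without the two-scale elimination.

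Every bound you can put on $P''$ or $P'''$ comes from Markov/Bernstein applied to a polynomial of sup norm $\Theta(1)$.
On the whole interval, Bernstein at distance $\asymp a$ from the left endpoint gives $\abs{P^{(j)}}\lesssim (K/\sqrt a)^j$, hence $h^j\abs{P^{(j)}}\lesssim\tau^j$ for $h=\tau\sqrt a/K$.
On the window $W$ it is worse: $h^2\abs{P''}\lesssim\tau^2/a$.
Subtracting an affine $\ell$ changes neither $P''$ nor the fact that $\Abs{f_a-\ell}_{\infty,W}=\Theta(1)$, because the corner has height $\asymp(2/a)\cdot a$ at scale $a$.
So the best your chain gives is $e\gtrsim\tau/(K\sqrt a)-C\tau^3$. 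The remainder is a constant, while the signal $\tau/(K\sqrt a)\to0$ as $K\sqrt a$ grows, so the inequality is vacuous once $K\sqrt a\gg\tau^{-2}$.
Higher-order differences, or a $K\sqrt a$-dependent $\tau$, only trade this for a power loss, $e\gtrsim(K\sqrt a)^{-1-\delta}$.
This is intrinsic. As with Bernstein's $E_n(\abs{x})\asymp1/n$, the sharp bound is a global phenomenon. Local finite differences plus derivative bounds in terms of $\Abs{P}_\infty$ cannot certify it.

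What is missing is a global mechanism. One standard option is Ditzian--Totik theory.
\begin{itemize}
\item We have $\omega_\varphi^2(f_a,1/n)\asymp1/(n\sqrt a)$ for $n\sqrt a\gtrsim1$.
\item The converse inequality $\omega_\varphi^2(f,1/n)\le Cn^{-2}\sum_{k\le n}(k+1)E_k(f)$ can be combined with Jackson upper bounds for $k\le n/M$ and monotonicity of $E_k$.
\item Taking $n=MK$ with $M$ a large constant, this forces $E_K(f_a)\gtrsim1/(K\sqrt a)$.
\end{itemize}
The other option is an explicit dual measure, which your fallback names but does not supply.
\cref{fact:mm} only gives existence. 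The whole content is a measure annihilating $\mathcal P_K$, of bounded total variation, whose mass sits where $x\asymp a$ with controlled tails.
For your transplanting idea, this means a dual measure $\mu$ for $\abs{x-a}$ with $\int x\,\mathrm{d}\abs{\mu}\rbra{x}\lesssim a\,\abs{\mu}\rbra{[a/d_0,1]}$. Duality for $\abs{x-a}$ alone does not give this.

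Separately, your Markov argument for $a<K^{-2}$ only covers $a\le(c/C)K^{-2}$. Adjusting $d_0$ does not help, since $[a,Ca]$ does not involve $d_0$.
This is easy to repair. The identity $f_a(\lambda z)=f_{a/\lambda}(z)$, together with restricting to $[a/d_0,\lambda]$, shows that $a\mapsto E_K(f_a,[a/d_0,1])$ is nonincreasing. So the second case follows from the first at $a=K^{-2}$, with $c_1=c_0$.
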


\begin{lemma}[Moment matching for trace distance]
\label{lemma:distri-td}
    Let $K\geq2$ and $1/K^2\leq a\leq1/2$.  There is a $\rbra{K+1}$-moment matching $\rbra{\nu_0, \nu_1}$ on $\sbra{\frac{1}{4},\frac{1}{4}+\frac{3}{4a}}$ with mean $\frac{1}{4}+\frac{3}{4d_0} < 1$,
    satisfying
    \begin{align}
        \E_{Y\sim\nu_1}\sbra*{\phi_{\mathrm T}\rbra{Y}}
        -\E_{Y\sim\nu_0}\sbra*{\phi_{\mathrm T}\rbra{Y}}
        \geq
        \Delta_{\mathrm T}
        \coloneqq\frac{3c_0}{4d_0K\sqrt a},
        \label{eq:td-matching-gap}
    \end{align}
    where $\phi_{\mathrm T}\rbra{x}=\max\cbra{1-x,0}$.
\end{lemma}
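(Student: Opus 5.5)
The plan is to build the pair from the best-approximation lower bound for $f_a$ in \cref{lemma:jhw18}, lift it to a $\rbra{K+1}$-moment matching using \cref{lemma:moment-lifting}, and then translate by $1/4$. Set $\eta = a/d_0 \le 1$ and $c = \frac{3}{4d_0}$. The range $K^{-2}\le a\le 1/2$ is the first case of \cref{lemma:jhw18}, so $E_K\rbra{f_a,[\eta,1]}\ge \frac{c_0}{K\sqrt a}$. Since $f_a$ is continuous on $[\eta,1]$, \cref{fact:mm} gives a $K$-moment matching $\rbra{\mu_0,\mu_1}$ on $[\eta,1]$ with
\[
\E_{\mu_1}\sbra{f_a}-\E_{\mu_0}\sbra{f_a}=2E_K\rbra{f_a,[\eta,1]}\ge \frac{2c_0}{K\sqrt a}.
\]
If the sign comes out the other way, swap the labels.

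Next, I take $W\sim \mathsf T_{\eta,c}\rbra{\mu_b}$. By \cref{lemma:moment-lifting} this is a $\rbra{K+1}$-moment matching with mean $c$. Its support lies in $\cbra{0}\cup[c, c/\eta]$, and since $c/\eta = \frac{3}{4a}$, the support is contained in $[0,\frac{3}{4a}]$. Define $\nu_b$ as the law of $Y = \frac14 + W$. A deterministic shift preserves equality of all moments up to order $K+1$, because each moment of $Y$ is a binomial combination of moments of $W$. Hence $\rbra{\nu_0,\nu_1}$ is a $\rbra{K+1}$-moment matching on $[\frac14,\frac14+\frac{3}{4a}]$ with mean $\frac14+\frac{3}{4d_0}$, and this mean is $<1$ because $d_0>1$.

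For the gap, I apply \cref{eq:change-measure-functional} with $h(w)=\phi_{\mathrm T}\rbra{\frac14+w}=\max\cbra{s-w,0}$, where $s=\frac34$, so that $h(0)=s$. Write $w = \frac{c}{\eta}X = \frac{3X}{4a}$. Then
\[
h(w)-h(0)=\frac{\abs{w-s}-w-s}{2},\qquad \abs{w-s}-s=\frac{3}{4a}\rbra{\abs{X-a}-a}.
\]
Dividing by $X$ gives
\[
\frac{h(w)-h(0)}{X}=\frac{3}{8a}\rbra{f_a(X)-1}.
\]
Multiplying by $\eta = a/d_0$ yields
\[
\E_{\nu_b}\sbra{\phi_{\mathrm T}}=\frac34+\frac{3}{8d_0}\rbra{\E_{\mu_b}\sbra{f_a}-1}.
\]
Taking the difference over $b$ gives
\[
\frac{3}{8d_0}\cdot 2E_K\ge \frac{3c_0}{4d_0K\sqrt a}=\Delta_{\mathrm T},
\]
which is \cref{eq:td-matching-gap}.

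The only delicate step is this algebraic identity in the gap computation, where $h(w)-h(0)$ must reduce exactly to a multiple of $f_a(X)-1$ so that the $b$-independent terms cancel. Everything else is bookkeeping of the support endpoints and the mean.
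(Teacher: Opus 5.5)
Your proposal is correct and follows the paper's proof step by step. Both use \cref{lemma:jhw18} with \cref{fact:mm}, apply the moment lifting of \cref{lemma:moment-lifting}, shift by $\frac14$, and rely on the identity $\frac{1}{x}\rbra{\phi_{\mathrm T}\rbra{\frac14+\frac{3x}{4a}}-\phi_{\mathrm T}\rbra{\frac14}}=\frac{3}{8a}\rbra{f_a\rbra{x}-1}$. The only cosmetic difference is that you fold the scaling into the lifting parameter, using $\mathsf T_{a/d_0,\,3/(4d_0)}$ in place of the paper's $\mathsf T_{a/d_0,\,a/d_0}$ followed by $z\mapsto\frac14+\frac{3z}{4a}$; these give the same distribution.
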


\begin{proof}
    By \cref{fact:mm,lemma:jhw18}, there is a $K$-moment matching $\rbra{\mu_0,\mu_1}$ on $\sbra{a/d_0,1}$ satisfying
    \begin{align}
        \E_{X\sim\mu_1}\sbra*{f_a\rbra{X}}
        -\E_{X\sim\mu_0}\sbra*{f_a\rbra{X}}
        \geq\frac{2c_0}{K\sqrt a}.
        \label{eq:td-base-gap}
    \end{align}
    Let $\zeta_b
        =\mathsf T_{a/d_0,a/d_0}\rbra{\mu_b}$ for $b\in\cbra{0,1}$.
    By \cref{lemma:T-eta-alpha,lemma:moment-lifting}, $\rbra{\zeta_0,\zeta_1}$ is a $\rbra{K+1}$-moment matching with mean $a/d_0$.
    Let $Z \sim \zeta_b$, define $Y=\frac{1}{4}+\frac{3Z}{4a}$
    and let $\nu_b$ be its probability distribution. 
    Since $0\leq Z\leq1$, $\nu_b$ is in $\sbra{\frac{1}{4},\frac{1}{4}+\frac{3}{4a}}$.  
    Note that $\rbra{\nu_0, \nu_1}$ is a $\rbra{K+1}$-moment matching with mean $\frac{1}{4} + \frac{3}{4d_0}$. 

    For $x>0$, it can be verified that
    \begin{equation}
        \frac{1}{x}
        \rbra*{\phi_{\mathrm T}\rbra*{\frac14+\frac{3x}{4a}}
        -\phi_{\mathrm T}\rbra*{\frac14}}
        =\frac{3}{8a}\rbra*{f_a\rbra{x}-1}.
        \label{eq:td-transform-identity}
    \end{equation}
    Using the definition of $\mathsf T_{a/d_0,a/d_0}$,
    \begin{align*}
        \E_{Y\sim\nu_b}\sbra*{\phi_{\mathrm T}\rbra{Y}}
        &=\phi_{\mathrm T}\rbra*{\frac14}
        +\frac{a}{d_0}
        \E_{X\sim\mu_b}\sbra*{
        \frac{1}{X}
        \rbra*{\phi_{\mathrm T}\rbra*{\frac14+\frac{3X}{4a}}
        -\phi_{\mathrm T}\rbra*{\frac14}}}\\
        &=\phi_{\mathrm T}\rbra*{\frac14}
        +\frac{3}{8d_0}
        \E_{X\sim\mu_b}\sbra*{f_a\rbra{X}-1},
    \end{align*}
    which gives \cref{eq:td-matching-gap} using
    \cref{eq:td-base-gap}.
\end{proof}

\begin{theorem}[Sample lower bound for trace distance estimation]
\label{cor:trace-lower-bound}\label{corollary:td}
    There is a universal constant $\varepsilon_0>0$ such that, for all sufficiently large $d$ and every $0<\varepsilon\leq\varepsilon_0$, estimating $\mathrm T\rbra{\rho,\mathbb{I}/d}$ to within additive error $\varepsilon$ with success probability at least $2/3$ requires
    \begin{equation*}
        \Omega\rbra*{
        \frac{d^2}{\varepsilon^2\log^2\rbra{d}
        \max\cbra{1,\varepsilon^2\log^2\rbra{d}}}}
    \end{equation*}
    samples of $\rho$.
\end{theorem}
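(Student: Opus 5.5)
We follow the template of \cref{cor:entropy-lower-bound}: combine the moment matching of \cref{lemma:distri-td} with \cref{thm:master-functional}\eqref{main-item2}, and use a mixture $\widetilde\rho = p\rho\oplus(1-p)\sigma$ to interpolate between the regimes $\varepsilon\log(d)\lesssim 1$ and $\varepsilon\log(d)\gtrsim 1$. Let $C,c$ be the constants of \cref{thm:master-functional}. Set $K=\ceil{C\log(d+2)}$ and $L=K+1$, and work in dimension $d'=\floor{d/2}$ (or $d'=d$ if $p=1$).

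\emph{Choosing the matching.} Choose $a\in[1/K^2,1/2]$ so that the gap $\Delta_{\mathrm T}=3c_0/(4d_0K\sqrt a)$ is a large constant multiple of $\varepsilon/p$. Concretely, take $a=\max\cbra{1/K^2,\ \Theta(\varepsilon^2K^2)}$, where the lower clamp is active when $\varepsilon K^2\lesssim 1$ (i.e.\ $\varepsilon\log^2(d)\lesssim 1$). With this clamp set $p=\Theta(\varepsilon K\sqrt a)\le 1/2$, so that $pg_\phi\ge 16\varepsilon$.

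\emph{Applying the master theorem.} Use $(\nu_0,\nu_1)$ from \cref{lemma:distri-td} on $I=[1/4,1/4+3/(4a)]$, which has mean below $1$ and $M=\Theta(1/a)$. For $\phi_{\mathrm T}$ we have $L_\phi\le 1$ and $A_\phi\le 3/4$, since $\phi_{\mathrm T}\in[0,3/4]$ on $I$. Then $(A_\phi+(b_0-a_0)L_\phi)/g_\phi=O(a^{-1}\cdot K\sqrt a)=O(K/\sqrt a)=O(K^2)=o(\sqrt d)$. Also $(1+M)^2=O(K^4)=o(d/\log d)$, so \cref{cond1,cond2,cond3} hold.

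\emph{Embedding.} Define
\[
\widetilde\rho=p\rho\oplus(1-p)\frac{\mathbb I_{d-d'}}{d-d'}.
\]
Then $\mathrm T(\widetilde\rho,\mathbb I_d/d)$ splits blockwise as $\frac12\tr\abs{p\rho-\mathbb I_{d'}/d}+\frac12\tr\abs{(1-p)\mathbb I/(d-d')-\mathbb I/d}$. This is the main obstacle: the first block equals $p\cdot\mathcal L_{\phi}$-type functional only if $d'/d$ matches $p$.

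To avoid the mismatch, I would instead compare against $\sigma=p\,\mathbb I_{d'}/d'\oplus(1-p)\mathbb I/(d-d')$. Alternatively, more simply, choose $d'$ with $d'/d=p$, i.e.\ $d'=\ceil{pd}$, so that the second block contributes $0$ and the first equals $p\,\mathrm T(\rho,\mathbb I_{d'}/d')$ up to $O(1/d)$ rounding. The sample bound then reads $\Omega(d'^2/(pM^2))=\Omega(pd^2a^2)$, which loses a factor of $p$.

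So it is cleaner to drop the mixture in the regime $\varepsilon K^2\gtrsim1$ and apply \cref{thm:master-functional}\eqref{main-item1} directly in dimension $d$ with $p=1$. There $a=\Theta(\varepsilon^2K^2)$, $g_\phi\ge 24\varepsilon$, and $M=\Theta(1/a)$, giving $\Omega(d^2a^2)=\Omega(d^2\varepsilon^4K^4)$.

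I must double-check which expression matches the target $d^2/(\varepsilon^2\log^2 d\cdot\varepsilon^2\log^2 d)=d^2/(\varepsilon^4\log^4d)$. The scaling suggests instead taking $a$ as large as possible. Indeed $g_\phi\sim 1/(K\sqrt a)$ and $M\sim 1/a$, so fixing $g_\phi=\Theta(\varepsilon)$ gives $1/\sqrt a=\Theta(\varepsilon K)$, i.e.\ $a=\Theta(1/(\varepsilon^2K^2))$. This lies in $[1/K^2,1/2]$ exactly when $\varepsilon\lesssim1$ and $\varepsilon K\gtrsim1$, and then $M^2=\Theta(\varepsilon^4K^4)$, yielding $\Omega(d^2/(\varepsilon^4K^4))$ via \eqref{main-item1}. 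This matches the target in the regime $\varepsilon\log(d)\gtrsim1$.

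In the regime $\varepsilon K\lesssim 1$, clamp $a=1/K^2$, so that $g_\phi=\Theta(1)$ and $M=\Theta(K^2)$. Choose $d'=\ceil{pd}$ with $p=\Theta(\varepsilon)$ and $\widetilde\rho=p\rho\oplus(1-p)\mathbb I/(d-d')$, so that $\mathrm T(\widetilde\rho,\mathbb I/d)=p\,\mathrm T(\rho,\mathbb I_{d'}/d')$. The bound $\Omega(d'^2/(pM^2))=\Omega(pd^2/K^4)$ is too weak, however. So instead, in this regime, keep $d'=\floor{d/2}$ and pad with the known $\sigma$, estimating $\mathrm T(\widetilde\rho,\widetilde\sigma)$ for $\widetilde\sigma=p\,\mathbb I/d'\oplus(1-p)\mathbb I/(d-d')$. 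Then $\mathrm T=p\,\mathrm T(\rho,\mathbb I/d')$ exactly. Reducing to the identity-reference statement via a unitarily invariant rescaling is the step I expect to need the most care.
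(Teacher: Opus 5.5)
Your treatment of the regime $\varepsilon\log d\gtrsim1$ is correct and matches what the paper does there. With $a=\Theta\rbra{1/(\varepsilon^2K^2)}$, the matching of \cref{lemma:distri-td} has $g_{\phi_{\mathrm T}}=\Theta(\varepsilon)$ and $M=\Theta(\varepsilon^2K^2)$, and item (1) of \cref{thm:master-functional} gives $\Omega\rbra{d^2/(\varepsilon^4K^4)}$.

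The gap is the regime $\varepsilon\log d\lesssim1$, which carries the main term $d^2/(\varepsilon^2\log^2 d)$. There your dilution route cannot succeed, even setting aside the reference-state mismatch you leave open. (Estimating $\mathrm T(\widetilde\rho,\widetilde\sigma)$ with $\widetilde\sigma\neq\mathbb I/d$ is not the problem in the statement anyway.) Item (2) gives $\Omega\rbra{d'^2/(pM^2)}$ only under the constraint $p\,g_{\phi_{\mathrm T}}\gtrsim\varepsilon$. For the unscaled matching, the guaranteed gap is $\Delta_{\mathrm T}=\Theta\rbra{1/(K\sqrt a)}$ and $M=\Theta(1/a)$. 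So the best you can reach is $O\rbra{d^2a^{3/2}/(\varepsilon K)}\le O\rbra{d^2/(\varepsilon\log d)}$. More crudely, on that support $g_{\phi_{\mathrm T}}\le A_{\phi_{\mathrm T}}\le3/4$ and $M\ge3/4$, so no choice of $p$ beats $O(d^2/\varepsilon)$. This is a $1/\varepsilon$ rate, short of the target by a factor $1/(\varepsilon\log d)$. Your clamp $a=1/K^2$ is also the wrong end: it yields only $d^2/(\varepsilon K^4)$. Dilution buys a factor $1/p=O(1/\varepsilon)$, which is the right mechanism for von Neumann entropy (true rate $d^2/\varepsilon$) but not for trace distance.

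The missing idea is to contract the matching toward $1$ instead of diluting the state. Take $X\sim\mu_b$ from \cref{lemma:distri-td} and set $Y=1-t+tX$. Since $\phi_{\mathrm T}(1-t+tx)=t\,\phi_{\mathrm T}(x)$ for $t>0$, the gap becomes $t\Delta_{\mathrm T}$. The support becomes $\sbra{1-\frac{3t}{4},1+\frac{3t(1-a)}{4a}}$, so $M\le\frac{3t}{4a}$. The remaining conditions still hold:
\begin{itemize}
\item the mean $1-t+t\rbra{\frac14+\frac3{4d_0}}$ stays below $1$;
\item $A_{\phi_{\mathrm T}}\le 3t/4$ and $L_{\phi_{\mathrm T}}\le1$;
\item condition \eqref{cond3} becomes $O(K/\sqrt a)=O(K^2)$.
\end{itemize}
The point is that the indistinguishability in \cref{thm:master-functional} is governed by $M=\sup_{x\in I}\abs{x-1}$, i.e.\ by how far the random state sits from $\mathbb I/d$. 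The functional, by contrast, moves only linearly in $t$. Hence $d^2/M^2$ gains a factor $1/t^2$, not $1/t$.

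Now choose $a=\min\cbra{1/4,\Theta\rbra{1/(\varepsilon^2K^2)}}$ and $t=24\varepsilon/\Delta_{\mathrm T}\le1/2$. This gives $g_{\phi_{\mathrm T}}\ge24\varepsilon$ and $M=\Theta\rbra{\varepsilon K\max\cbra{1,\varepsilon K}}$. Item (1) applied directly to $\mathcal L_{\phi_{\mathrm T}}(\rho)=\mathrm T(\rho,\mathbb I/d)$ in dimension $d$ then yields the stated bound in both regimes at once, with no mixture and no reference-state issue. In your regime $\varepsilon\log d\gtrsim1$ this choice gives $t=1/2$, recovering your argument up to constants.
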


\begin{proof}
    Let $c,C>0$ be the universal constants in \cref{thm:master-functional}. 
    Take $\varepsilon_0=c_0/(64d_0)$, where $c_0$ and $d_0$ are the constants in \cref{lemma:jhw18}.
    For $0 < \varepsilon \leq \varepsilon_0$, we apply
    \cref{thm:master-functional} with dimension $d$, function
    $\phi\rbra{x}=\phi_{\mathrm T}\rbra{x}=\max\cbra{1-x,0}$, and the following parameters:
    \[
        K=\ceil*{C\log\rbra{d+2}},
        \qquad
        a=\min\cbra*{
        \frac14,
        \frac{c_0^2}{4096d_0^2\varepsilon^2K^2}},
        \qquad
        \Delta_{\mathrm T}=\frac{3c_0}{4d_0K\sqrt a},
        \qquad
        t=\frac{24\varepsilon}{\Delta_{\mathrm T}}.
    \]
    It can be shown that for sufficiently large $d$, it holds that $K\geq2$, $K^{-2}\leq a\leq1/4$, and $t \leq 1/2$.

    Let $(\mu_0,\mu_1)$ be the $(K+1)$-moment matching given by
    \cref{lemma:distri-td}. 
    Draw $X_b \sim \mu_b$ and let $Y_b = 1 - t + tX_b$. 
    Denote $\nu_b$ as the distribution of $Y_b$. 
    Then, it can be shown that $\rbra{\nu_0, \nu_1}$ is a $\rbra{K+1}$-moment matching on $I = \sbra{a_0, b_0} = \sbra{1 - \frac{3t}{4}, 1+\frac{3t\rbra{1-a}}{4a}}$ with mean $\nu = 1 - t + t\rbra{\frac{1}{4}+\frac{3}{4d_0}} \leq 1$. 
    Since $t \leq 1/2$, we have $a_0 \geq 5/8$ and thus $I \subseteq [0, +\infty)$. 
    Direct calculation shows
    \[
    M \leq \frac{3t}{4a} = \Theta\rbra{\varepsilon K \max\cbra{1, \varepsilon K}}, \qquad A_{\phi_{\mathrm{T}}} \leq \frac{3t}{4}, \qquad L_{\phi_{\mathrm{T}}} \leq 1, \qquad g_{\phi_{\mathrm{T}}} \geq 24\varepsilon, \qquad b_0 - a_0 = \frac{3t}{4a},
    \]
    which ensures that the conditions in \cref{thm:master-functional}, \cref{cond1,cond2,cond3}, hold. 

    By \cref{thm:master-functional}\eqref{main-item1}, estimating $\mathcal{L}_{\phi_{\mathrm{T}}}\rbra{\rho} = \mathrm{T}\rbra{\rho, \mathbb{I}/d}$ to within additive error $g_{\phi_{\mathrm{T}}}/24 \geq \varepsilon$ requires
    \[
    \Omega\rbra*{\frac{d^2}{M^2}} = \Omega\rbra*{ \frac{d^2}{\varepsilon^2\log^2\rbra d \max\cbra{1,\varepsilon^2\log^2\rbra d}}}
    \]
    samples of $\rho$. 
\end{proof}

\subsection{Spectrum Estimation}

Let \(\lambda^\downarrow\rbra\rho\) denote the nonincreasingly ordered spectrum of \(\rho\), and let
\[
    d_{\mathrm{TV}}\rbra{p,q}=\frac12\Abs{p-q}_1.
\]

\begin{theorem}[Sample lower bound for spectrum estimation]
\label{cor:spectrum-lower-bound}
For sufficiently large $d$ and sufficiently small $\varepsilon > 0$, any estimator that, on input every $d$-dimensional quantum state $\rho$, outputs $\widehat{\lambda}$ such that $d_{\textup{TV}}\rbra{\widehat\lambda, \lambda^\downarrow\rbra\rho} \leq \varepsilon$ with success probability at least $2/3$ requires 
\begin{equation*}
    \Omega\rbra*{
    \frac{d^2}{\varepsilon^2\log^2\rbra{d}
    \max\cbra{1,\varepsilon^2\log^2\rbra{d}}}}
\end{equation*}
copies.
\end{theorem}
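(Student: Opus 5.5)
The plan is to reduce from trace distance estimation against the maximally mixed state, \cref{corollary:td}. The key observation is that $\mathrm T\rbra{\rho,\mathbb I/d}$ is a function of the spectrum alone, since $\rho$ and $\mathbb I/d$ commute:
\[
\mathrm T\rbra{\rho,\mathbb I/d}=\frac12\sum_{i=1}^d\abs*{\lambda_i^\downarrow\rbra\rho-\frac1d}=d_{\mathrm{TV}}\rbra*{\lambda^\downarrow\rbra\rho,u_d},
\]
where $u_d$ is the uniform distribution on $d$ points.

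So, given a spectrum estimator that outputs $\widehat\lambda$ with $d_{\mathrm{TV}}\rbra{\widehat\lambda,\lambda^\downarrow\rbra\rho}\le\varepsilon$ with probability at least $2/3$, I output $d_{\mathrm{TV}}\rbra{\widehat\lambda,u_d}$. By the triangle inequality for $d_{\mathrm{TV}}$ (a metric on $\mathbb R^d$ via the $\ell_1$ norm),
\[
\abs*{d_{\mathrm{TV}}\rbra{\widehat\lambda,u_d}-\mathrm T\rbra{\rho,\mathbb I/d}}\le d_{\mathrm{TV}}\rbra{\widehat\lambda,\lambda^\downarrow\rbra\rho}\le\varepsilon.
\]
This holds with the same success probability and the same number of copies. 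If $\widehat\lambda$ is not sorted or is not a probability vector, the bound is unaffected; the comparison is coordinatewise against the sorted spectrum, and sorting $u_d$ changes nothing.

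Hence any spectrum estimator with precision $\varepsilon$ yields a trace distance estimator for $\mathrm T\rbra{\rho,\mathbb I/d}$ with additive error $\varepsilon$. For $\varepsilon\le\varepsilon_0$ and $d$ sufficiently large, \cref{corollary:td} then gives the claimed bound $\Omega\rbra{d^2/\rbra{\varepsilon^2\log^2\rbra d\max\cbra{1,\varepsilon^2\log^2\rbra d}}}$ directly.

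There is no real obstacle here. The only point to check is that the hard instances of \cref{corollary:td} are genuine $d$-dimensional states, with $\mathbb I/d$ fixed and known, so that the reduction needs no extra copies. This holds because the master theorem is applied there in dimension $d$ with $\mathcal L_{\phi_{\mathrm T}}\rbra\rho=\mathrm T\rbra{\rho,\mathbb I/d}$.
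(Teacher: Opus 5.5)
Your proposal is correct and is essentially the paper's proof: you use the identity $\mathrm T\rbra{\rho,\mathbb I/d}=d_{\mathrm{TV}}\rbra{\lambda^\downarrow\rbra\rho,u_d}$, output $d_{\mathrm{TV}}\rbra{\widehat\lambda,u_d}$ as the trace-distance estimate, and invoke \cref{corollary:td}. Your explicit triangle-inequality step, and your remark that $\widehat\lambda$ need not be sorted or normalized, are details the paper leaves implicit.
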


\begin{proof}
Let $u_d=\rbra{1/d,\ldots,1/d}$ and note that $d_{\mathrm{TV}}\rbra{\lambda^\downarrow\rbra\rho,u_d}=\mathrm T\rbra{\rho,\mathbb{I}/d}$. 
Let $\widehat{\lambda}$ be the output spectrum estimation, then $\widehat{T} = d_{\mathrm{TV}}\rbra{\widehat\lambda,u_d}$ is an estimate of $T\rbra{\rho,\mathbb{I}/d}$ within additive error $\varepsilon$. 
Therefore, the lower bound for trace distance estimation in \cref{cor:trace-lower-bound} is also a lower bound for spectrum estimation. 
\end{proof}

\subsection{Uhlmann Fidelity Estimation}

\begin{lemma}[Approximation of the inverse square root]
\label{lemma:inverse-square-root}
    Let $K\geq2$ and $\eta=1/(32K^2)$.  Then
    \begin{equation*}
        E_K\rbra*{x^{-1/2},\sbra{\eta,1}}
        \geq\frac{1}{16\sqrt\eta}.
    \end{equation*}
\end{lemma}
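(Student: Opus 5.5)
The plan is to follow the template of the proof of \cref{lemma:power-approximation-input}: rescale to a long interval, use Markov brothers' inequality to bound the derivative of a near-best approximant, and compare with the derivative of $x^{-1/2}$ near the left endpoint. Suppose $P$ has degree at most $K$ and let $e=\Abs{x^{-1/2}-P}_{\infty,[\eta,1]}$. We argue by contradiction, assuming $e<1/(16\sqrt\eta)$.

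\textbf{Step 1 (sup-norm bound).} Since $\abs{x^{-1/2}}\le \eta^{-1/2}$ on $[\eta,1]$, we get
\[
\Abs{P}_{\infty,[\eta,1]}\le \eta^{-1/2}+e\le \tfrac{17}{16}\eta^{-1/2}.
\]

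\textbf{Step 2 (Markov on $[\eta,1]$).} Markov brothers' inequality on an interval of length $1-\eta\ge 1/2$ gives
\[
\Abs{P'}_{\infty,[\eta,1]}\le \frac{2K^2}{1-\eta}\Abs{P}_{\infty,[\eta,1]}\le 4K^2\cdot\tfrac{17}{16}\eta^{-1/2}.
\]
Because $\eta=1/(32K^2)$, we have $K^2=1/(32\eta)$, so $\Abs{P'}\le \tfrac{17}{128}\eta^{-3/2}$.

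\textbf{Step 3 (compare at two points).} Take $x_1=\eta$ and $x_2=2\eta$. Then
\[
x_1^{-1/2}-x_2^{-1/2}=(1-2^{-1/2})\eta^{-1/2}\approx 0.2929\,\eta^{-1/2}.
\]
On the other hand, by the approximation error and the mean value theorem,
\[
x_1^{-1/2}-x_2^{-1/2}\le 2e+\abs{P(x_1)-P(x_2)}\le 2e+\eta\Abs{P'}\le \tfrac{2}{16}\eta^{-1/2}+\tfrac{17}{128}\eta^{-1/2}\approx 0.258\,\eta^{-1/2}.
\]
This contradicts $0.2929>0.258$, so $e\ge 1/(16\sqrt\eta)$.

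\textbf{Main obstacle.} The only real work is the constant bookkeeping, and the margin is tight. Substituting the bound from Step 2 into Step 3 gives
\[
2e+\eta\cdot 4K^2(\eta^{-1/2}+e)=\tfrac{\eta^{-1/2}}{8}+2e+\tfrac{e}{8}.
\]
Requiring this to stay below $0.2929\,\eta^{-1/2}$ yields $\tfrac{17}{8}e<(0.2929-0.125)\eta^{-1/2}$, i.e.\ $e\gtrsim 0.079\,\eta^{-1/2}$. This exceeds $1/16=0.0625$, so the claimed bound follows with a small margin.

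If the margin turns out tighter than computed, there are two adjustments to try. One is to use $x_2=3\eta$ (or another point $t\eta$), which improves the gap $1-t^{-1/2}$ while the Markov term grows linearly in $t$. The other is to apply Markov on a subinterval, as in the proof of \cref{lemma:power-approximation-input}. I would first check $t=2$, since it already appears to work.
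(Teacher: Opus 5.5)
Your proof is correct and follows the paper's argument: assume an approximant with error below $1/(16\sqrt\eta)$, bound $\|P\|_{\infty,[\eta,1]}\le\frac{17}{16}\eta^{-1/2}$, apply Markov brothers' inequality on all of $[\eta,1]$, and reach a contradiction via the mean-value theorem between $\eta$ and a nearby point. The paper compares $\eta$ with $4\eta$, getting $|P'(\xi)|>4K^2/\sqrt\eta$ against the Markov bound $<3K^2/\sqrt\eta$. You compare $\eta$ with $2\eta$ via function values, and your constants ($33/128<1-2^{-1/2}$) check out, so the fallback adjustments you list are unnecessary.
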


\begin{proof}
    Suppose, toward a contradiction, that a polynomial $P$ of degree at most $K$ satisfies $\abs{P\rbra{x}-x^{-1/2}} < 1/\rbra{16\sqrt{\eta}}$ for $x\in\sbra{\eta,1}$.
    Then, $P\rbra{\eta}>{15}/\rbra{16\sqrt\eta}$ and $P\rbra{4\eta}<{9}/\rbra{16\sqrt\eta}$.
    Hence, $P\rbra{\eta}-P\rbra{4\eta}>{3}/\rbra{8\sqrt\eta}$.
    By the mean-value theorem, there exists
    $\xi\in\rbra{\eta,4\eta}$ such that
    \begin{equation}
        \abs*{P'\rbra{\xi}}
        >\frac{1}{8\eta^{3/2}} > \frac{3K^2}{\sqrt{\eta}}.
        \label{eq:inverse-sqrt-derivative-lower}
    \end{equation}
    On the other hand, Markov brothers' inequality \cite{Mar90,Mar92} on
    $\sbra{\eta,1}$ gives
    \begin{align*}
        \sup_{x\in\sbra{\eta,1}}\abs*{P'\rbra{x}}
        \leq\frac{2K^2}{1-\eta}
        \sup_{x\in\sbra{\eta,1}}\abs*{P\rbra{x}} \leq\frac{2K^2}{1-\eta} \cdot 
        \frac{17}{16\sqrt\eta}
        <\frac{3K^2}{\sqrt\eta},
    \end{align*}
    contradicting
    \cref{eq:inverse-sqrt-derivative-lower}.
\end{proof}

\begin{lemma}[Moment matching for Uhlmann fidelity]
\label{lemma:distri-fidelity}
    Let $K\geq2$, $\eta=1/(32K^2)$,
    $1\leq M\leq4K^2$, and $\alpha=\eta M$.  There is a $\rbra{K+1}$-moment matching $\rbra{\nu_0,\nu_1}$ on $\sbra{0,M}$ with mean $\alpha \leq 1/8$ satisfying
    \begin{align*}
        \E_{Y\sim\nu_1}\sbra*{\sqrt Y}
        -\E_{Y\sim\nu_0}\sbra*{\sqrt Y}
        \geq
        \Delta_{\mathrm F}
        \coloneqq\frac{\sqrt M}{48K}.
    \end{align*}
\end{lemma}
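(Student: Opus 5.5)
This follows the template of \cref{lemma:distri-entropy}. By \cref{fact:mm} and \cref{lemma:inverse-square-root}, we first obtain a $K$-moment matching $\rbra{\mu_0,\mu_1}$ on $\sbra{\eta,1}$, labeled so that
\[
\E_{X\sim\mu_1}\sbra{X^{-1/2}}-\E_{X\sim\mu_0}\sbra{X^{-1/2}} = 2E_K\rbra{x^{-1/2},\sbra{\eta,1}} \geq \frac{1}{8\sqrt\eta}.
\]
We then set $\nu_b=\mathsf T_{\eta,\alpha}\rbra{\mu_b}$ with $\alpha=\eta M$. Since $\alpha/\eta=M$, the support of $\nu_b$ lies in $\sbra{0,M}$. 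By \cref{lemma:moment-lifting}, $\rbra{\nu_0,\nu_1}$ is a $\rbra{K+1}$-moment matching with mean $\alpha$. Finally, $\alpha=M/(32K^2)\leq 4K^2/(32K^2)=1/8$ because $M\leq 4K^2$.

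For the gap, apply \cref{eq:change-measure-functional} with $h\rbra{y}=\sqrt y$, noting $h\rbra 0=0$:
\[
\E_{Y\sim\nu_b}\sbra{\sqrt Y}=\eta\,\E_{X\sim\mu_b}\sbra*{\frac{\sqrt{\alpha X/\eta}}{X}}=\sqrt{\alpha\eta}\,\E_{X\sim\mu_b}\sbra{X^{-1/2}}.
\]
Taking the difference over $b$ gives at least
\[
\sqrt{\alpha\eta}\cdot\frac{1}{8\sqrt\eta}=\frac{\sqrt\alpha}{8}=\frac{\sqrt{M}}{8\sqrt{32}\,K}.
\]
Since $8\sqrt{32}\approx 45.25<48$, this is at least $\sqrt M/(48K)=\Delta_{\mathrm F}$.

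The argument is routine. The only points to check are that the orientation of $\mu_0,\mu_1$ from \cref{fact:mm} gives a positive difference (relabel if needed) and the constant comparison above, both of which are immediate.
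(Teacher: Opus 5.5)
Your proposal is correct and follows the paper's proof essentially step for step: the $K$-moment matching on $\sbra{\eta,1}$ from \cref{fact:mm} and \cref{lemma:inverse-square-root}, the lift $\nu_b=\mathsf T_{\eta,\alpha}\rbra{\mu_b}$, the identity $\E\sbra{\sqrt Y}=\sqrt{\alpha\eta}\,\E\sbra{X^{-1/2}}$, and the comparison $8\sqrt{32}<48$. Your explicit check that $\alpha\leq 1/8$ follows from $M\leq 4K^2$ fills in a detail the paper leaves implicit.
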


\begin{proof}
    By \cref{fact:mm,lemma:inverse-square-root}, there is a $K$-moment matching $\rbra{\mu_0, \mu_1}$ on $\sbra{\eta,1}$ satisfying
    \begin{align*}
        \E_{X\sim\mu_1}\sbra*{X^{-1/2}}
        -\E_{X\sim\mu_0}\sbra*{X^{-1/2}}
        \geq\frac{1}{8\sqrt\eta}.
    \end{align*}
    Let $\nu_b=\mathsf T_{\eta,\alpha}\rbra{\mu_b}$.
    Since $\alpha/\eta=M$, $\nu_b$ is on $\sbra{0,M}$.  
    By \cref{lemma:moment-lifting}, $\rbra{\nu_0, \nu_1}$ is a $\rbra{K+1}$-moment matching with mean $\alpha$.
    Direct calculation shows
    \begin{align*}
        \E_{Y\sim\nu_b}\sbra*{\sqrt Y}
        =\E_{X\sim\mu_b}\sbra*{
        \frac{\eta}{X}\sqrt{\frac{\alpha X}{\eta}}}
        =\sqrt{\alpha\eta}
        \E_{X\sim\mu_b}\sbra*{X^{-1/2}},
    \end{align*}
    which gives
    \begin{align*}
        \E_{Y\sim\nu_1}\sbra*{\sqrt Y}
        -\E_{Y\sim\nu_0}\sbra*{\sqrt Y}
        \geq\sqrt{\alpha\eta}\cdot\frac{1}{8\sqrt\eta}
        =\frac{\sqrt\alpha}{8}
        =\frac{\sqrt M}{8\sqrt{32}K}
        \geq\frac{\sqrt M}{48K}.
    \end{align*}
\end{proof}

\begin{theorem}[Sample lower bound for Uhlmann fidelity estimation]
\label{cor:fidelity-lower-bound}
    There is a universal constant $\varepsilon_0>0$ such that, for all sufficiently large $d$ and every $0<\varepsilon\leq\varepsilon_0$, estimating $\mathrm F\rbra{\rho,\mathbb{I}/d}$
    to within additive error $\varepsilon$ with success probability at least $2/3$ requires
    \begin{equation*}
        \Omega\rbra*{
        \frac{d^2}{\varepsilon^2\log^2\rbra{d}
        \max\cbra{1,\varepsilon^2\log^2\rbra{d}}}}
    \end{equation*}
    samples of $\rho$.
\end{theorem}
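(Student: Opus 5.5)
We follow the template of \cref{cor:entropy-lower-bound}. We apply \cref{thm:master-functional}\eqref{main-item2} with $\phi_{\mathrm F}(x)=\sqrt{x}$, using the moment matching of \cref{lemma:distri-fidelity}, and we pad with a maximally mixed block so as to convert the gap $\Theta(\sqrt{M}/K)$ into a gap of order $\varepsilon$.

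Let $C,c$ be the constants in \cref{thm:master-functional} and set
\[
d'=\floor{d/2},\qquad K=\ceil{C\log(d+2)},\qquad L=K+1,\qquad M=\max\cbra{1,(\kappa\varepsilon K)^2},
\]
for a suitable constant $\kappa$. With $\varepsilon_0$ small we get $M\le 4K^2$. The matching on $I=[0,M]$ has mean $M/(32K^2)\le 1/8$ and gap $g_\phi\ge\Delta_{\mathrm F}=\sqrt M/(48K)$. Next we check the three conditions.
\begin{itemize}
\item \cref{cond1} holds since $L\ge C\log d'$.
\item \cref{cond2} holds since $(1+M)^2=O(\log^4 d)=o(d'/\log d')$.
\item For \cref{cond3}, we have $A_\phi\le\sqrt M$ and $L_\phi\le 1/\sqrt2$ on $[1/2,4/3]$. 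Hence $(A_\phi+ML_\phi)/g_\phi=O(K(1+\sqrt M))=O(K^2)=o(\sqrt{d'})$.
\end{itemize}

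Next we handle the reduction. The functional $\mathrm F(\cdot,\mathbb I/d)$ is not directly $\mathcal L_\phi$ of the padded state, so we compute it directly. Take
\[
\widetilde\rho=p\rho\oplus(1-p)\frac{\mathbb I_{d-d'}}{d-d'}.
\]
Then
\[
\mathrm F(\widetilde\rho,\mathbb I/d)=\frac{1}{\sqrt d}\tr\sqrt{\widetilde\rho}=\sqrt{\frac{pd'}{d}}\,\mathcal L_{\sqrt x}(\rho)+\sqrt{\frac{(1-p)(d-d')}{d}},
\]
using $\tr\sqrt{\rho}=\sqrt{d'}\mathcal L_{\sqrt x}(\rho)$. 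This is an affine image of $\mathcal L_{\sqrt x}(\rho)$ with slope $\sqrt{pd'/d}\asymp\sqrt p$. Define $J_b$ as the corresponding affine image of $I_b$. Then $\operatorname{dist}(J_0,J_1)\gtrsim\sqrt p\,g_\phi/8\gtrsim\sqrt{pM}/K$. We choose
\[
p=\Theta\rbra*{\frac{\varepsilon^2K^2}{M}}\le\frac12,
\]
so that this distance exceeds $2\varepsilon$. If $M=1$, then $p=\Theta(\varepsilon^2K^2)$, which requires $\varepsilon K$ small; otherwise $p$ is a small constant set by $\kappa$.

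By \cref{thm:master-functional}\eqref{main-item2}, the sample complexity is
\[
\Omega\rbra*{\frac{d'^2}{pM^2}}=\Omega\rbra*{\frac{d^2}{\varepsilon^2K^2M}}=\Omega\rbra*{\frac{d^2}{\varepsilon^2\log^2 d\,\max\cbra{1,\varepsilon^2\log^2 d}}}.
\]

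The main obstacle is the constant bookkeeping. We must balance $\kappa$ against the $1/48$ in $\Delta_{\mathrm F}$ so that $p\le 1/2$ and $M\le 4K^2$ hold simultaneously. We must also verify that the slope factor $\sqrt{d'/d}\ge 1/2$ does not destroy the gap.
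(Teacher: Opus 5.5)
Your proposal is correct and follows the paper's proof essentially step for step. You use the same $d'=\floor{d/2}$ and $K=\ceil{C\log(d+2)}$, the same moment matching from \cref{lemma:distri-fidelity}, the same maximally mixed padding with $\mathrm{F}(\widetilde\rho,\mathbb{I}/d)=\sqrt{pd'/d}\,\mathcal{L}_{\sqrt{x}}(\rho)+\sqrt{(1-p)(d-d')/d}$, and the same scaling $M=\max\cbra{1,\Theta(\varepsilon^2K^2)}$, $p=\Theta(\varepsilon^2K^2/M)$ fed into \cref{thm:master-functional}\eqref{main-item2}. The paper just fixes your constant bookkeeping concretely as $\varepsilon_0=2^{-10}$, $M=\max\cbra{1,2^{22}\varepsilon^2K^2}$, and $p=2^{21}\varepsilon^2K^2/M$, which gives $M\le 4K^2$ and $p\le 1/2$ simultaneously.
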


\begin{proof}
Let $C, c > 0$ be the constants in \cref{thm:master-functional}. 
Take $\varepsilon_0=2^{-10}$.
Set
\[
    d'=\floor*{\frac d2},
    \qquad
    K=\ceil*{C\log\rbra{d+2}},
    \qquad
    L=K+1,\qquad
    \eta = \frac{1}{32K^2},
\]
\[
    M=\max\cbra{1,2^{22}\varepsilon^2K^2},
    \qquad
    \alpha = \eta M,
    \qquad
    p=\frac{2^{21}\varepsilon^2K^2}{M}.
\]
Since \(\varepsilon\leq2^{-10}\), one has $1\leq M\leq4K^2$ and $0<p\leq1/2$.
Let \(\rbra{\nu_0,\nu_1}\) be $L$-moment matching in \cref{lemma:distri-fidelity}.  
Apply \cref{thm:master-functional} with $d \coloneqq d'$, $K \coloneqq L$, and $\phi\rbra{x}\coloneqq\phi_{\mathrm{F}}\rbra{x}=\sqrt{x}$.
It can be verified that the conditions in \cref{thm:master-functional}, \cref{cond1,cond2,cond3}, hold. 

Let $\Pi_0, \Pi_1$ and $I_0, I_1$ be the probability distributions and intervals, respectively, specified in \cref{thm:master-functional}. 
For $\rho \sim \Pi_b$, define
\[
\widetilde\rho = p\rho \oplus \rbra{1-p} \frac{\mathbb{I}_{d-d'}}{d-d'}.
\]
Note that
\[
\mathrm{F}\rbra*{\widetilde{\rho}, \frac{\mathbb{I}_d}{d}} = \sqrt{\frac{pd'}{d}} \mathcal{L}_{\phi_{\mathrm{F}}}\rbra{\rho} + \sqrt{\frac{\rbra{1-p}\rbra{d-d'}}{d}}.
\]
Take
\[
J_b = \sqrt{\frac{pd'}{d}} I_b + \sqrt{\frac{\rbra{1-p}\rbra{d-d'}}{d}}.
\]
It can be seen that $\Pr_{\rho \sim \Pi_b} \sbra{\mathrm{F}\rbra{\widetilde{\rho}, \mathbb{I}_d/d} \in J_b} \geq 0.99$ and $\operatorname{dist}\rbra{J_0, J_1} > 2\varepsilon$. 
By \cref{thm:master-functional}\eqref{main-item2}, estimating $\mathrm{F}\rbra{\widetilde{\rho}, \mathbb{I}_d/d}$ to within additive error $\varepsilon$ requires
\[
\Omega\rbra*{\frac{d'^2}{pM^2}} = \Omega\rbra*{ \frac{d^2}{\varepsilon^2\log^2\rbra d \max\cbra{1,\varepsilon^2\log^2\rbra d}}}
\]
samples of $\widetilde\rho$. 
\end{proof}

\subsection{Rank Testing}

For an integer \(r\geq1\), define the distance of $\rho$ from quantum states of rank at most $r$ by
\begin{equation}
    \Delta_r\rbra\rho
    \coloneqq
    \inf_{\rank\rbra\sigma\leq r}
    \mathrm T\rbra{\rho,\sigma} = \sum_{j>r}\lambda_j^\downarrow\rbra\rho,
    \label{eq:rank-distance-definition}
\end{equation}
where \(\lambda_1^\downarrow\rbra\rho\geq\cdots\geq
\lambda_d^\downarrow\rbra\rho\) are the eigenvalues of $\rho$. 

\begin{lemma}[Moment matching for rank testing, cf. {\cite[Appendix A]{WY19}}]
\label{lemma:rank-support-matching}
For every integer \(K\geq2\), let \(M=100K^2\).  There is a $K$-moment matching \(\rbra{\mu_0,\mu_1}\) on $\cbra{0}\cup\sbra{1,M}$ with mean $1$
such that
\begin{equation}
    \Pr_{Y\sim\mu_1}\sbra{Y>0}
    -\Pr_{Y\sim\mu_0}\sbra{Y>0}
    \geq\frac14.
    \label{eq:rank-matching-support-gap}
\end{equation}
\end{lemma}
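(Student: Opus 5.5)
We obtain the pair by the same duality-plus-lifting route as for the von Neumann entropy (\cref{lemma:distri-entropy}), applied to the function $g(x)=1/x$ on $[\eta,1]$.

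\textbf{Step 1: approximation bound.} Let $\eta=1/M=1/(100K^2)$. We claim
\[
E_{K-1}\rbra{1/x,\sbra{\eta,1}}\geq \frac{c}{\eta}
\]
for a constant $c\ge 1/4$. To see this, suppose a polynomial $P$ of degree at most $K-1$ satisfies $\abs{P-1/x}<e$ on $[\eta,1]$ with $e\le 1/(4\eta)$. Then $P(\eta)-P(2\eta)\ge 1/(2\eta)-2e$, so some $\xi\in(\eta,2\eta)$ has
\[
\abs{P'(\xi)}\gtrsim \eta^{-2}.
\]
On the other hand, $\Abs{P}_\infty\le 1/\eta+e$ on $[\eta,1]$, and Markov brothers' inequality gives $\Abs{P'}_\infty\le 2K^2(1/\eta+e)/(1-\eta)=O(\eta^{-1}K^2)$. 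Since $\eta^{-1}=100K^2$, the constant $100$ makes this a contradiction (the same pattern as \cref{lemma:inverse-square-root}). One must check that the constants work out to $e\ge 1/(8\eta)$, say.

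\textbf{Step 2: signed measure.} By \cref{fact:mm} with $f=1/x$ and degree $K-1$, there is a $(K-1)$-moment matching $(\mu'_0,\mu'_1)$ on $[\eta,1]$ with
\[
\E_{\mu'_1}[1/X]-\E_{\mu'_0}[1/X]=2E_{K-1}\ge 1/(4\eta),
\]
after swapping labels if needed.

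\textbf{Step 3: lifting.} Apply $\mathsf T_{\eta,1}$ from \cref{lemma:moment-lifting} with $a=1$. Then $Y$ is supported on $\{0\}\cup[1,1/\eta]=\{0\}\cup[1,M]$, has mean $1$, and the pair of laws is a $K$-moment matching. By \cref{eq:change-measure-functional} with $h=\mathbf 1_{\{y>0\}}$,
\[
\Pr[Y>0]=\eta\,\E[1/X].
\]
Hence the gap in $\Pr[Y>0]$ is $\eta$ times the gap in $\E[1/X]$, which is at least $1/4$ provided Step 1 yields $E_{K-1}\ge 1/(8\eta)$.

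\textbf{Main obstacle.} The bottleneck is getting the absolute constant $1/4$, i.e.\ $E_{K-1}(1/x,[\eta,1])\ge 1/(8\eta)$. Note that $\eta\E[1/X]\in[\eta,1]$ means the gap is at most $1$, so we need a quantitatively sharp Markov-type argument. I would compare $P$ at $\eta$ and $t\eta$ with $t$ chosen optimally, and use the Bernstein-type refinement near the endpoint if plain Markov is too lossy. The factor $100$ in $M=100K^2$ gives slack to absorb these constants. As a fallback, I would cite the explicit construction in \cite[Appendix A]{WY19}, which gives exactly this support-size moment matching.
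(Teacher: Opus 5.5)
Your proposal is correct and is essentially the paper's proof up to the rescaling $x\mapsto x/M$: the paper shows $E_{K-1}(1/x,[1,M])\ge 1/8$ by comparing $P(1)$ with $P(4)$ and applying plain Markov, takes the $(K-1)$-moment matching on $[1,M]$ from \cref{fact:mm}, and lifts via $Y=X$ with probability $1/X$, which is exactly your $\mathsf T_{\eta,1}$. The constant check you flag as the main obstacle is routine and needs no Bernstein-type refinement. With $e=1/(8\eta)$, the mean value theorem gives $|P'(\xi)|>1/(4\eta^2)$, while Markov gives $\|P'\|_\infty\le \frac{9K^2}{4\eta(1-\eta)}=\frac{9}{400\eta^2(1-\eta)}<\frac{1}{4\eta^2}$.
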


\begin{proof}
We first prove a concrete approximation lower bound.  Suppose that a polynomial \(P\) of degree at most \(K-1\) satisfies $\sup_{x\in[1,M]}\abs*{P\rbra x-1/x}<1/8$.
Then $P\rbra1>7/8$ and $P\rbra4<1/4+1/8=3/8$.
Hence \(P\rbra1-P\rbra4>1/2\), and the mean-value theorem gives a point \(\xi\in(1,4)\) with $\abs{P'\rbra\xi}>1/6$.
On the other hand, Markov brothers' inequality \cite{Mar90,Mar92} on \([1,M]\) gives
\begin{align*}
    \sup_{x \in \sbra{1, M}} \abs{P'\rbra{x}}
    \leq\frac{2\rbra{K-1}^2}{M-1}
    \sup_{x \in \sbra{1,M}} \abs{P\rbra{x}} 
    \leq\frac{2K^2}{100K^2-1}\cdot\frac98
    <\frac16,
\end{align*}
which gives a contradiction. Thus
\begin{equation*}
    E_{K-1}\rbra*{1/x,\sbra{1,M}}
    \geq\frac18.
\end{equation*}
By \cref{fact:mm}, there is a $\rbra{K-1}$-moment matching \(\nu_0,\nu_1\) on \([1,M]\) such that
\begin{equation*}
    \E_{X\sim\nu_1}\sbra{1/X}
    -\E_{X\sim\nu_0}\sbra{1/X}
    \geq\frac14.
\end{equation*}
For \(b\in\cbra{0,1}\), define \(Y\) by putting, conditionally on \(X\sim\nu_b\),
\[
    Y=
    \begin{cases}
        X,&\text{with probability }1/X,\\
        0,&\text{with probability }1-1/X.
    \end{cases}
\]
Let $\mu_b$ be the probability distribution of $Y$ when $X \sim \nu_b$. 
Then, $\rbra{\mu_0, \mu_1}$ is a $K$-moment matching with mean $1$, and $\Pr_{Y \sim \mu_b}\sbra{Y > 0} = \E_{X \sim \nu_b}\sbra{1/X}$ yields the proof.
\end{proof}

\begin{lemma}[Rank-testing transfer]
\label{cor:master-rank-transfer}
There exist universal constants $C,c>0$ such that the following holds. 
Let $d$ be sufficiently large, and set $q=\floor*{d/4}$.
Suppose that $(\nu_0,\nu_1)$ is a $K$-moment matching on $\cbra{0} \cup [\ell,b_0]$ with mean $\leq 1$ for some $0 < \ell \leq 1/2$ and $\gamma = \pi_1 - \pi_0 \geq 1/4$, where $\pi_b=\Pr_{X\sim\nu_b}\sbra{X>0}$. 
Denote $M=\max\cbra{1, b_0-1}$,
and suppose that $K\geq C\log \rbra{d}$ and $(1+M)^2\log \rbra{d}\leq cd$.  
Let $r_0$ be an integer satisfying
\begin{equation}
    d-q+\pi_0q+\frac{\gamma q}{4}+1
    \leq r_0
    \leq
    d-q+\pi_1q-\frac{\gamma q}{4}+1.
    \label{eq:rank-transfer-threshold}
\end{equation}
Then, for every $0<p\leq1/2$, distinguishing whether an unknown $(d+1)$-dimensional quantum state $\widetilde\rho$ satisfies $\rank\rbra{\widetilde\rho}\leq r_0$ or $\Delta_{r_0}\rbra{\widetilde{\rho}} \geq cp\ell$ requires $\Omega\rbra{{d^2}/\rbra{pM^2}}$ samples of $\widetilde\rho$.
\end{lemma}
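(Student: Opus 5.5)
We build on the construction of \cref{def:pi-01} in dimension $d$ with $q=\floor{d/4}$, $r=d-q$, and reuse the indistinguishability part of the proof of \cref{thm:master-functional}. The support $\cbra{0}\cup[\ell,b_0]$ lies in $[0,b_0]$, and $b_0\le 1+M\le d/12$ by the assumption on $M$. So \cref{lemma:q-moment-matching-multi-direction} applies with $\sup_{x}\abs{x-1}\le M$. Choosing $C$ large and $c$ small as in \cref{eq:second-neq} gives
\[
\mathrm T\rbra*{\E_{\Pi_0}\sbra{\rho^{\otimes n}},\E_{\Pi_1}\sbra{\rho^{\otimes n}}}\le 1/\poly\rbra{d} \quad\text{for all } n\le n_0=\floor{cd^2/M^2}.
\]
We then embed: for $\rho\sim\Pi_b$ set $\widetilde\rho=p\rho\oplus(1-p)\ketbra00$, a $(d+1)$-dimensional state. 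The binomial-thinning argument in the proof of \cref{thm:master-functional}\eqref{main-item2} goes through verbatim with the success events replaced by rank events. It yields the $\Omega\rbra{d^2/(pM^2)}$ bound once we show that $\widetilde\rho$ separates the two hypotheses with probability $\ge0.99$ under $\Pi_0$ and $\Pi_1$ respectively.

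On the event $E=\cbra{S\le d/2}$, whose complement has probability $\le\exp(-d/(96b_0^2))$ by \cref{eq:hoeffding1}, the eigenvalues of $\rho$ are $X_i/d$ for $i\le q$ and $(d-S)/(rd)>0$ with multiplicity $r$. Hence
\[
\rank\rbra{\widetilde\rho}=1+r+N,\qquad N=\#\cbra{i:X_i>0}\sim\operatorname{Bin}\rbra{q,\pi_b}.
\]
Hoeffding gives $\abs{N-\pi_bq}\le\gamma q/4$ with probability $1-2e^{-\gamma^2q/8}\ge 0.999$. Under $\Pi_0$ this yields $\rank\rbra{\widetilde\rho}\le r+\pi_0q+\gamma q/4+1\le r_0$ by the left inequality in \cref{eq:rank-transfer-threshold}.

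Under $\Pi_1$ we get $\rank\rbra{\widetilde\rho}\ge r+\pi_1q-\gamma q/4+1\ge r_0$. We need the stronger statement $\Delta_{r_0}\rbra{\widetilde\rho}\ge cp\ell$, i.e., a lower bound on the mass of the eigenvalues beyond the $r_0$ largest. Every nonzero eigenvalue of $\widetilde\rho$ is at least $\min\cbra{p\ell/d,\;p(d-S)/(rd)}\ge p\ell/(2d)$, using $(d-S)/r\ge 1/2\ge\ell$ on $E$. It therefore suffices that $\rank\rbra{\widetilde\rho}-r_0=\Omega(d)$.

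This is the main obstacle. As written, \cref{eq:rank-transfer-threshold} only guarantees a gap $\ge0$ at the upper endpoint, whereas we need a linear gap. I would handle it by concentrating $N$ more tightly, say within $\gamma q/8$, which still holds with probability $1-e^{-\Omega(q)}$. The upper constraint on $r_0$ then leaves $\rank\rbra{\widetilde\rho}-r_0\ge\gamma q/8\ge q/32=\Omega(d)$. The resulting tail mass is $\ge(q/32)\cdot p\ell/(2d)\ge cp\ell$ for small $c$, as required; under $\Pi_0$ the tighter concentration only strengthens the bound $\rank\rbra{\widetilde\rho}\le r_0$.

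Finally, any tester using $n\le c n_0/p$ samples would distinguish $\Pi_0$ from $\Pi_1$ with advantage contradicting Helstrom--Holevo, exactly as in \cref{eq:psucc}. Since $\ell\le 1/2$ and $\gamma\ge1/4$ are absolute, all constants are universal.
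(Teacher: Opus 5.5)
Your proposal is correct and follows essentially the same route as the paper. Both use the \cref{def:pi-01} construction with $q=\floor{d/4}$, the indistinguishability estimate from \cref{eq:second-neq}, and concentration of $N_b$ to within $\gamma q/8$; these are exactly the paper's events $E_0,E_1$, and this is how the paper gets the linear gap $\rank-r_0\geq\gamma q/8$ you flagged as the obstacle. Both then use the lower bound $\ell/d$ on the nonzero eigenvalues of $\rho$ to get tail mass $\Omega(p\ell)$, and finish with the binomial-thinning plus Helstrom--Holevo contradiction. The only cosmetic difference is that the paper bounds $\Delta_{r_0-1}\rbra{\rho}$ via Ky Fan's principle and scales by $p$, while you count the tail eigenvalues of $\widetilde\rho$ directly.
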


\begin{proof}
Let $r=d-q$.  For sufficiently small $c$ and sufficiently large $d$, the assumptions imply $d/6\leq q\leq d/3$ and $b_0\leq d/12$.  Let $\Pi_0,\Pi_1$ be the probability distributions given by \cref{def:pi-01}.  The calculation in \cref{eq:second-neq} (in the proof of \cref{thm:master-functional}), which uses only the two assumptions $K\geq C\log \rbra{d}$ and $(1+M)^2\log \rbra{d}\leq cd$, gives
\begin{equation}
    \mathrm T\rbra*{
    \E_{\rho\sim\Pi_0}\sbra{\rho^{\otimes n}},
    \E_{\rho\sim\Pi_1}\sbra{\rho^{\otimes n}}}
    \leq0.01,
    \qquad
    1\leq n\leq n_0\coloneqq\floor*{\frac{cd^2}{M^2}}.
    \label{eq:rank-transfer-closeness}
\end{equation}
For $b \in \cbra{0, 1}$, let $X_1,\ldots,X_q \sim \nu_b$, and set $N_b=\sum_{i=1}^q\mathbf1_{\{X_i>0\}}$ and $F_b=\{\sum_{i=1}^qX_i\leq d/2\}$.  
Define
$E_0=F_0\cap\cbra*{N_0\leq\pi_0q+{\gamma q}/{8}}$ and $E_1=F_1\cap\cbra*{N_1\geq\pi_1q-{\gamma q}/{8}}$.
Hoeffding's inequality, together with $q=\Theta(d)$, $\gamma\geq1/4$, and $(1+M)^2\log d\leq cd$, gives $\Pr_{\rho\sim\Pi_b}\sbra{E_b}\geq0.99$.

On $F_b$, the rank of the state $\rho \sim \Pi_b$ is $\rank\rbra\rho=r+N_b$.  
By \cref{eq:rank-transfer-threshold}, on $E_0$,
\[
    \rank\rbra\rho+1
    \leq r+\pi_0q+\frac{\gamma q}{8}+1
    \leq r_0.
\]
On $E_1$, $\rank\rbra{\rho} - \rbra{r_0-1} = N_1-(r_0-1-r)\geq\gamma q/8$ and any non-zero eigenvalue of $\rho$ is at least $\ell/d$; hence, Ky Fan's maximum principle \cite{Fan49} (cf.\ \cite[Exercise II.1.13]{Bha97}) gives $\Delta_{r_0-1}\rbra{\rho} \geq \frac{\gamma q \ell}{8d} \geq c \ell$. 

For $\rho\sim\Pi_b$, define 
\[
\widetilde\rho=p\rho\oplus(1-p)\ketbra00.
\]
On $E_0$, $\rank\rbra{\widetilde\rho}\leq r_0$.  
On $E_1$, again by Ky Fan's maximum principle, we have $\Delta_{r_0}\rbra{\widetilde\rho} \geq cp\ell$. 
Suppose that a tester uses $n\leq cn_0/p$ samples and distinguishes whether an unknown quantum state $\widetilde\rho$ has rank $\leq r_0$ or $\Delta_{r_0}\rbra{\widetilde\rho} \geq cp\ell$ with success probability at least $2/3$. 
Then, using this tester, we can distinguish $b \in \cbra{0, 1}$ with success probability $p_{\textup{succ}} \geq \frac{2}{3} \cdot \Pr_{\rho \sim \Pi_b}\sbra{E_b} \geq 0.66$. 
On the other hand, \cref{eq:rank-transfer-closeness}, with Markov's inequality, gives
\begin{align}
    \mathrm T\rbra*{
    \E_{\rho\sim\Pi_0}\sbra{\widetilde\rho^{\otimes n}},
    \E_{\rho\sim\Pi_1}\sbra{\widetilde\rho^{\otimes n}}} 
    & = \E_{J\sim\operatorname{Bin}\rbra{n,p}}
    \sbra*{\mathrm T\rbra*{
    \E_{\rho\sim\Pi_0}\sbra*{\rho^{\otimes J}},
    \E_{\rho\sim\Pi_1}\sbra*{\rho^{\otimes J}}}} \nonumber \\
    &\leq0.01+\Pr_{J\sim\operatorname{Bin}(n,p)}\sbra{J>n_0}
    \leq0.01+c. \label{eq:direct-sum}
\end{align}
Hence the Helstrom--Holevo bound \cite{Hel67,Hol73} gives $p_{\mathrm{succ}}\leq\frac12(1+0.01+c)$, which contradicts $p_{\textup{succ}} \geq 0.66$ for sufficiently small $c > 0$. 
Therefore,
\[
    n=\Omega\rbra*{\frac{n_0}{p}}
    =\Omega\rbra*{\frac{d^2}{pM^2}}.
\]
\end{proof}

\begin{theorem}[Sample lower bound for rank testing]
\label{cor:rank-lower-bound}
There are universal constants \(c,\varepsilon_0>0\) such that the following holds.  
For sufficiently large \(r\) and \(0<\varepsilon\leq\varepsilon_0\), distinguishing whether $\rank\rbra\rho\leq r$ or $\Delta_r\rbra\rho\geq\varepsilon$
with success probability at least \(2/3\) requires
\begin{equation*}
    \Omega\rbra*{
    \frac{r^2}{\varepsilon\log^2\rbra r\,
    \max\cbra{1,\varepsilon\log^2\rbra r}}}
\end{equation*}
samples of $\rho$.
\end{theorem}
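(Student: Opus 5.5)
The plan is to apply \cref{cor:master-rank-transfer} with a rescaled version of the moment matching from \cref{lemma:rank-support-matching}, choosing the scale so that the number of nonzero entries and the smallest nonzero value trade off against $\varepsilon$. This mirrors the von Neumann and fidelity arguments.

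\textbf{Parameters.} Let $C,c$ be the constants of \cref{cor:master-rank-transfer}. Take the working dimension $d=\Theta(r)$, with the exact relation fixed below, and set $K=\ceil{C\log(d+2)}$. Let $(\mu_0,\mu_1)$ be the $K$-moment matching of \cref{lemma:rank-support-matching} on $\cbra{0}\cup[1,100K^2]$ with mean $1$ and support gap $\pi_1-\pi_0\geq 1/4$.

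\textbf{Rescaling.} For a scale $t\in(0,1/2]$, let $\nu_b$ be the law of $tY$ with $Y\sim\mu_b$. Then:
\begin{itemize}
\item $(\nu_0,\nu_1)$ is a $K$-moment matching on $\cbra{0}\cup[t,100tK^2]$;
\item its mean is $t\leq 1$;
\item the probabilities $\pi_b=\Pr[X>0]$ are unchanged, so $\gamma\geq1/4$.
\end{itemize}
Thus $\ell=t$ and $b_0=100tK^2$. Choose $t=\max\cbra{1/(100K^2),\,\varepsilon'}$ for a suitable $\varepsilon'$ to be fixed, and $p=\Theta(\varepsilon/t)\leq 1/2$. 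Then $M=\max\cbra{1,b_0-1}=\Theta(tK^2)$ when $tK^2\gtrsim1$, and $M=\Theta(1)$ otherwise, so in all cases $M=\Theta(\max\cbra{1,tK^2})$.

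We need $cp\ell=cpt\geq\varepsilon$, which fixes $p=\varepsilon/(ct)$. For $p\leq1/2$ we need $t\geq 2\varepsilon/c$, so take $t=\max\cbra{2\varepsilon/c,\,1/K^2}$. The conditions $K\geq C\log d$ and $(1+M)^2\log d=O(\log^5 d)\leq cd$ hold for large $d$, using $t\leq1/2$ (valid for $\varepsilon\leq\varepsilon_0$).

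\textbf{Applying the transfer lemma.} Choose an integer $r_0$ in the window \eqref{eq:rank-transfer-threshold}. The window has width $\gamma q/2-O(1)=\Omega(d)$, so such an $r_0$ exists. Moreover $r_0=\Theta(d)$, since $d-q\leq r_0\leq d+1$.

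\cref{cor:master-rank-transfer} gives a lower bound of $\Omega(d^2/(pM^2))$. With $p\asymp\varepsilon/t$ and $M\asymp\max\cbra{1,tK^2}$, this is
\[
\frac{d^2}{pM^2}\asymp\frac{d^2 t}{\varepsilon\max\cbra{1,tK^2}^2}.
\]
There are two cases.
\begin{itemize}
\item If $\varepsilon K^2\lesssim1$, then $t=1/K^2$ and the bound is $d^2/(\varepsilon K^2)$.
\item Otherwise $t\asymp\varepsilon$, and the bound is $d^2/(\varepsilon^2K^4)$.
\end{itemize}
Together these give $\Omega\bigl(d^2/(\varepsilon\log^2 d\,\max\cbra{1,\varepsilon\log^2 d})\bigr)$.

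\textbf{Main obstacle.} The difficulty is matching the statement's threshold $r$ exactly, since the transfer lemma only guarantees some $r_0$ in a window depending on $d$. Given the target $r$, I would choose $d$ as a function of $r$ so that $r$ itself lies in the window \eqref{eq:rank-transfer-threshold}. As $d$ increases by one, the window endpoints move by $O(1)$, while the window width is $\Omega(d)$. So for every sufficiently large $r$ some $d=\Theta(r)$ works, which gives $\log d=\Theta(\log r)$.

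Finally, the instance $\widetilde\rho$ has dimension $d+1$. This is harmless, because a rank-testing lower bound holds in any ambient dimension: the tester is required to work on every state.
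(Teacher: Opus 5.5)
Your proposal follows the paper's proof closely. It uses the same rescaling of the matching from \cref{lemma:rank-support-matching} by $t=\max\{\Theta(K^{-2}),2\varepsilon/c\}$, the same $p=\varepsilon/(ct)$ so that $cp\ell=\varepsilon$, the same invocation of \cref{cor:master-rank-transfer} with $r_0=r$, and the same two-case evaluation of $d^2/(pM^2)$.

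\textbf{The gap.} The step that does not go through as written is the one you flagged as the main obstacle. You define $K=\lceil C\log(d+2)\rceil$ in terms of the working dimension $d$. Then the matching $(\mu_0,\mu_1)$, and with it $\pi_0$, $\pi_1$ and $\gamma$, changes whenever $K$ changes. Your claim that the window \eqref{eq:rank-transfer-threshold} moves by only $O(1)$ when $d$ increases by one therefore fails at the values of $d$ where $K$ jumps. \cref{lemma:rank-support-matching} says nothing about how $\pi_b$ depends on $K$, so the window can shift by $\Theta(d)$ at such a jump. Your overlapping-windows argument is valid only inside a single range of $d$ on which $K$ is constant, which is only a constant-factor range of $d$. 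Nothing guarantees that consecutive ranges give overlapping windows, so some large $r$ might not land in any window.

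\textbf{The fix.} Remove the circular dependence, as the paper does, by fixing $K$ in terms of $r$ alone. For example, $K=\lceil 2C\log(r+2)\rceil$ satisfies $K\geq C\log d$ for every $d\leq (r+2)^2$, and hence for all $d=\Theta(r)$. Once the matching, and hence $\pi_0,\pi_1$, is fixed, your covering argument is valid.

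A more direct alternative, as in the paper, avoids the covering argument entirely. Let $\pi_*=(\pi_0+\pi_1)/2$, take $q$ to be the integer nearest to $(r-1)/(3+\pi_*)$, and set $d=4q$. Then $r$ lies within $2$ of the window's center $3q+\pi_* q+1$, while the window's half-width is $\gamma q/4\geq q/16$.

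The rest of your proposal is fine, including the remark that the extra dimension of $\widetilde\rho$ is harmless.
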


\begin{proof}
Let $C,c>0$ be the universal constants in Corollary~\ref{cor:master-rank-transfer}.  Take $\varepsilon_0=c/4$ and set
\[
    K=\ceil*{2C\log\rbra{r+2}},
    \qquad
    t=\max\cbra*{\frac1{100K^2},\frac{2\varepsilon}{c}},
    \qquad
    p=\frac{\varepsilon}{ct}.
\]
Then, $0<t\leq1/2$, $0<p\leq1/2$, and $cpt=\varepsilon$.

Let $(\mu_0,\mu_1)$ be the $K$-moment matching from Lemma~\ref{lemma:rank-support-matching}.  Denote $\pi_b=\Pr_{X\sim\mu_b}\sbra{X>0}$, $\gamma=\pi_1-\pi_0\geq1/4$, and $\pi_*=(\pi_0+\pi_1)/2$.  Choose an integer $q$ nearest to $(r-1)/(3+\pi_*)$ and set $D=4q=\Theta(r)$.
Then, $D+1\leq2r+1$, and $\abs*{r-1-(3+\pi_*)q}\leq2$.  Since $\gamma\geq1/4$, for all sufficiently large $r$,
\begin{equation}
    3q+\pi_0q+\frac{\gamma q}{4}+1
    \leq r
    \leq
    3q+\pi_1q-\frac{\gamma q}{4}+1.
    \label{eq:rank-threshold-location}
\end{equation}

For $b\in\cbra{0,1}$, let $\nu_b$ be the distribution of $tX$ when $X\sim\mu_b$.  Then $(\nu_0,\nu_1)$ is a $K$-moment matching on $I=[0,100tK^2]$ with mean $t\leq1$, and $\supp\rbra{\nu_0}\cup\supp\rbra{\nu_1}\subseteq\cbra0\cup\sbra{t,100tK^2}$.
Since $t\geq1/(100K^2)$, $M=\sup_{x\in I}\abs{x-1}=\Theta(tK^2)$.  Moreover, $K\geq C\log \rbra{D}$ and $(1+M)^2\log \rbra{D}=O(\log^5r)=o(D)$.
By Corollary~\ref{cor:master-rank-transfer} with $d=D$, $r_0=r$, and $\ell=t$, distinguishing whether $\rank\rbra{\widetilde\rho} \leq r$ or $\Delta_{r}\rbra{\widetilde\rho} \geq cp\ell = \varepsilon$ requires 
\begin{align*}
    \Omega\rbra*{\frac{D^2}{pM^2}}
    =\Omega\rbra*{\frac{r^2}{\varepsilon tK^4}}
    =\Omega\rbra*{
    \frac{r^2}{\varepsilon \log^2\rbra{r}
    \max\cbra{1,\varepsilon \log^2\rbra{r}}}}
\end{align*}
samples of $\widetilde\rho$. 
\end{proof}

\subsection{Schmidt Rank Testing}

The Schmidt rank of a bipartite pure state $\ket{\psi} \in \mathcal{H}_{\mathsf{A}} \otimes \mathcal{H}_{\mathsf{B}} \simeq \mathbb{C}^d \otimes \mathbb{C}^d$ is defined to be the rank of its reduced density operator, i.e., $\operatorname{SR}\rbra{\ket{\psi}} = \rank\rbra{\tr_{\mathsf{B}}\rbra{\ketbra{\psi}{\psi}}}$.

\begin{theorem}[Sample lower bound for Schmidt rank testing]
\label{cor:mps-lower-bound}
There are universal constants $c,\varepsilon_0>0$ such that the following holds.  
For sufficiently large $r$ and $0<\varepsilon\leq\varepsilon_0$, distinguishing whether an unknown bipartite pure state $\ket\psi$ has Schmidt rank $\leq r$ or $\ket{\psi}$ is $\varepsilon$-far in trace distance from any bipartite pure state of Schmidt rank at most $r$ with success probability at least \(2/3\) requires
\begin{equation*}
    \Omega\rbra*{
    \frac{r^2}{\varepsilon^2\log^2\rbra r\,
    \max\cbra{1,\varepsilon^2\log^2\rbra r}}}
\end{equation*}
samples of $\ket\psi$. 
\end{theorem}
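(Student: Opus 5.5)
The plan is to reduce Schmidt rank testing to the mixed-state rank testing hard instance of \cref{cor:master-rank-transfer}, via purification. The key observation is that a sample of $\ket\psi$ gives a sample of $\tr_{\mathsf B}\rbra{\ketbra\psi\psi}$. For the lower bound we need the converse direction: the hard distributions for the mixed state must be realized by pure states whose $n$-copy ensembles are no more distinguishable. I will use the canonical purification. For each $\widetilde\rho = p\rho\oplus(1-p)\ketbra00$ with $\rho\sim\Pi_b$, take
\[
\ket{\psi_{\widetilde\rho}} = (\sqrt{\widetilde\rho}\otimes \mathbb{I})\ket{\Phi},
\]
where $\ket\Phi=\sum_i\ket{ii}$. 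Its Schmidt rank equals $\rank\rbra{\widetilde\rho}$. Its distance to Schmidt-rank-$\le r$ pure states is governed by the tail $\Delta_r\rbra{\widetilde\rho}$. The standard fact I will use is that for a pure state with Schmidt coefficients $\lambda_j$, the best rank-$r$ fidelity is $\sum_{j\le r}\lambda_j^\downarrow = 1-\Delta_r$, so the trace distance to the set is $\sqrt{\Delta_r}$. Hence to get $\varepsilon$-far I will set the mixed-state tail to $\Delta_r\rbra{\widetilde\rho}\ge\varepsilon^2$. This is exactly why $\varepsilon$ becomes $\varepsilon^2$ relative to \cref{cor:rank-lower-bound}.

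The main obstacle is showing that purification does not increase distinguishability. This is where the Haar structure is needed: the canonical purification is not a channel applied to $\widetilde\rho^{\otimes n}$. I would instead exploit the unitary invariance of $\Pi_b$. Apply an independent Haar unitary $V$ on $\mathsf B$, setting $\ket{\psi}=(\sqrt{\widetilde\rho}\otimes V)\ket\Phi$. This does not change Schmidt rank or distance, and the ensemble $\E_V[\ketbra\psi\psi^{\otimes n}]$ is a known function of the spectrum only through Schur--Weyl duality. More cleanly, I would invoke the standard fact (cf.\ \cite{OW21}) that any tester on $n$ copies of a random purification $\ket\psi$ with Haar $V$ is simulable from weak Schur sampling of $\widetilde\rho^{\otimes n}$. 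The reason is that $\E_V[(\ketbra\psi\psi)^{\otimes n}]$ lies in the symmetric subspace and depends on $\widetilde\rho$ only through the spectrum, and it is obtained from $\widetilde\rho^{\otimes n}$ by the map that measures the Young diagram $\lambda$ and prepares a fixed state in $\lambda$-isotypic components on $\mathsf{AB}$. Since both $\Pi_b$ are already invariant under unitary conjugation, this preparation is a valid channel applied to $\E_{\rho\sim\Pi_b}[\widetilde\rho^{\otimes n}]$. Contractivity then transfers the bound \cref{eq:direct-sum} verbatim. If this route is messy, the fallback is to note that the trace distance between the averaged purified ensembles equals the total variation distance of the Schur--Weyl distributions, which is bounded by the mixed-state trace distance.

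With the transfer in hand, the parameters follow \cref{cor:rank-lower-bound} with $\varepsilon$ replaced by $\varepsilon^2$. Set $K=\ceil{2C\log(r+2)}$, $t=\max\cbra{1/(100K^2),2\varepsilon^2/c}$, $p=\varepsilon^2/(ct)$, and $D=4q=\Theta(r)$ with threshold \cref{eq:rank-threshold-location}. Apply \cref{cor:master-rank-transfer} with $\ell=t$, so that $\Delta_r\ge cpt=\varepsilon^2$, hence $\ket\psi$ is $\varepsilon$-far. The lower bound is then
\[
\Omega\rbra{D^2/(pM^2)}=\Omega\rbra*{\frac{r^2}{\varepsilon^2 tK^4}}=\Omega\rbra*{\frac{r^2}{\varepsilon^2\log^2 r\max\cbra{1,\varepsilon^2\log^2 r}}}.
\]
The small-$\varepsilon_0$ constraint ensures $t,p\le1/2$. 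The yes case is exact, because $\rank\rbra{\widetilde\rho}\le r$ gives Schmidt rank $\le r$.
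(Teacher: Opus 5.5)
Your overall route matches the paper's. The paper invokes \cite[Corollary IV.1]{CWZ26}, which says Schmidt rank testing at $(r,\varepsilon)$ is at least as hard as rank testing at $(r,\varepsilon^2)$, and then substitutes $\varepsilon^2$ into \cref{cor:rank-lower-bound}. Your $\sqrt{\Delta_r}$ distance computation and your parameter bookkeeping reproduce exactly this.

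The gap is in the transfer step, whose justification is false as written. First, $\E_V[\ketbra\psi\psi^{\otimes n}]$ does not depend on $\widetilde\rho$ only through its spectrum. Already for $n=1$ it equals $\widetilde\rho\otimes\mathbb{I}/(D+1)$, so no channel that measures $\lambda$ and prepares a fixed $\lambda$-dependent state can produce it. Second, the invariance you appeal to is that of $\Pi_b$. But the state you purify is $\widetilde\rho=p\rho\oplus(1-p)\ketbra00$, whose law is invariant only under $U\oplus 1$, not under all of $U(D+1)$. So the averaged ensembles are not determined by the Schur--Weyl distribution. Your fallback (``trace distance equals the TV distance of the Young-diagram distributions'') fails for the same reason.

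The fix is short. The correct fact underlying \cite[Corollary III.8]{CWZ26}, which the paper itself uses in its MPS proof, holds pointwise. For every fixed $\widetilde\rho$, $\E_V[\ketbra\psi\psi^{\otimes n}]=\mathcal{C}\rbra{\widetilde\rho^{\otimes n}}$, where $\mathcal{C}$ is a channel independent of $\widetilde\rho$. It does the following:
\begin{enumerate}[(i)]
    \item measure $\lambda$;
    \item keep the $\mathsf{A}$-side $U(D+1)$-irrep register;
    \item discard the $S_n$-multiplicity register;
    \item append the maximally mixed state on the $\mathsf{B}$-side irrep $S_\lambda$;
    \item embed into $\mathrm{Sym}^n(\mathsf{A}\otimes\mathsf{B})$ using the canonical $S_n$-invariant vector.
\end{enumerate}
Contractivity then transfers \cref{eq:direct-sum} as you intended.

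Alternatively, conjugate $\widetilde\rho$ by an independent Haar unitary on $\mathbb{C}^{D+1}$ before purifying. This preserves rank and $\Delta_r$, can only decrease the mixed-state distance in \cref{eq:direct-sum}, and makes your invariance argument valid.

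With the pointwise channel, no property of $\Pi_b$ is needed at all: the reduction is worst-case over all states $\rho$. That is why the paper can cite it as a black box rather than re-running \cref{cor:master-rank-transfer} on pure states.
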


\begin{proof}
This is directly done by recalling \cite[Corollary IV.1]{CWZ26} that the sample complexity of Schmidt rank testing with parameter $r$ and $\varepsilon$ is at least the sample complexity of rank testing with parameter $r \coloneqq r$ and $\varepsilon \coloneqq \varepsilon^2$.
Therefore, applying \cref{cor:rank-lower-bound} with $\varepsilon$ replaced by $\varepsilon^2$ gives the lower bound
\[
    \Omega\rbra*{
    \frac{r^2}{\varepsilon^2\log^2\rbra r\,
    \max\cbra{1,\varepsilon^2\log^2\rbra r}}}.
\]
\end{proof}

\subsection{Matrix Product State Testing}

For $n\geq2$, let $\mathrm{MPS}_n\rbra r$ denote the set of $n$-partite matrix product states $\ket{\psi}$ of bond dimension $r$ (with open boundary condition \cite{PGVWC07}). 
Formally, each $\ket{\psi} \in \mathrm{MPS}_n\rbra r$ can be written as 
\[
\ket{\psi} = \sum_{i_1,\ldots,i_n} A^{(1)}_{i_1}\cdots A^{(n)}_{i_n} \ket{i_1\cdots i_n},
\]
where \(A_i^{(j)} \in \mathbb{C}^{r \times r}\) for \(2\leq j\leq n-1\), $A_i^{\rbra{1}} \in \mathbb{C}^{1 \times r}$, and $A_i^{\rbra{n}} \in \mathbb{C}^{r \times 1}$. 
Define
\[
    \Delta_{\mathrm{MPS}_n\rbra r}\rbra{\ket\psi}
    \coloneqq
    \inf_{\ket\phi\in\mathrm{MPS}_n\rbra r}
    \mathrm T\rbra*{\ketbra\psi\psi,\ketbra\phi\phi}.
\]

\begin{theorem}[Sample lower bound for matrix product state testing]
\label{cor:mps-n-partite-lower-bound}
There are universal constants $c,C>0$ such that the following holds. 
Let $n,r$ be sufficiently large, $0<\varepsilon\leq1/2$ such that $r \geq C \log\rbra{n}$ and $\varepsilon^2 \log^2\rbra{nr} \leq c n$.
Then, testing whether an unknown $n$-partite pure state is in $\mathrm{MPS}_n\rbra r$ or is $\varepsilon$-far from $\mathrm{MPS}_n\rbra r$ with success probability at least \(2/3\) requires
\begin{equation*}
    \Omega\rbra*{\frac{nr^2}{\varepsilon^2\log^2\rbra{nr}}}
\end{equation*}
samples.
\end{theorem}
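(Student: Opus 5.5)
The plan is to reduce to a direct-sum of Schmidt-rank-type hard instances across the $n-1$ cuts of the chain, in the spirit of \cref{cor:master-rank-transfer}. I would not use the Schmidt rank bound \cref{cor:mps-lower-bound} as a black box. Instead I would take the purified version of the rank-testing instance and glue it along the chain.

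\textbf{Step 1: the instance.} Group the $n$ parties into $m=\Theta(n)$ disjoint adjacent pairs $(2k-1,2k)$. On pair $k$, place an independent bipartite pure state $\ket{\psi_k}$ whose reduced state on party $2k-1$ is $\widetilde\rho_k = p\rho_k\oplus(1-p)\ketbra00$, with $\rho_k\sim\Pi_b$ from \cref{def:pi-01}. Here dimension $D=\Theta(r)$ is chosen as in the proof of \cref{cor:rank-lower-bound}, so that the rank threshold is $r$. The global state is the product $\ket\Psi=\bigotimes_k\ket{\psi_k}$.

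\textbf{Step 2: membership and farness.} A product over pairs is an MPS whose bond dimensions are $1$ across inter-pair cuts and $\operatorname{SR}(\ket{\psi_k})=\rank(\widetilde\rho_k)$ inside pair $k$.
\begin{itemize}
\item Under $b=0$, each pair has rank $\le r$ with probability $0.99$. I would boost this to all pairs simultaneously by tightening the Hoeffding margin, using $r\ge C\log n$, so that the failure probability per pair is $\ll 1/n$. Then $\ket\Psi\in\mathrm{MPS}_n(r)$.
\item Under $b=1$, each pair has $\Delta_r(\widetilde\rho_k)\ge\delta$ with $\delta=cp\ell$. Any $\ket\phi\in\mathrm{MPS}_n(r)$ has Schmidt rank $\le r$ across every cut. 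I would show that the infidelity $1-\abs{\braket{\Psi}{\phi}}^2$ is at least roughly $1-\prod_k(1-\delta_k)\gtrsim m\delta$, using the fact that the cuts inside different pairs are separated. The cleanest route is the known fact from \cite{SW22} that the MPS overlap with a product of pair states factorizes up to Eckart--Young bounds on each cut.
\item This yields trace distance $\ge\sqrt{\min(1,m\delta)}$. So I set $\delta\approx\varepsilon^2/m$, i.e.\ per-pair farness $\varepsilon^2/n$.
\end{itemize}

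\textbf{Step 3: indistinguishability.} With $N$ copies of $\ket\Psi$, the pairs are independent. Each pair is a purification of $\widetilde\rho_k$, and the purification can be chosen canonically and Haar-invariantly, so the $N$-copy pair states are related by a channel to $\widetilde\rho_k^{\otimes N}$. Concretely, the canonical purification $\sum\sqrt{\lambda_i}\ket{u_i}\ket{\bar u_i}$ is invariant in distribution under $U\otimes\bar U$. I would reduce to the mixed-state bound via \cite[Corollary IV.1]{CWZ26}, applied per pair, so that the trace distance per pair is $\le 0.01/n$-type small whenever $N\le cD^2/(p'M^2)$. Here $p'=\delta/(c\ell)$ plays the role of $\varepsilon$ in the rank bound. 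A hybrid over $m$ pairs costs a factor $m$. To absorb it, I would take $K\ge C\log(nr)$, so that \cref{lemma:q-moment-matching-multi-direction} gives error $(nr)^{-\Omega(1)}$; this is the source of $\log^2(nr)$.

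\textbf{Step 4: parameters.} Plugging per-pair rank-testing parameter $\varepsilon_{\rm rk}=\varepsilon^2/n$ into the expression of \cref{cor:rank-lower-bound} gives
\[
\frac{r^2}{\varepsilon_{\rm rk}\log^2}=\frac{nr^2}{\varepsilon^2\log^2(nr)},
\]
with $\max\{1,\varepsilon_{\rm rk}\log^2\}=1$ by the hypothesis $\varepsilon^2\log^2(nr)\le cn$.

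\textbf{Main obstacle.} I expect the hard part to be Step 2: lower-bounding the distance of a product of pair states from $\mathrm{MPS}_n(r)$ additively in the per-cut tails. Step 3's hybrid cost is only logarithmic, so it should be routine by comparison. For Step 2 I would first try the inequality
\[
\abs{\braket{\Psi}{\phi}}^2\le\prod_k\bigl(1-\Delta_r(\widetilde\rho_k)\bigr),
\]
proved by induction, peeling off one pair at a time using that $\phi$'s Schmidt decomposition across the cut inside pair $k$ has rank $\le r$.
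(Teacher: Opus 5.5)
Most of your plan matches the paper's proof: $\Theta(n)$ adjacent pairs, each carrying a purification of $\widetilde\rho=p\rho\oplus(1-p)\ketbra{0}{0}$ with $\rho\sim\Pi_b$; the overlap factorization across pairs; a hybrid over pairs with $K=\Theta(\log(nr))$; and per-pair farness $\delta\approx\varepsilon^2/n$. Your induction proof of $\abs{\braket{\Psi}{\phi}}^2\le\prod_k(1-\Delta_r(\widetilde\rho_k))$ is correct, and it is what the paper takes from \cite{SW22}. Replacing the paper's conditioning on $E_b$ with a union bound over pairs is also fine.

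\textbf{The gap is in Step 3.} The canonical purification $\ket\psi=(\sqrt{\widetilde\rho}\otimes\mathbb{I})\sum_i\ket{i}\ket{i}$ has only the $U\otimes\bar U$ symmetry. That symmetry does not give a channel mapping $\E\sbra{\widetilde\rho^{\otimes N}}$ to $\E\sbra{\ketbra{\psi}{\psi}^{\otimes N}}$.
\begin{itemize}
\item Already at $N=1$, $\E_{\Pi_b}\sbra{\widetilde\rho}$ is the same for $b=0,1$, so one mixed copy carries no information.
\item By contrast, projecting one copy of $\ket\psi$ onto the normalized maximally entangled vector accepts with probability $(\tr\sqrt{\widetilde\rho})^2/(D+1)$. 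This is a non-polynomial spectral statistic that the rank moment matching does not equalize.
\item A clean example: $\rho=\mathbb{I}/D$ and a Haar-random pure state agree on average, yet their canonical purifications pass this test with probabilities $1$ and $1/D$.
\end{itemize}
So your pure-state ensembles can be strictly easier to distinguish than the mixed ones, and the lower bound does not transfer.

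The fix is exactly the paper's step. Apply an independent Haar-random unitary to the purifying party of each pair. This is a local unitary, so Schmidt coefficients, membership in $\mathrm{MPS}_n(r)$, and distance from it are all unchanged. The ensemble is then invariant under $\mathbb{I}\otimes V$, and the $N$-copy trace distance of the pure ensembles \emph{equals} that of the mixed ones \cite[Corollary III.8]{CWZ26}. By contrast, \cite[Corollary IV.1]{CWZ26}, which you cite, is a reduction between testing problems. It does not supply the per-ensemble trace-distance bound your hybrid needs.

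\textbf{A second point in the same step.} Because the hybrid runs over $\Theta(n)$ pairs, every per-pair error term must be $o(1/n)$.
\begin{itemize}
\item This includes the binomial-thinning term $\Pr\sbra{\operatorname{Bin}(N,p)>n_0}$. Bounding it by Markov, as in \cref{cor:master-rank-transfer}, gives only $Np/n_0$, which loses exactly the factor $n$ you are trying to gain.
\item You need a Chernoff tail instead. The paper takes $N\le n_0/(8p)$, so this term is at most $(e/8)^{n_0}$ with $n_0=\Theta(r^2)$.
\item The $q\exp(-\Omega(D))$ term of \cref{lemma:q-moment-matching-multi-direction} and the Hoeffding failure probabilities are $e^{-\Omega(r)}$. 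It is the hypothesis $r\ge C\log n$, not the choice of $K$, that makes them $o(1/n)$.
\end{itemize}
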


\begin{proof}
Choose \(C>0\) sufficiently large and then \(c>0\) sufficiently small
such that \(24C^2c\leq1/25600\).  Set
\[
    \ell=\floor*{\frac n2},
    \qquad
    K=\ceil*{C\log\rbra{nr}},
    \qquad
    \delta=1-\rbra{1-\varepsilon^2}^{1/\ell}, \qquad t=\frac1{100K^2}.
\]
For \(0<\varepsilon\leq1/2\), it holds that ${\varepsilon^2}/{\ell}\leq\delta\leq{2\varepsilon^2}/{\ell}$.
Since \(\ell\geq n/3\) and \(K\leq2C\log\rbra{nr}\) for sufficiently
large \(n,r\), we have $\delta K^2 \leq 24C^2c \leq 1/25600$. 

Let \((\mu_0,\mu_1)\) be the \(K\)-moment matching from
\cref{lemma:rank-support-matching}.  Write
\[
    \pi_b=\Pr_{Y\sim\mu_b}\sbra{Y>0},
    \qquad
    \gamma=\pi_1-\pi_0\geq\frac14,
    \qquad
    \pi_*=\frac{\pi_0+\pi_1}{2}, \qquad p=\frac{32\delta}{\gamma t} \leq \frac{1}{2}.
\]
Choose \(q\in\mathbb N\) such that $\abs*{r-1-\rbra{3+\pi_*}q}\leq2$ and set $D = 4q = \Theta\rbra{r}$. 
For sufficiently large \(r\),
\begin{equation}
    3q+\pi_0q+\frac{\gamma q}{4}+1
    \leq r
    \leq
    3q+\pi_1q-\frac{\gamma q}{4}+1.
    \label{eq:mps-rank-threshold}
\end{equation}

For \(b\in\cbra{0,1}\), let \(\nu_b\) be the distribution of \(tY\)
for \(Y\sim\mu_b\). 
Then $\supp\rbra{\nu_b} \subseteq \cbra0\cup\sbra{t,1}$, and \((\nu_0,\nu_1)\) is a \(K\)-moment matching with mean $t$. 
Let $\Pi_0$ and $\Pi_1$ be the probability distributions defined in \cref{def:pi-01} with $d = D$, $q = D/4$, $\nu = t$, and $I = \sbra{0, 1}$. 
For \(b\in\cbra{0,1}\), let \(X_1,\ldots,X_q\sim\nu_b\) be the random
variables, and let $N_b=\sum_{i=1}^q\mathbf1_{\{X_i>0\}}$.
Since \(0\leq X_i\leq1\), it holds that $\sum_{i=1}^q X_i \leq q = D/4$ with probability $1$.
Consider the events $E_0=\cbra{N_0\leq\pi_0q+{\gamma q}/{8}}$ and $E_1=\cbra{N_1\geq\pi_1q-{\gamma q}/{8}}$.
By Hoeffding's inequality,
\cref{eq:mps-rank-threshold}, and the proof of
\cref{cor:master-rank-transfer}, if $\rho_b \sim \Pi_b$ for $b \in \cbra{0, 1}$, then 
\begin{itemize}
    \item For $b \in \cbra{0, 1}$, $\Pr\sbra{E_b^{\mathsf{c}}} \leq e^{-\gamma^2 q/32}$, 
    \item On $E_0$, $\rank\rbra{\rho_0} \leq r - 1$, 
    \item On $E_1$, $\Delta_{r-1}\rbra{\rho_1} \geq \frac{\gamma q}{8}\frac{t}{D}
    =\frac{\gamma t}{32}$. 
\end{itemize}
Let \(m_0=\floor{cD^2}\). 
For \(1\leq m\leq m_0\),
\cref{eq:second-neq}, with $d \coloneqq D$ and $M \coloneqq \sup_{x\in[0,1]}\abs{x-1}=1$, gives
\begin{equation} \label{eq:td-pi0-pi1}
    \mathrm T\rbra*{
        \E_{\rho\sim\Pi_0}\sbra{\rho^{\otimes m}},
        \E_{\rho\sim\Pi_1}\sbra{\rho^{\otimes m}}}
    \leq
    q \rbra*{4^{-K}+e^{-D/384}}.
\end{equation}

For \(b\in\cbra{0,1}\), let \(\widetilde\Pi_b\) be the distribution of
\[
    \widetilde\rho_b
    =
    p\rho_b\oplus(1-p)\ketbra00,
\]
where \(\rho_b\sim\Pi_b\) is conditioned on \(E_b\).
Since \(p\leq1/2\), we have $\rank\rbra{\widetilde\rho_0}\leq r$ on $E_0$, and $\Delta_r\rbra{\widetilde\rho_1}=p\Delta_{r-1}\rbra{\rho_1}\geq\delta$ on $E_1$.
Note that
\[
\mathrm T\rbra*{\E_{\widetilde\rho \sim \widetilde\Pi_0}\sbra{\widetilde\rho^{\otimes m}}, \E_{\widetilde\rho \sim \widetilde\Pi_1}\sbra{\widetilde\rho^{\otimes m}}} 
= \E_{J \sim \operatorname{Bin}\rbra{m,p}} \sbra*{ \mathrm{T}\rbra*{ \E_{\rho \sim \Pi_0} \sbra*{ \rho^{\otimes J} \middle| E_0 }, \E_{\rho \sim \Pi_1} \sbra*{ \rho^{\otimes J} \middle| E_1 } } }.
\]
For each $0 \leq j \leq m_0$, 
\begin{align*}
\mathrm{T}\rbra*{ \E_{\rho \sim \Pi_0} \sbra*{ \rho^{\otimes j} \middle| E_0 }, \E_{\rho \sim \Pi_1} \sbra*{ \rho^{\otimes j} \middle| E_1 } } 
& \leq \mathrm{T}\rbra*{ \E_{\rho \sim \Pi_0} \sbra*{ \rho^{\otimes j} }, \E_{\rho \sim \Pi_1} \sbra*{ \rho^{\otimes j} } } + \Pr\sbra*{E_0^{\mathsf{c}}} + \Pr\sbra*{E_1^{\mathsf{c}}},
\end{align*}
Therefore, for $1 \leq m \leq m_0/\rbra{8p}$, we have
\begin{align}
    \mathrm T\rbra*{
        \E_{\widetilde\rho \sim \widetilde\Pi_0}\sbra{\widetilde\rho^{\otimes m}},
        \E_{\widetilde\rho \sim \widetilde\Pi_1}\sbra{\widetilde\rho^{\otimes m}}}
    &\leq
    \max_{0 \leq j \leq m_0} \mathrm{T}\rbra*{ \E_{\rho \sim \Pi_0} \sbra*{ \rho^{\otimes j} }, \E_{\rho \sim \Pi_1} \sbra*{ \rho^{\otimes j} } } + \Pr\sbra*{E_0^{\mathsf{c}}} + \Pr\sbra*{E_1^{\mathsf{c}}} \nonumber \\
    & \qquad + \Pr_{J \sim \operatorname{Bin}\rbra{m, p}}\sbra{J > m_0}
    \nonumber\\
    &\leq
    q \rbra*{4^{-K}+e^{-D/384}} + 2 e^{-\gamma^2 q/32} + \rbra*{\frac{e}{8}}^{m_0},
    \label{eq:mps-one-block-closeness}
\end{align}
where we use \cref{eq:td-pi0-pi1} and \(\Pr\sbra{J>m_0}\leq \rbra{e/8}^{m_0}\). 

For \(b\in\cbra{0,1}\), consider the canonical purification of the $\rbra{D+1}$-dimensional quantum state $\widetilde{\rho}_b$:
\[
\ket{\widetilde{\rho}_b} = \sum_{i=1}^{D+1} \rbra*{ \rbra{\sqrt{\widetilde{\rho}_b}}_{\mathsf{A}} \otimes I_{\mathsf{B}}} \ket{i}_{\mathsf{A}} \ket{i}_{\mathsf{B}}.
\]
Let $\widehat{\Pi}_b$ be the distribution of the pure state $\rbra{I_{\mathsf{A}} \otimes U_{\mathsf{B}}}\ket{\widetilde\rho_b}$ where $U$ is a Haar-random unitary operator on subsystem $\mathsf{B}$. 
It can be seen that $\widehat{\Pi}_b$ is unitarily invariant on subsystem $\mathsf{B}$. 
By \cite[Corollary III.8]{CWZ26}, 
\[
\mathrm T\rbra*{\E_{\ket{\psi} \sim \widehat\Pi_0}\sbra{\ketbra{\psi}{\psi}^{\otimes m}}, \E_{\ket{\psi} \sim \widehat\Pi_1}\sbra{\ketbra{\psi}{\psi}^{\otimes m}}}
= \mathrm T\rbra*{\E_{\widetilde\rho \sim \widetilde\Pi_0}\sbra{\widetilde\rho^{\otimes m}}, \E_{\widetilde\rho \sim \widetilde\Pi_1}\sbra{\widetilde\rho^{\otimes m}}}.
\]
Let $\overline{\Pi}_b$ be the distribution of $\ket{\Psi_b}$ defined by
\[
\ket{\Psi_b} = \bigotimes_{j=1}^{\ell} \ket{\psi_{b,j}}, \qquad \ket{\psi_{b,1}}, \ket{\psi_{b,2}}, \dots, \ket{\psi_{b,\ell}} \sim \widehat{\Pi}_b.
\]
The proof of \cite[Proposition 5.1]{SW22} implies that
\[
\max_{\ket{\Phi} \in \textup{MPS}_{2\ell}\rbra{r}} \abs*{\braket{\Phi}{\Psi_b}}^2 = \prod_{j=1}^{\ell} \max_{\textup{SR}\rbra{\ket{\phi}} \leq r} \abs*{\braket{\phi}{\psi_{b,j}}}^2.
\]
Then, $\ket{\Psi_0} \in \textup{MPS}_{2\ell}\rbra{r}$, and 
\begin{align*}
\rbra*{
\Delta_{\mathrm{MPS}_{2\ell}\rbra r}
\rbra{\ket{\Psi_1}}
}^{2}
&=
1-
\max_{\ket\Phi\in\mathrm{MPS}_{2\ell}\rbra r}
\abs*{\braket{\Phi}{\Psi_1}}^{2}\\
&=
1-
\prod_{j=1}^{\ell}
\max_{\operatorname{SR}\rbra{\ket\phi}\leq r}
\abs*{\braket{\phi}{\psi_{1,j}}}^{2}\\
&=
1-
\prod_{j=1}^{\ell}
\left(
1-\Delta_r\rbra{\widetilde\rho_{1,j}}
\right)\\
&\geq
1-\rbra{1-\delta}^{\ell}=
\varepsilon^2,
\end{align*}
where $\widetilde{\rho}_{1,j} = \tr_{\mathsf{B}}\rbra{\ketbra{\psi_{1,j}}{\psi_{1,j}}}$.

If there is a tester for $\textup{MPS}_{n}\rbra{r}$, then it can be used to distinguish $\ket{\Psi_0} \sim \overline{\Pi}_0$ and $\ket{\Psi_1} \sim \overline{\Pi}_1$ with success probability $p_{\textup{succ}} \geq 2/3$. 
On the other hand, after regrouping the tensor factors as
\[
\begin{aligned}
\E_{\ket\Psi\sim\overline\Pi_b}
\sbra*{\ketbra\Psi\Psi^{\otimes m}}
\cong
\left(
\E_{\ket\psi\sim\widehat\Pi_b}
\sbra*{\ketbra\psi\psi^{\otimes m}}
\right)^{\otimes\ell},
\end{aligned}
\]
we have
\begin{align*}
\mathrm{T}\rbra*{\E_{\ket{\Psi} \sim \overline{\Pi}_0} \sbra*{ \ketbra{\Psi}{\Psi}^{\otimes m} }, \E_{\ket{\Psi} \sim \overline{\Pi}_1} \sbra*{ \ketbra{\Psi}{\Psi}^{\otimes m} }} 
& \leq \ell \cdot \mathrm{T}\rbra*{ \E_{\ket{\psi} \sim \widehat{\Pi}_0} \sbra*{ \ketbra{\psi}{\psi}^{\otimes m} }, \E_{\ket{\psi} \sim \widehat{\Pi}_1} \sbra*{ \ketbra{\psi}{\psi}^{\otimes m} } } \\
& = \ell \cdot \mathrm{T}\rbra*{ \E_{\widetilde\rho \sim \widetilde{\Pi}_0} \sbra*{ \widetilde\rho^{\otimes m} }, \E_{\widetilde\rho \sim \widetilde{\Pi}_1} \sbra*{ \widetilde\rho^{\otimes m} } } \\
& \leq \ell \rbra*{ q \rbra*{4^{-K}+e^{-D/384}} + 2 e^{-\gamma^2 q/32} + \rbra*{\frac{e}{8}}^{m_0} } < \frac{1}{3},
\end{align*}
which, by the Helstrom--Holevo bound \cite{Hel67,Hol73}, gives a contradiction that $p_{\textup{succ}} < \frac{1}{2} \rbra*{1 + \frac{1}{3}} = 2/3$. 
Therefore, any tester for $\textup{MPS}_{n}\rbra{r}$ requires sample complexity at least
\[
\frac{m_0}{8p} = \Omega\rbra*{\frac{r^2}{\delta K^2}} = \Omega\rbra*{\frac{nr^2}{\varepsilon^2\log^2\rbra{nr}}}.
\]
\end{proof}

\addcontentsline{toc}{section}{References}

\bibliographystyle{alphaurl}
\bibliography{main}

\appendix

\section{Another Sample Lower Bound for R\'enyi Entropy Estimation}

\begin{lemma}
    For $0 < \alpha < 1$, sufficiently large $d \geq 1$, and sufficiently small $\varepsilon > 0$, estimating the R\'enyi entropy $\mathrm{H}_{\alpha}^{\mathrm{R}}\rbra{P} = \frac{1}{1-\alpha} \log\rbra{\sum_{i=1}^d p_i^\alpha}$ of a $d$-dimensional discrete distribution $P$ to within additive error $\varepsilon$ requires sample complexity 
    \[
    \Omega\rbra*{\frac{d^{1/\alpha-1}}{\varepsilon^2}}.
    \]
\end{lemma}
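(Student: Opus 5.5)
We use Le Cam's two-point method with a Poissonization-free, direct KL computation. The idea is to place almost all the mass on one heavy symbol and spread a tiny remainder uniformly over the other $d-1$ symbols, then perturb that tiny remainder.

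\textbf{Construction.} For parameters $\delta\in(0,1)$ and $\beta\in(-1/2,1/2)$, let $P_\beta$ put mass $1-\delta(1+\beta)$ on symbol $1$ and mass $\delta(1+\beta)/(d-1)$ on each of the remaining $d-1$ symbols. Then
\[
\sum_i p_i^\alpha=(1-\delta(1+\beta))^\alpha+(d-1)^{1-\alpha}\delta^\alpha(1+\beta)^\alpha .
\]
We choose $\delta$ so that $(d-1)^{1-\alpha}\delta^\alpha=\Theta(1)$, namely $\delta=\Theta(d^{1-1/\alpha})$, which is $o(1)$ since $\alpha<1$. Then both summands are $\Theta(1)$. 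The R\'enyi entropy $\frac{1}{1-\alpha}\log(\cdot)$ therefore changes by $\Theta(\beta)$ when we pass from $\beta=0$ to $\beta$: the derivative of the second term in $\beta$ is $\alpha\,\Theta(1)$, and the first term only changes by $O(\delta\beta)$. Taking $\beta=C\varepsilon$ for a suitable constant $C$ makes the two entropies differ by more than $2\varepsilon$.

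\textbf{Indistinguishability.} Only the split between symbol $1$ and the rest depends on $\beta$. Conditioned on a sample not being symbol $1$, it is uniform on the remaining $d-1$ symbols for both hypotheses. So the $n$-sample problem reduces to distinguishing $\mathrm{Bern}(\delta)$ from $\mathrm{Bern}(\delta(1+\beta))$ from $n$ samples. The chi-square (or KL) divergence per sample is $\Theta(\delta\beta^2)$, hence $n$ samples give total divergence $O(n\delta\beta^2)$. Distinguishing with success probability $2/3$ therefore requires
\[
n=\Omega\!\left(\frac{1}{\delta\beta^2}\right)=\Omega\!\left(\frac{d^{1/\alpha-1}}{\varepsilon^2}\right).
\]
We conclude by the standard reduction: an $\varepsilon$-accurate estimator yields such a test, combined with Pinsker's inequality or the Bretagnolle--Huber inequality.

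\textbf{Main obstacle.} The step to check is that the entropy gap is genuinely linear in $\beta$ with a constant independent of $d$. This needs the derivative of $\log\bigl((1-\delta(1+\beta))^\alpha+c(1+\beta)^\alpha\bigr)$ at $\beta=0$, with $c:=(d-1)^{1-\alpha}\delta^\alpha$ fixed, to be bounded away from $0$. It equals approximately $\alpha c/(1+c)$ up to $O(\delta)$ corrections, so it is fine. We also need a second-order Taylor bound showing the error is $O(\beta^2)$, which is controlled for sufficiently small $\varepsilon$.
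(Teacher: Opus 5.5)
Your proposal is correct and is essentially the paper's proof. The paper uses the same two-point family $P_t=(1-a-t,(a+t)/N,\dots,(a+t)/N)$ with $N=d-1$, $a=N^{-(1-\alpha)/\alpha}$ and $t=\Theta(a\varepsilon)$, which is your construction with $\delta=a$ and $\beta=t/a$; the only difference is that the paper bounds the squared Hellinger distance by $\Theta(t^2/a)$ and cites the $\Omega(1/d_{\mathrm{H}}^2)$ two-point bound, whereas you reduce to a Bernoulli pair and use KL with Pinsker, which gives the same $\Omega(1/(\delta\beta^2))=\Omega(d^{1/\alpha-1}/\varepsilon^2)$.
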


\begin{proof}
    Let $N=d-1$ and $a=N^{-\rbra{1-\alpha}/\alpha}$, and, for $0\leq t\leq a/4$, define the probability distribution
\[
    P_t
    =
    \left(
    1-a-t,
    \underbrace{
    \frac{a+t}{N},\ldots,\frac{a+t}{N}}
    _{N\text{ entries}}
    \right).
\]
Then, it can be verified that 
\[
\abs*{ \mathrm{H}_{\alpha}^{\mathrm{R}}\rbra{P_0} - \mathrm{H}_{\alpha}^{\mathrm{R}}\rbra{P_t} } \geq \Omega\rbra*{\frac{t}{a}}, \qquad
d_{\textup{H}}^2\rbra{P_0, P_t} = \Theta\rbra*{\frac{t^2}{a}}.
\]
Taking $t = \Theta\rbra{a\varepsilon}$ gives $\abs{ \mathrm{H}_{\alpha}^{\mathrm{R}}\rbra{P_0} - \mathrm{H}_{\alpha}^{\mathrm{R}}\rbra{P_t} } \geq \Omega\rbra{\varepsilon}$, whereas distinguishing $P_0$ and $P_t$ with probability at least $2/3$ requires sample complexity (cf.\ \cite[Theorem 4.7]{BY02})
\[
\Omega\rbra*{\frac{1}{d_\textup{H}^2\rbra{P_0, P_t}}} = \Omega\rbra*{\frac{a}{t^2}} = \Omega\rbra*{\frac{d^{1/\alpha-1}}{\varepsilon^2}}.
\]
\end{proof}

\end{document}